\documentclass[11pt]{article}

\usepackage[T1]{fontenc}
\usepackage{amsmath}
\usepackage{amssymb}
\usepackage{amsthm}
\usepackage{aliascnt}
\usepackage{mathtools}
\usepackage{braket}
\usepackage{booktabs}
\usepackage{tabularx}
\usepackage{pgfplots}
\usepackage{microtype}
\usepackage{placeins}
\usepackage{etoolbox}
\usepackage[margin=1in]{geometry}
\usepackage[bottom]{footmisc}
\usepackage{xcolor}
\usepackage[pagebackref=true]{hyperref}
\usepackage[capitalise]{cleveref}
\usepackage{newpxtext}
\usepackage[varg,bigdelims]{newpxmath}
\usepackage{titlesec}

\titlespacing*{\paragraph}{0pt}{1ex plus 0.5ex minus 0.2ex}{1em}

\renewcommand{\leq}{\leqslant}

\renewcommand{\geq}{\geqslant}

\allowdisplaybreaks
\renewcommand{\arraystretch}{1.25}
\pgfplotsset{compat=1.17,compat/show suggested version=false}

\hypersetup{
    colorlinks=true,
    linkcolor=purple!70!black,
    citecolor=purple!70!black,
    urlcolor=purple!70!black,
    pdftitle={Binary code rate bounds via cq channels},
    pdfauthor={Omar Alrabiah and Venkatesan Guruswami}
}

\newtheorem{theorem}{Theorem}
\newaliascnt{prop}{theorem}
\newtheorem{prop}[prop]{Proposition}
\aliascntresetthe{prop}
\newaliascnt{fact}{theorem}
\newtheorem{fact}[fact]{Fact}
\aliascntresetthe{fact}
\newaliascnt{lemma}{theorem}
\newtheorem{lemma}[lemma]{Lemma}
\aliascntresetthe{lemma}
\newaliascnt{corollary}{theorem}
\newtheorem{corollary}[corollary]{Corollary}
\aliascntresetthe{corollary}
\newtheorem{definition}{Definition}[section]
\theoremstyle{remark}
\newtheorem{remark}{Remark}[section]

\AtEndEnvironment{remark}{\hfill$\triangleleft$}

\crefname{theorem}{Theorem}{Theorems}
\crefname{prop}{Proposition}{Propositions}
\crefname{fact}{Fact}{Facts}
\crefname{lemma}{Lemma}{Lemmas}
\crefname{corollary}{Corollary}{Corollaries}
\crefname{definition}{Definition}{Definitions}
\crefname{remark}{Remark}{Remarks}
\crefname{appendix}{Appendix}{Appendices}


\newcommand*\Bell{\ensuremath{\boldsymbol\ell}}
\newcommand{\ketbra}[2]{\ket{#1}\!\bra{#2}}
\newcommand{\cC}{\mathcal{C}}
\newcommand{\cD}{\mathcal{D}}
\newcommand{\cM}{\mathcal{M}}
\newcommand{\cX}{\mathcal{X}}
\newcommand{\cY}{\mathcal{Y}}
\newcommand{\cE}{\mathcal{E}}
\newcommand{\cF}{\mathcal{F}}
\newcommand{\cG}{\mathcal{G}}
\newcommand{\F}{\mathbb{F}}
\newcommand{\T}{\mathcal{T}}
\newcommand{\C}{\mathbb{C}}
\newcommand{\Ber}{\operatorname{Ber}}
\newcommand{\Tr}{\operatorname{Tr}}
\newcommand{\supp}{\operatorname{supp}}

\newcommand{\wt}{\operatorname{wt}}
\newcommand{\Dens}{\mathsf{D}}
\newcommand{\one}{\mathbf{1}}
\newcommand{\Eb}{\mathbf{E}}
\newcommand{\Prb}{\mathbf{Pr}}
\newcommand{\bits}{\{0,1\}}
\newcommand{\MRRW}{R_{\mathrm{MRRW}}}
\newcommand{\bchat}{\hat{\mathbf{c}}}
\newcommand{\bxhat}{\hat{\mathbf{x}}}
\newcommand{\byhat}{\widehat{\mathbf{y}}}
\newcommand{\bbhat}{\widehat{\mathbf{b}}}
\newcommand{\bc}{\mathbf{c}}
\newcommand{\bb}{\mathbf{b}}
\newcommand{\be}{\mathbf{e}}
\newcommand{\bx}{\mathbf{x}}
\newcommand{\by}{\mathbf{y}}
\newcommand{\bz}{\mathbf{z}}
\newcommand{\eps}{\varepsilon}
\newcommand{\abs}[1]{\left|#1\right|}
\newcommand{\norm}[1]{\lVert#1\rVert}
\newcommand{\N}{\mathcal{N}}
\newcommand{\inner}[2]{\left\langle #1, #2\right\rangle}
\newcommand{\ceil}[1]{\left\lceil #1 \right\rceil}

\newcommand{\BEC}{\mathrm{BEC}}
\newcommand{\BSC}{\mathrm{BSC}}
\newcommand{\PSC}{\mathrm{PSC}}
\newcommand{\MQC}{\mathrm{MQC}}
\newcommand{\mPSC}{\mathrm{mPSC}}
\newcommand{\mMQC}{2\mathrm{MQC}}

\title{Binary code rate bounds via classical--quantum channels}
\author{Omar Alrabiah\thanks{Department of Electrical Engineering and Computer Sciences, UC Berkeley, Berkeley, CA 94709, USA. Email: \url{oalrabiah@berkeley.edu}. Research supported in part by a Saudi Arabian Cultural Mission (SACM) Scholarship, a UC Noyce initiative award, ONR grant N00014-24-1-2491, and V. Guruswami's Simons Investigator Award.}
\and Venkatesan Guruswami\thanks{Department of Electrical Engineering and Computer Sciences, Department of Mathematics, and the Simons Institute for the Theory of Computing, UC Berkeley, Berkeley, CA 94709, USA. Email: \url{venkatg@berkeley.edu}. Research supported in part by a Simons Investigator Award, ONR grant N00014-24-1-2491, and NSF grant 2211972.}
}
\date{}
\begin{document}
\pagestyle{empty}
\maketitle
\thispagestyle{empty}

\begin{abstract}
We derive the four principal asymptotic rate-distance tradeoffs for binary codes---Plotkin, Elias--Bassalygo, and the two McEliece--Rodemich--Rumsey--Welch (MRRW) bounds---from one theorem, the ``pretty good criterion.'' If the bit error rate under the pretty good measurement (PGM)---the quantum analog of posterior sampling---of a binary-input output-symmetric classical--quantum (cq) channel lies below $\delta$, then every length-$n$ binary code, linear or nonlinear, of relative distance $\delta$ has rate at most the channel's capacity, up to an $O(n^{-1/2})$ correction. Rate--distance bounds thereby reduce to a channel design problem, wherein the task is to minimize channel capacity subject to the posterior bit error rate constraint. Via the pretty good criterion, the binary erasure channel (BEC) yields Plotkin, the binary symmetric channel (BSC) yields Elias--Bassalygo, the pure-state channel (PSC) yields the first MRRW bound, and a masked PSC yields the second MRRW bound exactly. 

\smallskip 
This framework is then instantiated with new channels to improve upon the MRRW bounds. Specifically, the mixed-qubit channel (MQC), a mixed-state version of PSC, strictly improves the first MRRW bound at every $0 < \delta < \frac{1}{2}$, while the masked mixed-qubit channel (2MQC) strictly improves the second MRRW bound throughout the same interval.
\end{abstract}

\begingroup
\small
\linespread{1.2}\selectfont
\makeatletter
\noindent\begin{minipage}{\textwidth}
\tableofcontents
\end{minipage}
\endgroup

\clearpage
\pagestyle{plain}

\section{Introduction}
\label{sec:introduction}

The theory of error-correcting codes developed around two error models. In Hamming's model~\cite{Ham50}, errors are adversarial, and a code of minimum distance $d$ guarantees recovery from every pattern of fewer than $d/2$ substitutions, wherever they occur. In Shannon's model~\cite{Sha48}, errors are stochastic, produced by a fixed channel that corrupts the transmitted symbols at random, and reliability asks that the block error probability vanish as the blocklength grows. The two models measure a code against different standards---worst-case correction radius versus typical-case reliability---and they are not interchangeable, for rare error patterns of a given weight may be undecodable even when random errors of that weight are usually harmless. Each model carries its own extremal theory---the tradeoff between rate and minimum distance on the Hamming side, capacity on the Shannon side.

On the Shannon side, the extremal theory was settled at its birth. The best information transmission rate across a memoryless channel is dictated by one corresponding number, namely its capacity. At every rate below the capacity, there are codes whose block decoding error vanishes~\cite{Sha48}, and at every rate above it, the block decoding error of every code tends to one~\cite{Wol57,Str64,Ari73}.

On the Hamming side, the tradeoff is far from being completely understood. Let $A_2(n,d)$ be the largest cardinality of a (possibly nonlinear) binary length-$n$ code with minimum distance at least $d$, and define the asymptotic rate function
\begin{equation}
\label{eq:rate-function}
R_2(\delta) \coloneqq \limsup_{n \to \infty}\frac{1}{n}\log{A_2(n,\ceil{\delta n})}.
\end{equation}
From below, the classic Gilbert--Varshamov greedy argument~\cite{Gil52,Var57} gives $R_2(\delta) \geq 1-h(\delta)$, where $h$ is the binary entropy function. This remains the best asymptotic lower bound on the rate of binary codes.\footnote{There have, however, been lower-order improvements. In particular, Jiang and Vardy~\cite{JV04} sharpened the Gilbert--Varshamov bound by a factor of order $n$, while Gaborit and Z\'emor~\cite{GZ08} obtained an analogous refinement for certain infinite families of rate-$\frac{1}{2}$ binary linear codes. By contrast, for square alphabet sizes $q \geq 49$, the Tsfasman--Vl\u{a}du\c{t}--Zink algebraic-geometric construction~\cite{TVZ82} improves upon the $q$-ary Gilbert--Varshamov exponent.} From above, McEliece, Rodemich, Rumsey, and Welch (MRRW)~\cite{MRRW77} proved two upper bounds. The first is $R_2(\delta) \leq \MRRW(\delta) \coloneqq h\left(\frac{1}{2}-\sqrt{\delta(1-\delta)}\right)$, and the second, which is never weaker, was the benchmark for nearly five decades. Since then, no stronger general bound for binary codes has been shown, even for linear codes.

The rates in the Shannon model far outpace their Hamming-model counterparts. For example, for the binary symmetric channel (BSC), which flips each transmitted bit independently with probability $p$, its capacity is $1-h(p)$. Correcting the same fraction $p$ of substitutions in Hamming's model, wherever they occur, requires a relative minimum distance $2p$, and the achievable rate is then at most $R_2(2p)$. Since $R_2(\delta) = 0$ at $\delta \geq \frac{1}{2}$, by the Plotkin bound~\cite{Plo60}, no positive rate corrects an adversarial error fraction of $\frac{1}{4}$, whereas the capacity remains positive for every $p < \frac{1}{2}$.

List decoding bridges the two models. Introduced by Elias~\cite{Eli57} and Wozencraft~\cite{Woz58}, it keeps the adversarial error model but weakens what the decoder must return, allowing a short list of candidates in place of a single codeword. For every $R < 1-h(p)$ there exist binary codes of rate $R$ that are list decodable up to radius $p$ with bounded list size~\cite{ZP81,Eli91}, whereas no rate above $1-h(p)$ is possible with subexponential list size~\cite{Gur04}. The list decoding model thereby provides an avenue to approach Shannon capacity even when the errors are adversarial.

\bigskip

\subsection{Main results}
\label{subsec:main-results}

This work crosses between the two error models on a different bridge, drawing on a feature built into the Hamming model from the outset---\emph{obliviousness} to the channel. A minimum-distance guarantee is not tied to a specific noise model. In particular, it is not necessarily tailored to the binary symmetric channel (BSC), the binary erasure channel (BEC), or any other binary memoryless symmetric channel, for that matter. Indeed, previous works have shown quantitative aspects of this, such as~\cite{TZ00,HSS21,PSW24,KYH26}.

Exploiting this connection in reverse, we can leverage the model-agnostic error-resilience offered by minimum distance to turn bounds for Shannon-model channels into Hamming-model bounds. Whenever codes of relative distance $\delta$ remain reliably decodable over a fixed channel, the channel-coding converse caps their rate by its capacity.

Our result below is a formal manifestation of this intuition and serves as the principal workhorse behind our approach. The strength of our approach draws from an expansion of the scope of this connection to include channels with quantum outputs.

Fix a finite-dimensional binary-input output-symmetric classical--quantum (cq) channel with output density matrices $\sigma_0,\sigma_1 \in \Dens(\C^d)$. Two quantities of interest govern our approach: (i) the \emph{uniform-prior Holevo information} $\chi(\sigma_0,\sigma_1)$, which equals the channel capacity by output-symmetry (\cref{prop:output-symmetric-capacity}), and (ii) the \emph{posterior bit error rate} $p_{\mathrm{e}}(\sigma_0,\sigma_1)$, the error probability of posterior sampling a single bit. For non-commuting outputs, this is defined as the bit error rate of the pretty good measurement (PGM), the quantum extension of posterior sampling. Both objects are defined formally in \cref{subsec:binary-input-cq-channels,subsec:pgm-definition}. With this backdrop, we now formally introduce our main theorem. 

\begin{theorem}[The pretty good criterion]
\label{thm:main-result}
Fix $\delta \in (0,\frac{1}{2})$ and a binary code $\cC \subseteq \bits^n$ with distance $d_{\min}(\cC) \geq \delta n$ and rate $R(\cC) \coloneqq n^{-1}\log{\abs{\cC}}$. Fix a finite-dimensional binary-input output-symmetric\footnote{\label{fn:output-symmetry-lift}The output-symmetry hypothesis is a mild constraint. Indeed, consider $\widehat{\sigma}_0 \coloneqq \frac{1}{2}\sigma_0 \oplus \frac{1}{2}\sigma_1$ and $\widehat{\sigma}_1 \coloneqq \frac{1}{2}\sigma_1 \oplus \frac{1}{2}\sigma_0$. For any $\sigma_0$ and $\sigma_1$, the pair $(\widehat{\sigma}_0,\widehat{\sigma}_1)$ is output-symmetric. Moreover, $p_{\mathrm{e}}(\widehat{\sigma}_0,\widehat{\sigma}_1) = p_{\mathrm{e}}(\sigma_0,\sigma_1)$ and $\chi(\widehat{\sigma}_0,\widehat{\sigma}_1) = \chi(\sigma_0,\sigma_1)$. This lift can double the output dimension, but it leaves the leading asymptotic rate bound unchanged.} classical--quantum (cq) channel with output states $\sigma_0,\sigma_1 \in \Dens(\C^d)$. If $p_{\mathrm{e}}(\sigma_0,\sigma_1) < \delta$, then
\begin{equation} 
R(\cC) \leq \chi(\sigma_0,\sigma_1) + O(n^{-1/2}),
\end{equation}
where the implicit constant depends only on $\delta$ and the cq channel $\Phi_\sigma$. In particular, maximizing over all binary codes with minimum distance at least $\delta n$ and letting $n \to \infty$ gives
\begin{equation}
R_2(\delta) \leq \chi(\sigma_0,\sigma_1).
\end{equation}
\end{theorem}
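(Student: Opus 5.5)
We plan to show that transmitting a uniformly random codeword of $\cC$ through $\Phi_\sigma^{\otimes n}$ lets the receiver recover the codeword with probability bounded away from zero. We then invoke a strong converse for cq channels to cap $\log\abs{\cC}$ by $n\chi(\sigma_0,\sigma_1)+O(\sqrt n)$.

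\textbf{Step 1: the decoder.} Send a uniformly random $\bc\in\cC$, so the receiver holds $\rho_\bc=\sigma_{c_1}\otimes\cdots\otimes\sigma_{c_n}$. The receiver applies the block pretty good measurement $\{\rho^{-1/2}\rho_\bx\rho^{-1/2}\}_{\bx\in\cC}$, where $\rho=\abs{\cC}^{-1}\sum_{\bx}\rho_\bx$, and outputs $\bchat$. This is the quantum analog of sampling from the posterior over codewords. The key claim is
\[
\Eb[d_H(\bc,\bchat)]\leq p_{\mathrm{e}}(\sigma_0,\sigma_1)\,n .
\]
We intend to obtain it from the single-bit PGM error via output-symmetry. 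By symmetry under the unitary that swaps $\sigma_0\leftrightarrow\sigma_1$, and by the product structure, the probability that coordinate $i$ is wrong should reduce to a single-letter quantity bounded by $p_{\mathrm{e}}$. For commuting outputs this is exact, because posterior sampling over the code marginalizes correctly. For the general case we would compare the block PGM with the PGM over the full product ensemble $\bits^n$ (equivalently the linear code $\bits^n$), using the fact that the PGM is covariant under the symmetry group. This per-coordinate reduction is the main obstacle. The PGM restricted to a code is not a tensor product, so we expect to need either operator concavity, via $\Tr[\rho_\bx\rho^{-1/2}\rho_\by\rho^{-1/2}]$ expressions summed over pairs at fixed Hamming distance, or a purification/pinching argument. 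If the exact identity fails, we would settle for an upper bound $\Eb[d_H]\le n\,p_{\mathrm e}+o(n)$.

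\textbf{Step 2: from distance to success.} Every wrong output satisfies $d_H(\bc,\bchat)\geq \delta n$, so
\[
\Prb[\bchat\neq\bc]\leq \Eb[d_H]/(\delta n)\leq p_{\mathrm e}/\delta<1 .
\]
The success probability is therefore at least $1-p_{\mathrm e}/\delta=:\eta>0$, a constant.

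\textbf{Step 3: strong converse.} For a product cq channel, any code of size $M$ with success probability at least $\eta$ satisfies
\[
\log M\leq n\chi^*+O(\sqrt{n\log(1/\eta)}\,)
\]
by the second-order or hypothesis-testing converse. One route is a meta-converse via $D_H^\epsilon$ against the product state $\bar\sigma^{\otimes n}$, followed by a Chebyshev or Berry--Esseen bound on the information variance. Here $\chi^*$ is the capacity, which equals $\chi(\sigma_0,\sigma_1)$ by \cref{prop:output-symmetric-capacity}. Taking $\bar\sigma=(\sigma_0+\sigma_1)/2$ and using symmetry, every input string has the same relative-entropy mean $n\chi$ and bounded variance. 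Dividing by $n$ gives the bound, with the constant depending on $\delta$ through $\eta$ and on the channel through the variance.
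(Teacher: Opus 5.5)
Your Steps 2 and 3 are sound. Step 2 is exactly \cref{cor:block-error}. Step 3 is the same meta-converse the paper uses: conjugate each decoder operator by $U^{c_1}\otimes\cdots\otimes U^{c_n}$ and average, obtaining one test between $\sigma_0^{\otimes n}$ and $\bar\sigma^{\otimes n}$ that succeeds with probability at least $\eta$ on the first and accepts with probability $1/\abs{\cC}$ on the second. The paper then bounds this test directly, with two separate spectral cutoffs and the gentle measurement lemma, rather than citing second-order asymptotics. (A plain Chebyshev bound on an ``information density'' needs care when $\sigma_0$ and $\bar\sigma$ do not commute.)

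The genuine gap is Step 1, which you flag yourself. The per-coordinate bound $\Prb[\bchat_i\neq\bc_i]\leq p_{\mathrm{e}}(\sigma_0,\sigma_1)$ is the heart of the theorem, and none of your routes reaches it:
\begin{itemize}
\item For a nonlinear code, the code ensemble is not invariant under the shifts $U^{c_1}\otimes\cdots\otimes U^{c_n}$. So covariance, or comparison with the PGM of the full ensemble $\bits^n$, gives no handle on the code PGM.
\item ``Operator concavity'' and ``pinching'' are not yet arguments.
\item Even the commuting case is not immediate. The $i$-th bit of a posterior sample is a posterior sample of $\bc_i$ given the \emph{whole} output. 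You still need that discarding the other $n-1$ outputs cannot decrease the error. You also need that the marginal law of $\bc_i$, which need not be uniform for a nonlinear code, cannot push the error above the uniform-prior value.
\end{itemize}

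The missing mechanism is three facts.
\begin{enumerate}
\item \textbf{The PGM is closed under coarse-graining.} Let $\{M_c\}_{c\in\cC}$ be the block PGM operators and $\pi_i(b)\coloneqq\Prb[\bc_i=b]$. The coarse-grained ensemble has the same average state, so $\sum_{c\in\cC:c_i=b}M_c$ is exactly the PGM operator of the binary ensemble $\{(\pi_i(b),\Eb[\sigma_\bc\mid\bc_i=b])\}_{b\in\bits}$. Hence $\Prb[\bchat_i\neq\bc_i]$ equals that ensemble's PGM error, which removes the non-product structure of the block PGM entirely.
\item \textbf{The PGM error cannot decrease under any CPTP map.} The paper proves this via the Petz recovery map, whose composition with the channel contracts the inner product weighted by the ensemble's average state. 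Alternatively, use data processing for the sandwiched R\'enyi-$2$ divergence. Applied to the partial trace $\T_i$, which sends $\Eb[\sigma_\bc\mid\bc_i=b]$ to $\sigma_b$, it gives $\Prb[\bchat_i\neq\bc_i]\leq p_{\mathrm{e},\pi_i(1)}(\sigma_0,\sigma_1)$.
\item \textbf{For an output-symmetric pair, the uniform prior is worst.} Attach a classical flag $\bb\oplus\mathbf{t}$ with $\mathbf{t}\sim\Ber(\alpha)$. This gives a uniform-prior ensemble whose PGM error is $p_{\mathrm{e},\alpha}=p_{\mathrm{e},1-\alpha}$. Tracing out the flag and applying the second fact gives $p_{\mathrm{e},\alpha}\leq p_{\mathrm{e}}$.
\end{enumerate}
Together these yield $\Eb[d(\bchat,\bc)]\leq n\,p_{\mathrm{e}}(\sigma_0,\sigma_1)$ exactly, with no $o(n)$ loss. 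Your Steps 2 and 3 then finish the proof.
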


The criterion is already informative for classical channels, where the outputs $\sigma_0$ and $\sigma_1$ commute. Applied to the BEC, it recovers the Plotkin bound~\cite{Plo60}, and applied to the BSC, it recovers the Elias--Bassalygo bound~\cite{Eli55,Bas65} (\cref{subsec:bec,subsec:bsc}). A correspondence between these two bounds and their respective channels has been previously noted in a different context~\cite{Bla77,Dal15}.

For the state-of-the-art MRRW bounds, however, it was not previously known whether they can be captured as successful decoding over an appropriate channel. As it now turns out, there are barriers to such a pursuit: the Elias--Bassalygo bound is the strongest that \cref{thm:main-result} can yield when restricted to binary-input \emph{classical} channels.\footnote{\emph{Proof sketch.} For $q_y \coloneqq \Prb[\bb = 1 \mid \by = y]$, one has $p_{\mathrm{e}} = \Eb[2q_{\by}(1-q_{\by})]$ and $\chi = 1-\Eb[h(q_{\by})]$. Since $h(q) = f(2q(1-q))$ for the increasing concave function $f(t) \coloneqq h((1-\sqrt{1-2t})/2)$, Jensen's inequality gives $\chi \geq 1-f(p_{\mathrm{e}})$, with the equality case being BSC. Letting $p_{\mathrm{e}} \to \delta^-$ yields the Elias--Bassalygo bound.} Thus, if such a channel were to exist, it ought to be a genuinely quantum classical--quantum (cq) channel with non-commuting output states.

A driving question in this work, therefore, is determining whether the MRRW bound can be realized as successful decoding over an appropriate cq channel with genuinely quantum outputs. This leads us to the pure-state channel (PSC), whose connection to the first MRRW bound is laid out in~\cref{subsec:unique-list-pgm,subsec:syndrome-duality}. Concretely, the PSC is the BSC with a coherent output. Writing $\ket{x}$ for the standard basis vector indexed by $x \in \bits^n$, define the Bernoulli superposition state $\ket{\Ber(p)^n} \coloneqq \ket{\Ber(p)}^{\otimes n}$ whose amplitudes are the square roots of the BSC noise probabilities, where $\ket{\Ber(p)} \coloneqq \sqrt{1-p}\ket{0} + \sqrt{p}\ket{1}$. The blocklength-$n$ PSC then maps each codeword to the corresponding coherent noisy codeword---the superposition over all possible noisy codewords. More formally, the PSC is defined by the map
\begin{equation*}
c \longmapsto \ket{\sigma_c} \coloneqq X^c\ket{\Ber(p)^n}
= \sum_{e \in \bits^n}{\sqrt{p^{\wt(e)}(1-p)^{n-\wt(e)}}\ket{c \oplus e}},
\end{equation*}
where $X^c \coloneqq X^{c_1} \otimes \cdots \otimes X^{c_n}$ and $X$ is the X-Pauli matrix defined as $X\ket{0} = \ket{1}$ and $X\ket{1} = \ket{0}$. Note that each coordinate of the output state $\sigma_c$ carries one of two nonorthogonal states $X^{c_i}\ket{\Ber(p)}$. Moreover, measuring $\sigma_c$ in the computational basis yields the outcome $c \oplus e$ where $e \sim \Ber(p)^n$. Thus, decoding over $\PSC(p)$ is no harder than decoding over $\BSC(p)$. The PSC additionally retains coherence among the possible error patterns, which the computational-basis measurement discards.

The pretty good measurement (PGM) provides a way to exploit this coherence, and for binary linear codes it does so \emph{optimally}, thanks to previous work on discriminating geometrically uniform quantum states~\cite{BKMH97,UTHH99,EF01,EMV04,ZCCL25,RP21,PR25}. In classical terms, for the PSC, posterior sampling and maximum-a-posteriori (MAP) decoding coincide for linear codes. In fact, this optimality persists down at the bit level (\cref{fact:psc-bitwise-optimal}). In particular, no explicit decoder for one codeword bit can beat the corresponding bit of the PGM.

Now, fix a binary linear code $\cC \leq \F_2^n$ with $d_{\min}(\cC) \geq \delta n$, let $\bc$ be a uniformly random codeword from $\cC$, and let $\bchat$ be the outcome of the PGM applied to the output state $\ket{\sigma_{\bc}}$. Since measuring the $i$'th qubit in the computational basis recovers the $i$'th codeword bit $\bc_i$ with error $p$, bitwise optimality therefore tells us that $\Prb[\bchat_i \neq \bc_i] \leq p$ for every $i \in \{1, \ldots , n\}$. Thus,
\begin{equation*}
\delta n \cdot \Prb[\bchat \neq \bc]
\leq \Eb[d(\bchat,\bc)]
= \sum_{i=1}^{n}{\Prb[\bchat_i \neq \bc_i]}
\leq pn,
\end{equation*}
where the first inequality holds because $\bc$ and $\bchat$ are distinct codewords on the event $\{\bchat \neq \bc\}$. Hence $\Prb[\bchat = \bc] \geq 1-p/\delta$, meaning that for every $p < \delta$, a single measurement of the PSC output returns the whole transmitted codeword with constant probability (\cref{cor:block-error}). This is exactly the operational conclusion that a constant-size list would have supplied, and it is the sense in which the first MRRW bound is the list-decoding threshold of the PSC, as speculated in \cref{subsec:unique-list-pgm}. Indeed, when coupled with the well-established strong converse for cq channel coding~\cite{Win99a,ON99,MO17}, we immediately conclude that the code's rate cannot exceed the channel capacity. More concretely, we arrive at the rate bound $R(\cC) \leq h(\frac{1}{2}-\sqrt{p(1-p)})+O(n^{-1/2})$ for every $p < \delta$. Thus, letting $p \to \delta$ gives the first MRRW bound via our framework.

In the preceding argument, it turns out that linearity was really only a convenience rather than a necessity. Indeed, it only entered through bitwise optimality, which is what allowed the block PGM to be compared against a single-qubit measurement. For an arbitrary code, the same bound $\Prb[\bchat_i \neq \bc_i] \leq p$ follows instead from three general properties of the PGM: (1) closure under coarse-graining (\cref{lem:block-pgm-bitwise}), which is a key feature of posterior sampling in the classical case, (2) monotonicity of its error under discarding output systems (\cref{lem:pgm-data-processing}), and (3) the worst-case optimality of the uniform input prior (\cref{lem:uniform-prior-worst}). Nor was anything tied to pure-state outputs, as those three properties and the strong converse hold for every binary-input output-symmetric cq channel. The derivation therefore extends seamlessly to general cq channels, which is exactly the pretty good criterion (\cref{thm:main-result}).

Now, the pretty good criterion (\cref{thm:main-result}) provides us with the freedom to develop other rate upper bounds, specifically in the choice of the pair of output-symmetric density matrices $\sigma_0,\sigma_1$. As we will later see, one can choose $\sigma_0,\sigma_1$ to recover the second MRRW \emph{exactly}, albeit with four-dimensional states instead of qubit states. But for now, still focusing on channels with qubit outputs, there is the following natural question: is the first MRRW bound the strongest bound attainable by a qubit-output channel? As it so happens, it is not! By considering appropriate mixed qubits $\sigma_0$ and $\sigma_1$ and their corresponding mixed-qubit channel (MQC) (\cref{subsec:mqc}), we obtain a bound $R_{\MQC}(\delta)$ that strictly improves upon the first MRRW bound throughout the entire interval $\delta \in (0,\frac{1}{2})$.

\begin{theorem}[MQC improvement over the first MRRW bound]
\label{thm:first-mrrw-improvement}
Define
\begin{equation*}
R_{\MQC}(\delta)
\coloneqq
\min_{0 \leq r \leq \delta}
\left[s\left((1-2r)^2\right)-s\left(2(\delta-r)(1-2r)\right)\right],
\end{equation*}
where $s(D) \coloneqq h((1-\sqrt{1-D})/2)$. For every $\delta \in (0,\frac{1}{2})$, we have
\begin{equation*}
R_2(\delta) \leq R_{\MQC}(\delta) < \MRRW(\delta).
\end{equation*}
\end{theorem}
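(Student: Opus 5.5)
The plan is to apply \cref{thm:main-result} to a one-parameter family of qubit-output channels that interpolates away from $\PSC(\delta)$. The whole proof rests on the Bloch-sphere reading of $s$: a qubit whose Bloch vector has length $\ell$ has von Neumann entropy $h((1-\ell)/2)$, so $s(D)$ is exactly the entropy of a qubit with $\ell^2 = 1-D$.

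\textbf{Construction.} For $0\le r\le\delta$, I take $\sigma_0,\sigma_1$ to be qubit states with Bloch vectors $(a,0,\pm b)$. The sign convention is chosen so that $X$ swaps the two states; this gives output-symmetry with the symmetrizing unitary $X$.
\begin{itemize}
\item The average state $\frac12(\sigma_0+\sigma_1)$ has Bloch vector $(a,0,0)$.
\item Hence $\chi(\sigma_0,\sigma_1)=h\bigl(\tfrac{1-a}{2}\bigr)-h\bigl(\tfrac{1-\sqrt{a^2+b^2}}{2}\bigr)$.
\item Matching this to the bracket in $R_{\MQC}$ forces $a^2 = 1-(1-2r)^2$ and $a^2+b^2 = 1-2(\delta'-r)(1-2r)$, with $\delta'<\delta$ as the free parameter.
\item Equivalently, $b^2 = (1-2r)(1-2\delta')$.
\end{itemize}
At $r=\delta'$ the states are pure, $a = 2\sqrt{\delta'(1-\delta')}$, and $b=1-2\delta'$. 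This is exactly $\PSC(\delta')$ with $\ket{\Ber(\delta')}$ and $X\ket{\Ber(\delta')}$. For general $r$, the states look like a depolarized or dephased version of the PSC. I would guess the operational picture is ``the PSC mixed with a BSC-like component of crossover $r$.''

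\textbf{Key step: the PGM bit error.} Next I compute $p_{\mathrm{e}}(\sigma_0,\sigma_1)$ for this pair in closed form. With $\rho=\frac12(\sigma_0+\sigma_1)$, the PGM elements are $\frac12\rho^{-1/2}\sigma_b\rho^{-1/2}$. Here $\rho=\frac12(I+aX)$ is diagonal in the $X$-eigenbasis with eigenvalues $(1\pm a)/2$, so $\rho^{-1/2}$ is explicit. A direct $2\times2$ computation should give
\[
p_{\mathrm{e}} = \tfrac12 - \frac{b^2}{2\sqrt{1-a^2}} .
\]
Substituting $a,b$ gives $\sqrt{1-a^2}=1-2r$, hence $p_{\mathrm{e}}=\frac12-\frac{1-2\delta'}{2}=\delta'$. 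Indeed, this is the reason for the parametrization: the bit error stays equal to $\delta'$ for every $r$. Since $\delta'<\delta$, \cref{thm:main-result} yields $R_2(\delta)\le \chi$ for every admissible $\delta'$ and $r\le\delta'$. I then let $\delta'\uparrow\delta$, using continuity of $s$ and of the bracket in $\delta'$, and minimize over $r$. This gives $R_2(\delta)\le R_{\MQC}(\delta)$. The endpoint $r=\delta$ needs only a limiting argument.

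\textbf{Strict improvement.} At $r=\delta$ the bracket equals $s((1-2\delta)^2)-s(0)=h\bigl(\frac12-\sqrt{\delta(1-\delta)}\bigr)=\MRRW(\delta)$, so $R_{\MQC}\le\MRRW$. For strictness, I show that the bracket $g(r)$ has $g'(\delta)>0$; then decreasing $r$ slightly below $\delta$ strictly lowers the value.
\begin{itemize}
\item Differentiate $g(r)=s((1-2r)^2)-s(2(\delta-r)(1-2r))$.
\item The second argument tends to $0$ as $r\to\delta$.
\item $s(D)\approx h(D/4)$ near $0$, and $h$ has derivative $\log(4/D)\to\infty$ at $D\to0^+$.
\item The inner derivative is $-2(1-2r)+\dots<0$, so the second term contributes $+\infty$ to $g'(r)$ as $r\uparrow\delta$.
\item The first term has bounded derivative for $\delta<\tfrac12$.
\end{itemize}
Hence $g'(r)\to+\infty$ as $r\uparrow\delta$, so some $r<\delta$ gives $g(r)<\MRRW(\delta)$ for every $\delta\in(0,\frac12)$.

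\textbf{Main obstacle.} I expect the main obstacle to be the PGM computation and checking that the claimed formula for $p_{\mathrm{e}}$ is correct, since the whole parametrization hinges on it. If the algebra gives a different expression, I would re-solve for $(a,b)$ so that $p_{\mathrm{e}}=\delta'$ and $\sqrt{1-a^2}=1-2r$. The divergence argument for strictness would still go through.
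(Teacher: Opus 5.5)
Your proof is correct and follows the paper's route. Your pair $(a,0,\pm b)$ is exactly the paper's MQC $\mathcal{B}_\eta(\sigma_b^{\PSC}(r))$ with $(1-2\eta)^2=(1-2\delta')/(1-2r)$, so the extra noise is a bit flip applied after $\PSC(r)$ rather than a crossover-$r$ component. Your formula $p_{\mathrm{e}}=\tfrac12-\tfrac{b^2}{2\sqrt{1-a^2}}$ is correct and agrees with the paper's $e(r,\eta)=\tfrac12[1-(1-2r)(1-2\eta)^2]$, and your strictness argument is the same $\tfrac14\log\tfrac{4}{D}$ divergence of $s'$ near $0$ that the paper uses in the variable $u=1-2r$.

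The only differences are minor. You compute the PGM error directly in the $X$-eigenbasis instead of via the classical pre-processing rule. You also make the continuity step $\delta'\uparrow\delta$ explicit, which is needed because the criterion requires $p_{\mathrm{e}}<\delta$; the paper leaves it implicit.
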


Unlike the first MRRW bound, the MQC bound is additionally never weaker than the Elias--Bassalygo bound. Indeed, when $r = 0$, it recovers the bound obtained by $\BSC(J(\delta))$ where $J(\delta) \coloneqq (1-\sqrt{1-2\delta})/2$ is the binary Johnson radius~\cite{Joh62}. Moreover, when $r = \delta$, it recovers the bound obtained by $\PSC(\delta)$. In fact, by~\cref{rem:mqc-has-all-qubits}, the MQC rate bound is never worse than \emph{any} rate bound obtained by any other binary-input output-symmetric cq channel with qubit outputs.

The MQC does not by itself outperform the second MRRW bound in the low-distance regime (where it is better than the first MRRW bound). The second MRRW bound instead enters our framework through a channel family with higher-dimensional outputs. This is achieved by \emph{channel masking} (\cref{subsec:masking}), an operation that enables one to consider non-uniform prior output states. Briefly, it is defined as follows: given an input bit $b$, the channel draws a hidden bit $m \sim \Ber(\alpha)$ for a fixed parameter $\alpha \in [0,1]$ and reveals the ``masked'' bit $b \oplus m$, along with an accompanying quantum state $\tau_m$ as a ``quantum hint'' for the hidden bit $m$. Decoding $b$ is therefore equivalent to decoding the ``mask'' $m$ from a generally non-uniform state ensemble. Applying this operation to the PSC gives the masked PSC and recovers the second MRRW bound \emph{exactly} (\cref{cor:mpsc-rate}).

Applying the same masking operation to the MQC gives the masked mixed-qubit channel (2MQC). This family contains the masked PSC and, when the masking bias $\alpha = 1/2$, recovers the MQC (along with an input-independent, and thus inconsequential, classical bit). These inclusions make the 2MQC a natural refinement, but the strict improvement is not automatic; it follows from the optimization and analysis underlying the next theorem.

\begin{theorem}[2MQC improvement over the second MRRW bound]
\label{thm:second-mrrw-improvement}
For every $\delta \in (0,\frac{1}{2})$, we have\footnote{See~\cref{tab:2mqc-mrrw-numerics} for numerical estimates of the improvement.}
\begin{equation*}
R_2(\delta) \leq R_{\mMQC}(\delta) < \MRRW^{(2)}(\delta).
\end{equation*}
Here, $R_{\mMQC}(\delta)$ is the variational bound defined in \eqref{eq:2mqc-rate}, and $\MRRW^{(2)}(\delta)$ is the second MRRW bound in \eqref{eq:second-mrrw}.
\end{theorem}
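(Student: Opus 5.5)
The argument has two parts. The first inequality, $R_2(\delta) \leq R_{\mMQC}(\delta)$, will come directly from the pretty good criterion (\cref{thm:main-result}) applied to the 2MQC family. The second, $R_{\mMQC}(\delta) < \MRRW^{(2)}(\delta)$, will be a perturbation argument around the masked PSC, which by \cref{cor:mpsc-rate} attains the second MRRW bound exactly.

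\textbf{First inequality.} The 2MQC is indexed by a masking bias $\alpha$ and the mixed-qubit parameters $(p,r)$.
\begin{itemize}
\item I would check that every member is a finite-dimensional, binary-input, output-symmetric cq channel. The classical masked bit $b\oplus m$, tensored with the hint $\tau_m$, gives block-diagonal outputs that are swapped by a unitary.
\item For such a channel I would compute the Holevo information $\chi$ and the PGM bit error rate $p_{\mathrm{e}}$. Conditioned on the revealed bit $b\oplus m$, decoding $b$ is the same as decoding $m$ from the prior ensemble $\{(1-\alpha,\tau_0),(\alpha,\tau_1)\}$. So $p_{\mathrm{e}}$ is the PGM error for this non-uniform binary qubit ensemble, and $\chi = 1 - [\text{entropy terms of the ensemble}]$.
\item Whenever $p_{\mathrm{e}} < \delta$, \cref{thm:main-result} gives $R_2(\delta) \leq \chi$. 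Taking the infimum over admissible parameters, and passing $p_{\mathrm{e}} \to \delta^-$ by continuity, gives $R_2(\delta) \leq R_{\mMQC}(\delta)$ as defined in \eqref{eq:2mqc-rate}.
\end{itemize}

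\textbf{Second inequality, setup.} Fix $\delta$ and let $(\alpha^*, p^*)$ be the masked-PSC parameters that achieve $\MRRW^{(2)}(\delta)$, i.e.\ the minimizer of the second MRRW variational expression. The masked PSC is the 2MQC with mixing parameter $r=0$ (pure hints). Since $\MRRW^{(2)}(\delta) = \chi$ at this point and $R_{\mMQC}$ is an infimum, it suffices to find a direction that introduces mixedness and strictly lowers $\chi$ while keeping $p_{\mathrm{e}} = \delta$.

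\textbf{Second inequality, local computation.} I would move along the constraint surface $p_{\mathrm{e}}(\alpha,p,r)=\delta$ by solving for $p = p(r)$ (or for $\alpha$) near $r=0$ via the implicit function theorem. I would then compute the one-sided derivative $\frac{d}{dr}\chi$ at $r=0^+$ and show it is strictly negative for every $\delta\in(0,\tfrac12)$. This is the analogue of the first-MRRW case, where moving off the pure-state endpoint $r=\delta$ of \cref{thm:first-mrrw-improvement} helps.

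A plausible structural reason for the gain is that, near a pure state, the entropy of a mixed qubit behaves like $-\lambda\log\lambda$ in the small eigenvalue $\lambda$. Its derivative is therefore infinite, while $p_{\mathrm{e}}$ changes only linearly. If mixing is parametrized so that this entropy term enters $\chi$ with a favorable sign, the derivative is $-\infty$ and strict improvement is immediate. If instead the first-order term is finite, I would show its sign using the stationarity conditions already satisfied by $(\alpha^*,p^*)$, which simplify the Lagrangian derivative.

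\textbf{Main obstacle.} The hard step is this uniform strict-sign verification. The PGM error for a non-uniform mixed-qubit ensemble has a messy closed form involving square roots of $\rho^{-1/2}$-conjugated states. Showing the derivative is negative for all $\delta\in(0,\tfrac12)$, and not only numerically, is delicate near the endpoints $\delta\to0$ and $\delta\to\tfrac12$, where the optimizer degenerates. I would try first to reduce everything to Bloch-vector coordinates, where the PGM for two qubits has an explicit formula. If a clean global sign argument fails, I would fall back on endpoint asymptotic expansions combined with a monotonicity argument on the interior.
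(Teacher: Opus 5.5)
Your plan follows the paper. The first inequality is \cref{cor:2mqc-rate}. The strict inequality (\cref{prop:2mqc-strict-mrrw2}) perturbs the masked-PSC optimizer by a small bit-flip $\eta$. In the paper's notation $r$ is the PSC parameter; you use $r$ for the flip. The hint entropy $S(\tau_0)=S(\tau_1)=s(v^2)$, with $v^2=4\eta(1-\eta)(1-2r)^2$, then lowers $\chi$ by about $\frac{v^2}{2}\log\frac{1}{v}$, while restoring $p_{\mathrm{e}}=\delta$ costs only $O(v^2)$. The sign is never in doubt, since $\chi=1-h(\alpha)+S(\bar\tau_\alpha)-S(\tau_0)$ by \cref{prop:masking-holevo}.

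The genuine gap is that you never check that this gain is actually present at your base point. Its coefficient is $(1-2r)^2$. At $r=\frac{1}{2}$ both hints equal $\ketbra{+}{+}$, which $X$ fixes, so bit-flip mixing does nothing at all. This is the Elias--Bassalygo endpoint $u=0$ of \eqref{eq:second-mrrw}, where $0\le v\le u$ forces $v=0$. Nothing in your proposal rules out that the second-MRRW minimizer sits there. The paper rules it out by computing $H_\delta'(0^+)=-2\delta s'(2\delta)<0$, using that $s(u^2)=O(u^2\log(1/u))$ has zero slope at $u=0$; you need this step or an equivalent one.

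The other endpoint $\alpha^*=\frac{1}{2}$ also needs care. This is the regime $\MRRW^{(2)}=\MRRW$, which does occur for larger $\delta$. There $\partial_\alpha p_{\mathrm{e}}=0$ by the $\alpha\leftrightarrow 1-\alpha$ symmetry behind \cref{lem:uniform-prior-worst}, so the implicit function theorem in $\alpha$ is unavailable. You should solve for the PSC parameter instead, or reduce to \cref{prop:mqc-strict-improvement} as the paper does.

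Once you know the minimizer has $r^*<\frac{1}{2}$ (the paper's $u_*>0$), your stated main obstacle disappears. You need no uniform-in-$\delta$ sign analysis, no stationarity conditions, and no asymptotics as $\delta\to 0$ or $\frac{1}{2}$, because at each fixed $\delta$ the one-sided derivative is $-\infty$. All you use is the following:
\begin{itemize}
\item $p_{\mathrm{e}}$, $h(\alpha)$ and $S(\bar\tau_\alpha)$ are $C^1$ near the base point, which holds because $\bar\tau_\alpha$ is full rank when $r<\frac{1}{2}$ and $0<\alpha<1$;
\item the constraint has a nonzero partial derivative in the compensating variable.
\end{itemize}
The paper keeps this computation short by working in the coordinates \eqref{eq:2mqc-entropy-coordinates}. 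By \cref{prop:binary-pgm-preprocessing}, $E(u,v,A)$ is affine and increasing in $A$, and $\chi=1+s(u^2)-s(A)-s(v^2)$. Fixing $u=u_*$ and taking $A=\lambda_\delta(u_*,v)=\lambda_0+O(v^2)$ with $\lambda_0\in(0,1)$ gives $G_\delta(u_*,v)-G_\delta(u_*,0)=-\frac{v^2}{2}\log\frac{1}{v}+O(v^2)<0$ for small $v>0$. This never requires differentiating the non-uniform mixed-state PGM formula.
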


It is worth noting that the four channels corresponding to the previously known rate bounds---Plotkin, Elias--Bassalygo, and the two MRRW bounds---have a common structural feature. Namely, they are all orthogonal mixtures of pure states. Thus, their mixedness can be perfectly distinguished by examining their block label. On the other hand, generic MQC outputs have no common one-dimensional orthogonal decomposition, so their ``mixedness'' cannot be entirely distinguished by a readable label. In this sense, the MQC introduces genuine quantum mixtures beyond the orthogonal pure-state mixtures underlying the four previous rate bounds.

\paragraph{Rate bounds for LDPC codes.}
So far, the discussion has only focused on arbitrary binary codes with good distance. A natural question is whether this framework is amenable to accounting for further structural properties. Indeed, as was shown in~\cite{IS15,SY26}, one can exploit the dual covering radius methods developed for the first MRRW bound to show sharper rate-distance tradeoffs for LDPC codes. Given the connection between the PSC and dual covering methods (\cref{prop:syndrome-duality}), it becomes natural to wonder if one can obtain similar state-of-the-art rate bounds for LDPC codes~\cite{SY26}.

Fix a positive integer $w \geq 3$ denoting the locality of the LDPC code. For integers $n,d \geq 1$, let $A_{2,w}(n,d)$ be the largest cardinality of a binary linear code $\cC \leq \F_2^n$ with $d_{\min}(\cC) \geq d$ whose dual is generated by parity checks of weight at most $w$, and put
\begin{equation*}
R_w(\delta) \coloneqq \limsup_{n \to \infty}\frac{1}{n} \log{A_{2,w}(n,\ceil{\delta n})}.
\end{equation*}
For simplicity, we focus here on $w = 3$, although the setup extends to general $w$. We denote the one-check PSC curve by $U_3^{\PSC}$ and the Shangguan--Yang curve by $B_3^{\mathrm{SY}}$; both are defined formally in \eqref{eq:psc-ldpc-rate} and \eqref{eq:sy-ldpc-base}, respectively. Adapting the pretty good criterion (\cref{thm:main-result}) and the bitwise optimality of PGM for binary linear codes (\cref{fact:psc-bitwise-optimal}) using one sparse parity check yields the following theorem, proven in~\cref{sec:psc-ldpc}.

\begin{theorem}[PSC-locality improvement for LDPC codes]
\label{thm:psc-ldpc-improvement}
For every $\delta \in (0,\frac{1}{2})$,
\begin{equation}
R_3(\delta)
\leq U_3^{\PSC}(\delta)
< \MRRW(\delta).
\end{equation}
Moreover, for every $\delta \in [\frac{1}{6},\frac{1}{2})$, we have
\begin{equation}
U_3^{\PSC}(\delta)
< B_3^{\mathrm{SY}}(\delta).
\end{equation}
\end{theorem}

We close this subsection with two open problems.

\paragraph{Optimizing the cq channel.}
\label{open:channel-optimization-open}
By the pretty good criterion (\cref{thm:main-result}), the search for sharper upper bounds on $R_2(\delta)$ for $\delta \in (0,\frac{1}{2})$ now reduces to minimizing the Holevo information subject to keeping the posterior bit error rate below $\delta$. For a fixed output dimension $d \geq 2$, the closed formulas in~\cref{subsec:pgm-bit-error-formulas,subsec:holevo-formulas} make this the problem of minimizing, over output-symmetric density matrices $\sigma_0,\sigma_1 \in \Dens(\C^d)$,
\begin{equation}
\label{eq:objective-function-open}
\chi(\sigma_0,\sigma_1) = S\left(\frac{\sigma_0+\sigma_1}{2}\right) - \frac{1}{2}\left[S(\sigma_0)+S(\sigma_1)\right]
\end{equation}
subject to the constraint that
\begin{equation}
\label{eq:constraint-open}
p_{\mathrm{e}}(\sigma_0,\sigma_1) = \Tr\left(\sigma_0(\sigma_0+\sigma_1)^{-1/2}\sigma_1(\sigma_0+\sigma_1)^{-1/2}\right) \leq \delta,
\end{equation}
where the inverse is taken over $\supp(\sigma_0+\sigma_1)$. Note that by~\cref{fn:output-symmetry-lift}, assuming that $\sigma_0$ and $\sigma_1$ are output-symmetric is only a mild restriction to this optimization problem. The MQC and 2MQC show that noncommuting mixed outputs can improve upon the standard choices, but they do not determine the optimal channel in~\cref{thm:main-result}. Pure outputs can attain at best the first MRRW bound, while the exact optimum across all finite dimensions remains open. As a simple sanity check, Fano's inequality~\cite{Fan61} and the Holevo bound~\cite{Hol73a} show that $\chi(\sigma_0,\sigma_1) \geq 1-h(\delta)$ under the constraint $p_{\mathrm{e}}(\sigma_0,\sigma_1) \leq \delta$, so the method cannot cross the Gilbert--Varshamov bound. Nonetheless, the limitations of this optimization, specifically in relation to previous barriers~\cite{Sam01,NS05,Sam25}, remain to be understood.

\paragraph{BSC decoding up to the minimum distance.}
\label{open:bsc-distance-decoding}
The most conspicuous unresolved question that we wish to flag is in fact entirely classical. For every fixed $0 < \eps < \delta < \frac{1}{2}$, does maximum-likelihood decoding on $\BSC(\delta-\eps)$ succeed with high probability, uniformly over all binary linear codes $\cC \leq \F_2^n$ with $d_{\min}(\cC) \geq \delta n$? The best general guarantee is decoding up to the Johnson radius $J(\delta) = (1-\sqrt{1-2\delta})/2$ via the Johnson bound~\cite{Joh62} and the sharp-threshold theorem of Tillich and Z\'emor~\cite{TZ00}\footnote{This sharp threshold result was more recently generalized to $q$-ary symmetric channels by~\cite{PSW24}.}. Kougang-Yombi and H\k{a}z{\l}a~\cite{KYH26} make progress on this problem by proving that the Johnson barrier can be surpassed for linear codes over larger alphabets. Note that an affirmative answer to this problem, when combined with the BSC strong converse and the Gilbert--Varshamov bound, would prove that the linear analog of $R_2(\delta)$ precisely equals $1-h(\delta)$.

\subsection{Why posterior sampling?}
\label{subsec:why-posterior}

In the classical setting, the go-to decoder is maximum-a-posteriori (MAP) decoding, which returns a codeword of largest posterior probability.\footnote{MAP decoding is the Bayesian form of maximum-likelihood decoding (MLD): it maximizes the posterior probability $\Prb[\bc = c \mid \by = y]$ rather than the likelihood $\Prb[\by = y \mid \bc = c]$, and so it incorporates the prior on $\bc$. For the uniform codeword prior used throughout this work, the two rules coincide.} In place of MAP, the decoder used in this work is posterior sampling, whose bit error rate is what dictates the pretty good criterion (\cref{thm:main-result}), \emph{not} MAP. Now, posterior sampling is not optimal by definition, since it merely samples the posterior that MAP maximizes. Nonetheless, it is never far behind: it errs at most twice as often as MAP.\footnote{\emph{Proof sketch.} Conditional on an output $y$, set $q_x \coloneqq \Prb[\bc = x \mid \by = y]$ for $x \in \cC$. Posterior sampling succeeds with probability $\sum_{x \in \cC}{q_x^2}$, whereas MLD succeeds with probability $\max_{x \in \cC}{q_x}$. Averaging the pointwise upper bound $\sum_{x \in \cC}{q_x^2} \leq \max_{x \in \cC}{q_x}$ and combining $\sum_{x \in \cC}{q_x^2} \geq (\max_{x \in \cC}{q_x})^2$ with Jensen's inequality gives $e_{\mathrm{MLD}} \leq p_{\mathrm{e}} \leq 2e_{\mathrm{MLD}}-e_{\mathrm{MLD}}^2 \leq 2e_{\mathrm{MLD}}$.} In fact, Barnum and Knill~\cite{BK02} prove an analogous result for the pretty good measurement (PGM)---the quantum extension of posterior sampling---and the optimal quantum measurement. Since the channel coding strong converse~\cite{Wol57,Str64,Ari73,Win99a,ON99} only requires a constant chance of successful block decoding to kick in, this loss is immaterial. 

The main motive to prefer posterior sampling over MAP is its closure under coarse-graining. In particular, blockwise posterior sampling \emph{simultaneously} samples from the \emph{bitwise} posteriors as well. Indeed, let $\bc$ be uniform on a binary code $\cC \subseteq \bits^n$. Let $\by$ be its output when passed through a classical binary memoryless symmetric channel. Given $\by = y$, draw $\bchat$ from the posterior law of $\bc$. That is, sample $\bchat$ from $\cC$ with chance proportional to $\Prb[\bc = \bchat \mid \by = y]$. The output is a codeword $\bchat \in \cC$ by construction, and, conditioned on $\by = y$, the transmitted word $\bc$ and the decoded word $\bchat$ are independent draws from the same posterior, so sampling commutes with taking coordinates. Indeed, for every coordinate $i \in \{1, \ldots , n\}$, one has
\begin{equation*}
\Prb[\bchat_i = b \mid \by = y]
= \sum_{c \in \cC : c_i = b}{\Prb[\bc = c \mid \by = y]}
= \Prb[\bc_i = b \mid \by = y].
\end{equation*}
Thus, the $i$'th coordinate of the sampled codeword is itself a posterior sample of the transmitted bit. On the other hand, block MAP decoding outputs a codeword, but its coordinates escape single-letter control, since the bits of the most likely codeword need not be the most likely.\footnote{For example, suppose a uniform codeword from $\cC=\{000,110,101\}$ is transmitted over $\BSC(3/7)$ and that $y=000$ was received. The posterior probabilities of the three codewords are $8/17,9/34,9/34$, so blockwise MAP returns $000$. The posterior probabilities that the three coordinates equal $1$ are instead $9/17,9/34,9/34$, so bitwise MAP returns $100\notin\cC$.}

While this contrast between MAP and posterior sampling might appear trifling, these fine distinctions are dramatically amplified in the quantum setting. Indeed, let us revisit the pure-state channel (PSC), defined by the map
\begin{equation*}
c \longmapsto \ket{\sigma_c} = \sum_{e \in \bits^n}{\sqrt{p^{\wt(e)}(1-p)^{n-\wt(e)}}\ket{c \oplus e}}.
\end{equation*}
As mentioned previously, one must exploit the coherence in $\sigma_\bc$ to benefit from the PSC. This coherence is a scarce resource: classically, one may read an output without disturbing it, whereas a quantum output generally affords only one measurement. That measurement should therefore extract as much information about the transmitted codeword $\bc$ as possible. The coarse-graining property of the PGM, inherited from classical posterior sampling (\cref{lem:coarse-graining}), is indispensable for this purpose. MAP decoding has no analogous property in the quantum setting.

As discussed in \cref{subsec:unique-list-pgm}, successful MAP decoding in the classical setting often relies on combinatorial arguments, most of which falter in the quantum setting. The quantum formulation instead makes the pretty good criterion (\cref{thm:main-result})---an unadorned criterion that nonetheless underpins all known rate upper bounds---transparent, while also furnishing a streamlined treatment of the classical case.

\subsection{How does mixing help?}
\label{subsec:how-mixing}

At first glance, the pretty good criterion (\cref{thm:main-result}) might seem to favor either fully classical states---leading to the Elias--Bassalygo bound---or pure quantum states---leading to the first MRRW bound. Its recovery of the second MRRW bound suggests that mixed states can do better, as the MQC and 2MQC constructions demonstrate.

To explain conceptually why moving away from pure states can lower the bound in~\cref{fig:mixed-optimization}, we first describe how the MQC is realized.

The MQC output states can be realized as passing the PSC output states through the bit-flip channel. Formally, let $\mathcal{B}_\eta$ be the X-Pauli extension of $\BSC(\eta)$, given by 
$\mathcal{B}_\eta(\varrho) \coloneqq (1-\eta)\varrho + \eta X\varrho X$. For $b \in \bits$ and $r \in (0,\frac{1}{2})$, let $\sigma_b^{\PSC}(r)$ be the output states of $\PSC(r)$. Then, for a BSC noise level $\eta \in (0,\frac{1}{2})$, the output states of MQC are defined as\footnote{\label{fn:bsc-psc-composition} Using the fact that the bit-flip channel commutes with the PSC (cf.~\cref{subsubsec:mqc-pgm-error,subsubsec:2mqc-pgm-error}), one can recast the MQC at the blocklength level as the map $c \mapsto c \oplus \be \mapsto X^{\bc \oplus \be}\ket{\Ber(r)^n}$, where $\be \sim \Ber(\eta)^n$. Put compactly, we have the identity $\MQC(r,\eta) = \PSC(r) \circ \BSC(\eta)$. The fact that $\be$ is not retained is precisely what produces ``mixedness'' in the MQC. Note that this perspective is supplemental to the current discussion but is implicitly exploited in the proof of~\cref{prop:binary-pgm-preprocessing} and is closely related to~\cref{rem:mqc-endpoints}.}
\begin{equation*}
\sigma_b^{\MQC}(r,\eta) \coloneqq \mathcal{B}_\eta\left(\sigma_b^{\PSC}(r)\right) = (1-\eta)\sigma_b^{\PSC}(r)+\eta\sigma_{b \oplus 1}^{\PSC}(r).
\end{equation*}
On one hand, by data processing---the monotonicity of quantum relative entropy under channels~\cite{Lin75,Uhl77}---adding BSC noise naturally decreases the Holevo information of this new channel. On the other hand, also by data processing (\cref{lem:pgm-data-processing}), the PGM bit error rate goes up (cf.~\cref{rem:mqc-data-processing}). Thus, the success of this approach comes down to numerically inspecting whether the Holevo information can decay faster than the PGM bit error rate. Remarkably, for a tiny value of $\eta$, the drop in Holevo information outpaces the increase in the PGM bit error rate, leading to an improvement over the bound attained by the PSC, i.e., the first MRRW bound (see~\cref{fig:mixed-optimization}). This is the content of \cref{prop:mqc-strict-improvement}.

\subsection{Unique decoding, list decoding, and PGM decoding}
\label{subsec:unique-list-pgm}

There is a useful analogy behind our work that we expound on in this subsection. If one insists on including the rate bounds derived from unique-decoding arguments in our narrative, then Hamming~\cite{Ham50} and Singleton~\cite{Sin64} bounds should correspond to the BSC and BEC, respectively. By resorting to more standard notions than the ones in~\cref{thm:main-result}, one can make the case that the stronger Elias--Bassalygo and Plotkin bounds should instead be interpreted as the `list-decoding thresholds' for the BSC and BEC, respectively. This alludes to a preliminary theory of unique-decoding and list-decoding thresholds for all channels. Upon initial inspection, the Tiet\"av\"ainen~\cite{Tie90} bound and the first MRRW bound corroborate this pattern, the former being the unique decoding threshold and the latter being the list-decoding threshold of some appropriate channel. When presented with this pattern, one could surmise the existence of a channel whose unique-decoding and list-decoding thresholds are prescribed by the Tiet\"av\"ainen and the first MRRW bounds, respectively. As the preceding discourse has inadvertently revealed, the pure-state channel (PSC) precisely fulfills that role, but that comes at the expense of reassessing what list-decoding is ultimately offering us, as we will now explain. This analogy is summarized in the table below.
\begin{center}
\renewcommand{\arraystretch}{1.45}
\small
\begin{tabularx}{\textwidth}{@{}c > {\centering\arraybackslash}X > {\centering\arraybackslash}X@{}}
\toprule
\textbf{Channel} & \textbf{Unique decoding} & \textbf{List/PGM decoding}\\
\midrule
$\BSC(p)$ & Hamming: $p < \delta/2$ gives $R_2(\delta) \leq 1-h(\delta/2)$ &
Elias--Bassalygo: $p < J(\delta)$ gives $R_2(\delta) \leq 1-h(J(\delta))$\footnote{Here, $J(\delta) = (1-\sqrt{1-2\delta})/2$ is the binary Johnson radius~\cite{Joh62}.}\\
\addlinespace
$\BEC(p)$ & Singleton: $p < \delta$ gives $R_2(\delta) \leq 1-\delta$ &
Plotkin: $p < 2\delta$ gives $R_2(\delta) \leq 1-2\delta$\\
\addlinespace
$\PSC(p)$ & Tiet\"av\"ainen: computational-basis decoding at $p < \delta/2$ gives
$R_2(\delta) \leq h\left(\frac{1}{2}-\sqrt{\frac{\delta}{2}\left(1-\frac{\delta}{2}\right)}\right)$ &
first MRRW: PGM bit error rate $p < \delta$ gives
$R_2(\delta) \leq h\left(\frac{1}{2}-\sqrt{\delta(1-\delta)}\right)$ \\
\bottomrule
\end{tabularx}
\end{center}

The fact that the Tiet\"av\"ainen bound naturally corresponds to the unique-decoding analog over PSC is not difficult to observe. Indeed, measuring the output of $\PSC(p)$ in the computational basis yields an output over the classical channel $\BSC(p)$. Thus, if $p < \delta/2$, then the concentration of the error fraction below $\delta/2$ and the minimum distance assumption lead to the block error tending to zero as $n \to \infty$. Invoking the Holevo bound~\cite{Hol73a} and letting $p \to (\delta/2)^-$ therefore yields the rate bound
\begin{equation*}
R_2(\delta) \leq h\left(\frac{1}{2}-\sqrt{\frac{\delta}{2}\left(1-\frac{\delta}{2}\right)}\right),
\end{equation*}
which is the Tiet\"av\"ainen bound~\cite{Tie90}.\footnote{The main focus of~\cite{Tie90} was on proving a much stronger claim. Namely, it showed that the relative \emph{exact} dual covering radius of a binary linear code of relative distance $\delta$ is at most $\frac{1}{2}-\sqrt{\frac{\delta}{2}\left(1-\frac{\delta}{2}\right)}$.} Note that this bound did not leverage the coherence of the PSC output at all.

Enticed by this observation, one is therefore led to speculate that the first MRRW bound might be attainable over PSC for $p < \delta$ via list-decoding. As the preceding discussion reveals, this roadway runs into a revealing obstacle. Indeed, the coherence of the output that one needs to crucially leverage is not amenable to the combinatorial nature of list-decoding, eventually leading us to consider the alternative (and more concrete) approach of PGM decoding.

Nonetheless, the two approaches have a key common feature worth spelling out. Observe that the constant-size list already gives the operational conclusion needed by the rate argument. Namely, if a decoder returns a list of at most $L = O(1)$ codewords containing $\bc$ with probability $1-o(1)$, then choosing uniformly from that list returns $\bc$ with probability at least $(1-o(1))/L$, yielding a constant chance of successful block decoding. This is exactly what one derives from applying PGM over PSC (\cref{cor:block-error}) and what is ultimately needed. Whereas the PGM approach engulfs both classical and quantum channels, it appears that list decoding does not, as far as we understand.

\subsection{Syndrome duality}
\label{subsec:syndrome-duality}

We now motivate our shift to quantum in reproving the first MRRW bound by presenting a connection between decodability over PSC and the well-established (and elegant) dual covering arguments of~\cite{FT05,NS09}. For a binary linear code $\cC \leq \F_2^n$ whose distance is $d$, their arguments prove the first MRRW bound by showing that translates of the Hamming ball of radius $\frac{n}{2}-\sqrt{d(n-d)}+o(n)$ by the words of $\cC^\perp$ cover at least a $1/n$ fraction of $\F_2^n$. Since $\abs{\cC} \cdot \abs{\cC^\perp} = 2^n$, such a covering forces $\abs{\cC}$ to be at most $n$ times the volume of that ball. Letting $d/n \to \delta$, we arrive at the first MRRW bound.
 
Two vital features of this route stand out: (i) it is carried out entirely on the dual code $\cC^\perp$, and (ii) the relative radius $\frac{1}{2}-\sqrt{\delta(1-\delta)}$ that it requires is exactly the Bernoulli parameter that will govern the dual side of the PSC. By observing the Holevo information identity for PSC below, the connection between our PSC approach for the first MRRW bound and previous approaches becomes well-motivated. Note that this proposition has been observed in previous works~\cite{Ren18,RP21,CT24,BCT26}. We mention it here as it substantiates our channel-based approach with previous approaches to the first MRRW bound.

\begin{prop}[{Syndrome duality --- \cite[Proposition 4]{BCT26};\cite[Lemma 16]{RP21}}]
\label{prop:syndrome-duality}
Let $\bc$ be a uniform codeword of a binary linear code $\cC \leq \F_2^n$. Let $\sigma_{\bc}$ be the output state of the input codeword $\bc$ over $\PSC(p)$. Let $\mathbf{z} \sim \Ber(\frac{1}{2}-\sqrt{p(1-p)})^n$. Then
\begin{equation}
\label{eq:psc-linear-mutual-information}
I(\bc;\sigma_{\bc}) = H\left(\mathbf{z} \bmod{\cC^\perp}\right),
\end{equation}
where the entropy on the right is that of the random coset of $\mathbf{z}$ in $\F_2^n/\cC^\perp$. Note that by~\eqref{eq:holevo-formula}, the left-hand side is the block-level Holevo information.\footnote{As an interesting consequence of syndrome duality, if one measures $\sigma_{\bc}$ in the computational basis, the quantum data processing inequality thus yields the inequality $H(\mathbf{z} \bmod{\cC^\perp}) = I(\bc;\sigma_{\bc}) \geq I(\bc;\bc+\be)$ where $\be \sim \Ber(p)^n$. This inequality by itself recovers a statistical analog of Tiet\"av\"ainen~\cite{Tie90} but not the first MRRW bound. Nonetheless, this elegant inequality readily follows from~\cref{prop:syndrome-duality}, albeit using a non-trivial but simple quantum tool, and, to the best of our knowledge, seems not to have been previously noted in the literature.} 
\end{prop}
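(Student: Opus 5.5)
Since every codeword is equally likely, the Holevo information reduces to an entropy computation, $I(\bc;\sigma_{\bc}) = S(\bar\rho) - \frac{1}{\abs{\cC}}\sum_{c}S(\sigma_c)$, where $\bar\rho \coloneqq \frac{1}{\abs{\cC}}\sum_{c\in\cC}\ketbra{\sigma_c}{\sigma_c}$. The outputs $\ket{\sigma_c}$ are pure, so the subtracted term vanishes and $I(\bc;\sigma_{\bc}) = S(\bar\rho)$. The plan is to diagonalize $\bar\rho$ in the Hadamard (Fourier) basis and identify its spectrum with the coset distribution of $\mathbf{z}$ modulo $\cC^\perp$.

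First, I would change basis using $H^{\otimes n}$. This maps $X^c$ to $Z^c$, and it maps the single-qubit state $\ket{\Ber(p)}$ to $\sqrt{1-q}\,\ket{0}+\sqrt{q}\,\ket{1}$. Here $q = \frac{1}{2}-\sqrt{p(1-p)}$, because
\[
\Bigl(\tfrac{\sqrt{1-p}\pm\sqrt{p}}{\sqrt 2}\Bigr)^2 = \tfrac12 \pm \sqrt{p(1-p)}.
\]
Thus $H^{\otimes n}\ket{\sigma_c} = \sum_{z}(-1)^{c\cdot z}\sqrt{P(z)}\,\ket{z}$, where $P$ is the law of $\mathbf{z}\sim\Ber(q)^n$.

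Next, I would average over the linear code using $\frac{1}{\abs{\cC}}\sum_{c\in\cC}(-1)^{c\cdot(z\oplus z')} = \one[z\oplus z'\in\cC^\perp]$. This gives the matrix entries
\[
\bra{z}H^{\otimes n}\bar\rho H^{\otimes n}\ket{z'} = \sqrt{P(z)P(z')}\,\one[z \equiv z' \bmod \cC^\perp].
\]
The matrix is therefore block diagonal, with one block for each coset $z+\cC^\perp$. On the coset $K$, the block equals $\ket{v_K}\!\bra{v_K}$ with $\ket{v_K} = \sum_{z\in K}\sqrt{P(z)}\ket{z}$. This is a rank-one matrix whose only nonzero eigenvalue is $\norm{v_K}^2 = \Prb[\mathbf{z}\in K]$. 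The spectrum of $\bar\rho$ is thus exactly the distribution of the coset of $\mathbf{z}$, so $S(\bar\rho) = H(\mathbf{z}\bmod\cC^\perp)$, which is \eqref{eq:psc-linear-mutual-information}.

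No step here is a real obstacle. The one point to check carefully is the Hadamard computation producing the parameter $q$, with the sign convention that $\sqrt{1-q}$ is the larger amplitude; choosing the other sign only swaps $q$ and $1-q$, which does not affect the coset entropy. The rest is routine Fourier orthogonality on $\cC$.
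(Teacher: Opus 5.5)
Your proof is correct and complete. It is a more direct version of the paper's sketch rather than a different idea.

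The paper purifies the ensemble as $\ket{\Psi_{\cC}(p)}=\mathrm{CNOT}_{A\to B}(\ket{\cC}_A\otimes\ket{\Ber(p)^n}_B)$, so that $I(\bc;\sigma_{\bc})=S(\bar{\sigma})$ is the entropy of the $B$-marginal. It then applies $\mathsf{H}^{\otimes 2n}$, using $\mathsf{H}^{\otimes n}\ket{\cC}=\ket{\cC^\perp}$, $\mathsf{H}^{\otimes n}\ket{\Ber(p)^n}=\ket{\Ber(q)^n}$, and the reversal of the CNOT, to reach \eqref{eq:psc-bipartite-dual}. In that state the code register holds $\bz\oplus\bc^\perp$ with $\bc^\perp$ uniform on $\cC^\perp$, i.e.\ a uniform superposition over the coset of $\bz$. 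So what it retains about $\bz$ is exactly $\bz\bmod\cC^\perp$, and purity of the global state transfers this coset entropy to the $B$ side.

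You skip the purification and conjugate $\bar{\sigma}$ by $\mathsf{H}^{\otimes n}$ directly. The two computations line up term for term:
\begin{itemize}
\item The matrix you obtain is precisely the $B$-marginal of \eqref{eq:psc-bipartite-dual}.
\item Your vectors $\ket{v_K}$ are its (unnormalized) Schmidt vectors, paired with the orthonormal coset states on $A$.
\item Your character identity $\abs{\cC}^{-1}\sum_{c\in\cC}(-1)^{c\cdot w}=\one[w\in\cC^\perp]$ is the same fact as $\mathsf{H}^{\otimes n}\ket{\cC}=\ket{\cC^\perp}$.
\end{itemize}
Your route is more elementary and self-contained: there is no CNOT-reversal identity or Schmidt symmetry to check, and the eigenvectors are explicit. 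The paper's bipartite picture instead buys the operational reading (decoding Bernoulli-$q$ noise against the dual code, where only the coset is recoverable). That reading motivates the name and the link to the dual-covering proofs.

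Your sign caveat is unnecessary. For $p\in[0,\frac{1}{2}]$ the $\ket{1}$-amplitude $(\sqrt{1-p}-\sqrt{p})/\sqrt{2}$ equals $\sqrt{q}$ exactly. In any case, a $c$-independent sign change is a conjugation by $Z^{\otimes n}$, and a swap $q\leftrightarrow 1-q$ translates $\bz$ by the all-ones word; neither changes the spectrum or the coset entropy.
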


Indeed, one can reinterpret the works of~\cite{FT05,NS09} as constructing a distribution $\cD$ over $\F_2^n$ satisfying the following conditions for a sample $\mathbf{v} \sim \cD$: (i) permuting coordinates of $\mathbf{v}$ yields the same distribution $\cD$, (ii) the event $\wt(\mathbf{v}) \leq \frac{n}{2}-\sqrt{d(n-d)}+o(n)$ occurs with probability $1$, and (iii) the quantity $\mathbf{v} \bmod{\cC^\perp}$ attains at least $1/n$ of the elements in $\F_2^n/\cC^\perp$. Given conditions (i) and (ii), it seems natural to consider beforehand whether one can just instantiate a similar approach but with $\Ber(\frac{1}{2}-\sqrt{\delta(1-\delta)})^n$ in place of $\cD$. By~\cref{prop:syndrome-duality}, this is indeed what decoding over the PSC is achieving under the hood.

Next, we sketch the proof of~\cref{prop:syndrome-duality} by setting up the PSC as a pure bipartite system. Indeed, write $\ket{\cC} \coloneqq \abs{\cC}^{-1/2}\sum_{c \in \cC}{\ket{c}}$ for the uniform coherent superposition over the code. Let $A$ and $B$ be two $n$-qubit registers, and let $\mathrm{CNOT}_{A \to B} \coloneqq \sum_{x \in \bits^n}{\ketbra{x}{x}_A \otimes X^x_B}$ be the transversal CNOT with control $A$ and target $B$. Applying it to the code register and the coherent Bernoulli state, and using $X^c\ket{\Ber(p)^n} = \ket{\sigma_c}$, gives
\begin{equation}
\label{eq:psc-bipartite}
\ket{\Psi_{\cC}(p)}
\coloneqq \mathrm{CNOT}_{A \to B}\left(\ket{\cC}_A \otimes \ket{\Ber(p)^n}_B\right)
=\frac{1}{\sqrt{\abs{\cC}}}\sum_{c \in \cC}{\ket{c}_A \otimes \ket{\sigma_c}_B}.
\end{equation}
Measuring $A$ in the computational basis leaves $B$ in the PSC output of a uniform codeword, so this pure state purifies the ensemble of \cref{prop:syndrome-duality}.

Now apply the Hadamard gate $\mathsf{H}$ to each of the $2n$ qubits. On the code register it exchanges the code for its dual, $\mathsf{H}^{\otimes n}\ket{\cC} = \ket{\cC^\perp}$. On the noise register, it moves the Bernoulli parameter to $q \coloneqq \frac{1}{2}-\sqrt{p(1-p)}$. That is, $\mathsf{H}^{\otimes n}\ket{\Ber(p)^n} = \ket{\Ber(q)^n}$. It additionally reverses the direction of every CNOT, leading us to the identity
\begin{equation}
\label{eq:psc-bipartite-dual}
\begin{split}
\mathsf{H}^{\otimes 2n}\ket{\Psi_{\cC}(p)}
&=\mathrm{CNOT}_{B \to A}\left(\ket{\cC^\perp}_A \otimes \ket{\Ber(q)^n}_B\right) \\
&=\frac{1}{\sqrt{\abs{\cC^\perp}}}\sum_{d \in \cC^\perp}{\sum_{z \in \bits^n}{\sqrt{q^{\wt(z)}(1-q)^{n-\wt(z)}}\,\ket{d \oplus z}_A \otimes \ket{z}_B}}.
\end{split}
\end{equation}

The two sides reflect two different but dual communication problems. Measuring \eqref{eq:psc-bipartite} in the computational basis returns a uniform codeword $\bc$ in $A$ and $\bc \oplus \be$ in $B$ with $\be \sim \Ber(p)^n$, which is the ordinary problem of decoding a codeword of $\cC$ against Bernoulli-$p$ noise. Measuring \eqref{eq:psc-bipartite-dual} the same way returns $\bz \sim \Ber(q)^n$ in $B$ and $\bz+\bc^\perp$ in $A$, where $\bc^\perp$ is a uniformly random codeword from $\cC^\perp$. What has to be recovered on that side is the Bernoulli string, and what obstructs recovering it is the dual code---in a phrase, decoding Bernoulli-$q$ against the noise $\cC^\perp$. In this dual decoding problem, only $\bz$ modulo $\cC^\perp$ can be recovered, which is exactly the coset appearing in \eqref{eq:psc-linear-mutual-information}.\footnote{Our nomenclature behind ``syndrome duality'' comes from attempting to ground a classical intuition initially suggested by~\cref{subsec:unique-list-pgm}. Indeed, first observe that the coset $\bz+\cC^\perp$ corresponds to the syndrome of $\bz$ with respect to the code $\cC^\perp$. More precisely, if $G \in \F_2^{k \times n}$ is a parity-check matrix for $\cC^\perp$ (equivalently, a generator matrix of $\cC$) then $x \mapsto Gx$ has kernel exactly $\cC^\perp$ and therefore identifies the quotient $\F_2^n/\cC^\perp$ with the syndrome space of $\cC^\perp$. Ideally, one would like to say that the syndrome distribution $\Ber(q)^n \pmod{\cC^\perp} = (\Ber(p)^n)^{\perp} \pmod{\cC^\perp}$ is linked to the dual syndrome distribution $\Ber(p)^n \pmod{\cC}$ in some form. Ultimately, it turns out that $\Ber(q)^n \pmod{\cC^\perp}$ is instead linked to ``$\cC \pmod{\Ber(p)^n}$,'' and it is this latter notion that seems to be best captured in the quantum identity linking~\eqref{eq:psc-bipartite} and~\eqref{eq:psc-bipartite-dual}.}

\subsection{Related works}
\label{subsec:related-works}

We now position our study within several relevant research topics.

\paragraph{The linear-programming route to the MRRW bounds.} The original proofs of both MRRW bounds run through Delsarte's linear program~\cite{Del72,Del73}. By linear programming (LP) duality, one can thus obtain rate upper bounds by finding dual LP solutions, and that involves analyzing polynomials with nonnegative Krawtchouk expansions that are nonpositive for all evaluations from the distance $d$ to $n$. McEliece, Rodemich, Rumsey, and Welch~\cite{MRRW77} constructed those certifying the first and second bounds, the second via the Bassalygo--Elias reduction to constant-weight codes. The optimal choices of such polynomials were later systematized by Levenshtein~\cite{Lev95}. 

Another pathway to the Delsarte LP is through the Lov\'asz theta function of the Hamming graph~\cite{Lov79,MRR78,Sch79}. This approach has been strengthened in several directions: semidefinite constraints from the Terwilliger algebra~\cite{Sch05}, complete hierarchies for linear codes~\cite{CJJ22,LL23b,CJJLL26}, and spectrally constructed dual solutions in general association schemes~\cite{CDA25}. These relaxations tighten finite-length bounds, yet no asymptotic improvement over~\cite{MRRW77} has been extracted from them. One can view our information-theoretic object---a cq channel over which minimum distance forces decodability---as a proxy for the dual certificates of the Delsarte LP in a different optimization problem. The resulting MQC and 2MQC bounds (\cref{thm:first-mrrw-improvement,thm:second-mrrw-improvement}) lead to asymptotic improvements over the first and second MRRW bounds.

Considering the quantum component of this work, we mention the extension of this line of work to quantum codes. Following the LP-style approach, one can place the parameters of a quantum code into a Delsarte-style linear program~\cite{AL99} that Rains sharpened through shadow enumerators~\cite{Rai99}. These apply in particular to the CSS codes~\cite{CS96,Ste96}, where the quantum code is assembled from a nested pair of classical codes, and more generally to stabilizer codes through their additive $\mathrm{GF}(4)$ description~\cite{CRSS98}, with the analysis again running through Krawtchouk polynomials.

\paragraph{Alternative routes to the first MRRW bound.} The first MRRW bound has since been reproved through approaches different from the LP/SDP bounds, some of which are directly relevant to our PSC derivation (see~\cref{subsec:syndrome-duality}). Friedman and Tillich~\cite{FT05} and Navon and Samorodnitsky~\cite{NS09} reproved the first MRRW bound through a neat dual covering argument, which involved reinterpreting the distance of $\cC$ as the expansion of the Cayley graph $\text{Cay}(\F_2^n/\cC^\perp,\{e_1, \ldots , e_n\})$ and arguing about the covering radius of the dual code $\cC^\perp$. Through syndrome duality (\cref{prop:syndrome-duality}), this approach directly links to our PSC derivation, though the pretty good criterion (\cref{thm:main-result}) needs no dual code and applies to nonlinear codes. The same identity also places the PSC beside code smoothing~\cite{BLVW19,YZ21,DADRT23,PB23,DAR25,PB25a,AGNTV26}, which asks when noise added to a random codeword flattens toward uniform. Indeed, the information the PSC retains is precisely what smoothing has not yet erased. Finally, we mention other similar Fourier--analytic proofs by Samorodnitsky~\cite{Sam23} and Linial and Loyfer~\cite{LL23a} and additionally barriers to pushing the Fourier--covering route toward the second MRRW bound~\cite{Sam25}.

\paragraph{Rate-distance tradeoffs for LDPC codes.} For Gallager's LDPC codes~\cite{Gal62}, the strongest known rate--distance bounds also come from the dual-covering side. Iceland and Samorodnitsky~\cite{IS15} sharpened the coset-ball bound of~\cite{NS09} for sparse duals, and Shangguan and Yang~\cite{SY26} sharpened it further through a combination of the coset-weight generating function and a shortening argument from~\cite{BHL06}. In light of syndrome duality (\cref{prop:syndrome-duality}), these approaches are naturally linked to the PSC-locality refinement in \cref{thm:psc-ldpc-improvement}.

\paragraph{Classical coding bounds and cq channels.} Perhaps the work closest in spirit to ours, in the sense of bounding a quantity in classical zero-error information theory using cq channels, is Dalai's celebrated work~\cite{Dal13}. Dalai gave a quantum-information-theoretic interpretation of Lov\'asz's bound on the Shannon capacity of a graph. He associated orthonormal representations of the confusability graph with pure-state classical--quantum channels and showed that the zero-error bound arising as the infinite-parameter limit of the cq sphere-packing bound recovers Lov\'asz's theta function. Thus, classical Shannon capacity can be upper-bounded through a cq-channel converse framework. Dalai and Winter~\cite{DW17} proved a sphere-packing bound for constant-composition cq codes and obtained an Elias-type bound by optimizing over a family of test channels. Their use of this optimization is perhaps the closest precedent for our channel-design viewpoint.

\paragraph{PSC decoding and channel duality.} The PSC has been studied extensively as a communication channel in its own right. Indeed, it is one instance of a large body of work on quantum state discrimination~\cite{Hel69,Hol73b,YKL75,HW94,BK02}. In the case of binary linear codes, decoding over PSC with a uniform prior is equivalent to state discrimination of \emph{geometrically uniform} ensembles, for which we know that PGM achieves the optimal block (and bit) decoding error~\cite{BKMH97,UTHH99,EF01,EMV04,ZCCL25,PR25}. 

The PSC has also been studied through a channel coding lens. Indeed, Burnashev and Holevo~\cite{BH98} initiated the reliability-function analysis of PSC; Wilde and Guha~\cite{WG13} constructed capacity-achieving polar codes for binary-input cq channels; Renes~\cite{Ren17} introduced belief propagation with quantum messages; Piveteau and Renes~\cite{PR25} gave efficient optimal decoders for codes with efficient trellis representations; and Mandal and Pfister~\cite{MP26} developed quantum message passing for group-covariant generalizations. Renes~\cite{Ren18} placed the PSC within a general duality between a code on a cq channel and its dual code on the dual channel, whose BSC--PSC specialization, made explicit by Rengaswamy and Pfister~\cite{RP21}, is the source of the syndrome duality used here (\cref{prop:syndrome-duality}). The decoders of Piveteau and Renes~\cite{PR25} proceed by quantum belief propagation, and their optimality holds bitwise as well as blockwise: grouping the outcomes of the block PGM according to a single codeword bit again yields the Helstrom-optimal measurement for that bit. That bitwise form is what our LDPC refinement rests on (\cref{fact:psc-bitwise-optimal}). Various decoding criteria have also been studied: code-aware unambiguous discrimination~\cite{MSPJ26}, its code-agnostic linear-programming counterpart~\cite{BC25}, and zero-error list decoding for pure-state channels~\cite{DGL26} certify figures of merit different from the posterior bit error rate driving our converse.

\paragraph{Regev's reduction and decoded quantum interferometry.} A recent algorithmic line makes coherent PSC decoding a computational primitive. Regev's quantum reduction for lattices~\cite{Reg09} was adapted to random linear codes by Debris-Alazard, Remaud, and Tillich~\cite{DART24}; Chailloux and Tillich~\cite{CT24} formulated the resulting quantum-decoding problem, whose Bernoulli case is exactly decoding over the binary PSC, and Blanvillain, Chailloux, and Tillich~\cite{BCT26} extended the analysis to general memoryless noise and the induced PGM threshold. Decoded quantum interferometry~\cite{JSWZSKIKB25} runs the same pipeline to reduce optimization to decoding: a coherent decoder uncomputes the codeword from a superposition of its noisy translates so that a Fourier transform can sample the dual code. Chailloux and Tillich~\cite{CT25} prove a generic reduction from syndrome decoding to coset sampling that accommodates imperfect decoders, Marwaha et al.~\cite{MFGH25} show that DQI constructively realizes a coding bound based on the MacWilliams identity, and Anschuetz, Gamarnik, and Lu~\cite{AGL25} prove overlap-gap obstructions for typical MAX-$k$-XOR-SAT instances. For all of these, the value of the PSC lies in decoding it efficiently and coherently. Our focus, in contrast, on PSC is purely information-theoretic. As such, it enables us to consider the PGM, which seems in general difficult to efficiently implement. A general-purpose quantum algorithm for the PGM is known~\cite{GLMQW22}, while efficient decoders over the PSC have so far been confined to structured families, such as codes with small trellises~\cite{PR25} or those reachable by quantum message passing~\cite{Ren17,MP26}.

\paragraph{Quantum proofs of classical coding statements.} There have been some notable successes in the use of quantum methods in establishing purely classical statements. We already mentioned Dalai's work connecting the Lov\'{a}sz theta function to cq channels~\cite{Dal13}. Kerenidis and de Wolf~\cite{KdW04} showed an exponential lower bound on the blocklength of any 2-query locally decodable code (2-LDC) by constructing a quantum random-access code (QRAC) from the 2-LDC and invoking Nayak's lower bound~\cite{Nay99}. Our approach happens to be related to that of~\cite{KdW04}. Indeed, in the high-noise regime $p = \frac{1}{2} - \Omega(n^{-1/2})$, one can show that the output of a random codeword from any fixed 2-LDC is in fact a QRAC. The connection to our approach becomes more pronounced when one replaces Nayak's bound with the PGM in the proof of~\cite{KdW04}, as was done in~\cite{LdW25}. For more on the quantum method, see the survey of Drucker and de Wolf~\cite{DdW11}.

Finally, Regev's reduction and the quantum method suggest a natural synthesis: the former has primarily been developed as an algorithmic tool, while the latter has largely served as a proof technique. Syndrome duality (\cref{prop:syndrome-duality}) reframes the Fourier transform at the heart of Regev's reduction as an information-theoretic identity rather than a sampling procedure. From this perspective, the present work represents a first step toward unifying these viewpoints, and we believe this direction has considerable further potential.

\subsection{Concurrent work}
\label{subsec:concurrent-work}

Concurrent work by OpenAI~\cite{Ope26} also improves the first and second MRRW bounds, albeit using a different approach. We stress that all our results reported here were obtained prior to the announcement of \cite{Ope26}. The announcement~\cite{Ope26} did, however, influence our choice to share our work at this moment, while we are pursuing a more comprehensive understanding of our framework.

A comparative check with GPT-5.6 Sol Pro suggested that the quantitative improvements to the first MRRW bound in our work and \cite{Ope26} match, and that the improvement to the second MRRW bound in \cite{Ope26} is captured by our framework through an appropriate four-dimensional binary-input output-symmetric cq channel, though we have not verified this.

A defining aspect of our work is its foundation in the general modular framework of cq channels. This resulting viewpoint reduces the derivation of new bounds to finding suitable channels and makes the framework readily applicable to $q$-ary codes and LDPC codes.\footnote{Given the close connection between binary codes and high-dimensional sphere packing~\cite{DGS77,KL78,CE03}, we suspect that our cq channel coding framework could shed new light on spherical codes as well. We leave this as an interesting direction for future work.}

\subsection{Acknowledgments}

We are grateful to Thiago Bergamaschi, Anthony (Chi-Fang) Chen, Louis Golowich, Dominik Hangleiter, Nathan Ju, Zeph Landau, Tony Metger, Chris Pattison, Angelos Pelecanos, Ewin Tang, Francisca Vasconcelos, \'Agi Vill\'anyi, and John Wright for helpful conversations surrounding quantum information theory at the early stages of this project. The first author especially thanks Jack Spilecki for numerous insightful discussions surrounding PGM and the Petz recovery map.

\subsection{AI use statement}

For a couple of years, the authors toyed with the possibility of reconciling the two approaches to obtaining the exponential lower bounds for 3-query locally correctable codes, via the Kikuchi method~\cite{KM24,Yan24,KM24b} and the dual covering radius approach~\cite{IS20,AG24}, in a manner that also yields the first MRRW bound in the spirit of~\cite{NS09}. 

The recent activity of instantiating Regev's reduction~\cite{Reg09} for binary linear codes~\cite{CLZ22,DART24,CT24,JSWZSKIKB25,MFGH25,CT25,PR25,BCT26,KOW26} provided inspiration in this regard. It rekindled the old results of~\cite{Tie90}, from which the authors conceived of the list-decoding table in~\cref{subsec:unique-list-pgm}. Coupled with the simple realization that the Hadamard transform of $\ket{\Ber(p)^n}$ is just $\ket{\Ber(1/2 - \sqrt{p(1-p)})^n}$, which was eerily similar to the formulas appearing in~\cite{Tie90,MRRW77}, it led the authors to investigate a new quantum proof of the first MRRW bound, ideally one without complex calculations. This investigation was also partially inspired by the success of the quantum method~\cite{KdW04} within the scope of locally decodable codes.

After a long-term investigation involving several interactions with GPT-5.2, GPT-5.4, and GPT-5.5, the authors made progress on the problem and arrived at~\cref{cor:block-error}, along with its potential to prove the first MRRW bound when applied to the PSC. Following the release of GPT-5.6, further interactions led to a PSC sharp-threshold result analogous to~\cite{TZ00,PSW24}.\footnote{This appears to be provable by using syndrome duality (\cref{prop:syndrome-duality}) to recast the sharp-threshold problem as a cutoff phenomenon on a suitable Cayley graph, after which the methods of~\cite{Sal24} can be adapted to obtain an exponential cutoff phenomenon.} Combined with~\cref{cor:block-error}, this yielded a new quantum-based proof of the first MRRW bound for binary linear codes.

Then, when the authors presented this new quantum-based proof of the first MRRW bound to GPT-5.6 Sol Pro and asked whether the same framework could recover, or possibly improve upon, the second MRRW bound, GPT-5.6 Sol Pro heuristically proposed the MQC as a candidate channel. Later interactions led the authors to realize that the strong converse for cq channels suffices to complete the proofs of the desired rate upper bound claimed in \cref{thm:main-result}, leading to a new proof and improvement of the first MRRW bound. Subsequent interactions focused on the second MRRW bound led to the proposals of the masked PSC and the 2MQC.

Codex and Claude Code were employed when drafting the paper. The exposition and mathematical setup of this work were orchestrated by the authors, and the authors take full responsibility for the contents of this paper.

\subsection{Organization}
\label{subsec:organization}

In \cref{sec:preliminaries}, we introduce the quantum information theory tools needed to prove~\cref{thm:main-result}, specifically the PGM error data processing inequality (\cref{lem:pgm-data-processing}) and the cq channel coding strong converse (\cref{thm:cq-strong-converse}). In the same section, we include closed formulas needed to derive rate bounds, and additionally the masking operation for cq channels to derive the second MRRW and 2MQC rate bounds. In~\cref{sec:main-theorem-proof}, we prove the pretty good criterion (\cref{thm:main-result}). In \cref{sec:known-rate-bounds}, we rederive the standard bounds of Plotkin, Elias--Bassalygo, and both MRRW bounds from the BEC, BSC, PSC, and masked PSC, respectively. In \cref{sec:new-cq-channels}, we introduce the MQC and 2MQC and derive their corresponding bounds. In \cref{sec:comparing-rate-bounds}, we show that MQC and 2MQC yield strict improvements over the first and second MRRW bounds over all $\delta \in (0,\frac{1}{2})$. In \cref{sec:qary-codes}, we lay down a blueprint for how one can extend the pretty good criterion to $q$-ary codes. Finally, \cref{sec:psc-ldpc} proves the LDPC refinement (\cref{thm:psc-ldpc-improvement}), while \cref{app:preliminary-proofs} presents all the deferred proofs from~\cref{sec:preliminaries}.

\section{Preliminaries}
\label{sec:preliminaries}

This section fixes the notation, recalls standard quantum-information tools~\cite{NC10,Wil21,Ren22}, and packages the calculations reused by the channel constructions. For completeness, the proofs of the stated results in this section are presented together in Appendix~\ref{app:preliminary-proofs}.

We first fix the classical coding notation. For $u,v \in \bits^n$, let $d(u,v)$ denote their Hamming distance. For ease of exposition, we will assume throughout this work that $\abs{\cC} \geq 2$ for every binary code $\cC \subseteq \bits^n$ considered. The rate of a code $\cC$ is defined to be $R(\cC) \coloneqq n^{-1}\log{\abs{\cC}}$. Its minimum distance is defined to be $d_{\min}(\cC) \coloneqq \min\{d(c,c') : c,c' \in \cC,\ c \neq c'\}$. Throughout this work, $\bc$ denotes a uniformly random codeword from $\cC$.

Next, we fix the quantum notation. A ket $\ket{y}$ is a unit column vector, whereas $\bra{y}$ is its conjugate transpose. The expression $\ketbra{\varphi}{\psi}$ denotes the outer product of $\ket{\varphi}$ and $\bra{\psi}$, while $\braket{\varphi|\psi}$ denotes the inner product between $\ket{\varphi}$ and $\ket{\psi}$. In particular, $\ketbra{\varphi}{\varphi}$ is the rank-one orthogonal projector onto the span of $\ket{\varphi}$.

Next, we introduce the quantum extension of a classical probability distribution.

\begin{definition}[Density matrices]
\label{def:density-matrices}
A \textbf{density matrix} is a finite-dimensional positive semidefinite matrix $\varrho$ satisfying the normalization $\Tr(\varrho) = 1$. We write $\Dens(\C^d)$ for the set of $d \times d$ density matrices, $\supp(\varrho)$ for the range of $\varrho$, and $\Pi_\varrho$ for the orthogonal projector onto that range.
\end{definition}

For a density matrix $\varrho$, its von Neumann entropy is defined to be $S(\varrho) \coloneqq -\Tr(\varrho\log{\varrho})$. For $t \in [0,1]$, the binary entropy function is defined to be $h(t) \coloneqq -t\log{t}-(1-t)\log(1-t)$. Note that all logarithms throughout this work are base $2$.

In order for a receiver to extract a classical outcome from a quantum state, they would need to measure it. This leads us to the formalism of a (general) measurement.

\begin{definition}[Positive operator-valued measurement (POVM)]
\label{def:povm}
A \textbf{positive operator-valued measurement (POVM)} with outcome set $\cX$ is a family of $d \times d$ positive semidefinite matrices $\{M_{\hat{x}} \in \C^{d \times d} : \hat{x} \in \cX\}$ satisfying the identity $\sum_{\hat{x} \in \cX}{M_{\hat{x}}} = I_d$. Measuring a density matrix $\varrho \in \Dens(\C^d)$ with respect to this POVM returns $\hat{x}$ with probability $\Tr(\varrho M_{\hat{x}})$.
\end{definition}

Throughout this work, Hermitian matrices will be called \textbf{observables}, the quantum extension of classical random variables. Consequently, we will call the POVM matrices $M_{\hat{x}}$ observables. Note that projective (von Neumann) measurements are POVMs in which the observables are orthogonal projection matrices ($M_{\hat{x}}^2 = M_{\hat{x}}$ for all $\hat{x} \in \cX$ and $M_{\hat{x}_1}M_{\hat{x}_2} = 0$ for all $\hat{x}_1, \hat{x}_2 \in \cX$ with $\hat{x}_1 \neq \hat{x}_2$). While projective (von Neumann) measurements are the prototypical notion of measurements, POVMs can be instantiated as projective measurements using ancilla qubits~\cite[Section 2.2.8]{NC10}.

\subsection{Binary-input cq channels}
\label{subsec:binary-input-cq-channels}

In this subsection, we recall quantum channels and then specialize to binary-input output-symmetric cq channels and their memoryless extensions.

\begin{definition}[Quantum channel]
\label{def:quantum-channel}
Fix positive integers $d_A$ and $d_B$. A \textbf{quantum channel} is a linear map $\N : \C^{d_A \times d_A} \to \C^{d_B \times d_B}$ satisfying the following two properties:
\begin{enumerate}
\item \emph{(Trace preservation)} For every $X \in \C^{d_A \times d_A}$, one has $\Tr(\N(X)) = \Tr(X)$.
\item \emph{(Complete positivity)} For every positive integer $r$, the map $\N \otimes \operatorname{id}_r$ takes positive semidefinite matrices in $\C^{d_A r \times d_A r}$ to positive semidefinite matrices in $\C^{d_B r \times d_B r}$, where $\operatorname{id}_r$ is the identity map on $r \times r$ matrices.
\end{enumerate}

We say $\N$ is a \textbf{classical--quantum (cq) channel} if it has the following measure-and-prepare form. Fix a finite input alphabet $\cX$, set $d_A = \abs{\cX}$, and choose one output state $\sigma_x \in \Dens(\C^{d_B})$ for each $x \in \cX$. Relative to the orthonormal basis $\{\ket{x}\}_{x \in \cX}$, the corresponding quantum channel acts as
\begin{equation*}
\N(\varrho) \coloneqq \sum_{x \in \cX}{\bra{x} \varrho \ket{x} \sigma_x}.
\end{equation*}
In this work, we abbreviate channels with the simpler notation $\N : x \mapsto \sigma_x$.
\end{definition}

We use two standard quantum channels. For an $n$-partite system of $d$-dimensional states $\varrho_1 \otimes \cdots \otimes \varrho_n$ and an index $i \in \{1, \ldots , n\}$, the \textbf{partial trace channel to the $i$'th qudit} $\T_i : \C^{d^n \times d^n} \to \C^{d \times d}$ is defined on product states by
\begin{equation}
\label{eq:partial-trace-product}
\T_i : \varrho_1 \otimes \cdots \otimes \varrho_n \longmapsto \varrho_i.
\end{equation}
By linearity, this definition extends to non-product states. Operationally, $\T_i$ discards every qudit except the $i$'th one. We use it to reduce blockwise error to bitwise error in the proof of~\cref{thm:expected-hamming}.

The other quantum channel we consider in this work is the \textbf{X-Pauli channel} (or bit-flip channel). For $\eta \in [0,1]$, it is the random-unitary qubit channel
\begin{equation}
\label{eq:pauli-x-channel}
\mathcal{B}_\eta(\varrho) \coloneqq (1-\eta)\varrho + \eta X\varrho X.
\end{equation}
Here, $X$ is the Pauli observable that exchanges the computational-basis vectors. Thus, $\mathcal{B}_\eta$ applies $X$ with probability $\eta$ and otherwise does nothing. In Bloch coordinates, it leaves the $X$ component unchanged and multiplies the $Z$ component by $1-2\eta$ (cf.~\eqref{eq:mqc-pauli-form}). In this work, the channel $\mathcal{B}_\eta$ is used as post-processing to construct the MQC in \cref{subsec:how-mixing,subsec:mqc}.

The binary-input cq channels employed in this work form a particularly simple class of cq channels, defined as follows.

\begin{definition}[Binary-input cq channel]
\label{def:binary-cq-channel}
For a positive integer $d$, a \textbf{binary-input classical--quantum (cq) channel} is a map $\Phi_\sigma : \bits \to \Dens(\C^d)$ specified by $\Phi_\sigma(b) = \sigma_b$. It is \textbf{output-symmetric} if there exists a $d \times d$ unitary matrix $U$ that exchanges the outputs, meaning that $U\sigma_0U^\dagger = \sigma_1$ and $U\sigma_1U^\dagger = \sigma_0$. The memoryless extension of $\Phi_\sigma$ sends a codeword $c = (c_1,\ldots,c_n)$ to the product state $\sigma_c \coloneqq \Phi_\sigma^{\otimes n}(c) = \sigma_{c_1} \otimes \cdots \otimes \sigma_{c_n}$.
\end{definition}

Next, we define the first key quantity, the Holevo information~\cite{Hol73a}.

\begin{definition}[Holevo information]
\label{def:holevo-information}
The \textbf{Holevo information} of a finite ensemble of density matrices $\cE \coloneqq \{(\pi_x,\sigma_x) : x \in \cX\}$ with average state $\bar{\sigma} \coloneqq \sum_{x \in \cX}{\pi_x\sigma_x}$ is defined to be 
\begin{equation}
\label{eq:holevo-formula}
\chi(\cE) \coloneqq I(\bx;\sigma_\bx) = S(\bar{\sigma})-\sum_{x \in \cX}{\pi_xS(\sigma_x)}.
\end{equation}
For the uniform binary ensemble $\cE = \{(\frac{1}{2},\sigma_0),(\frac{1}{2},\sigma_1)\}$, we write $\chi(\sigma_0,\sigma_1) \coloneqq \chi(\cE)$.\footnote{When the states $\{\sigma_x\}_{x \in \cX} \subseteq \Dens(\C^d)$ are specified without an associated prior distribution, we take the prior to be uniform when defining the Holevo information of the ensemble. This convention is assumed throughout this work.}
\end{definition}

The operational meaning of Holevo information is as follows. For a random label $\bx \sim \pi$, the Holevo information is precisely the von Neumann mutual information $I(\bx;\sigma_{\bx})$.\footnote{To be precise, this mutual information is relative to the bipartite state $\sum_{x \in \cX}{\pi_x\ketbra{x}{x} \otimes \sigma_x}$.} For commuting outputs $\{\sigma_x\}_{x \in \cX}$, it reduces to ordinary mutual information.

Next, we define the quantum extension of channel capacity for cq channels. For comparison, the capacity of a classical discrete memoryless channel $W : x \mapsto W(\cdot \mid x)$ is $C(W) = \max_\pi I(\bx;\by)$, where $\bx \sim \pi$ and $\by$ is the channel output. The cq definition is formally identical, with Holevo information in place of classical mutual information.

\begin{definition}[Classical (HSW) capacity of cq channels]
\label{def:holevo-capacity}
For any cq channel $\Phi : \cX \to \Dens(\C^d)$ with $\Phi(x) = \sigma_x$, the \textbf{classical capacity}, normalized per channel use, is defined to be
\begin{equation*}
C(\Phi)
\coloneqq
\limsup_{n \to \infty}\frac{1}{n}\max_{\pi^{(n)}}
\chi\left(\{(\pi^{(n)}(x^n),\sigma_{x_1} \otimes \cdots \otimes \sigma_{x_n}) : x^n \in \cX^n\}\right),
\end{equation*}
where the maximum is over all probability distributions $\pi^{(n)}$ on $\cX^n$.
\end{definition}

The above capacity has a single-letter characterization as the \emph{Holevo capacity} of the cq channel, defined as 
\begin{equation*}
    \max_{\pi} \chi \left(\left\{(\pi(x),\sigma_x) : x \in \cX\right\}\right), 
\end{equation*}
where the maximum is over all probability distributions $\pi$ on $\cX$.
Further, when the cq channel is output-symmetric, the Holevo capacity equals the Holevo information for the uniform prior, as stated in the following proposition. Consequently, this justifies the reference to the Holevo information as the channel capacity throughout this work. The proof appears in \cref{appsubsec:proof-output-symmetric-capacity}.

\begin{prop}[Capacity of a binary-input output-symmetric cq channel]
\label{prop:output-symmetric-capacity}
Let $\Phi_\sigma : b \mapsto \sigma_b$ be a binary-input output-symmetric cq channel. Its capacity satisfies
\begin{equation*}
C(\Phi_\sigma) = \chi(\sigma_0,\sigma_1).
\end{equation*}
\end{prop}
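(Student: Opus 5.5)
The plan is to combine the single-letter characterization of the classical capacity as the Holevo capacity, quoted just before the statement (the HSW theorem with additivity for cq channels), with a concavity-plus-symmetry argument. That argument shows the uniform prior maximizes $\pi \mapsto \chi(\{(\pi(b),\sigma_b)\})$. Once this is in hand, $C(\Phi_\sigma) = \max_\pi \chi = \chi(\sigma_0,\sigma_1)$.

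\textbf{Step 1 (concavity).} Write $\pi = (1-t, t)$ and
\begin{equation*}
f(t) \coloneqq \chi(\{(1-t,\sigma_0),(t,\sigma_1)\}) = S\bigl((1-t)\sigma_0 + t\sigma_1\bigr) - (1-t)S(\sigma_0) - tS(\sigma_1).
\end{equation*}
The first term is concave in $t$, because von Neumann entropy is concave and the average state is affine in $t$. The remaining terms are affine in $t$, so $f$ is concave on $[0,1]$.

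\textbf{Step 2 (symmetry).} Let $U$ be the exchanging unitary. Unitary invariance of $S$ gives $S(\sigma_0) = S(U\sigma_1U^\dagger) = S(\sigma_1)$. It also gives
\begin{equation*}
S\bigl((1-t)\sigma_0 + t\sigma_1\bigr) = S\bigl(U((1-t)\sigma_0 + t\sigma_1)U^\dagger\bigr) = S\bigl((1-t)\sigma_1 + t\sigma_0\bigr).
\end{equation*}
Together these show $f(t) = f(1-t)$.

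\textbf{Step 3 (maximizer).} By concavity and symmetry, $f(1/2) \geq \tfrac12 f(t) + \tfrac12 f(1-t) = f(t)$ for every $t \in [0,1]$. Hence the maximum is attained at the uniform prior, with value $\chi(\sigma_0,\sigma_1)$.

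\textbf{Step 4 (multi-letter).} For completeness, so as not to lean entirely on the quoted single-letter formula, I would also verify directly that $\frac1n \max_{\pi^{(n)}} \chi \leq \max_\pi \chi$. This is the additivity of Holevo information for product cq channels. By the chain rule and subadditivity of entropy,
\begin{equation*}
I(\bx^n;\sigma_{\bx^n}) \leq \sum_i I(\bx_i;\sigma_{\bx_i}),
\end{equation*}
since conditional output entropies add over a product state and the joint output entropy is subadditive. Each summand is at most $\max_\pi \chi$. For the matching lower bound, take the product uniform prior, whose Holevo information is exactly $n\,\chi(\sigma_0,\sigma_1)$ by additivity of entropy on tensor products. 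Combining the two bounds gives $C(\Phi_\sigma) = \chi(\sigma_0,\sigma_1)$.

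The main obstacle is modest: the subadditivity step in Step 4. If the single-letter characterization is accepted as quoted, everything reduces to Steps 1--3.
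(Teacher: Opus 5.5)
Your proposal is correct and follows the paper's proof: subadditivity of entropy bounds the $n$-letter Holevo information by a sum of single-letter terms, concavity of $f(t)$ together with the symmetry $f(t)=f(1-t)$ from output symmetry shows that the uniform prior is optimal, and the uniform product prior attains $n\,\chi(\sigma_0,\sigma_1)$. Your Step 4 is needed rather than optional, since the paper defines $C(\Phi_\sigma)$ as the multi-letter $\limsup$, and you handle it the same way the paper does.
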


As a slight digression, the Holevo--Schumacher--Westmoreland (HSW) coding theorem~\cite{Hol98,SW97,HJSWW96} identifies cq channel capacity with the maximum Holevo information over input priors, much like Shannon's formula in the classical case. Its positive side states that, at every rate below $C(\Phi)$, there are length-$n$ codes and decoder POVMs whose average block error tends to zero. 

Now, we turn to the more relevant side of the cq channel coding theorem, namely the previously established cq channel coding strong converse~\cite{Win99a,ON99,MO17} that turns nonvanishing block-decoding success into a rate bound. Specifically, \cite[Theorem~13]{Win99a} shows that every code with decoding error at most $\lambda < 1$ has rate at most $C(\Phi)+O_\lambda(1/\sqrt{n})$, while \cite[Theorem~1]{ON99} lower-bounds the average error probability of an arbitrary code by an exponential expression that forces the success probability to decay exponentially at every rate above the capacity.\footnote{The exact strong-converse exponent was later determined in~\cite{MO17}.} 

For our purposes, the version below is tailored specifically to binary-input output-symmetric cq channels. Thus, by~\cref{prop:output-symmetric-capacity}, we can state it with the Holevo information $\chi(\sigma_0,\sigma_1)$ in place of the Holevo capacity. For completeness, its proof appears in~\cref{appsubsec:proof-fixed-channel-strong-converse}.

\begin{theorem}[{cq channel coding strong converse --- \cite[Theorem~13]{Win99a} and \cite[Theorem~1]{ON99}}]
\label{thm:cq-strong-converse}
Fix a finite-dimensional binary-input output-symmetric cq channel $\Phi_\sigma$. There is a constant $L_\sigma \geq 1$, depending only on its two output states $\sigma_0$ and $\sigma_1$, such that the following holds. Let $\cC \subseteq \bits^n$ be any binary code, and let $\bc$ be a uniformly random codeword from $\cC$. For any $\eps > 0$ and decoder POVM $\{D_c : c \in \cC\}$, if $\bchat$ is obtained by measuring $\sigma_{\bc}$ with this POVM, then
\begin{equation}
\label{eq:fixed-channel-exponential-converse}
\Prb[\bchat = \bc] \leq 2^{-n[R(\cC)-\chi(\sigma_0,\sigma_1)-2\eps]} + 4\exp\left(-\eps^2n/L_\sigma^2\right).
\end{equation}
Consequently, any fixed positive rate gap above $\chi(\sigma_0,\sigma_1)$ forces the average decoding success probability to decay exponentially in $n$.
\end{theorem}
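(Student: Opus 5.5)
The plan is to prove \cref{thm:cq-strong-converse} by a hypothesis-testing (Nagaoka-style) argument, using the uniform-prior product state as the reference. Set $\bar\sigma \coloneqq \frac{1}{2}(\sigma_0+\sigma_1)$ and $\bar\sigma^{\otimes n}$. For each codeword $c$, I would split $\Tr(\sigma_c D_c)$ along the spectral projector
\begin{equation*}
P_c \coloneqq \{\sigma_c \leq 2^{n(\chi+\eps)}\bar\sigma^{\otimes n}\}.
\end{equation*}
Direct projector splitting is awkward when $\sigma_c$ and $\bar\sigma^{\otimes n}$ do not commute. So I would use the standard operator inequality $\Tr(\sigma_c D_c) \leq \Tr\bigl((\sigma_c - \gamma\bar\sigma^{\otimes n})_+\bigr) + \gamma\Tr(\bar\sigma^{\otimes n}D_c)$ with $\gamma = 2^{n(\chi+2\eps)}$. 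This holds because $0\le D_c\le I$ and $\Tr((A)D)\le \Tr(A_+)$.

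Averaging over uniform $c\in\cC$, the second term gives $\gamma\abs{\cC}^{-1}\sum_c\Tr(\bar\sigma^{\otimes n}D_c) = \gamma/\abs{\cC} = 2^{-n[R-\chi-2\eps]}$, since the $D_c$ sum to at most $I$. This produces the first term of \eqref{eq:fixed-channel-exponential-converse}.

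It then remains to show that $\Tr\bigl((\sigma_c-\gamma\bar\sigma^{\otimes n})_+\bigr) \leq 4\exp(-\eps^2 n/L_\sigma^2)$ for every individual word $c$, uniformly. For this I would pass to the Nussbaum--Szko\l a distributions, or more simply bound the positive part by a pinching argument. Let $\mathcal{P}$ denote pinching in the eigenbasis of $\bar\sigma^{\otimes n}$, which has at most $(n+1)^{d}$ distinct eigenvalues because $\bar\sigma^{\otimes n}$ is a product. Hayashi's pinching inequality $\sigma_c \leq (n+1)^d\mathcal{P}(\sigma_c)$ reduces the problem to the commuting pair $(\mathcal{P}(\sigma_c),\bar\sigma^{\otimes n})$, at the cost of a factor that is absorbed into $\eps$ for large $n$ (small $n$ are absorbed into the constant $4$ and $L_\sigma$).

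Output symmetry is used here. Since $U$ swaps $\sigma_0,\sigma_1$ and fixes $\bar\sigma$, each $\sigma_b$ has the same spectrum and the same relative data with respect to $\bar\sigma$. Hence the ``log-likelihood ratio'' $\log\sigma_{c_i}-\log\bar\sigma$ has a per-letter mean equal to
\begin{equation*}
D(\sigma_b\|\bar\sigma) = S(\bar\sigma)-S(\sigma_b) = \chi(\sigma_0,\sigma_1)
\end{equation*}
for both $b$, independent of $c$. In the commuting picture, $\Tr((\cdot)_+)$ is bounded by the probability that a sum of $n$ independent bounded variables exceeds its mean $n\chi$ by $n\eps$. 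The variables are bounded by the log of the ratio of extreme eigenvalues on supports, and since $\supp\sigma_b\subseteq\supp\bar\sigma$ this is finite. That defines $L_\sigma$, and Hoeffding then gives the bound $\exp(-\eps^2 n/L_\sigma^2)$ up to the constant.

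The main obstacle I expect is this last concentration step in the noncommuting setting. The difficulty is making the pinching loss and the noncommutativity of $\sigma_c$ with $\bar\sigma^{\otimes n}$ fit cleanly inside a Hoeffding bound with a code-independent constant. If pinching becomes messy, I would instead invoke Winter's \cite[Theorem~13]{Win99a} directly. That route uses typical projectors for $\bar\sigma^{\otimes n}$ and conditional typical projectors for $\sigma_c$ (gentle measurement supplies the factor $4$), and identical per-letter spectra under output symmetry make all the conditional typicality estimates uniform in $c$. In either route, the final statement follows by combining the two averaged terms.
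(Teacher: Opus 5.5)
Your reduction $\Tr(\sigma_cD_c)\le\Tr\bigl((\sigma_c-\gamma\bar\sigma^{\otimes n})_+\bigr)+\gamma\Tr(\bar\sigma^{\otimes n}D_c)$ with $\gamma=2^{n(\chi+2\eps)}$, and its average over $c$, are correct. Since $V_c\coloneqq U^{c_1}\otimes\cdots\otimes U^{c_n}$ maps $\sigma_0^{\otimes n}$ to $\sigma_c$ and fixes $\bar\sigma^{\otimes n}$, the first term equals $\Tr\bigl((\sigma_0^{\otimes n}-\gamma\bar\sigma^{\otimes n})_+\bigr)$ for every $c$. (The paper gets the same reduction by symmetrizing the decoder into $T_n$.) But bounding this positive part by $4\exp(-\eps^2n/L_\sigma^2)$ is the entire content of the theorem, and your primary route for it does not work as described.

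\begin{itemize}
\item \emph{No independent sum after pinching.} The pinched state $\mathcal{P}(\sigma_0^{\otimes n})$ is no longer a product state. So $\log\mathcal{P}(\sigma_0^{\otimes n})-\log\bar\sigma^{\otimes n}$ is not a sum of $n$ independent bounded variables, and Hoeffding has nothing to act on. The correct per-letter mean $D(\sigma_b\|\bar\sigma)=\chi$ gives no tail bound by itself.
\item \emph{The pinching factor multiplies the tail.} With $\rho=\sigma_0^{\otimes n}$ and $\sigma=\bar\sigma^{\otimes n}$, the inequality $\rho\preceq v_n\mathcal{P}(\rho)$ used as stated yields $\Tr\bigl((\rho-\gamma\sigma)_+\bigr)\le v_n\Tr\bigl((\mathcal{P}(\rho)-\gamma v_n^{-1}\sigma)_+\bigr)$. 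The polynomial factor $v_n$ thus multiplies the tail rather than merely shifting the threshold. A bound of the form $v_ne^{-c\eps^2n}$ does not imply $4e^{-\eps^2n/L^2}$ for any fixed $L$: fix $\eps^2n$ and let $n\to\infty$.
\item \emph{The alternatives do not close it.} The Nussbaum--Szko\l a inequality only lower-bounds quantum testing errors up to a factor $\tfrac12$, so it cannot make the positive part small. Quoting Winter's Theorem~13 gives $R\le C+O_\lambda(n^{-1/2})$ at fixed error, not the explicit two-term inequality for every $\eps>0$.
\end{itemize}

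The missing idea is to never form the likelihood-ratio operator. Instead, truncate each state in its \emph{own} eigenbasis, where the i.i.d.\ structure is genuine.
\begin{itemize}
\item Let $Q$ be the spectral projector of $\sigma_0^{\otimes n}$ onto eigenvalues at most $2^{-n[S(\sigma_0)-\eps]}$, and $P$ that of $\bar\sigma^{\otimes n}$ onto eigenvalues at least $2^{-n[S(\bar\sigma)+\eps]}$.
\item Measuring $\sigma_0^{\otimes n}$ in the product eigenbasis of $\sigma_0$, respectively of $\bar\sigma$, produces i.i.d.\ surprisals in $[0,L_\sigma]$. Their means are $S(\sigma_0)$ and $-\Tr(\sigma_0\log\bar\sigma)=S(\bar\sigma)$, the latter by output symmetry. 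Hoeffding then gives $\Tr(\sigma_0^{\otimes n}(I-Q))\le e^{-2\eps^2n/L_\sigma^2}$ and $\Tr(\sigma_0^{\otimes n}(I-P))\le e^{-2\eps^2n/L_\sigma^2}$.
\item For $\widetilde{\sigma}_n\coloneqq PQ\sigma_0^{\otimes n}QP$, the gentle measurement lemma gives $\norm{\sigma_0^{\otimes n}-\widetilde{\sigma}_n}_1\le 4e^{-\eps^2n/L_\sigma^2}$. Meanwhile $\widetilde{\sigma}_n\preceq 2^{-n[S(\sigma_0)-\eps]}P\preceq 2^{n(\chi+2\eps)}\bar\sigma^{\otimes n}=\gamma\bar\sigma^{\otimes n}$.
\item Hence $\Tr\bigl((\widetilde{\sigma}_n-\gamma\bar\sigma^{\otimes n})T\bigr)\le 0$ for all $0\preceq T\preceq I$. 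Therefore $\Tr\bigl((\sigma_0^{\otimes n}-\gamma\bar\sigma^{\otimes n})_+\bigr)\le\norm{\sigma_0^{\otimes n}-\widetilde{\sigma}_n}_1$, which closes your decomposition with exactly the stated constants.
\end{itemize}
Your fallback (``typical projectors plus gentle measurement'') points to this, and it is what the paper does. As written, though, it is only named: the domination $\widetilde{\sigma}_n\preceq\gamma\bar\sigma^{\otimes n}$ together with the trace-norm estimate is precisely the step that must be supplied.
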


\begin{remark}[Qualitative decay suffices]
\label{rem:qualitative-decay-suffices}
For the asymptotic conclusion of~\cref{thm:main-result}, it suffices if the successful block decoding probability tends to zero, at the expense of having weaker finite blocklength estimates. The exponential decay in~\cref{thm:cq-strong-converse} is used in the proof of~\cref{thm:main-result} to obtain the precise $O(n^{-1/2})$ lower-order term in~\cref{thm:main-result}, which is more than enough for the asymptotic version.
\end{remark}

\subsection{The pretty good measurement (PGM)}
\label{subsec:pgm-definition}

Throughout this subsection, we fix a finite ensemble of density matrices $\cE \coloneqq \{(\pi_x,\sigma_x) : x \in \cX\}$, where $\pi_x \geq 0$ and $\sum_{x \in \cX}{\pi_x} = 1$, and write $\bar{\sigma} \coloneqq \sum_{x \in \cX}{\pi_x \sigma_x} = \Eb_{\bx \sim \pi}[\sigma_{\bx}]$ for its average state. This subsection defines the pretty good measurement (PGM), also called the square-root measurement~\cite{Bel75,Hol79,HW94,HJSWW96}, records its exact compatibility with coarse-graining, and states the data-processing consequence used in \cref{sec:main-theorem-proof}. See Renes~\cite[Section~11.5]{Ren22} for a textbook treatment of PGM and~\cite[Lectures~6~and~7]{Wri24} for an instructive gentle exposition of it.

\begin{definition}[Pretty good measurement (PGM)]
\label{def:pgm}
The \textbf{pretty good measurement (PGM)} of the ensemble $\cE$ is the POVM over $\supp(\bar{\sigma})$ with observables $M_{\hat{x}} \coloneqq \pi_{\hat{x}}\bar{\sigma}^{-1/2}\sigma_{\hat{x}}\bar{\sigma}^{-1/2}$, where inverses are over $\supp(\bar{\sigma})$.\footnote{Note that these operators sum to the support projector $\Pi_{\bar{\sigma}}$.}
\end{definition}

When $\cE$ is a classical ensemble (i.e., the states $\{\sigma_x\}_{x \in \cX}$ are simultaneously diagonalizable), the PGM reduces to classical posterior sampling. Indeed, assume without loss of generality that all $\sigma_x$ are diagonal $d \times d$ density matrices. Then the average state $\bar{\sigma}$ and the operators $\{M_{\hat{x}}\}_{\hat{x} \in \cX}$ are also diagonal. Hence, conditioned on observing $\ell \in \{1, \ldots , d\}$, the PGM returns $\hat{x} \in \cX$ with probability
\begin{equation*}
(M_{\hat{x}})_{\ell,\ell} = (\pi_{\hat{x}}\bar{\sigma}^{-1/2}\sigma_{\hat{x}}\bar{\sigma}^{-1/2})_{\ell,\ell} = \frac{\pi_{\hat{x}} \Prb[\Bell = \ell \mid \bx = \hat{x}]}{\Prb[\Bell = \ell]} = \Prb[\bx = \hat{x} \mid \Bell = \ell],
\end{equation*}
which is precisely posterior sampling.

Next, we define our second key quantity, the PGM error, which we interchangeably call posterior error in this work.

\begin{definition}[PGM error]
\label{def:pgm-error}
Let $\bxhat$ be the outcome of the PGM for the ensemble $\cE$. The \textbf{PGM error} of $\cE$ is defined to be the quantity
\begin{equation*}
p_{\mathrm{e}}(\cE) \coloneqq \Prb[\bxhat \neq \bx] = \sum_{\substack{\hat{x},x \in \cX : \\ \hat{x} \neq x}}{\pi_x\Tr(\sigma_x M_{\hat{x}})}.
\end{equation*}
For binary ensembles $\cE_\alpha = \{(1-\alpha,\sigma_0),(\alpha,\sigma_1)\}$, we write $p_{\mathrm{e},\alpha}(\sigma_0,\sigma_1) \coloneqq p_{\mathrm{e}}(\cE_\alpha)$. In the more special uniform-prior case when $\alpha = \frac{1}{2}$, we write $p_{\mathrm{e}}(\sigma_0,\sigma_1) \coloneqq p_{\mathrm{e},\frac{1}{2}}(\sigma_0,\sigma_1)$ and call it the \textbf{PGM bit error rate} of the output state pair $(\sigma_0,\sigma_1)$.
\end{definition}

We next record the following central property of PGM, which was previously exploited in~\cite{KT08,BJL23,LdW25}. Its proof appears in \cref{appsubsec:proof-coarse-graining}.

\begin{lemma}[PGM is closed under coarse-graining]
\label{lem:coarse-graining}
Fix any classical map $f: \cX \to \cY$. Let $\cF \coloneqq \{(\nu_y,\omega_y) : y \in \cY\}$ be the pushforward ensemble of $\cE$ under $f$, where $\nu_y \coloneqq \Prb[f(\bx) = y]$ and $\omega_y \coloneqq \Eb[\sigma_{\bx} \mid f(\bx) = y]$. For a sample $\bx$ from $\cE$, if $\bxhat \in \cX$ is the outcome of the PGM of $\cE$ when applied to $\sigma_{\bx}$, then $f(\bxhat)$ is the outcome of the PGM of $\cF$ when applied to $\omega_{f(\bx)}$. As a result, we have the identity
\begin{equation*}
\Prb[f(\bxhat) \neq f(\bx)] = p_{\mathrm{e}}(\cF).
\end{equation*}
\end{lemma}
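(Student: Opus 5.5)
The plan is to compute the PGM of the pushforward ensemble $\cF$ directly and compare it with the coarse-grained PGM of $\cE$. For each $y \in \cY$, the unnormalized pushforward state is
\begin{equation*}
\nu_y\omega_y = \sum_{x \in f^{-1}(y)}{\pi_x\sigma_x}.
\end{equation*}
This follows from the definition of $\omega_y$ as a conditional expectation: $\omega_y = \nu_y^{-1}\sum_{x \in f^{-1}(y)}{\pi_x\sigma_x}$ whenever $\nu_y > 0$. Labels with $\nu_y = 0$ contribute nothing and can be discarded.

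Summing over $y$ gives $\sum_{y}{\nu_y\omega_y} = \sum_x{\pi_x\sigma_x} = \bar{\sigma}$. So the two ensembles share the same average state, and hence the same support projector and the same $\bar{\sigma}^{-1/2}$. This shared average state is the single observation that drives everything. The PGM observable of $\cF$ for outcome $\hat y$ is then
\begin{equation*}
N_{\hat{y}} = \nu_{\hat y}\bar{\sigma}^{-1/2}\omega_{\hat y}\bar{\sigma}^{-1/2} = \sum_{\hat x \in f^{-1}(\hat y)}{\pi_{\hat x}\bar{\sigma}^{-1/2}\sigma_{\hat x}\bar{\sigma}^{-1/2}} = \sum_{\hat x \in f^{-1}(\hat y)}{M_{\hat x}}.
\end{equation*}
This is exactly the coarse-graining of the PGM of $\cE$ under $f$. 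Operationally, measuring with $\{M_{\hat x}\}$ and then reporting $f(\hat x)$ is the same POVM as $\{N_{\hat y}\}$.

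It remains to check that the joint law of the transmitted label and the decoded label matches, so that the error identity follows. Conditioned on $f(\bx) = y$, the state $\sigma_{\bx}$ averaged over $\bx$ is precisely $\omega_y$. By linearity of the trace,
\begin{equation*}
\Prb[f(\bxhat) = \hat y,\ f(\bx) = y] = \sum_{x \in f^{-1}(y)}{\pi_x\Tr(\sigma_x N_{\hat y})} = \nu_y\Tr(\omega_y N_{\hat y}).
\end{equation*}
The right-hand side is the probability that the PGM of $\cF$ outputs $\hat y$ on the sample $y$ drawn from $\cF$. Summing over all pairs with $\hat y \neq y$ gives $\Prb[f(\bxhat) \neq f(\bx)] = p_{\mathrm{e}}(\cF)$.

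I do not expect a real obstacle here. The one point needing care is the support convention: the $M_{\hat x}$ sum to $\Pi_{\bar\sigma}$ rather than to $I$. Since every $\sigma_x$ with $\pi_x > 0$ is supported inside $\supp(\bar\sigma)$, the trace identities above are unaffected. Any completion of the POVM on the orthogonal complement is never triggered and can be chosen compatibly, for instance by assigning all of $I - \Pi_{\bar\sigma}$ to a single label on both sides.
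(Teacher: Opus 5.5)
Your proposal is correct and follows the paper's proof essentially step for step. Like the paper, you observe that the pushforward ensemble has the same average state $\bar{\sigma}$, deduce $N_{\hat y} = \sum_{\hat x \in f^{-1}(\hat y)} M_{\hat x}$, identify $\Prb[f(\bx) = y,\ f(\bxhat) = \hat y]$ with $\nu_y \Tr(\omega_y N_{\hat y})$, and sum over $\hat y \neq y$. Your remarks on discarding labels with $\nu_y = 0$ and on completing the POVM on $I - \Pi_{\bar\sigma}$ are harmless refinements that the paper leaves implicit.
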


In~\cref{sec:main-theorem-proof}, the ensemble of interest will be $\cE = \{(\abs{\cC}^{-1}, \sigma_c) : c \in \cC\}$, which we will benignly conflate with $\cC$. For such ensembles, we will crucially rely on the following specialization of~\cref{lem:coarse-graining} applied to the projection map $f_i(c) = c_i$. This identity also explains why the PGM is preferable here to a hard maximum-a-posteriori (MAP) decoder.

\begin{lemma}[Blockwise PGM is simultaneously bitwise PGM]
\label{lem:block-pgm-bitwise}
For a uniform sample $\bc$ from $\cC$, let $\bchat \in \cC$ be the outcome of the PGM applied to $\sigma_{\bc}$. For any $i \in \{1, \ldots , n\}$ and $b \in \bits$, let $\pi_i(b) \coloneqq \Prb[\bc_i = b]$, and let $\sigma_i(b) \coloneqq \Eb[\sigma_\bc \mid \bc_i = b]$. Then $\bchat_i$ is the outcome of the PGM of the induced binary ensemble $\cE_i \coloneqq \{(\pi_i(0),\sigma_i(0)), (\pi_i(1),\sigma_i(1))\}$ when applied to $\sigma_i(\bc_i)$. Consequently,
\begin{equation*}
\Prb[\bchat_i \neq \bc_i] = p_{\mathrm{e}}(\cE_i).
\end{equation*}
\end{lemma}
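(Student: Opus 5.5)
This is a direct specialization of \cref{lem:coarse-graining}. I will apply that lemma to the codeword ensemble $\cE = \{(\abs{\cC}^{-1},\sigma_c) : c \in \cC\}$, with $\cX = \cC$, $\cY = \bits$, and the coordinate projection $f_i : \cC \to \bits$, $f_i(c) \coloneqq c_i$.

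First I identify the pushforward ensemble. The lemma's weights are $\nu_b = \Prb[f_i(\bc) = b] = \Prb[\bc_i = b]$, and these are exactly $\pi_i(b)$. Its states are $\omega_b = \Eb[\sigma_{\bc} \mid f_i(\bc) = b] = \Eb[\sigma_\bc \mid \bc_i = b]$, and these are exactly $\sigma_i(b)$. So the pushforward ensemble $\cF$ is precisely $\cE_i$.

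One degenerate case needs a remark: it can happen that $\pi_i(b) = 0$ for some $b$, namely when coordinate $i$ is constant on $\cC$. Then $\sigma_i(b)$ is undefined. I would handle this by convention: any state may be assigned, since it carries zero weight and therefore does not affect the PGM operators. In this case $\bchat_i = \bc_i$ always, and both sides of the identity equal $0$.

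The lemma then gives the conclusions directly.
\begin{itemize}
\item Since $f_i(\bchat) = \bchat_i$, the bit $\bchat_i$ is distributed as the outcome of the PGM of $\cE_i$ applied to $\omega_{f_i(\bc)} = \sigma_i(\bc_i)$.
\item The error identity $\Prb[f_i(\bchat) \neq f_i(\bc)] = p_{\mathrm{e}}(\cF)$ becomes $\Prb[\bchat_i \neq \bc_i] = p_{\mathrm{e}}(\cE_i)$.
\end{itemize}

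I expect no real obstacle, since the substance lies in \cref{lem:coarse-graining}. The only care needed is the interpretation: the statement is about the joint law of $(\bc_i,\bchat_i)$. It follows because the PGM operators of $\cF$ are the sums $\sum_{c : c_i = b} M_c$, which is the mechanism behind \cref{lem:coarse-graining}.
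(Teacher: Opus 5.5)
Your proposal is correct and matches the paper's approach: the paper obtains this lemma as the specialization of \cref{lem:coarse-graining} to the coordinate projection $f_i(c) = c_i$, with the pushforward ensemble identified as $\cE_i$. Your treatment of the degenerate case $\pi_i(b) = 0$, where $\sigma_i(b)$ is undefined and both sides vanish, is a harmless addition that the paper leaves implicit.
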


We next record the monotonicity of the PGM error under post-processing the quantum outputs through any quantum channel. Note that this is a crucial step in establishing~\cref{thm:expected-hamming}. Its proof appears in \cref{appsubsec:proof-pgm-data-processing}.

\begin{lemma}[{PGM error is nondecreasing under quantum channels --- \cite[Section~11.5]{Ren22}}]
\label{lem:pgm-data-processing}
Let $\N$ be a completely positive trace-preserving map. Applying $\N$ to every state in $\cE$ cannot lower the PGM error. That is, 
\begin{equation*}
p_{\mathrm{e}}(\cE) \leq p_{\mathrm{e}}\left(\{(\pi_x,\N(\sigma_x)) : x \in \cX\}\right).
\end{equation*}
\end{lemma}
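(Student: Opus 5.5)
The plan is to write the PGM success probability as a sandwiched R\'enyi-$2$ quasi-divergence of the classical--quantum state, and then invoke the data-processing inequality for that quantity. Set $\rho_{XB} \coloneqq \sum_{x \in \cX}\pi_x \ketbra{x}{x} \otimes \sigma_x$ and $\tau_{XB} \coloneqq I_X \otimes \bar{\sigma}$.

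First I will compute the success probability. Using the definition of the PGM,
\begin{equation*}
1 - p_{\mathrm{e}}(\cE) = \sum_{x \in \cX}\pi_x \Tr(\sigma_x M_x) = \sum_{x \in \cX}\pi_x^2 \Tr\left(\sigma_x \bar{\sigma}^{-1/2}\sigma_x\bar{\sigma}^{-1/2}\right).
\end{equation*}
Since $\rho_{XB}$ is block diagonal in $X$ and $\tau_{XB}$ is block-constant, this equals
\begin{equation*}
\widetilde{Q}_2(\rho_{XB}\,\|\,\tau_{XB}) \coloneqq \Tr\left(\rho_{XB}\,\tau_{XB}^{-1/2}\,\rho_{XB}\,\tau_{XB}^{-1/2}\right) = \Tr\left[\left(\tau_{XB}^{-1/4}\rho_{XB}\tau_{XB}^{-1/4}\right)^2\right].
\end{equation*}
This is exactly the sandwiched R\'enyi quasi-divergence of order $\alpha = 2$. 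The inverse is legitimate because $\supp(\sigma_x) \subseteq \supp(\bar{\sigma})$ whenever $\pi_x > 0$, so $\supp(\rho_{XB}) \subseteq \supp(\tau_{XB})$. Terms with $\pi_x = 0$ contribute nothing, and the unnormalized (non-density) second argument is harmless because $\widetilde{Q}_2$ is homogeneous.

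Second, I will apply the channel $\operatorname{id}_X \otimes \N$. This maps $\rho_{XB}$ to $\rho'_{XB} = \sum_x \pi_x \ketbra{x}{x}\otimes \N(\sigma_x)$, which is the cq state of the post-processed ensemble. By linearity it maps $\tau_{XB}$ to $I_X \otimes \N(\bar{\sigma}) = I_X \otimes \bar{\sigma}'$, where $\bar{\sigma}'$ is precisely the average state of the new ensemble. Hence the same computation identifies $1 - p_{\mathrm{e}}(\{(\pi_x,\N(\sigma_x))\})$ with $\widetilde{Q}_2(\rho'_{XB}\,\|\,\tau'_{XB})$.

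Third, I will invoke the data-processing inequality $\widetilde{Q}_\alpha(\mathcal{M}(\rho)\|\mathcal{M}(\tau)) \leq \widetilde{Q}_\alpha(\rho\|\tau)$ for $\alpha > 1$ and any CPTP map $\mathcal{M}$ (Frank--Lieb, Beigi, M\"uller-Lennert et al.). Applied with $\mathcal{M} = \operatorname{id}_X\otimes\N$, this gives that the success probability can only decrease, which is the claim.

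The main obstacle is this third step, if one prefers a self-contained argument over a citation. For $\alpha = 2$ there is an elementary route. The map $(A,T) \mapsto \Tr(A^\dagger T^{-1/2} A T^{-1/2})$ is not obviously jointly convex, so I would instead use the variational characterization
\begin{equation*}
\widetilde{Q}_2(\rho\|\tau) = \sup_{H = H^\dagger}\left[2\Tr(\rho H) - \Tr\left(\tau^{1/2} H \tau^{1/2} H\right)\right].
\end{equation*}
I would then pull back through the adjoint $\mathcal{M}^\dagger$. The step that needs care is the bound $\Tr(\tau^{1/2}\mathcal{M}^\dagger(H)\tau^{1/2}\mathcal{M}^\dagger(H)) \leq \Tr(\mathcal{M}(\tau)^{1/2}H\mathcal{M}(\tau)^{1/2}H)$. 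This is a Kadison--Schwarz-type inequality for the unital CP map $\mathcal{M}^\dagger$ in the $\tau$-weighted inner product, and it follows from Lieb's concavity theorem. I would check it in that form, and fall back on citing the sandwiched R\'enyi data-processing theorem if the direct verification becomes messy.
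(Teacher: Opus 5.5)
Your proof is correct. It is essentially the alternative that the paper only mentions in \cref{rem:pgm-data-processing-alt-proof}; the paper's actual proof goes a different, self-contained way.

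Both arguments start from the same identity, $1-p_{\mathrm{e}}(\cE)=\sum_x\inner{A_x}{A_x}_{\bar{\sigma}}$, where $A_x\coloneqq\pi_x\sigma_x$ and $\inner{A}{B}_{\bar\sigma}=\Tr(A^\dagger\bar\sigma^{-1/2}B\bar\sigma^{-1/2})$. This is exactly your $\widetilde{Q}_2$ computation. At that point you cite the Frank--Lieb/Beigi data-processing inequality for the sandwiched R\'enyi-$2$ quasi-divergence, or in your fallback a Lieb-concavity argument. The paper instead proves the needed contraction $\inner{\N(A_x)}{\N(A_x)}_{\N(\bar\sigma)}\le\inner{A_x}{A_x}_{\bar\sigma}$ directly with the Petz map:
\begin{itemize}
\item weighted adjointness turns the left side into $\inner{A_x}{K(A_x)}_{\bar\sigma}$, with $K\coloneqq\N^\sharp_{\bar\sigma}\circ\N$;
\item $K$ is self-adjoint and positive for the weighted inner product, so its weighted operator norm equals its spectral radius;
\item that spectral radius is at most $1$, because the trace-adjoint of the CPTP map $K$ is unital and completely positive, hence an operator-norm contraction.
\end{itemize}

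The inequality you flag as delicate, $\Tr(\tau^{1/2}\mathcal{M}^\dagger(H)\tau^{1/2}\mathcal{M}^\dagger(H))\le\Tr(\mathcal{M}(\tau)^{1/2}H\mathcal{M}(\tau)^{1/2}H)$, is by trace duality equivalent to this Petz--Gram contraction. So the paper's argument discharges it without Lieb's concavity theorem.

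Your route is short and gives a clean entropic interpretation, at the cost of resting on a deep cited theorem. The cq lift is not even needed: since $\N(\bar\sigma)$ is the new average state, you can apply the data-processing inequality to each pair $(\sigma_x,\bar\sigma)$ and sum with weights $\pi_x^2$. The paper's route is elementary and self-contained, and the same Petz machinery is reused later, e.g.\ via functoriality in \cref{prop:binary-pgm-preprocessing}.
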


In~\cref{thm:expected-hamming}, we apply the preceding lemma to the product ensemble $\cE = \{(\abs{\cC}^{-1}, \sigma_c) : c \in \cC\}$. The partial trace $\T_i$ then discards every output except the $i$'th one.

Next, to handle the bit errors in the proof of~\cref{thm:main-result} after reducing to them, we will need the following prior-independent bound on each bit. Its proof appears in \cref{appsubsec:proof-uniform-prior}.

\begin{lemma}[Uniform prior maximizes the PGM bit error rate]
\label{lem:uniform-prior-worst}
Let $\sigma_0,\sigma_1 \in \Dens(\C^d)$ be an output-symmetric pair. For any $\gamma \in [0,1]$, let $p_{\mathrm{e},\gamma}(\sigma_0,\sigma_1)$ denote the PGM bit error rate for the binary ensemble $\cE_\gamma = \{(1-\gamma,\sigma_0),(\gamma,\sigma_1)\}$. Then
\begin{equation*}
p_{\mathrm{e},\gamma}(\sigma_0,\sigma_1) \leq p_{\mathrm{e},1/2}(\sigma_0,\sigma_1) = p_{\mathrm{e}}(\sigma_0,\sigma_1).
\end{equation*}
\end{lemma}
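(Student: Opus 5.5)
The plan is to derive the inequality from data processing (\cref{lem:pgm-data-processing}) applied to a flagged symmetrization of the ensemble, using the swap unitary $U$ only once, to show that the error is symmetric in $\gamma$.

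\textbf{Step 1 (symmetry in $\gamma$).} First I will show $p_{\mathrm{e},\gamma}(\sigma_0,\sigma_1) = p_{\mathrm{e},1-\gamma}(\sigma_0,\sigma_1)$.
\begin{itemize}
\item The PGM is covariant under a unitary applied to all states. Replacing each $\sigma_x$ by $U\sigma_xU^\dagger$ replaces $\bar\sigma^{-1/2}$ by $U\bar\sigma^{-1/2}U^\dagger$, so each $M_{\hat x}$ becomes $UM_{\hat x}U^\dagger$. All quantities $\Tr(\sigma_x M_{\hat x})$ are therefore unchanged.
\item The PGM is also covariant under relabeling the outcomes.
\item Conjugating $\cE_\gamma=\{(1-\gamma,\sigma_0),(\gamma,\sigma_1)\}$ by $U$ gives $\{(1-\gamma,\sigma_1),(\gamma,\sigma_0)\}$. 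After swapping the labels $0\leftrightarrow1$ this is exactly $\cE_{1-\gamma}$.
\end{itemize}
Hence the two errors coincide.

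\textbf{Step 2 (flagged mixture).} Next I will build an auxiliary ensemble on $\C^d\otimes\C^2$. It has labels $b\in\bits$ and a classical flag $f\in\bits$ stored in orthogonal states $\ket{f}$.
\begin{itemize}
\item With probability $\tfrac12$ the flag is $0$, and $(b,\sigma_b)$ is drawn from $\cE_\gamma$.
\item With probability $\tfrac12$ the flag is $1$, and $(b,\sigma_b)$ is drawn from $\cE_{1-\gamma}$.
\end{itemize}
The label $b=0$ then has total prior $\tfrac12(1-\gamma)+\tfrac12\gamma=\tfrac12$. Its conditional state is $\sigma_0\otimes\big((1-\gamma)\ketbra{0}{0}+\gamma\ketbra{1}{1}\big)$. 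Likewise $b=1$ has prior $\tfrac12$ and state $\sigma_1\otimes\big(\gamma\ketbra{0}{0}+(1-\gamma)\ketbra{1}{1}\big)$.

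Because the flag blocks are orthogonal, the average state is block diagonal, and so is its inverse square root. The PGM operators therefore decompose blockwise: on flag block $f$ they coincide with the PGM of the ensemble conditioned on that flag.
\begin{itemize}
\item The conditional ensembles are $\cE_\gamma$ for $f=0$ and $\cE_{1-\gamma}$ for $f=1$.
\item So the PGM error of the flagged ensemble is $\tfrac12 p_{\mathrm{e},\gamma}+\tfrac12 p_{\mathrm{e},1-\gamma}$.
\item By Step 1 this equals $p_{\mathrm{e},\gamma}(\sigma_0,\sigma_1)$.
\end{itemize}
I expect this blockwise decomposition to be the one point needing care. It requires checking that the average state restricted to flag block $f$ is $\tfrac12$ times the average state of the corresponding conditional ensemble, so that the factors of $\tfrac12$ cancel inside $\pi_{\hat x}\bar\sigma^{-1/2}\sigma_{\hat x}\bar\sigma^{-1/2}$. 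This is a direct computation.

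\textbf{Step 3 (discard the flag).} Finally I will apply the partial trace over the flag register, which is a quantum channel. It maps the flagged ensemble to the uniform ensemble $\{(\tfrac12,\sigma_0),(\tfrac12,\sigma_1)\}$. By \cref{lem:pgm-data-processing}, the PGM error cannot decrease, which gives
\begin{equation*}
p_{\mathrm{e},\gamma}(\sigma_0,\sigma_1)\leq p_{\mathrm{e}}(\sigma_0,\sigma_1).
\end{equation*}

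The edge cases $\gamma\in\{0,1\}$ are trivial, since the error is then $0$.
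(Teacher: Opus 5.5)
Your proposal is correct and follows essentially the same route as the paper. The paper builds the same flagged ensemble $\widehat{\sigma}_b$ with a classical register $T$, notes that its PGM reads $T$ and runs the PGM of $\cE_\gamma$ or $\cE_{1-\gamma}$, uses $p_{\mathrm{e},\gamma} = p_{\mathrm{e},1-\gamma}$ from output-symmetry, and then traces out $T$ and applies \cref{lem:pgm-data-processing}; your Step 1 and your blockwise computation in Step 2 just spell out the two checks the paper leaves to the reader.
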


\subsection{Closed formulas for the PGM bit error rate computations}
\label{subsec:pgm-bit-error-formulas}

In this subsection, we record closed formulas needed for our PGM bit error rate computations. For output states $\sigma_0,\sigma_1 \in \Dens(\C^d)$ and bias parameter $\alpha \in [0,1]$, define the binary ensemble $\cE_\alpha \coloneqq \{(1-\alpha,\sigma_0),(\alpha,\sigma_1)\}$. We adopt the shorthand $\bar{\sigma}_\alpha \coloneqq (1-\alpha)\sigma_0+\alpha \sigma_1$ for their average state throughout this work. In the uniform-prior case when $\alpha = \frac{1}{2}$, we simply write $\bar{\sigma} \coloneqq \bar{\sigma}_{\frac{1}{2}}$. First, we record the following PGM trace identity. Its proof appears in \cref{appsubsec:proof-binary-pgm-formulas}.

\begin{prop}[Binary PGM error formulas]
\label{prop:binary-pgm-error}
The PGM error of the binary ensemble $\cE_\alpha$ satisfies the identity
\begin{equation}
\label{eq:binary-pgm-error}
p_{\mathrm{e},\alpha}(\sigma_0,\sigma_1)
= 2\alpha(1-\alpha)\Tr\left(\sigma_0\bar{\sigma}_\alpha^{-1/2}\sigma_1\bar{\sigma}_\alpha^{-1/2}\right).
\end{equation}
Moreover, if both output states are pure, i.e., $\sigma_b = \ketbra{\psi_b}{\psi_b}$ for unit vectors $\ket{\psi_0},\ket{\psi_1} \in \C^d$ and $f \coloneqq \abs{\braket{\psi_0|\psi_1}}^2$, then
\begin{equation}
\label{eq:non-uniform-pure-pgm-error}
p_{\mathrm{e},\alpha}(\sigma_0,\sigma_1)
= \frac{2\alpha(1-\alpha)f}{1+2\sqrt{\alpha(1-\alpha)(1-f)}}.
\end{equation}
In particular, in the uniform-prior case when $\alpha = \frac{1}{2}$, this becomes $p_{\mathrm{e}}(\sigma_0,\sigma_1) = \frac{1}{2}(1-\sqrt{1-f})$.
\end{prop}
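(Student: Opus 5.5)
We prove the two displayed identities in \cref{prop:binary-pgm-error} in turn.

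\textbf{The trace identity.} Write $\bar\sigma_\alpha = (1-\alpha)\sigma_0 + \alpha\sigma_1$, and recall from \cref{def:pgm} that
\[
M_0 = (1-\alpha)\bar\sigma_\alpha^{-1/2}\sigma_0\bar\sigma_\alpha^{-1/2}, \qquad M_1 = \alpha\,\bar\sigma_\alpha^{-1/2}\sigma_1\bar\sigma_\alpha^{-1/2}.
\]
By \cref{def:pgm-error}, the error is $(1-\alpha)\Tr(\sigma_0 M_1)+\alpha\Tr(\sigma_1 M_0)$. Each of the two terms equals
\[
\alpha(1-\alpha)\Tr\left(\sigma_0\bar\sigma_\alpha^{-1/2}\sigma_1\bar\sigma_\alpha^{-1/2}\right).
\]
For the first term this is immediate. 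For the second, apply cyclicity of the trace: $\Tr(\sigma_1\bar\sigma_\alpha^{-1/2}\sigma_0\bar\sigma_\alpha^{-1/2}) = \Tr(\sigma_0\bar\sigma_\alpha^{-1/2}\sigma_1\bar\sigma_\alpha^{-1/2})$. Summing the two gives the factor $2\alpha(1-\alpha)$.

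Two remarks on rigor. First, inverses are taken on the support of $\bar\sigma_\alpha$. Since $\supp\sigma_b\subseteq\supp\bar\sigma_\alpha$ whenever the weight on $\sigma_b$ is positive, restricting to that support loses nothing. Second, if $\alpha\in\{0,1\}$, both sides vanish trivially.

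\textbf{The pure-state formula.} Now take $\sigma_b=\ketbra{\psi_b}{\psi_b}$. Everything lives in the two-dimensional span of $\ket{\psi_0},\ket{\psi_1}$. We may assume they are linearly independent; if $f=1$ the states coincide and we check the formula directly.

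By a global phase choice, take $\braket{\psi_0|\psi_1}=\sqrt f$ real. The trace becomes
\[
\Tr\left(\sigma_0\bar\sigma_\alpha^{-1/2}\sigma_1\bar\sigma_\alpha^{-1/2}\right) = \left|\bra{\psi_0}\bar\sigma_\alpha^{-1/2}\ket{\psi_1}\right|^2.
\]
So we need the matrix of $\bar\sigma_\alpha^{-1/2}$ in the nonorthogonal frame.

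The cleanest route is the Gram-matrix trick. Set $V=[\sqrt{1-\alpha}\ket{\psi_0},\sqrt\alpha\ket{\psi_1}]$, so that $\bar\sigma_\alpha=VV^\dagger$. Then the PGM success amplitudes are the entries of $G^{1/2}$, where
\[
G = V^\dagger V = \begin{pmatrix}1-\alpha & \sqrt{\alpha(1-\alpha)f}\\ \sqrt{\alpha(1-\alpha)f} & \alpha\end{pmatrix}.
\]
Indeed, $V^\dagger(VV^\dagger)^{-1/2}V = G^{1/2}$, which follows from the polar/SVD decomposition. Hence
\[
\sqrt{(1-\alpha)\alpha}\,\bra{\psi_0}\bar\sigma_\alpha^{-1/2}\ket{\psi_1} = (G^{1/2})_{01},
\]
and the error is $2\,|(G^{1/2})_{01}|^2$.

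For a $2\times 2$ positive matrix, $G^{1/2}=(G+\sqrt{\det G}\,I)/\sqrt{\Tr G+2\sqrt{\det G}}$. Here $\Tr G=1$ and $\det G=\alpha(1-\alpha)(1-f)$. Therefore
\[
(G^{1/2})_{01}^2 = \frac{\alpha(1-\alpha)f}{1+2\sqrt{\alpha(1-\alpha)(1-f)}},
\]
and doubling it gives \eqref{eq:non-uniform-pure-pgm-error}.

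\textbf{The uniform prior.} Setting $\alpha=\tfrac12$ gives $\tfrac12 f/(1+\sqrt{1-f})$. Multiplying numerator and denominator by $1-\sqrt{1-f}$ turns this into $\tfrac12(1-\sqrt{1-f})$.

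The only genuinely nonroutine step is the Gram-matrix identity relating $\bar\sigma_\alpha^{-1/2}$ to $G^{1/2}$. Everything else is $2\times2$ algebra.
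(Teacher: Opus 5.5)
Your proof is correct and follows essentially the same route as the paper. The trace identity comes from the same cyclicity argument, and the pure-state case uses the same Gram-matrix reduction $V^\dagger\bar{\sigma}_\alpha^{-1/2}V = G^{1/2}$. Your closed form $G^{1/2} = (G+\sqrt{\det G}\,I)/\sqrt{\Tr G + 2\sqrt{\det G}}$ is exactly what the paper obtains by writing $\sqrt{G} = aG + bI$ and solving for $a$ on the eigenvalues.
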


Finally, we record how the PGM behaves when a classical binary channel precedes a cq preparation. This composition rule will be used when computing PGM bit error rates of MQC and 2MQC below. We present the proof in~\cref{appsubsec:proof-binary-pgm-formulas}.

\begin{prop}[Binary PGM error under classical pre-processing]
\label{prop:binary-pgm-preprocessing}
Fix $\eta \in [0,1]$ and $\be \sim \Ber(\eta)$. Consider the binary-input cq channel 
\begin{equation}
\label{eq:composed-cq-channel}
b
\longmapsto b \oplus \be
\longmapsto \sigma_{b \oplus \be}.
\end{equation}
Fix $\alpha \in [0,1]$, and let $\bbhat$ be the outcome of the PGM for the resulting cq channel in~\eqref{eq:composed-cq-channel} when given a non-uniform bit $\bb \sim \Ber(\alpha)$ as input. Set $\by \coloneqq \bb \oplus \be$ and $\theta \coloneqq \Prb[\by = 1] = \alpha + \eta - 2\alpha\eta$. Let $\byhat$ be the outcome of the PGM for the intermediate ensemble $\{(1-\theta,\sigma_0),(\theta,\sigma_1)\}$. Then the PGM outcome $\bbhat$ can be generated by first producing $\byhat$ and then drawing $\bbhat$ from the posterior law
\begin{equation*}
\Prb[\bbhat = b \mid \byhat = y] = \Prb[\bb = b \mid \by = y].
\end{equation*}
Moreover, if we let $\tau_b \coloneqq (1-\eta)\sigma_b + \eta\sigma_{b \oplus 1}$ for $b \in \bits$ be the output states of the composed cq channel in~\eqref{eq:composed-cq-channel}, then the binary PGM error for the composed channel satisfies
\begin{equation}
\label{eq:pgm-preprocessing-error}
p_{\mathrm{e},\alpha}(\tau_0,\tau_1) = 2\alpha(1-\alpha) - \left[
\frac{\alpha(1-\alpha)(1-2\eta)}
{\theta(1-\theta)}
\right]^2 \cdot \left[
2\theta(1-\theta)-p_{\mathrm{e},\theta}(\sigma_0,\sigma_1)
\right].
\end{equation}
In particular, for the uniform prior $\alpha = \frac{1}{2}$, the PGM error is\footnote{A more intuitive way to think about the uniform-prior formula is the following: the map $\bb \mapsto \bbhat$ can be decomposed as the composition of the maps $\bb \mapsto \bb \oplus \be = \by$ followed by $\by \mapsto \sigma_\by \mapsto \byhat$, then followed by $\byhat \mapsto \bbhat$. Since $\bb$ has a uniform prior, these three maps can therefore be recast as passing through the independent bit channels $\BSC(\eta), \BSC(p_{\mathrm{e}}), \BSC(\eta)$, respectively.}
\begin{equation}
\label{eq:pgm-preprocessing-error-uniform-prior}
p_{\mathrm{e}}(\tau_0,\tau_1) = \frac{1}{2}\left[1 - (1-2\eta)^2(1-2p_{\mathrm{e}}(\sigma_0,\sigma_1))\right].
\end{equation}
\end{prop}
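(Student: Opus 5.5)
The plan is to realize the composed channel as a two-stage Markov chain and use the coarse-graining property of the PGM (\cref{lem:coarse-graining}). Introduce the joint classical variable $(\bb,\be)$ with $\bb \sim \Ber(\alpha)$ and $\be \sim \Ber(\eta)$ independent. Form the ensemble $\cE$ on labels $(b,e) \in \bits^2$ with prior $\Prb[\bb=b]\Prb[\be=e]$ and state $\sigma_{b\oplus e}$. Pushing $\cE$ forward under $f(b,e)=b$ gives exactly the composed ensemble $\{(1-\alpha,\tau_0),(\alpha,\tau_1)\}$. Pushing it forward under $g(b,e)=b\oplus e$ gives the intermediate ensemble $\{(1-\theta,\sigma_0),(\theta,\sigma_1)\}$. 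Because the averaged state $\bar\sigma$ is the same for all three ensembles, the PGM operators can be written down explicitly.

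The first step is to compute the PGM element of $\cE$ for the label $(b,e)$:
\[
M_{(b,e)} = \Prb[\bb=b,\be=e]\,\bar\sigma_\theta^{-1/2}\sigma_{b\oplus e}\bar\sigma_\theta^{-1/2}.
\]
Grouping by $y=b\oplus e$ factors this as
\[
M_{(b,e)} = \Prb[\bb=b \mid \by=y]\cdot \widehat M_y,
\]
where $\widehat M_y$ is the PGM element of the intermediate ensemble. So running the PGM of $\cE$ is the same as first measuring $\widehat M$ to obtain $\byhat$ and then sampling $(b,e)$ from the posterior given $\by=\byhat$. By \cref{lem:coarse-graining} applied to $f$, the marginal $\bbhat$ of that outcome is the PGM outcome of the composed ensemble. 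This proves the generation claim.

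For the error formula, condition on $(\by,\byhat)$. The true bit $\bb$ and the resampled bit $\bbhat$ are conditionally independent given $(\by,\byhat)$, with laws $\Prb[\bb\mid\by]$ and $\Prb[\bb\mid\by=\byhat]$. Write $q_y \coloneqq \Prb[\bb=1\mid \by=y]$, so $q_0=\alpha\eta/(1-\theta)$ and $q_1=\alpha(1-\eta)/\theta$. Then
\[
p_{\mathrm e,\alpha}(\tau_0,\tau_1)=\sum_{y,\hat y}\Prb[\by=y,\byhat=\hat y]\bigl(q_y+q_{\hat y}-2q_yq_{\hat y}\bigr).
\]
The linear terms together give $2\alpha$, since each marginal of $\by$ and $\byhat$ has law $\Ber(\theta)$; this uses the fact that the PGM outcome of an ensemble has the same marginal as the prior. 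The quadratic term is handled as follows:
\begin{itemize}
\item Write $\Prb[\by\neq\byhat]=p_{\mathrm e,\theta}(\sigma_0,\sigma_1)$ and symmetrize, using that the joint law of $(\by,\byhat)$ is symmetric because the PGM is self-adjoint in this sense: $\Prb[\by=0,\byhat=1]=\Prb[\by=1,\byhat=0]=p_{\mathrm e,\theta}/2$.
\item The remaining entries are then $\Prb[\by=0,\byhat=0]=1-\theta-p_{\mathrm e,\theta}/2$ and $\Prb[\by=1,\byhat=1]=\theta-p_{\mathrm e,\theta}/2$.
\end{itemize}
Substituting gives the affine expression
\[
2\alpha - 2\Bigl[(1-\theta)q_0^2+\theta q_1^2\Bigr]+p_{\mathrm e,\theta}(q_1-q_0)^2.
\]

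The main algebraic step is to match this to \eqref{eq:pgm-preprocessing-error}. One computes
\[
q_1-q_0=\frac{\alpha(1-\alpha)(1-2\eta)}{\theta(1-\theta)},
\]
and then checks that $2\alpha-2[(1-\theta)q_0^2+\theta q_1^2]$ equals $2\alpha(1-\alpha)-2\theta(1-\theta)(q_1-q_0)^2$. This is a law-of-total-variance identity: $\operatorname{Var}(q_{\by})=\theta(1-\theta)(q_1-q_0)^2$ and $\Eb[q_\by]=\alpha$. For the uniform prior, $\theta=\tfrac12$ and $p_{\mathrm e,1/2}=p_{\mathrm e}$, which gives \eqref{eq:pgm-preprocessing-error-uniform-prior} directly.

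The hardest step is the symmetry of the joint law of $(\by,\byhat)$. It follows from
\[
\Prb[\by=y,\byhat=\hat y]=\nu_y\nu_{\hat y}\Tr\bigl(\sigma_y\bar\sigma^{-1/2}\sigma_{\hat y}\bar\sigma^{-1/2}\bigr)
\]
together with cyclicity of the trace.
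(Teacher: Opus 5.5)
Your proof is correct and follows essentially the same route as the paper: you factor the composed PGM as the intermediate PGM followed by posterior resampling, use trace cyclicity to get the symmetric joint law of $(\by,\byhat)$ with off-diagonal entries $p_{\mathrm{e},\theta}(\sigma_0,\sigma_1)/2$, and finish by algebra. The differences are only in packaging. The paper obtains the factorization directly from the operator identity $M_b=\sum_{y}\Prb[\bb=b\mid\by=y]N_y$, rather than lifting to labels $(b,e)$ and invoking \cref{lem:coarse-graining}. It also evaluates the error through $\operatorname{Cov}(\bb,\bbhat)=\det(R)^2\det(J_{\by,\byhat})$ instead of your explicit expansion, and your $q_1-q_0$ is exactly the paper's $\det(R)$.
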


\subsection{Closed formulas for the Holevo information computations}
\label{subsec:holevo-formulas}

In this subsection, we record the closed entropy formula used in the Holevo-information calculations of \cref{sec:known-rate-bounds,sec:new-cq-channels}. Every qubit appearing there is real in the computational basis and therefore lies in the plane spanned by the Pauli matrices
\begin{equation}
\label{eq:pauli-matrices}
X \coloneqq \begin{pmatrix} 0 & 1 \\ 1 & 0 \end{pmatrix}
\quad
\text{ and }
\quad
Z \coloneqq \begin{pmatrix}1 & 0 \\ 0 & -1 \end{pmatrix}.
\end{equation}
The following proposition reads the von Neumann entropy directly from these Pauli coefficients. Its proof appears in~\cref{appsubsec:proof-qubit-entropy}.

\begin{prop}[Qubit entropy in Pauli coordinates]
\label{prop:qubit-entropy}
For $D \in [0,1]$, define
\begin{equation}
\label{eq:qubit-entropy}
s(D) \coloneqq h\left(\frac{1-\sqrt{1-D}}{2}\right).
\end{equation}
If $x,z \in \mathbb{R}$ satisfy $x^2+z^2 \leq 1$, then $\varrho \coloneqq \frac{1}{2}(I+xX+zZ)$ is a density matrix with entropy
\begin{equation*}
S\left(\frac{1}{2}(I+xX+zZ)\right)
= h\left(\frac{1-\sqrt{x^2+z^2}}{2}\right)
= s(1-x^2-z^2).
\end{equation*}
Note that $4\det(\varrho) = 1-x^2-z^2$.
\end{prop}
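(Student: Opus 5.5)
The statement to prove is \cref{prop:qubit-entropy}. I would diagonalize $\varrho \coloneqq \frac{1}{2}(I+xX+zZ)$ explicitly. Written out, $\varrho = \frac{1}{2}\begin{pmatrix} 1+z & x \\ x & 1-z\end{pmatrix}$, which is real symmetric, so its eigenvalues are real. The trace is $1$, and the determinant is $\frac{1}{4}\left((1+z)(1-z)-x^2\right) = \frac{1}{4}(1-x^2-z^2)$; this gives the closing remark $4\det(\varrho) = 1-x^2-z^2$ at once.

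Next I would get the eigenvalues from the characteristic polynomial $\lambda^2 - \lambda + \det(\varrho)$. Its roots are
\begin{equation*}
\lambda_\pm = \frac{1 \pm \sqrt{1-4\det(\varrho)}}{2} = \frac{1\pm\sqrt{x^2+z^2}}{2}.
\end{equation*}
The hypothesis $x^2+z^2 \leq 1$ gives $\sqrt{x^2+z^2}\in[0,1]$, so both eigenvalues lie in $[0,1]$. The matrix is therefore positive semidefinite with unit trace, i.e.\ a density matrix.

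The entropy is the Shannon entropy of the spectrum, so $S(\varrho) = -\lambda_+\log\lambda_+ - \lambda_-\log\lambda_- = h(\lambda_-)$, because $\lambda_+ = 1-\lambda_-$. This is the first claimed expression, $h\left(\frac{1-\sqrt{x^2+z^2}}{2}\right)$. For the second, set $D \coloneqq 1-x^2-z^2 \in [0,1]$. Then $\sqrt{x^2+z^2} = \sqrt{1-D}$, so $\lambda_- = \frac{1-\sqrt{1-D}}{2}$, and $h(\lambda_-) = s(D)$ by the definition of $s$ in \eqref{eq:qubit-entropy}.

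There is no real obstacle in this argument. The only points needing care are the boundary cases: on the Bloch sphere $x^2+z^2=1$ the state is pure, with $\lambda_-=0$, and this case is covered by the convention $0\log 0 = 0$ used in $h$. One should also note that $h$ is symmetric about $\frac{1}{2}$, so it does not matter which eigenvalue is passed to $h$.
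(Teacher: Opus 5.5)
Your proposal is correct and matches the paper's proof in substance: compute the spectrum $\frac{1}{2}\bigl(1\pm\sqrt{x^2+z^2}\bigr)$ and substitute it into the entropy. The only difference is that the paper obtains the spectrum from the anticommutation identity $(xX+zZ)^2=(x^2+z^2)I$ instead of the trace and determinant of the $2\times 2$ matrix. Your version also verifies the remark $4\det(\varrho)=1-x^2-z^2$ and the positive semidefiniteness explicitly, which the paper leaves implicit.
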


\subsection{Channel masking}
\label{subsec:masking}

This subsection introduces the masking operation used in defining the channels in~\cref{subsec:mpsc,subsec:2mqc}, which converts a non-uniform binary ensemble $\cG_\alpha \coloneqq \{(1-\alpha,\tau_0),(\alpha,\tau_1)\}$ into a binary-input output-symmetric cq channel. Similar to the previous subsection, we adopt the shorthand $\bar{\tau}_\alpha \coloneqq (1-\alpha)\tau_0+\alpha \tau_1$ for their average state.

The idea is as follows. When transmitting an input bit $b \in \bits$, the channel draws a hidden mask $m \sim \Ber(\alpha)$ and sends the receiver the pair $(b \oplus m, \tau_m)$. In this situation, the quantum output $\tau_m$ serves as a ``quantum hint'' for the mask $m$. Moreover, decoding $b$ is equivalent to decoding $m$. Thus, decoding becomes equivalent to the task of distinguishing between $\tau_0$ and $\tau_1$ under the non-uniform prior $(1-\alpha,\alpha)$.

We now proceed with its formal definition.

\begin{definition}[Masking]
\label{def:masking}
Consider density matrices $\tau_0,\tau_1 \in \Dens(\C^d)$ and parameter $\alpha \in [0,1]$. Their \textbf{$\alpha$-masking} is the binary output pair
\begin{align*}
\Sigma_0 &\coloneqq (1-\alpha)\tau_0 \oplus \alpha\tau_1, \\
\Sigma_1 &\coloneqq \alpha\tau_1 \oplus (1-\alpha)\tau_0.
\end{align*}
Throughout, we denote this operation by $\operatorname{Mask}_\alpha(\tau_0,\tau_1) \coloneqq (\Sigma_0,\Sigma_1)$.
\end{definition}

Next, we state two propositions for deriving the rate upper bounds of the masked PSC (mPSC) in \cref{subsec:mpsc} and the masked mixed-qubit channel (2MQC) in \cref{subsec:2mqc}. The first shows that the PGM bit error rate of the masked pair equals that of the binary $\{(1-\alpha,\tau_0),(\alpha,\tau_1)\}$ ensemble. Intuitively, at the bitwise level, the PGM of the masked pair reads the block label $b \oplus m$, applies the PGM of this non-uniform binary ensemble to the ``quantum hint'' $\tau_m$ to produce a mask estimate $\widehat{m}$, and returns $\widehat{b} \coloneqq (b \oplus m) \oplus \widehat{m}$. It therefore errs exactly when $\widehat{m} \neq m$, which is why the two error rates coincide. Its proof appears in \cref{appsubsec:proof-masking}.

\begin{prop}[PGM bit error rate of a masked pair]
\label{prop:masking-pgm-error}
Let $(\Sigma_0,\Sigma_1) \coloneqq \operatorname{Mask}_\alpha(\tau_0,\tau_1)$. Then we have the PGM error identity
\begin{equation}
\label{eq:masking-pgm-error}
p_{\mathrm{e}}(\Sigma_0,\Sigma_1)
= p_{\mathrm{e}}\left(\cG_\alpha\right) = 2\alpha(1-\alpha)\Tr\left(\tau_0\bar{\tau}_\alpha^{-1/2}\tau_1\bar{\tau}_\alpha^{-1/2}\right).
\end{equation}
\end{prop}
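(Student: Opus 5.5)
The plan is to compute the PGM directly for the masked pair, using the direct-sum (block-diagonal) structure of $\Sigma_0,\Sigma_1$, and then read off the bit error via \cref{prop:binary-pgm-error}.

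First, I form the uniform average
\[
\bar{\Sigma}=\tfrac12(\Sigma_0+\Sigma_1)=\tfrac12\bigl[(1-\alpha)\tau_0+\alpha\tau_1\bigr]\oplus\tfrac12\bigl[\alpha\tau_1+(1-\alpha)\tau_0\bigr]=\tfrac12\,\bar{\tau}_\alpha\oplus\tfrac12\,\bar{\tau}_\alpha .
\]
It is block diagonal with identical blocks. Hence $\bar{\Sigma}^{-1/2}=\sqrt2\,(\bar{\tau}_\alpha^{-1/2}\oplus\bar{\tau}_\alpha^{-1/2})$, with inverses taken on supports, and $\supp(\bar\Sigma)=\supp(\bar\tau_\alpha)\oplus\supp(\bar\tau_\alpha)$.

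Next I apply \eqref{eq:binary-pgm-error} at $\alpha=\frac12$, which gives $p_{\mathrm{e}}(\Sigma_0,\Sigma_1)=\frac12\Tr(\Sigma_0\bar\Sigma^{-1/2}\Sigma_1\bar\Sigma^{-1/2})$. All four factors are block diagonal, so the trace splits into the two blocks:
\[
\tfrac12\cdot 2\Bigl[\Tr\bigl((1-\alpha)\tau_0\bar\tau_\alpha^{-1/2}\alpha\tau_1\bar\tau_\alpha^{-1/2}\bigr)+\Tr\bigl(\alpha\tau_1\bar\tau_\alpha^{-1/2}(1-\alpha)\tau_0\bar\tau_\alpha^{-1/2}\bigr)\Bigr].
\]
By cyclicity of the trace, the two terms are equal. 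The total is therefore $2\alpha(1-\alpha)\Tr(\tau_0\bar\tau_\alpha^{-1/2}\tau_1\bar\tau_\alpha^{-1/2})$, which by \eqref{eq:binary-pgm-error} applied to $\cG_\alpha$ equals $p_{\mathrm{e}}(\cG_\alpha)=p_{\mathrm{e},\alpha}(\tau_0,\tau_1)$. This establishes both equalities in \eqref{eq:masking-pgm-error}.

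As a consistency check, and to support the operational description given before the proposition, I would also note that the PGM operators restricted to each block are $M_b=\frac12\bar\Sigma^{-1/2}\Sigma_b\bar\Sigma^{-1/2}$. On the first block, $M_0$ acts as $(1-\alpha)\bar\tau_\alpha^{-1/2}\tau_0\bar\tau_\alpha^{-1/2}$, which is exactly the PGM element for outcome $m=0$ of $\cG_\alpha$. So the measurement reads the block label $b\oplus m$ and then runs the $\cG_\alpha$ PGM on the hint; this is not needed for the identity itself.

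The only point needing care is the support convention. The inverse square root must be taken on $\supp(\bar\Sigma)$, which matches $\supp(\bar\tau_\alpha)$ blockwise. In the degenerate cases $\alpha\in\{0,1\}$, both sides vanish because of the factor $\alpha(1-\alpha)$. Beyond this, the computation is routine block algebra, and I do not expect a substantive obstacle.
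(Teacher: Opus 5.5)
Your proof is correct and takes essentially the same route as the paper: both compute $\bar\Sigma=\tfrac12\bar\tau_\alpha\oplus\tfrac12\bar\tau_\alpha$, use the blockwise inverse square root $\sqrt2\,(\bar\tau_\alpha^{-1/2}\oplus\bar\tau_\alpha^{-1/2})$, split the trace over the two blocks, and use cyclicity to show the two block contributions are equal. The only cosmetic difference is that you substitute directly into the uniform-prior trace formula of \cref{prop:binary-pgm-error}, whereas the paper first writes the PGM elements as $M_0=N_0\oplus N_1$ and $M_1=N_1\oplus N_0$ and then sums $\tfrac12\Tr(\Sigma_0M_1)+\tfrac12\Tr(\Sigma_1M_0)$, which is exactly your ``consistency check.''
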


Next, we record the Holevo information of a masked pair, which readily follows from considering the description given in this subsection's second paragraph and the observation that $\chi(\Sigma_0,\Sigma_1) = I(b;(b \oplus m,\tau_m)) = I(b;b \oplus m) + I(b;\tau_m \mid b \oplus m)$. Its proof appears in \cref{appsubsec:proof-masking}.

\begin{prop}[Holevo information of a masked pair]
\label{prop:masking-holevo}
Let $(\Sigma_0, \Sigma_1) \coloneqq \operatorname{Mask}_\alpha(\tau_0,\tau_1)$. Then we have the Holevo information identity
\begin{equation}
\label{eq:masking-holevo}
\chi(\Sigma_0,\Sigma_1) = 1 - h(\alpha) + S(\bar{\tau}_\alpha) - (1-\alpha)S(\tau_0)-\alpha S(\tau_1).
\end{equation}
\end{prop}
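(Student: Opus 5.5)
The plan is a direct entropy computation from \cref{def:holevo-information}, using the block-diagonal structure of the masked outputs. Writing $\chi(\Sigma_0,\Sigma_1) = S\bigl(\tfrac{1}{2}(\Sigma_0+\Sigma_1)\bigr) - \tfrac{1}{2}\bigl[S(\Sigma_0)+S(\Sigma_1)\bigr]$, it suffices to compute the three entropies separately. The only tool needed is the grouping rule for direct sums: if $p_1,\ldots,p_k \geq 0$ sum to one and $\rho_1,\ldots,\rho_k$ are density matrices, then
\begin{equation*}
S\Bigl(\bigoplus_{j} p_j\rho_j\Bigr) = H(p_1,\ldots,p_k) + \sum_j p_j S(\rho_j).
\end{equation*}
This holds because the spectrum of $\bigoplus_j p_j\rho_j$ is the union of the scaled spectra $p_j\cdot\mathrm{spec}(\rho_j)$, and $-\sum_\lambda p_j\lambda\log(p_j\lambda)$ splits as $-p_j\log p_j - p_j\sum_\lambda \lambda\log\lambda$. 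The degenerate cases $\alpha\in\{0,1\}$ are covered by the convention $0\log 0 = 0$, since a zero block contributes nothing.

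\textbf{Step 1: the output entropies.} Applying the grouping rule to $\Sigma_0 = (1-\alpha)\tau_0 \oplus \alpha\tau_1$ gives
\begin{equation*}
S(\Sigma_0) = h(\alpha) + (1-\alpha)S(\tau_0) + \alpha S(\tau_1).
\end{equation*}
The matrix $\Sigma_1 = \alpha\tau_1\oplus(1-\alpha)\tau_0$ is the block swap of $\Sigma_0$, a unitary conjugation, so it has the same spectrum and the same entropy. Hence
\begin{equation*}
\tfrac{1}{2}\bigl[S(\Sigma_0)+S(\Sigma_1)\bigr] = h(\alpha) + (1-\alpha)S(\tau_0) + \alpha S(\tau_1).
\end{equation*}
This block swap is also the unitary witnessing output-symmetry, so the masked pair is a legitimate input to \cref{thm:main-result}.

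\textbf{Step 2: the average entropy.} Averaging blockwise,
\begin{equation*}
\tfrac{1}{2}(\Sigma_0+\Sigma_1) = \tfrac{1}{2}\bigl[(1-\alpha)\tau_0+\alpha\tau_1\bigr] \oplus \tfrac{1}{2}\bigl[\alpha\tau_1+(1-\alpha)\tau_0\bigr] = \tfrac{1}{2}\bar{\tau}_\alpha \oplus \tfrac{1}{2}\bar{\tau}_\alpha .
\end{equation*}
The grouping rule with weights $(\tfrac{1}{2},\tfrac{1}{2})$ then gives $S\bigl(\tfrac{1}{2}(\Sigma_0+\Sigma_1)\bigr) = 1 + S(\bar{\tau}_\alpha)$.

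\textbf{Step 3: conclusion.} Subtracting the result of Step 1 from that of Step 2 yields
\begin{equation*}
\chi(\Sigma_0,\Sigma_1) = 1 - h(\alpha) + S(\bar{\tau}_\alpha) - (1-\alpha)S(\tau_0) - \alpha S(\tau_1),
\end{equation*}
which is exactly \eqref{eq:masking-holevo}.

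There is no genuine obstacle here. The only points needing care are that the two blocks of $\Sigma_1$ appear in swapped order, which is exactly what makes the average collapse to two identical copies of $\bar{\tau}_\alpha$, and the degenerate endpoints of $\alpha$. As a consistency check, one can match the result against the chain-rule heuristic $I(b;b\oplus m) + I(b;\tau_m\mid b\oplus m)$. The first term equals $1-h(\alpha)$. The second equals the Holevo information $S(\bar{\tau}_\alpha) - (1-\alpha)S(\tau_0) - \alpha S(\tau_1)$ of the ensemble $\cG_\alpha$, because for uniform $b$ the mask $m = b\oplus(b\oplus m)$ given $b\oplus m$ is still $\Ber(\alpha)$ and independent of $b\oplus m$.
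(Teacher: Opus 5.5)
Your proposal is correct and follows essentially the same route as the paper. Both arguments compute $S(\Sigma_0)=S(\Sigma_1)=h(\alpha)+(1-\alpha)S(\tau_0)+\alpha S(\tau_1)$ from the scaled spectra of the two blocks and $S(\bar{\Sigma})=1+S(\bar{\tau}_\alpha)$ from the two identical half-weight copies of $\bar{\tau}_\alpha$, then substitute into the Holevo identity; your explicit grouping rule and your treatment of $\alpha\in\{0,1\}$ only spell out the paper's spectral step in more detail.
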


We close with the equal-weight specialization that will relate the MQC and 2MQC rate bounds.

\begin{remark}[Masking with a uniform prior]
\label{rem:masking-uniform-prior}
Suppose $(\tau_0,\tau_1)$ is output-symmetric, and let $U$ be a unitary that exchanges its two states. If $(\Sigma_0,\Sigma_1) \coloneqq \operatorname{Mask}_{\frac{1}{2}}(\tau_0,\tau_1)$ and $W \coloneqq I \oplus U$, then
\begin{equation*}
W\Sigma_bW^\dagger = \frac{1}{2}\tau_b \oplus \frac{1}{2}\tau_b \cong \tau_b \otimes \frac{I}{2},
\qquad b \in \bits.
\end{equation*}
Thus, a uniform-prior mask recovers the original channel together with an input-independent uniform block label. In particular, it preserves both the PGM bit error rate and the uniform-input Holevo information, so optimizing over a masked channel family cannot give a weaker rate upper bound than optimizing over the original family.
\end{remark}

\section{Proof of the main theorem}
\label{sec:main-theorem-proof}

Throughout this section, $\sigma_0$ and $\sigma_1$ denote the outputs of a finite-dimensional binary-input output-symmetric cq channel $\Phi_\sigma$ (\cref{def:binary-cq-channel}), and we write $\bar{\sigma} \coloneqq (\sigma_0+\sigma_1)/2$ for their average state. We will also fix a binary code $\cC \subseteq \bits^n$ throughout this section. For ease of exposition, we will conflate $\cC$ with the uniform ensemble $\{(\abs{\cC}^{-1},\sigma_c) : c \in \cC\}$.

We prove the pretty good criterion (\cref{thm:main-result}) in two stages. First, in \cref{thm:expected-hamming}, we upper bound the expected Hamming distance between the transmitted codeword $\bc$ and the PGM outcome $\bchat$ by the channel's PGM bit error rate $p_{\mathrm{e}}(\sigma_0,\sigma_1)$, from which then the constraint $p_{\mathrm{e}}(\sigma_0,\sigma_1) < \delta$ converts that expectation bound into a constant block decoding success (\cref{cor:block-error}).\footnote{Note that the code's distance assumption is invoked only once in the argument, which is at~\cref{cor:block-error}.} By coupling this with the cq channel coding strong converse (\cref{thm:cq-strong-converse}) and setting parameters appropriately, we conclude \cref{thm:main-result}.

We begin with the expected Hamming distance upper bound.

\begin{theorem}[PGM expected Hamming distance]
\label{thm:expected-hamming}
Consider any binary-input output-symmetric cq channel $\Phi_\sigma$. Suppose $\bc$ is a uniform sample from $\cC$. Let $\sigma_{\bc} = \Phi_\sigma^{\otimes n}(\bc)$ be its output state, and let $\bchat$ be the outcome of the PGM of $\cC$ applied to $\sigma_{\bc}$. Then
\begin{equation*}
\Eb \left[d(\bchat,\bc)\right] \leq n p_{\mathrm{e}}(\sigma_0,\sigma_1).
\end{equation*}
\end{theorem}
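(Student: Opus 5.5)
By linearity of expectation, $\Eb[d(\bchat,\bc)] = \sum_{i=1}^n \Prb[\bchat_i \neq \bc_i]$. It therefore suffices to show that $\Prb[\bchat_i \neq \bc_i] \leq p_{\mathrm{e}}(\sigma_0,\sigma_1)$ for each coordinate $i$. The plan chains together the three PGM properties from \cref{subsec:pgm-definition}, in the order coarse-graining, data processing, and uniform-prior worst case.

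\textbf{Step 1 (coarse-graining).} Apply \cref{lem:block-pgm-bitwise} with the projection $f_i(c)=c_i$. This identifies $\Prb[\bchat_i\neq\bc_i]$ with $p_{\mathrm{e}}(\cE_i)$, where $\cE_i=\{(\pi_i(b),\sigma_i(b))\}_{b\in\bits}$ and $\sigma_i(b)=\Eb[\sigma_{\bc}\mid \bc_i=b]$. If $\bc_i$ is constant on $\cC$, then $\cE_i$ has a single nonzero-weight state. In that case the PGM never errs, so $p_{\mathrm{e}}(\cE_i)=0$ and the bound is trivial. I will note this degenerate case separately.

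\textbf{Step 2 (data processing).} Apply the partial trace $\T_i$ to each state of $\cE_i$. Since $\sigma_c=\sigma_{c_1}\otimes\cdots\otimes\sigma_{c_n}$, linearity gives $\T_i(\sigma_c)=\sigma_{c_i}$. Hence
\[
\T_i(\sigma_i(b))=\Eb[\sigma_{\bc_i}\mid\bc_i=b]=\sigma_b .
\]
By \cref{lem:pgm-data-processing}, $p_{\mathrm{e}}(\cE_i)\leq p_{\mathrm{e}}(\{(\pi_i(0),\sigma_0),(\pi_i(1),\sigma_1)\}) = p_{\mathrm{e},\gamma_i}(\sigma_0,\sigma_1)$, where $\gamma_i\coloneqq\pi_i(1)$.

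\textbf{Step 3 (uniform prior is worst).} Output-symmetry lets \cref{lem:uniform-prior-worst} give $p_{\mathrm{e},\gamma_i}(\sigma_0,\sigma_1)\leq p_{\mathrm{e}}(\sigma_0,\sigma_1)$. Summing over $i$ then yields the theorem.

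The substantive inputs are all supplied by the earlier lemmas, so the argument is short. The step I expect to need the most care is Step 2. There I must verify that the partial trace maps the conditional average $\sigma_i(b)$ exactly to $\sigma_b$; this uses linearity of $\T_i$ on mixtures of product states. I must also check that applying a channel to the states while keeping the same prior is exactly the setting of \cref{lem:pgm-data-processing}, whose inverses are taken over the support. That setting holds here because the priors $\pi_i(b)$ are unchanged.
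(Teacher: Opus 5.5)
Your proposal is correct and follows essentially the same route as the paper's proof: linearity of expectation, then \cref{lem:block-pgm-bitwise}, then \cref{lem:pgm-data-processing} applied to the partial trace $\T_i$, then \cref{lem:uniform-prior-worst}. Your separate treatment of the degenerate case where $\bc_i$ is constant is harmless extra care. The paper absorbs that case implicitly, since \cref{lem:uniform-prior-worst} covers every $\gamma \in [0,1]$.
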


\begin{proof}
First, observe that
\begin{equation*}
\Eb[d(\bchat,\bc)] = \Prb[\bchat_1 \neq \bc_1] + \cdots + \Prb[\bchat_n \neq \bc_n].
\end{equation*}
Thus, it suffices to show that
\begin{equation*}
\Prb[\bchat_i \neq \bc_i] \leq p_{\mathrm{e}}(\sigma_0,\sigma_1)
\end{equation*}
for each $i \in \{1, \ldots , n\}$. Indeed, let $\pi_i(b) \coloneqq \Prb[\bc_i = b]$ for $b \in \bits$, and define the binary ensemble $\cC_i \coloneqq \{(\pi_i(0), \Eb[\sigma_{\bc} \mid \bc_i = 0]), (\pi_i(1), \Eb[\sigma_{\bc} \mid \bc_i = 1])\}$. By the blockwise-to-bitwise PGM lemma (\cref{lem:block-pgm-bitwise}), we see that
\begin{equation*}
\Prb[\bchat_i \neq \bc_i] = p_{\mathrm{e}}(\cC_i).
\end{equation*}
For any $b \in \bits$, linearity of the $i$'th partial trace channel $\T_i$ and \eqref{eq:partial-trace-product} give
\begin{equation*}
\T_i\left(\Eb[\sigma_\bc \mid \bc_i = b]\right)
= \Eb[\T_i(\sigma_\bc) \mid \bc_i = b]
= \Eb[\sigma_{\bc_i} \mid \bc_i = b]
= \sigma_b.
\end{equation*}
Applying PGM data processing (\cref{lem:pgm-data-processing}) to this partial trace maps $\cC_i$ to $\{(\pi_i(0),\sigma_0),(\pi_i(1),\sigma_1)\}$. The uniform-prior bound (\cref{lem:uniform-prior-worst}) then gives
\begin{equation*}
p_{\mathrm{e}}(\cC_i) \leq p_{\mathrm{e},\pi_i(1)}(\sigma_0,\sigma_1) \leq p_{\mathrm{e}}(\sigma_0,\sigma_1).
\end{equation*}
Combining the bitwise bounds therefore concludes the theorem.
\end{proof}

By combining this Hamming distance upper bound with the code's minimum distance, we therefore obtain a constant chance of successfully decoding over $\Phi_\sigma^{\otimes n}$.

\begin{corollary}[Constant successful block decoding]
\label{cor:block-error}
Under the hypotheses of \cref{thm:expected-hamming}, suppose in addition that $d_{\min}(\cC) \geq \delta n$ for some $\delta > 0$. Then
\begin{equation*}
\Prb[\bchat = \bc] \geq 1-\frac{p_{\mathrm{e}}(\sigma_0,\sigma_1)}{\delta}.
\end{equation*}
In particular, if $p_{\mathrm{e}}(\sigma_0,\sigma_1) < \delta$, then $\Prb[\bchat = \bc] \geq \Omega(1)$.
\end{corollary}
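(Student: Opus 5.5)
The plan is to combine \cref{thm:expected-hamming} with the minimum-distance hypothesis through a Markov-type argument, exactly as sketched in the introduction. The key observation is that the PGM of $\cC$ has outcome set $\cC$ itself, so $\bchat \in \cC$ with probability one. Consequently, on the event $\{\bchat \neq \bc\}$ the two words $\bchat$ and $\bc$ are distinct codewords, and hence $d(\bchat,\bc) \geq d_{\min}(\cC) \geq \delta n$.

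There is one subtlety to handle first. The PGM is only a POVM on $\supp(\bar{\sigma})$, where the operators sum to the support projector. However, every state $\sigma_c$ with $c \in \cC$ is supported inside $\supp$ of the average state. Therefore the outcome probabilities $\Tr(\sigma_{\bc} M_{\hat{c}})$ sum to one over $\hat{c} \in \cC$, and no "extra" outcome outside $\cC$ ever occurs.

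With this settled, the argument is the pointwise inequality
\[
d(\bchat,\bc) \geq \delta n \cdot \one[\bchat \neq \bc],
\]
which holds trivially on $\{\bchat = \bc\}$ and by the distance bound on its complement. Taking expectations and invoking \cref{thm:expected-hamming} gives
\[
\delta n \cdot \Prb[\bchat \neq \bc] \leq \Eb[d(\bchat,\bc)] \leq n\, p_{\mathrm{e}}(\sigma_0,\sigma_1).
\]
Dividing by $\delta n > 0$ and taking complements yields $\Prb[\bchat = \bc] \geq 1 - p_{\mathrm{e}}(\sigma_0,\sigma_1)/\delta$.

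For the "in particular" clause, note that when $p_{\mathrm{e}}(\sigma_0,\sigma_1) < \delta$ the lower bound $1 - p_{\mathrm{e}}/\delta$ is a positive constant. This constant depends only on $\delta$ and the channel, not on $n$ or $\cC$, so it is $\Omega(1)$.

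I expect no real obstacle here. The only point needing care is the support issue in the PGM definition, which guarantees that the outcome lies in $\cC$; this is what allows the distance hypothesis to be applied.
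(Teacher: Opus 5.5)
Your proof is correct and matches the paper's argument: the pointwise bound $d(\bchat,\bc) \geq d_{\min}(\cC)\,\one[\bchat \neq \bc]$ holds because the PGM outcome is a codeword; taking expectations and applying \cref{thm:expected-hamming} gives the claim. Your remark that each $\sigma_c$ lies in $\supp(\bar{\sigma})$ is also valid. It guarantees that the PGM outcome lands in $\cC$ almost surely, a detail the paper leaves implicit.
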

\begin{proof}
On the event $\{\bchat \neq \bc\}$, the two random codewords are distinct, so $d(\bchat,\bc) \geq d_{\min}(\cC)$. Thus, $d_{\min}(\cC)\Prb[\bchat \neq \bc] \leq \Eb[d(\bchat,\bc)]$. Combining this inequality with~\cref{thm:expected-hamming} proves the claim.\qedhere
\end{proof}

Now, armed with~\cref{cor:block-error}, we can proceed to the proof of our main result.

\begin{proof}[Proof of \cref{thm:main-result}]
Set $\eta \coloneqq 1-p_{\mathrm{e}}(\sigma_0,\sigma_1)/\delta > 0$, and let $L_\sigma$ be the channel-dependent constant in \cref{thm:cq-strong-converse}. Observe that \cref{cor:block-error} gives us $\Prb[\bchat = \bc] \geq \eta$. Now, apply the cq channel coding strong converse (\cref{thm:cq-strong-converse}) with
\begin{equation*}
\eps \coloneqq L_\sigma\sqrt{\frac{\log(8/\eta)}{n\log{\mathrm{e}}}}.
\end{equation*}
This choice of $\eps$ is designed so that the second term on the right-hand side of \eqref{eq:fixed-channel-exponential-converse} is then $\eta/2$. This leads us to the inequality
\begin{equation*}
\eta \leq \Prb[\bchat = \bc] \leq 2^{-n[R(\cC)-\chi(\sigma_0,\sigma_1)-2\eps]} + \frac{\eta}{2}.
\end{equation*}
Taking logarithms and rearranging gives
\begin{align*}
R(\cC) &\leq \chi(\sigma_0,\sigma_1) + 2\eps + \frac{1}{n}\log{\frac{2}{\eta}} \\
&= \chi(\sigma_0,\sigma_1) + 2L_\sigma\sqrt{\frac{\log(8/\eta)}{n\log{\mathrm{e}}}} + \frac{1}{n}\log{\frac{2}{\eta}} \\
&= \chi(\sigma_0,\sigma_1) + O(n^{-1/2}).
\end{align*}
Maximizing over all choices of $\cC$ and letting $n \to \infty$, we conclude $R_2(\delta) \leq \chi(\sigma_0,\sigma_1)$.
\end{proof}

\section{Rederiving known rate bounds}
\label{sec:known-rate-bounds}

Recall from~\eqref{eq:rate-function} that $R_2(\delta) \coloneqq \limsup_{n \to \infty} n^{-1}\log{A_2(n,\ceil{\delta n})}$ is the asymptotic rate--distance function for binary codes. This section applies \cref{thm:main-result} to rederive, in order, the Plotkin~\cite{Plo60}, Elias--Bassalygo~\cite{Eli55,Bas65}, first MRRW, and second MRRW bounds~\cite{MRRW77}. To help bridge the transition from the classical to the quantum setup for the reader, we will present the calculations of the classical channels in~\cref{subsec:bec,subsec:bsc} in the quantum formalism. When deriving the Holevo information for each channel (excluding BEC), we will recast our qubits in terms of Pauli matrices $X$ and $Z$, defined at~\eqref{eq:pauli-matrices}.

The four subsections follow the same template. We first give the two output states $\sigma_0$ and $\sigma_1$, their average state $\bar{\sigma} \coloneqq (\sigma_0+\sigma_1)/2$, and the unitary $U$ that witnesses the output-symmetry of their corresponding binary-input cq channel $\Phi_\sigma$. We then compute the PGM bit error rate $p_{\mathrm{e}}(\sigma_0,\sigma_1)$ using the tools presented in~\cref{subsec:pgm-bit-error-formulas}, followed by the computation of the Holevo information $\chi(\sigma_0, \sigma_1)$ using~\cref{eq:holevo-formula} and the tools in~\cref{subsec:holevo-formulas}. We then apply the pretty good criterion (\cref{thm:main-result}) to conclude the rate--distance tradeoff obtained from $\Phi_\sigma$.

\subsection{Binary Erasure Channel (BEC)}
\label{subsec:bec}

We begin with the Binary Erasure Channel (BEC), whose diagonal output states rederive the Plotkin bound (\cref{cor:bec-rate}). Recall that BEC transmits the input bit $\ket{b}$ with probability $1-p$ and otherwise outputs the erasure symbol $\ket{?}$ that is orthogonal to the output bits $\{\ket{0}, \ket{1}\}$. On the span of $\ket{0},\ket{1},\ket{?}$, its output states are, for $b \in \bits$,
\begin{equation*}
\sigma_b^{\BEC} \coloneqq (1-p)\ketbra{b}{b}+p\ketbra{?}{?}.
\end{equation*}
The unitary $U^{\BEC} \coloneqq \ketbra{0}{1}+\ketbra{1}{0}+\ketbra{?}{?}$ exchanges the two output states, establishing output-symmetry. Moreover, the resulting average state is
\begin{equation*}
\bar{\sigma}^{\BEC} = \frac{1}{2}(1-p)(\ketbra{0}{0}+\ketbra{1}{1})+p\ketbra{?}{?}.
\end{equation*}

\subsubsection{PGM bit error rate}
\label{subsubsec:bec-pgm-error}

Next, we compute the PGM bit error rate, which we abbreviate as $p_{\mathrm{e}}^{\BEC}(p) \coloneqq p_{\mathrm{e}}(\sigma_0^{\BEC}(p),\sigma_1^{\BEC}(p))$. We specialize the binary PGM formula (\cref{prop:binary-pgm-error}) to the uniform prior to derive the PGM bit error rate. To compute the trace in~\cref{eq:binary-pgm-error}, we first multiply each output state by the inverse square root of the average state. This gives us
\begin{align*}
\sigma_0^{\BEC}(\bar{\sigma}^{\BEC})^{-1/2} &= \sqrt{2(1-p)}\ketbra{0}{0} + \sqrt{p}\ketbra{?}{?}, \\
\sigma_1^{\BEC}(\bar{\sigma}^{\BEC})^{-1/2} &= \sqrt{2(1-p)}\ketbra{1}{1} + \sqrt{p}\ketbra{?}{?}.
\end{align*}
Plugging these two quantities into \eqref{eq:binary-pgm-error}, we find that
\begin{equation*}
p_{\mathrm{e}}^{\BEC}(p) = \frac{1}{2}\Tr(\sigma_0^{\BEC}(\bar{\sigma}^{\BEC})^{-1/2}\sigma_1^{\BEC}(\bar{\sigma}^{\BEC})^{-1/2}) = \frac{1}{2}\Tr(p\ketbra{?}{?}) = \frac{p}{2}.
\end{equation*}
Thus, $p_{\mathrm{e}}^{\BEC}(p) = p/2$.

\begin{remark}
The formula $p_{\mathrm{e}}^{\BEC}(p) = p/2$ can, of course, be derived directly from classical considerations. We invoke~\cref{prop:binary-pgm-error} here primarily to illustrate its application and provide a useful perspective for the reader.
\end{remark}

\subsubsection{Holevo information}
\label{subsubsec:bec-holevo}

Next, we compute the Holevo information, which we abbreviate as $\chi_\BEC(p) \coloneqq \chi(\sigma_0^{\BEC},\sigma_1^{\BEC})$. Since each output state $\sigma_b^{\BEC}$ has spectrum $\{1-p,p,0\}$, we see that
\begin{equation*}
S(\sigma_0^{\BEC}) = S(\sigma_1^{\BEC}) = h(p).
\end{equation*}
In addition, the average state $\bar{\sigma}^{\BEC}$ has spectrum $\{(1-p)/2,(1-p)/2,p\}$. Its entropy is
\begin{equation*}
S(\bar{\sigma}^{\BEC}) = -p\log{p}-2\left(\frac{1-p}{2}\right)\log{\left(\frac{1-p}{2}\right)} = h(p)+1-p.
\end{equation*}
Applying the Holevo identity (\cref{eq:holevo-formula}) now gives
\begin{equation*}
\chi_{\BEC}(p) = S(\bar{\sigma}^{\BEC}) - \frac{1}{2}\left[S(\sigma_0^{\BEC})+S(\sigma_1^{\BEC})\right] = (h(p)+1-p) -\frac{1}{2}\left[h(p)+h(p)\right] = 1-p.
\end{equation*}
Thus, the Holevo information is exactly the probability that the input survives the erasure.

\subsubsection{Rate upper bound}
\label{subsubsec:bec-rate}

Since $p_{\mathrm{e}}^{\BEC}(p) = p/2$, the pretty good criterion (\cref{thm:main-result}) applies when $p < 2\delta$.
Letting $p \to (2\delta)^-$ therefore leads to the Plotkin bound.

\begin{corollary}[{Plotkin bound --- \cite[pp.~445--450]{Plo60}}]
\label{cor:bec-rate}
For every $\delta \in (0,\frac{1}{2})$, applying the pretty good criterion (\cref{thm:main-result}) to the BEC yields
\begin{equation*}
R_2(\delta) \leq 1-2\delta.
\end{equation*}
\end{corollary}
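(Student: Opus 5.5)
The plan is to apply the pretty good criterion (\cref{thm:main-result}) directly to the family $\{\BEC(p)\}$ and then pass to the limit $p \to (2\delta)^-$. The two channel quantities are already available from the preceding subsections. \cref{subsubsec:bec-pgm-error} gives $p_{\mathrm{e}}^{\BEC}(p) = p/2$, and \cref{subsubsec:bec-holevo} gives $\chi_{\BEC}(p) = 1-p$. The unitary $U^{\BEC}$ witnesses output-symmetry, and the outputs live in $\C^3$, so the channel is a legitimate finite-dimensional binary-input output-symmetric cq channel.

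First, fix $\delta \in (0,\frac{1}{2})$ and any $p \in (0, 2\delta)$. Note that $2\delta < 1$, so $p \in (0,1)$ is a valid erasure probability. Then $p_{\mathrm{e}}^{\BEC}(p) = p/2 < \delta$, so the hypothesis of \cref{thm:main-result} holds. Its asymptotic conclusion gives $R_2(\delta) \leq \chi_{\BEC}(p) = 1-p$. Since $R_2(\delta)$ does not depend on $p$, this inequality holds for every $p < 2\delta$.

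Next, take the infimum over $p \in (0,2\delta)$, equivalently let $p \to (2\delta)^-$. This yields $R_2(\delta) \leq 1-2\delta$.

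No real obstacle arises. The only subtlety is that \cref{thm:main-result} requires the strict inequality $p_{\mathrm{e}} < \delta$, and the constant in its $O(n^{-1/2})$ term depends on the channel and blows up as $p \to 2\delta$. This is harmless: we use only the limiting statement $R_2(\delta) \leq \chi$ for each fixed admissible $p$, and take the limit in $p$ after $n \to \infty$, never jointly.
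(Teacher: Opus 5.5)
Your proposal is correct and matches the paper's argument: the paper likewise plugs $p_{\mathrm{e}}^{\BEC}(p) = p/2$ and $\chi_{\BEC}(p) = 1-p$ into \cref{thm:main-result} for each $p < 2\delta$ and then lets $p \to (2\delta)^-$. Your remark that $n \to \infty$ is taken before the limit in $p$, so the blow-up of the $O(n^{-1/2})$ constant is harmless, makes explicit a point the paper leaves implicit.
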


\subsection{Binary Symmetric Channel (BSC)}
\label{subsec:bsc}

We next consider the Binary Symmetric Channel (BSC), whose diagonal output states rederive the Elias--Bassalygo bound (\cref{cor:bsc-rate}). Recall that the BSC flips the input bit $b$ with probability $p$. In the Pauli notation introduced at~\eqref{eq:pauli-matrices}, we can cast the output states, for $b \in \bits$, as
\begin{equation}
 \label{eq:bsc-pauli-form}
\sigma_b^{\BSC} \coloneqq \frac{1}{2}\left(I+(-1)^b(1-2p)Z\right).
\end{equation}
Note that both states are diagonal in the computational basis, hence classical. Since $XZX = -Z$, the unitary $U^{\BSC} \coloneqq X$ exchanges the outputs, establishing output-symmetry. Moreover, the average state is
\begin{equation*}
\bar{\sigma}^{\BSC}
= \frac{1}{2}\left(\sigma_0^{\BSC}+\sigma_1^{\BSC}\right)
= \frac{I}{2}.
\end{equation*}
Note that this Pauli presentation will be reused for the mixed-qubit channel (MQC) in \cref{subsec:mqc}.

\subsubsection{PGM bit error rate}
\label{subsubsec:bsc-pgm-error}

Next, we compute the PGM bit error rate, which we abbreviate as $p_{\mathrm{e}}^{\BSC}(p) \coloneqq p_{\mathrm{e}}(\sigma_0^{\BSC},\sigma_1^{\BSC})$. Since $\bar{\sigma}^{\BSC} = I/2$, we see that $\left(\bar{\sigma}^{\BSC}\right)^{-1/2} = \sqrt{2} I$. First, note the simple identities $\Tr(Z) = 0$ and $Z^2 = I$; by~\eqref{eq:binary-pgm-error} from~\cref{prop:binary-pgm-error} and~\eqref{eq:bsc-pauli-form}, we find that
\begin{align*}
p_{\mathrm{e}}^{\BSC}(p)
&= \Tr\left(\sigma_0^{\BSC}\sigma_1^{\BSC}\right)
= \frac{1}{4}\Tr\left[\left(I+(1-2p)Z\right)\left(I-(1-2p)Z\right)\right]\\
&= \frac{1}{4}\Tr\left[\left(1-(1-2p)^2\right)I\right]
= \frac{1-(1-2p)^2}{2}
= 2p(1-p).
\end{align*}

\begin{remark}[BSC posterior sampling]
\label{rem:bsc-posterior}
A more elegant way to derive the formula $p_{\mathrm{e}}^{\BSC}(p) = 2p(1-p)$ is to observe that posterior sampling---the classical specialization of the PGM---for $\BSC(p)$ is again $\BSC(p)$. Since $\Prb[\be_1 \oplus \be_2 = 1] = 2p(1-p)$ for independent $\be_1,\be_2 \sim \Ber(p)$, the PGM bit error rate formula follows.
\end{remark}

\subsubsection{Holevo information}
\label{subsubsec:bsc-holevo}

Next, we compute the Holevo information, which we abbreviate as $\chi_\BSC(p) \coloneqq \chi(\sigma_0^{\BSC},\sigma_1^{\BSC})$. Each output state has spectrum $\{1-p,p\}$, so
\begin{equation*}
S(\sigma_0^{\BSC}) = S(\sigma_1^{\BSC}) = h(p).
\end{equation*}
As for the average state, the spectrum is $\{\frac{1}{2},\frac{1}{2}\}$, and hence
\begin{equation*}
S(\bar{\sigma}^{\BSC}) = 1.
\end{equation*}
Applying the Holevo identity (\cref{eq:holevo-formula}) now gives
\begin{align*}
\chi_{\BSC}(p)
&= S(\bar{\sigma}^{\BSC})-\frac{1}{2}\left[S(\sigma_0^{\BSC})+S(\sigma_1^{\BSC})\right]\\
&= 1-\frac{1}{2}\left[h(p)+h(p)\right] \\
&= 1-h(p).
\end{align*}
Thus, the Holevo information is the usual mutual information of the BSC.

\subsubsection{Rate upper bound}
\label{subsubsec:bsc-rate}

Since $p_{\mathrm{e}}^{\BSC}(p) = 2p(1-p)$, the pretty good criterion (\cref{thm:main-result}) applies when $2p(1-p) < \delta$. This condition is met when $p < J(\delta) \coloneqq (1-\sqrt{1-2\delta})/2$ (i.e., below the Johnson radius). Letting $p \to J(\delta)^-$ leads to the Elias--Bassalygo bound.

\begin{corollary}[{Elias--Bassalygo bound for binary codes --- \cite[Appendix A, Equations~(A.15)--(A.18)]{Eli55};\cite[Equations~(5)~and~(8)]{Bas65}}]
\label{cor:bsc-rate}
For every $\delta \in (0,\frac{1}{2})$, applying the pretty good criterion (\cref{thm:main-result}) to the BSC yields
\begin{equation*}
R_2(\delta) \leq R_{\mathrm{EB}}(\delta) \coloneqq 1-h\left(\frac{1-\sqrt{1-2\delta}}{2}\right).
\end{equation*}
\end{corollary}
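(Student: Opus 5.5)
The plan is to apply the pretty good criterion (\cref{thm:main-result}) to $\BSC(p)$ for every $p < J(\delta)$, and then let $p \to J(\delta)^-$. The two ingredients have already been computed in \cref{subsubsec:bsc-pgm-error,subsubsec:bsc-holevo}: $p_{\mathrm{e}}^{\BSC}(p) = 2p(1-p)$ and $\chi_{\BSC}(p) = 1-h(p)$. Output-symmetry is witnessed by $U^{\BSC} = X$, so the hypotheses of \cref{thm:main-result} hold for every $p \in [0,\frac{1}{2}]$.

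First I would verify the threshold. The map $g(p) \coloneqq 2p(1-p)$ is continuous and strictly increasing on $[0,\frac{1}{2}]$, with $g(0)=0$ and $g(\frac{1}{2}) = \frac{1}{2} > \delta$. Solving $2p(1-p) = \delta$ on this interval gives the unique root $p = (1-\sqrt{1-2\delta})/2 = J(\delta)$, which is real because $\delta < \frac{1}{2}$. Hence $p_{\mathrm{e}}^{\BSC}(p) < \delta$ holds for every $p \in [0, J(\delta))$.

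For each such $p$, \cref{thm:main-result} then gives $R_2(\delta) \leq 1-h(p)$. Since $h$ is continuous, taking the infimum over $p < J(\delta)$, that is, letting $p \to J(\delta)^-$, yields $R_2(\delta) \leq 1-h(J(\delta)) = R_{\mathrm{EB}}(\delta)$.

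There is no real obstacle here. The only care needed is that the criterion requires a strict inequality $p_{\mathrm{e}} < \delta$, so it cannot be applied at $p = J(\delta)$ directly; the limiting step above is what handles this. It is also worth noting that the displayed expression is $1-h(J(\delta))$, which is the asymptotic Elias--Bassalygo bound stated in terms of the Johnson radius.
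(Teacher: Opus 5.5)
Your proposal is correct and matches the paper's argument: plug $p_{\mathrm{e}}^{\BSC}(p) = 2p(1-p)$ and $\chi_{\BSC}(p) = 1-h(p)$ into \cref{thm:main-result} for every $p < J(\delta)$, then let $p \to J(\delta)^-$. Your checks that $2p(1-p)$ is strictly increasing on $[0,\frac{1}{2}]$ with unique root $J(\delta)$, and that the strict inequality $p_{\mathrm{e}} < \delta$ forces the limiting step, spell out what the paper states in one line.
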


\subsection{Pure-State Channel (PSC)}
\label{subsec:psc}

We next consider the Pure-State Channel (PSC), whose pure qubit outputs rederive the first MRRW bound (\cref{cor:psc-rate}). For a parameter $p \in [0,\frac{1}{2}]$, the $\PSC(p)$ maps a bit $b \in \bits$ to one of two coherent Bernoulli states. Here, we define $\ket{\Ber(p)} \coloneqq \sqrt{1-p}\ket{0}+\sqrt{p}\ket{1}$ to be the coherent Bernoulli state with parameter $p$. The output states of $\PSC(p)$ are, for $b \in \bits$,
\begin{equation*}
\sigma_b^{\PSC} \coloneqq X^b\ketbra{\Ber(p)}{\Ber(p)}X^b.
\end{equation*}
Since $X\ket{\Ber(p)} = \ket{\Ber(1-p)}$, the two outputs are $\ketbra{\Ber(p)}{\Ber(p)}$ and $\ketbra{\Ber(1-p)}{\Ber(1-p)}$, which is the promised amplitude swap.
In the Pauli notation fixed in \cref{subsec:holevo-formulas}, the output states can be recast as
\begin{equation}
\label{eq:psc-pauli-form}
\sigma_b^{\PSC} = \frac{1}{2}\left(I+2\sqrt{p(1-p)}X+(-1)^b(1-2p)Z\right).
\end{equation}
As a sanity check, a state $\varrho$ is pure if and only if $\Tr(\varrho^2) = 1$. Since $\Tr(X) = \Tr(Z) = 0$ and $\frac{1}{2}(1+(2\sqrt{p(1-p)})^2+(1-2p)^2) = 1$, this therefore confirms the purity of the output states. Since $XZX = -Z$, the unitary $U^{\PSC} \coloneqq X$ exchanges the outputs, establishing output-symmetry. Moreover, their average state is
\begin{equation}
\label{eq:psc-average-state}
\bar{\sigma}^{\PSC} = \frac{1}{2}\left(I+2\sqrt{p(1-p)}X\right).
\end{equation}

\subsubsection{PGM bit error rate}
\label{subsubsec:psc-pgm-error}

Next, we compute the PGM bit error rate, which we abbreviate as $p_{\mathrm{e}}^{\PSC}(p) \coloneqq p_{\mathrm{e}}(\sigma_0^{\PSC},\sigma_1^{\PSC})$. First, observe that the output states $\sigma_b^{\PSC}$ for $b \in \bits$ are pure states with fidelity
\begin{equation*}
\abs{\braket{\Ber(p)|\Ber(1-p)}}^2 = 4p(1-p).
\end{equation*}
Applying \cref{prop:binary-pgm-error}, specifically~\eqref{eq:non-uniform-pure-pgm-error}, with the uniform prior $\alpha=\frac{1}{2}$ gives us
\begin{equation}
p_{\mathrm{e}}^{\PSC}(p) = \frac{2p(1-p)}{1+\sqrt{1-4p(1-p)}} = p,
\end{equation}
where the last equality uses $\sqrt{1-4p(1-p)} = 1-2p$ for $p \in [0,\frac{1}{2}]$.

\begin{remark}[Computational-basis PGM for the PSC]
\label{rem:psc-pgm-optimal}
For $p \in [0,\frac{1}{2})$, the bit-level PGM for the PSC is the computational-basis measurement $\{\ketbra{0}{0}, \ketbra{1}{1}\}$. This measurement is additionally optimal (see \cref{fact:psc-bitwise-optimal}), so the value $p_{\mathrm{e}}^{\PSC}(p) = p$ computed above is the minimum possible bit error rate.
\end{remark}

\subsubsection{Holevo information}
\label{subsubsec:psc-holevo}

Next, we compute the Holevo information, which we abbreviate as $\chi_{\PSC}(p) \coloneqq \chi(\sigma_0^{\PSC},\sigma_1^{\PSC})$. Since the output states $\sigma_b^{\PSC}$ are pure, we immediately see that\footnote{This can also be seen by applying~\cref{prop:qubit-entropy} along with the Pauli form given in~\cref{eq:psc-pauli-form}.}
\begin{equation*}
S(\sigma_0^{\PSC}) = S(\sigma_1^{\PSC}) = 0.
\end{equation*}
As for the average state $\bar{\sigma}^{\PSC}$, applying \cref{prop:qubit-entropy} to the Pauli form in~\eqref{eq:psc-average-state} gives
\begin{equation*}
S(\bar{\sigma}^{\PSC})
= s\left(1-4p(1-p)\right)
= s\left((1-2p)^2\right)
= h\left(\frac{1}{2}-\sqrt{p(1-p)}\right).
\end{equation*}
Applying the Holevo identity (\cref{eq:holevo-formula}) now gives
\begin{align*}
\chi_{\PSC}(p)
&= S(\bar{\sigma}^{\PSC})-\frac{1}{2}\left[S(\sigma_0^{\PSC})+S(\sigma_1^{\PSC})\right] \\
&= h\left(\frac{1}{2}-\sqrt{p(1-p)}\right).
\end{align*}

\subsubsection{Rate upper bound}
\label{subsubsec:psc-rate}

Since $p_{\mathrm{e}}^{\PSC}(p) = p$, the pretty good criterion (\cref{thm:main-result}) applies whenever $p < \delta$. Applying \cref{thm:main-result}, we get $R_2(\delta) \leq \chi_{\PSC}(p) = h\left(\frac{1}{2}-\sqrt{p(1-p)}\right)$ for all $p < \delta$. Letting $p \to \delta^-$ leads to the first MRRW bound.

\begin{corollary}[{First MRRW bound for binary codes --- \cite[Equation~(1.5)]{MRRW77}}]
\label{cor:psc-rate}
For every $\delta \in (0,\frac{1}{2})$, applying the pretty good criterion (\cref{thm:main-result}) to the PSC gives
\begin{equation*}
R_2(\delta) \leq \MRRW(\delta)
= h\left(\frac{1}{2}-\sqrt{\delta(1-\delta)}\right).
\end{equation*}
\end{corollary}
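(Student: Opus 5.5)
The corollary follows from \cref{thm:main-result} applied to the family $\PSC(p)$, $p \in (0,\delta)$, using the two quantities computed in \cref{subsubsec:psc-pgm-error,subsubsec:psc-holevo}, followed by a limiting argument in $p$.

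First, I check the hypotheses of \cref{thm:main-result} for $\PSC(p)$. The outputs $\sigma_0^{\PSC},\sigma_1^{\PSC}$ are qubit density matrices, so the channel is finite-dimensional. It is output-symmetric via $U^{\PSC} = X$, since $XZX=-Z$ and $X$ commutes with $X$ in the Pauli form \eqref{eq:psc-pauli-form}. By \cref{subsubsec:psc-pgm-error}, which specializes \cref{prop:binary-pgm-error} to pure states of fidelity $f = 4p(1-p)$, we have $p_{\mathrm{e}}^{\PSC}(p) = p$. Hence for every fixed $p<\delta$ the criterion $p_{\mathrm{e}}<\delta$ holds.

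Second, \cref{thm:main-result} then gives
\[
R_2(\delta)\le \chi_{\PSC}(p)=h\left(\tfrac12-\sqrt{p(1-p)}\right).
\]
The Holevo value comes from \cref{subsubsec:psc-holevo}: the outputs are pure, so their entropies vanish, and \cref{prop:qubit-entropy} applied to the average state $\tfrac12(I+2\sqrt{p(1-p)}X)$ gives its entropy as $s((1-2p)^2)$. This bound holds uniformly for all $p\in(0,\delta)$. Note that the $O(n^{-1/2})$ constant depends on $p$, but this is harmless because the asymptotic statement $R_2(\delta)\le\chi$ is already taken for each fixed $p$ before any limit in $p$.

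Third, I let $p\to\delta^-$. The map $p\mapsto \tfrac12-\sqrt{p(1-p)}$ is continuous and decreasing on $[0,\tfrac12]$, with values in $[0,\tfrac12]$. Since $h$ is continuous and increasing on $[0,\tfrac12]$, the function $\chi_{\PSC}(p)$ is continuous and decreasing in $p$. Taking the infimum over $p<\delta$ therefore yields $h(\tfrac12-\sqrt{\delta(1-\delta)})=\MRRW(\delta)$.

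There is no genuine obstacle, since all the work sits in \cref{thm:main-result} and the two computations. The only point needing care is the strict inequality in the criterion: the bound cannot be applied at $p=\delta$ directly, so the continuity and limiting step is what delivers the exact endpoint.
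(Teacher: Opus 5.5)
Your proposal is correct and follows the paper's own argument: apply \cref{thm:main-result} to $\PSC(p)$ with $p_{\mathrm{e}}^{\PSC}(p)=p<\delta$ and $\chi_{\PSC}(p)=h\bigl(\tfrac12-\sqrt{p(1-p)}\bigr)$, then let $p\to\delta^-$. Your monotonicity and continuity justification for the limiting step is sound, and you correctly handle the strict inequality in the criterion.
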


\subsection{Masked PSC (mPSC)}
\label{subsec:mpsc}

We conclude this section with the masked PSC, whose four-dimensional outputs rederive the second MRRW bound (\cref{cor:mpsc-rate}). Fix $r,\alpha \in [0,\frac{1}{2}]$. Recalling the coherent Bernoulli state $\ket{\Ber(r)} = \sqrt{1-r}\ket{0}+\sqrt{r}\ket{1}$ from \cref{subsec:psc}, write $\tau_m(r) \coloneqq X^m\ketbra{\Ber(r)}{\Ber(r)}X^m$ for $m \in \bits$, and let $\bar{\tau}_\alpha(r) \coloneqq (1-\alpha)\tau_0(r)+\alpha\tau_1(r)$ be their average state. Following the $\alpha$-masking definition (\cref{def:masking}), the two output states are
\begin{align*}
\sigma_0^{\mPSC}(r,\alpha) &\coloneqq (1-\alpha)\tau_0(r) \oplus \alpha\tau_1(r), \\
\sigma_1^{\mPSC}(r,\alpha) &\coloneqq \alpha\tau_1(r) \oplus (1-\alpha)\tau_0(r).
\end{align*}
The block-swap unitary $U^{\mPSC}(\xi \oplus \zeta) \coloneqq \zeta \oplus \xi$ exchanges the outputs, establishing output-symmetry. Moreover, their average state is
\begin{equation*}
\bar{\sigma}^{\mPSC}
=\frac{1}{2}\left(\sigma_0^{\mPSC}+\sigma_1^{\mPSC}\right)
=\frac{1}{2}\left(\bar{\tau}_\alpha(r) \oplus \bar{\tau}_\alpha(r)\right).
\end{equation*}

\begin{remark}[mPSC includes both PSC and BSC]
\label{rem:mpsc-endpoints}
Since the second MRRW bound was initially conceived as interpolating between the first MRRW bound and the Elias--Bassalygo bound, one might wonder if that can be seen directly from the masked PSC construction. Indeed, when $\alpha = \frac{1}{2}$, since masking with a uniform prior restores the original channel (\cref{rem:masking-uniform-prior}), we see that $\mPSC(r,\frac{1}{2})$ is equivalent to $\PSC(r)$ and therefore recovers the first MRRW bound (\cref{cor:psc-rate}). When $r = \frac{1}{2}$, one has $\tau_m(\frac{1}{2}) = \tau_{m \oplus 1}(\frac{1}{2})$, so the only information about the mask $m$ is its prior $\Ber(\alpha)$. Thus, $\mPSC(\tfrac{1}{2},\alpha)$ is equivalent to $\BSC(\alpha)$ and recovers the Elias--Bassalygo bound (\cref{cor:bsc-rate}).
\end{remark}

\subsubsection{PGM bit error rate}
\label{subsubsec:mpsc-pgm-error}

Next, we compute the PGM bit error rate, which we abbreviate in this subsection as $p_{\mathrm{e}}^{\mPSC}(r,\alpha) \coloneqq p_{\mathrm{e}}(\sigma_0^{\mPSC}(r,\alpha),\sigma_1^{\mPSC}(r,\alpha))$. By the PGM error masking identity (\cref{prop:masking-pgm-error}), the mPSC error is the PGM error of the non-uniform pure-state ensemble $\{(1-\alpha,\tau_0(r)),(\alpha,\tau_1(r))\}$. Its squared overlap is
\begin{equation*}
\abs{\braket{\Ber(r)|\Ber(1-r)}}^2 = 4r(1-r),
\end{equation*}
and because $r \in [0,\frac{1}{2}]$, we have that $\sqrt{1-4r(1-r)} = 1-2r$. Substituting these expressions into the non-uniform pure-state formula \eqref{eq:non-uniform-pure-pgm-error}, as recorded in \cref{prop:binary-pgm-error}, gives
\begin{equation}
\label{eq:mpsc-pgm-error}
p_{\mathrm{e}}^{\mPSC}(r,\alpha)
\coloneqq p_{\mathrm{e}}\left(\sigma_0^{\mPSC},\sigma_1^{\mPSC}\right)
= \frac{8\alpha(1-\alpha)r(1-r)}
{1+2(1-2r)\sqrt{\alpha(1-\alpha)}}.
\end{equation}

\subsubsection{Holevo information}
\label{subsubsec:mpsc-holevo}

Next, we compute the Holevo information, which we abbreviate as $\chi_{\mPSC}(r,\alpha) \coloneqq \chi(\sigma_0^{\mPSC},\sigma_1^{\mPSC})$. Both states in the underlying PSC pair are pure, so
\begin{equation*}
S(\tau_0(r)) = S(\tau_1(r)) = 0.
\end{equation*}
Their average state has the Pauli form
\begin{equation*}
\bar{\tau}_\alpha(r)
=\frac{1}{2}\left(I+2\sqrt{r(1-r)}X
+(1-2\alpha)(1-2r)Z\right).
\end{equation*}
By \cref{prop:qubit-entropy},
\begin{align*}
S(\bar{\tau}_\alpha(r))
&= s\left(1-4r(1-r)-(1-2\alpha)^2(1-2r)^2\right) \\
&= s\left(4\alpha(1-\alpha)(1-2r)^2\right).
\end{align*}
The masked Holevo identity (\cref{prop:masking-holevo}) therefore gives
\begin{align}
\chi_{\mPSC}(r,\alpha)
&\coloneqq \chi\left(\sigma_0^{\mPSC},\sigma_1^{\mPSC}\right) \notag \\
&=1-h(\alpha)
+s\left(4\alpha(1-\alpha)(1-2r)^2\right) \notag \\
&=1+s\left(4\alpha(1-\alpha)(1-2r)^2\right)
-s\left(4\alpha(1-\alpha)\right).
\label{eq:mpsc-holevo}
\end{align}
Here the last step uses $h(\alpha) = s\left(4\alpha(1-\alpha)\right)$.

\subsubsection{Rate upper bound}
\label{subsubsec:mpsc-rate}

To see how the resulting rate bound from mPSC attains exactly the second MRRW in its original form~\cite[Equation~(1.4)]{MRRW77},\footnote{Note that the function $s$ in this work coincides with the function $g$ in the original work~\cite{MRRW77}. In this work, the function $s$ plays a similar role to von Neumann entropy $S$ as the binary entropy function $h$ does to Shannon entropy $H$.} we introduce the shorthands
\begin{equation*}
u \coloneqq 2(1-2r)\sqrt{\alpha(1-\alpha)}, \qquad p \coloneqq \frac{8\alpha(1-\alpha)r(1-r)}{1+2(1-2r)\sqrt{\alpha(1-\alpha)}}.
\end{equation*}
Then \eqref{eq:mpsc-holevo} can be rewritten as 
\begin{equation*}
\chi_{\mPSC}(r,\alpha) = 1+s(u^2)-s\left(u^2+2pu+2p\right),
\end{equation*}
which is close in form to the bound presented in~\cite[Equation~(1.4)]{MRRW77}. Moreover, by~\eqref{eq:mpsc-pgm-error}, we see that in fact
\begin{equation}
\label{eq:mpsc-same-p}
p_{\mathrm{e}}^{\mPSC}(r,\alpha) = p.
\end{equation}
Now, via our shorthands, we can solve for $\alpha$ and $r$ in terms of $u$ and $p$. This gives us, picking the solutions in the range $[0,1/2]$,
\begin{equation*}
(r,\alpha) = \left(\frac{1}{2}-\frac{u}{2\sqrt{u^2+2p(1+u)}}, \frac{1 - \sqrt{(1+u)(1-2p-u)}}{2} \right),
\end{equation*}
assuming $u \in [0,1-2p]$ (and $p \geq 0$). Now, by~\eqref{eq:mpsc-same-p}, the pretty good criterion (\cref{thm:main-result}) applies whenever $p < \delta$. Therefore, letting $p \to \delta^-$ and minimizing over $u$ yields the second MRRW bound.
\begin{corollary}[{Second MRRW bound for binary codes --- \cite[Equation~(1.4)]{MRRW77}}]
\label{cor:mpsc-rate}
For every $\delta \in (0,\frac{1}{2})$, applying the pretty good criterion (\cref{thm:main-result}) to the masked PSC gives
\begin{equation}
\label{eq:second-mrrw}
R_2(\delta) \leq \MRRW^{(2)}(\delta)
\coloneqq
\min_{0 \leq u \leq 1-2\delta}
\left[1+s(u^2)-s\left(u^2+2\delta u+2\delta\right)\right].
\end{equation}
\end{corollary}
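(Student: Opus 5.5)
The plan is to apply the pretty good criterion (\cref{thm:main-result}) to the family $\mPSC(r,\alpha)$, using the closed forms \eqref{eq:mpsc-pgm-error} and \eqref{eq:mpsc-holevo}. Output-symmetry is witnessed by the block swap. So for every $(r,\alpha) \in [0,\tfrac12]^2$ with $p_{\mathrm{e}}^{\mPSC}(r,\alpha) < \delta$ we get $R_2(\delta) \leq \chi_{\mPSC}(r,\alpha)$. What remains is to show that the infimum of $\chi_{\mPSC}$ over this feasible region is at most the right-hand side of \eqref{eq:second-mrrw}. This is the reparametrization already set up: $u = 2(1-2r)\sqrt{\alpha(1-\alpha)}$ and $p = p_{\mathrm{e}}^{\mPSC}(r,\alpha)$.

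First, I verify the identity $\chi_{\mPSC}(r,\alpha) = 1+s(u^2)-s(u^2+2pu+2p)$. The term $s(4\alpha(1-\alpha)(1-2r)^2)$ is $s(u^2)$ by definition. For the other term, I need $4\alpha(1-\alpha) = u^2+2p(1+u)$. Writing $a \coloneqq 4\alpha(1-\alpha)$, we have $u^2 = a(1-2r)^2$ and $2p(1+u) = 4a\,r(1-r)$, since $1+u$ is exactly the denominator in \eqref{eq:mpsc-pgm-error}. The sum is $a[(1-2r)^2+4r(1-r)] = a$, which confirms the identity.

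Second, I show the map $(r,\alpha) \mapsto (u,p)$ reaches every target pair. Given $p \in (0,\tfrac12)$ and $u \in [0,1-2p]$, I invert using the stated formulas. The condition $u \leq 1-2p$ is exactly what makes $(1+u)(1-2p-u) \geq 0$, so $\alpha \in [0,\tfrac12]$; also $r \in [0,\tfrac12]$ follows since $u \leq \sqrt{u^2+2p(1+u)}$. A substitution check that these give back $(u,p)$ is routine.

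Third, I take the limit. Fix $u_\star$ attaining the minimum in \eqref{eq:second-mrrw}; it exists because the objective is continuous on a compact interval. For $p<\delta$ close to $\delta$, choose $u = \min(u_\star, 1-2p)$, which is admissible, and realize the pair $(u,p)$ by some $(r,\alpha)$. Then $R_2(\delta) \leq 1+s(u^2)-s(u^2+2pu+2p)$. Now let $p \to \delta^-$, so $u \to u_\star$ because $u_\star \leq 1-2\delta$. Since $s$ is continuous on $[0,1]$, the bound converges to the minimum in \eqref{eq:second-mrrw}.

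The main obstacle is the inversion step. I have to make sure the branch choice keeps $\alpha,r$ in $[0,\tfrac12]$ and that boundary cases ($u=0$, or $u = 1-2p$ where $\alpha=\tfrac12$) cause no trouble. If the explicit inversion gets messy, I fall back on a continuity and intermediate-value argument. For fixed $u$, moving along the curve $2(1-2r)\sqrt{\alpha(1-\alpha)}=u$ from $r=\tfrac12-\tfrac{u}{2}$ (where $\alpha=\tfrac12$) toward $r=\tfrac12$ (where $\alpha \to 0$) makes $p$ vary continuously. Its range covers $(0,(1-u)/2]$, which is all that is needed.
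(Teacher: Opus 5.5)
Your proposal is correct and follows the paper's own route. You reparametrize by $u=2(1-2r)\sqrt{\alpha(1-\alpha)}$ and $p=p_{\mathrm{e}}^{\mPSC}(r,\alpha)$, check $\chi_{\mPSC}=1+s(u^2)-s(u^2+2pu+2p)$ via $u^2+2p(1+u)=4\alpha(1-\alpha)$, invert for $(r,\alpha)\in[0,\tfrac12]^2$ whenever $0\le u\le 1-2p$, and let $p\to\delta^-$ at the minimizing $u$. Your $\min(u_\star,1-2p)$ is simply $u_\star$, since $u_\star\le 1-2\delta<1-2p$. The fallback path is described incorrectly: for fixed $u>0$ one has $1-2r\ge u$, so the curve runs from $r=(1-u)/2$ down toward $r=0$ (where $p=0$), not toward $r=\tfrac12$. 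This does not matter, because your explicit inversion already settles that step.
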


The endpoints of this family recover the BSC and PSC bounds already encountered.

\begin{remark}[Qubit cq channels]
\label{rem:second-mrrw-qubit-channels}
Naturally, one might wonder if the second MRRW bound could instead be captured by a cq channel with qubit outputs as opposed to a $4$-dimensional one. As it turns out, by~\cref{rem:mqc-has-all-qubits}, the mixed-qubit channel (MQC) is the binary-input output-symmetric cq channel with qubit outputs that yields the best rate bound among all such cq channels. Since it does not outperform the \emph{second} MRRW bound throughout the entire interval, any cq channel corresponding to the second MRRW bound necessarily has to go to higher dimensional qudits.
\end{remark}

\section{New classical--quantum (cq) channel constructions}
\label{sec:new-cq-channels}

We now turn to channel designs introduced in this work. The first one is the mixed-qubit channel (MQC), which we will later show has a resulting rate upper bound that lies strictly below the first MRRW bound (\cref{thm:first-mrrw-improvement}). The second construction, the masked MQC (2MQC), follows a similar approach to the derivation of the second MRRW bound but with MQC in place of PSC. As a result, its ultimate rate upper bound will lie strictly below the second MRRW bound (\cref{thm:second-mrrw-improvement}).

\subsection{Mixed-qubit channel (MQC)}
\label{subsec:mqc}

The mixed-qubit channel (MQC) is defined to be the channel that is obtained by passing the output of $\PSC(r)$ through the X-Pauli channel $\mathcal{B}_\eta$ from \eqref{eq:pauli-x-channel}. Fix $r,\eta \in [0,\frac{1}{2}]$. Since $X\sigma_b^{\PSC}(r)X = \sigma_{b \oplus 1}^{\PSC}(r)$, following the description given in~\cref{subsec:how-mixing}, its output states are, for $b \in \bits$,
\begin{equation}
\label{eq:mqc-pauli-form}
\sigma_b^{\MQC}(r,\eta)
\coloneqq \mathcal{B}_\eta\left(\sigma_b^{\PSC}(r)\right)
=\frac{1}{2}\left(I+2\sqrt{r(1-r)}X+(-1)^b(1-2r)(1-2\eta)Z\right).
\end{equation}
Put differently, each output is the mixture $(1-\eta)\sigma_b^{\PSC}(r)+\eta\sigma_{b \oplus 1}^{\PSC}(r)$. The unitary $U^{\PSC} \coloneqq X$ exchanges the outputs, establishing output-symmetry. Moreover, their average state is
\begin{equation}
\label{eq:mqc-average-state}
\bar{\sigma}^{\MQC}(r,\eta)
=\frac{1}{2}I+\sqrt{r(1-r)}X.
\end{equation}

\begin{remark}[MQC includes both PSC and BSC]
\label{rem:mqc-endpoints}
Almost by construction, the BSC and PSC are special cases of the MQC. Indeed, at $\eta = 0$, the $X$-Pauli noise is no longer there, and so $\MQC(r,0)$ is exactly $\PSC(r)$. Moreover, at $r = 0$, the underlying PSC outputs are the orthogonal computational-basis states, so $\MQC(0,\eta)$ is exactly $\BSC(\eta)$ (cf.~\cref{fn:bsc-psc-composition}).
\end{remark}

\begin{remark}[The MQC encompasses all output-symmetric qubit states]
\label{rem:mqc-has-all-qubits}
One can in fact observe that, up to symmetry, the MQC encompasses all the possible output-symmetric qubit states. Indeed, write the Bloch vectors of the two output states as $v_0,v_1 \in \mathbb{R}^3$. From this Bloch sphere lens, we see that output-symmetry is equivalent to $\norm{v_0}_2 = \norm{v_1}_2$. This is moreover equivalent to the orthogonality between their midpoint $m \coloneqq (v_0+v_1)/2$ and half-difference $d \coloneqq (v_0-v_1)/2$. Since every rotation in $\text{SO}(3)$ on the Bloch sphere is realized by a unitary $U \in \text{SU}(2)$ applied to the qubit ($\varrho \mapsto U\varrho U^\dagger$), by orthogonally mapping $m \mapsto (\norm{m}_2,0,0)$ and $d \mapsto (0,0,\norm{d}_2)$, we can therefore rotate the two output states to $v_b=(x,0,(-1)^bz)$ with $x,z \geq 0$ and $x^2+z^2 \leq 1$. Set $r \coloneqq (1-\sqrt{1-x^2})/2$ and $\eta \coloneqq (1-z/\sqrt{1-x^2})/2$. Then $2\sqrt{r(1-r)}=x$ and $(1-2r)(1-2\eta)=z$, which is exactly the MQC pair in~\eqref{eq:mqc-pauli-form}.
\end{remark}

\subsubsection{PGM bit error rate}
\label{subsubsec:mqc-pgm-error}

Next, we compute the PGM bit error rate, which we abbreviate in this subsection as $p_{\mathrm{e}}^{\MQC}(r,\eta) \coloneqq p_{\mathrm{e}}(\sigma_0^{\MQC}(r,\eta),\sigma_1^{\MQC}(r,\eta))$. From the definition of MQC in~\eqref{eq:mqc-pauli-form}, observe that
\begin{equation*}
    \Phi^{\MQC} = \mathcal{B}_\eta \circ \Phi^{\PSC} = \Phi^{\PSC} \circ \Phi^{\BSC}.
\end{equation*}
Indeed, the commutativity follows from observing that both $\mathcal{B}_\eta$ and $\Phi^{\PSC}$ are just applications of the Pauli $X$ (cf.~\cref{fn:bsc-psc-composition}). Thus, applying~\eqref{eq:pgm-preprocessing-error-uniform-prior} from~\cref{prop:binary-pgm-preprocessing} gives us that
\begin{equation}
\label{eq:mqc-pgm-error-formula}
p_{\mathrm{e}}^{\MQC}(r,\eta) = \frac{1}{2}\left[1-(1-2r)(1-2\eta)^2\right].
\end{equation}
Equivalently, the MQC PGM first applies the PSC PGM and then adds an independent posterior $\Ber(\eta)$ flip (\cref{rem:bsc-posterior}).\footnote{This is perhaps best observed via the Petz contravariant functoriality (\cref{fact:petz-properties}\ref{item:petz-functoriality}), concretely packaged in \cref{prop:binary-pgm-preprocessing}.} Combined with the prior $\Ber(\eta)$ flip, it gives the PGM bit error rate above.

\subsubsection{Holevo information}
\label{subsubsec:mqc-holevo}

Next, we compute the Holevo information, which we abbreviate in this subsection as $\chi_{\MQC}(r,\eta) \coloneqq \chi(\sigma_0^{\MQC}(r,\eta),\sigma_1^{\MQC}(r,\eta))$. Applying \cref{prop:qubit-entropy} to the output-state Pauli forms in~\eqref{eq:mqc-pauli-form} gives
\begin{align}
S\left(\sigma_b^{\MQC}(r,\eta)\right) &= s\left(1-4r(1-r)-(1-2r)^2(1-2\eta)^2\right) \notag \\
&= s\left(4\eta(1-\eta)(1-2r)^2\right) \label{eq:mqc-output-state-holevo}
\end{align}
Similarly, applying \cref{prop:qubit-entropy} to the average-state Pauli form in~\eqref{eq:mqc-average-state} gives
\begin{equation*}
S\left(\bar{\sigma}^{\MQC}(r,\eta)\right) = s\left(1-4r(1-r)\right) = s\left((1-2r)^2\right).
\end{equation*}
Plugging the two values into the Holevo identity (\cref{eq:holevo-formula}) therefore gives
\begin{equation}
\label{eq:mqc-holevo-formula}
\chi_{\MQC}(r,\eta) = s\left((1-2r)^2\right)-s\left(4\eta(1-\eta)(1-2r)^2\right).
\end{equation}

\begin{remark}[Alternative derivation via mutual information]
\label{rem:mqc-mutual-information}
Let us give an alternative and more conceptual derivation of~\cref{eq:mqc-holevo-formula}. Let $\bb \sim \Ber(\frac{1}{2})$ and $\be \sim \Ber(\eta)$ be independent random bits. Set $\by \coloneqq \bb \oplus \be \sim \Ber(\frac{1}{2})$. Since $\bb \longmapsto \bb \oplus \be \longmapsto \sigma_{\bb \oplus \be}^{\PSC}$ is a Markov chain, using the mutual information chain rule, the identity $\sigma_{\bb \oplus \be}^{\PSC} = X^\bb\sigma_{\be}^{\PSC}X^\bb$, the observation in~\cref{fn:bsc-psc-composition}, and the Holevo identity (\cref{eq:holevo-formula}), notice that
\begin{align*}
I(\by;\sigma_\by^{\PSC}) = I(\bb,\be;\sigma_{\bb \oplus \be}^{\PSC})
&= I(\bb;\sigma_{\bb \oplus \be}^{\PSC}) + I(\be;\sigma_{\bb \oplus \be}^{\PSC} \mid \bb) \\
&= \chi_{\MQC}(r,\eta) + I(\be;\sigma_\be^{\PSC}).
\end{align*}
Thus, we find that
\begin{equation*}
\chi_{\MQC}(r,\eta) = I(\by;\sigma_{\by}^{\PSC}) - I(\be;\sigma_\be^{\PSC}).
\end{equation*}
Because the bit $\by$ remains uniform for every $\eta$, the first term is always $I(\by;\sigma_{\by}^{\PSC}) = s((1-2r)^2)$. Thus, the dependence on $\eta$ lies in the second term, which equals $I(\be;\sigma_\be^{\PSC}) = s(4\eta(1-\eta)(1-2r)^2)$. Thus, mixing starts with the information carried by the underlying PSC and subtracts exactly the information that its output state $\sigma_\by$ retains about the discarded flip $\be$ once $\bb$ is known.
\end{remark}

\subsubsection{Rate upper bound}
\label{subsubsec:mqc-rate}

Based on~\cref{eq:mqc-pgm-error-formula}, define the function
\begin{equation}
\label{eq:mqc-error-function}
e(r,\eta) \coloneqq \frac{1}{2}\left[1-(1-2r)(1-2\eta)^2\right].
\end{equation}
Then the pretty good criterion (\cref{thm:main-result}) applies whenever $e(r,\eta) < \delta$. Now, fix any value $r \in [0,\delta]$. Notice that increasing $\eta$ along the interval $[0,\frac{1}{2}]$ decreases the Holevo information, which only sharpens the resulting rate upper bound. On the other hand, it also increases the PGM bit error rate, which is $e(r,\eta)$. Since $\delta$ is the supremum value of $e(r,\eta)$ under which the pretty good criterion (\cref{thm:main-result}) still applies, then setting $\eta = \eta_\delta(r)$, where $\eta_\delta(r)$ is the unique solution to the equation $e(r,\eta_\delta(r)) = \delta$ that lies in the interval $[0,\frac{1}{2}]$, would achieve the best possible rate upper bound for MQC. Now, solving for $\eta_\delta(r)$ in the equation $e(r,\eta_\delta(r)) = \delta$, we find that 
\begin{equation*}
\eta_\delta(r) = \frac{1}{2} - \frac{1}{2} \cdot \sqrt{\frac{1-2\delta}{1-2r}}.
\end{equation*}
Thus, by~\cref{eq:mqc-holevo-formula}, for any fixed $r \in [0,\delta]$, the lowest Holevo information for which the pretty good criterion (\cref{thm:main-result}) still applies at this specified value $r$ is
\begin{align*}
\chi_{\MQC}(r,\eta_\delta(r)) &= s\left((1-2r)^2\right)-s\left(4\eta_\delta(r)(1-\eta_\delta(r))(1-2r)^2\right) \\
&= s\left((1-2r)^2\right)-s\left(2(\delta-r)(1-2r)\right).
\end{align*}
Minimizing over all values $r \in [0,\delta]$ gives us the following rate bound.
\begin{corollary}[MQC rate upper bound]
\label{cor:mqc-rate}
For every $\delta \in (0,\frac{1}{2})$, applying the pretty good criterion (\cref{thm:main-result}) to the MQC gives us the rate bound
\begin{equation}
\label{eq:mixed-rate-bound}
\begin{aligned}
R_2(\delta)
\leq R_{\MQC}(\delta)
&\coloneqq
\min_{0 \leq r \leq \delta}
\left[s\left((1-2r)^2\right)-s\left(2(\delta-r)(1-2r)\right)\right] \\
&=
\min_{1-2\delta \leq u \leq 1}
\left[s(u^2)-s\left(u(u-(1-2\delta))\right)\right],
\end{aligned}
\end{equation}
where the second line is the change of variables $u \coloneqq 1-2r$.
\end{corollary}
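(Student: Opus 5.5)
The plan is to apply the pretty good criterion (\cref{thm:main-result}) to the family $\MQC(r,\eta)$ and then optimize over its two parameters. Fix $\delta \in (0,\frac{1}{2})$ and $r \in [0,\delta]$. For any $\eta \in [0,\frac{1}{2}]$ with $e(r,\eta) < \delta$, where $e$ is the PGM bit error rate computed in \eqref{eq:mqc-pgm-error-formula}, the channel $\MQC(r,\eta)$ meets the hypotheses of \cref{thm:main-result}. It is output-symmetric via $U = X$ and finite-dimensional (a qubit). The criterion therefore gives $R_2(\delta) \leq \chi_{\MQC}(r,\eta)$, where the right-hand side is the closed form \eqref{eq:mqc-holevo-formula}.

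The second step is to push $\eta$ up to the threshold. For fixed $r < \delta$, the map $\eta \mapsto e(r,\eta)$ is continuous and strictly increasing on $[0,\frac{1}{2}]$. It starts at $e(r,0) = r < \delta$ and ends at $e(r,\tfrac{1}{2}) = \frac{1}{2} > \delta$. So there is a unique $\eta_\delta(r)$ with $e(r,\eta_\delta(r)) = \delta$, namely $\eta_\delta(r) = \frac{1}{2} - \frac{1}{2}\sqrt{(1-2\delta)/(1-2r)}$, which lies in $[0,\frac{1}{2})$ because $r \leq \delta < \frac{1}{2}$.

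Take any sequence $\eta \uparrow \eta_\delta(r)$. Each term satisfies $e(r,\eta) < \delta$, so the bound $R_2(\delta) \leq \chi_{\MQC}(r,\eta)$ holds along the sequence. Since $s$ is continuous on $[0,1]$, $\chi_{\MQC}$ is continuous in $\eta$, and passing to the limit gives $R_2(\delta) \leq \chi_{\MQC}(r,\eta_\delta(r))$.

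To simplify this limit, note $1-2\eta_\delta = \sqrt{(1-2\delta)/(1-2r)}$, so $4\eta_\delta(1-\eta_\delta) = 1-(1-2\eta_\delta)^2 = 2(\delta-r)/(1-2r)$. Substituting into \eqref{eq:mqc-holevo-formula} yields $s\left((1-2r)^2\right) - s\left(2(\delta-r)(1-2r)\right)$.

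The endpoint $r = \delta$ needs separate handling, since there $\eta_\delta(\delta) = 0$ and the channel is exactly $\PSC(\delta)$, whose error equals $\delta$ rather than falling below it. This case follows by the same limiting argument in $r$, or directly from \cref{cor:psc-rate}, which already states the bound as a limit $p \to \delta^-$. Because the objective is continuous on the compact interval $[0,\delta]$, the minimum is attained, and we obtain $R_2(\delta) \leq \min_{0 \leq r \leq \delta}\left[s\left((1-2r)^2\right) - s\left(2(\delta-r)(1-2r)\right)\right]$.

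Finally, substituting $u = 1-2r$ sends $[0,\delta]$ to $[1-2\delta,1]$. Under this change, $2(\delta - r) = u - (1-2\delta)$, so the argument $2(\delta-r)(1-2r)$ becomes $u\bigl(u-(1-2\delta)\bigr)$, which gives the second form in \eqref{eq:mixed-rate-bound}.

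The only mildly delicate step is justifying that reaching the boundary $e = \delta$ is legitimate, since \cref{thm:main-result} requires strict inequality; continuity of $s$ handles this. The monotonicity remark in the text, that increasing $\eta$ lowers $\chi_{\MQC}$, is not needed for validity. It only explains why $\eta_\delta(r)$ is the best choice for each fixed $r$.
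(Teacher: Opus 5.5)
Your proposal is correct and follows the same route as the paper. You apply the pretty good criterion to $\MQC(r,\eta)$, solve $e(r,\eta_\delta(r))=\delta$ to get $\eta_\delta(r)=\frac{1}{2}-\frac{1}{2}\sqrt{(1-2\delta)/(1-2r)}$, substitute into \eqref{eq:mqc-holevo-formula}, minimize over $r$, and then change variables to $u=1-2r$. Your limiting argument $\eta \uparrow \eta_\delta(r)$ and your separate treatment of $r=\delta$ make rigorous the continuity step that the paper leaves implicit when it sets $e(r,\eta)=\delta$, even though the criterion requires strict inequality.
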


This is the rate upper bound stated in \cref{thm:first-mrrw-improvement}. Its strict improvement over the first MRRW bound is proved later in~\cref{prop:mqc-strict-improvement}.

\begin{remark}[Holevo information and PGM error data processing]
\label{rem:mqc-data-processing}
As was noted in~\cref{subsec:how-mixing}, as $\eta$ increases along the interval $[0,\frac{1}{2}]$, we expect the Holevo information to decay while the PGM error to grow. Indeed, by inspecting~\cref{eq:mqc-holevo-formula,eq:mqc-pgm-error-formula}, one can numerically confirm that this is indeed the case here.
\end{remark}

\subsection{Masked mixed-qubit channel (2MQC)}
\label{subsec:2mqc}

We next apply the masking operation of \cref{subsec:masking} to the MQC pair. Fix $r,\eta,\alpha \in [0,\frac{1}{2}]$. Throughout, we will recurrently use the convenient shorthand 
\begin{equation}
\label{eq:theta-shorthand}
\theta \coloneqq \alpha+\eta-2\alpha\eta.
\end{equation}
For $m \in \bits$, write $\tau_m(r,\eta) \coloneqq \sigma_m^{\MQC}(r,\eta)$, and let $\bar{\tau}_\alpha(r,\eta) \coloneqq (1-\alpha)\tau_0(r,\eta)+\alpha\tau_1(r,\eta)$. Here, $r$ is the noise parameter of the underlying PSC, $\eta$ is the probability of the discarded inner $X$-flip, and $\alpha$ is the prior probability of the mask. The 2MQC output states are
\begin{align*}
\sigma_0^{\mMQC}(r,\eta,\alpha) &\coloneqq (1-\alpha)\tau_0(r,\eta) \oplus \alpha\tau_1(r,\eta), \\
\sigma_1^{\mMQC}(r,\eta,\alpha) &\coloneqq \alpha\tau_1(r,\eta) \oplus (1-\alpha)\tau_0(r,\eta).
\end{align*}
The block-swap unitary $U^{\mMQC}(\varrho_0 \oplus \varrho_1) \coloneqq \varrho_1 \oplus \varrho_0$ exchanges the outputs, establishing output-symmetry. Moreover, their average state is
\begin{equation*}
\bar{\sigma}^{\mMQC}
=\frac{1}{2}\left(\sigma_0^{\mMQC}+\sigma_1^{\mMQC}\right)
=\frac{1}{2}\left(\bar{\tau}_\alpha(r,\eta) \oplus \bar{\tau}_\alpha(r,\eta)\right).
\end{equation*}

\begin{remark}[2MQC includes both mPSC and MQC]
\label{rem:2mqc-endpoints}
Almost by construction, the mPSC and MQC are both special cases of the 2MQC. Indeed, at $\eta = 0$, the $X$-Pauli noise is no longer there, and so $\mMQC(r,0,\alpha)$ is exactly $\mPSC(r,\alpha)$. Moreover, when $\alpha = \frac{1}{2}$, since masking with a uniform prior restores the original channel (\cref{rem:masking-uniform-prior}), we see that $\mMQC(r,\eta,\frac{1}{2})$ is equivalent to $\MQC(r,\eta)$.
\end{remark}

\subsubsection{PGM bit error rate}
\label{subsubsec:2mqc-pgm-error}

Next, we compute the PGM bit error rate, which we abbreviate in this subsection as $p_{\mathrm{e}}^{\mMQC}(r,\eta,\alpha) \coloneqq p_{\mathrm{e}}(\sigma_0^{\mMQC}(r,\eta,\alpha),\sigma_1^{\mMQC}(r,\eta,\alpha))$. Following the same initial reasoning as in~\cref{subsubsec:mqc-pgm-error}, from the definition of MQC in~\eqref{eq:mqc-pauli-form}, observe that
\begin{equation*}
\Phi^{\MQC} = \mathcal{B}_\eta \circ \Phi^{\PSC} = \Phi^{\PSC} \circ \Phi^{\BSC}.
\end{equation*}
Indeed, the commutativity follows from observing that both $\mathcal{B}_\eta$ and $\Phi^{\PSC}$ are just applications of the Pauli $X$ (cf.~\cref{fn:bsc-psc-composition}). Thus, by the PGM error masking identity (\cref{prop:masking-pgm-error}) and~\cref{eq:pgm-preprocessing-error} from~\cref{prop:binary-pgm-preprocessing}, we find that
\begin{equation*}
p_{\mathrm{e}}^{\mMQC}(r,\eta,\alpha) = p_{\mathrm{e},\alpha}(\tau_0(r,\eta),\tau_1(r,\eta)) 
= 2\alpha(1-\alpha) - \left[
\frac{\alpha(1-\alpha)(1-2\eta)}
{\theta(1-\theta)}
\right]^2 \cdot \left[
2\theta(1-\theta)-p_{\mathrm{e},\theta}(\sigma_0,\sigma_1)
\right].
\end{equation*}
On a more conceptual level, this is the PGM error of a bit $\bb \sim \Ber(\alpha)$ sent through the composition $\bb \mapsto \by \coloneqq \bb \oplus \be \mapsto \sigma_{\by}^{\PSC}(r)$, where $\be \sim \Ber(\eta)$ and hence $\by \sim \Ber(\theta)$. Now, by applying the non-uniform pure-state formula \eqref{eq:non-uniform-pure-pgm-error} to the intermediate PSC ensemble, we see that
\begin{equation*}
p_{\mathrm{e},\theta}\left(\sigma_0^{\PSC}(r),\sigma_1^{\PSC}(r)\right)
= \frac{8\theta(1-\theta)r(1-r)}{1+2(1-2r)\sqrt{\theta(1-\theta)}}.
\end{equation*}
Substituting this PGM error directly into the composition identity \eqref{eq:pgm-preprocessing-error} of \cref{prop:binary-pgm-preprocessing} gives
\begin{align}
\label{eq:2mqc-pgm-error-formula}
p_{\mathrm{e}}^{\mMQC}(r,\eta,\alpha)
&= 2\alpha(1-\alpha) - \left[
\frac{\alpha(1-\alpha)(1-2\eta)}
{\theta(1-\theta)}
\right]^2 \cdot \left[
2\theta(1-\theta)-\frac{8\theta(1-\theta)r(1-r)}{1+2(1-2r)\sqrt{\theta(1-\theta)}}
\right] \notag \\
&= 2\alpha(1-\alpha)\left[1 -
\frac{\alpha(1-\alpha)(1-2\eta)^2}
{\theta(1-\theta)} \cdot \left[
1-\frac{4r(1-r)}{1+2(1-2r)\sqrt{\theta(1-\theta)}}
\right]\right] \notag \\
&= 2\alpha(1-\alpha)\left[1 -
\frac{\alpha(1-\alpha)(1-2\eta)^2}
{\theta(1-\theta)}
+\frac{4\alpha(1-\alpha)r(1-r)(1-2\eta)^2}{\theta(1-\theta)\left[1+2(1-2r)\sqrt{\theta(1-\theta)}\right]}
\right] \notag \\
&= 2\alpha(1-\alpha)
\left[
\frac{\eta(1-\eta)}{\theta(1-\theta)}
+\frac{4\alpha(1-\alpha)r(1-r)(1-2\eta)^2}
{\theta(1-\theta)
\left[1+2(1-2r)\sqrt{\theta(1-\theta)}\right]}
\right].
\end{align}
Here, we used the identity $4\theta(1-\theta) = 1-(2\alpha-1)^2(1-2\eta)^2$ at the final step, which holds by definition of $\theta$ in~\eqref{eq:theta-shorthand}. Note that when $\alpha = \frac{1}{2}$, we have $\theta = \frac{1}{2}$, and so this recovers the formula given in~\cref{eq:mqc-pgm-error-formula}, which is consistent with~\cref{rem:masking-uniform-prior}.

\subsubsection{Holevo information}
\label{subsubsec:2mqc-holevo}

Next, we compute the Holevo information, which we abbreviate in this subsection as $\chi_{\mMQC}(r,\eta,\alpha) \coloneqq \chi(\sigma_0^{\mMQC}(r,\eta,\alpha),\sigma_1^{\mMQC}(r,\eta,\alpha))$. Recall from \cref{subsubsec:mqc-holevo}, specifically~\eqref{eq:mqc-output-state-holevo}, that
\begin{equation*}
S(\tau_0(r,\eta)) = S(\tau_1(r,\eta)) = s\left(4\eta(1-\eta)(1-2r)^2\right).
\end{equation*}
Now, the average state admits the Pauli form
\begin{equation*}
\bar{\tau}_\alpha(r,\eta)
=\frac{1}{2}\left(I+2\sqrt{r(1-r)}X
+(1-2\theta)(1-2r)Z\right),
\end{equation*}
so \cref{prop:qubit-entropy} gives us
\begin{align*}
S(\bar{\tau}_\alpha(r,\eta))
&= s\left(1-4r(1-r)-(1-2\theta)^2(1-2r)^2\right) \\
&= s\left(4\theta(1-\theta)(1-2r)^2\right).
\end{align*}
Thus, by the masked Holevo information identity (\cref{prop:masking-holevo}), we now find that
\begin{equation}
\label{eq:2mqc-holevo}
\chi_{\mMQC}(r,\eta,\alpha)
= 1-h(\alpha)
+s\left(4\theta(1-\theta)(1-2r)^2\right) - s\left(4\eta(1-\eta)(1-2r)^2\right).
\end{equation}

\subsubsection{Rate upper bound}
\label{subsubsec:2mqc-rate}

Based on~\cref{eq:2mqc-pgm-error-formula}, define the function
\begin{equation}
\label{eq:2mqc-error-function}
E(r,\eta,\alpha) \coloneqq 
2\alpha(1-\alpha)
\left[
\frac{\eta(1-\eta)}{\theta(1-\theta)}
+\frac{4\alpha(1-\alpha)r(1-r)(1-2\eta)^2}
{\theta(1-\theta)
\left[1+2(1-2r)\sqrt{\theta(1-\theta)}\right]}
\right].
\end{equation}
Building on the shorthands for the mPSC in \cref{subsubsec:mpsc-rate} and the MQC in \cref{subsubsec:mqc-rate}, define
\begin{align}
u &\coloneqq 2(1-2r)\sqrt{\theta(1-\theta)}, \notag \\
v &\coloneqq 2(1-2r)\sqrt{\eta(1-\eta)}, \notag \\
A &\coloneqq 4\alpha(1-\alpha).
\label{eq:2mqc-entropy-coordinates}
\end{align}
Here $u^2$ and $v^2$ are exactly the two arguments of $s$ in \eqref{eq:2mqc-holevo}, while $h(\alpha)=s(A)$. Thus, \cref{eq:2mqc-pgm-error-formula,eq:2mqc-holevo} become
\begin{equation*}
E(u,v,A) = \frac{A}{2}
\left[
\frac{v^2}{u^2}
+\frac{4A r(1-r)(1-2r)^2 (1-2\eta)^2}
{u^2(1+u)}
\right] 
= \frac{A\left[u^2(1-v^2)+v^2(u+v^2)\right]-(u^2-v^2)^2}{2u^2(1+u)},
\end{equation*}
and
\begin{equation}
\label{eq:2mqc-reduced-formulas}
\chi_{\mMQC}(u,v,A) = 1+s(u^2)-s(A)-s(v^2).
\end{equation}
The pretty good criterion (\cref{thm:main-result}) applies whenever $E(u,v,A)<\delta$. Fix a pair $(u,v)$ with $0<u\leq 1$ and $0\leq v<u$. At this fixed pair, the displayed formula for $E(u,v,A)$ is increasing and affine in $A$, whereas \eqref{eq:2mqc-reduced-formulas} decreases with $A$. As in the MQC calculation (\cref{subsubsec:mqc-rate}), the lowest admissible Holevo information is therefore obtained at $A = \lambda_\delta(u,v)$, where $\lambda_\delta(u,v)$ is the unique solution to the equation $E(u,v,\lambda_\delta(u,v)) = \delta$. Solving for $\lambda_\delta(u,v)$ in this equation, we find that 
\begin{equation}
\label{eq:2mqc-exact-error-coordinates}
\lambda_\delta(u,v) = 
\frac{(u^2-v^2)^2+2\delta u^2(1+u)}
{u^2(1-v^2)+v^2(u+v^2)}.
\end{equation}
Since we must have $A \in [0,1]$, we therefore see that the optimal value for $A$ must be
\begin{equation*}
A_\delta(u,v) \coloneqq \min\left\{1,\frac{(u^2-v^2)^2+2\delta u^2(1+u)}
{u^2(1-v^2)+v^2(u+v^2)}\right\}.
\end{equation*}
It remains to identify the admissible values of the triple $(u,v,A)$. We claim that a triple is admissible if and only if it satisfies the inequalities
\begin{equation*}
0\leq v \leq u \leq 1
\qquad
\text{ and }
\qquad
\frac{u^2-v^2}{1-v^2}\leq A\leq 1.
\end{equation*}
Indeed, both inequalities can be deduced by observing the identity $u^2-v^2=A(1-2r)^2(1-2\eta)^2$. As for the converse, it follows by setting $r,\eta,\alpha$ equal to
\begin{align*}
\alpha &= \frac{1-\sqrt{1-A}}{2}, \\
(1-2r)^2 &= v^2+\frac{u^2-v^2}{A}, \\
(1-2\eta)^2 &= 1-\frac{v^2}{(1-2r)^2}.
\end{align*}
Finally, we inspect the values of $(u,v)$ for which there exists an $A$ satisfying $E(u,v,A)<\delta$. Now, we know that $A=(u^2-v^2)/(1-v^2)$ is the smallest possible value of $A$. In such a case, the PGM bit error rate equals
\begin{equation*}
E\left(u,v,\frac{u^2-v^2}{1-v^2}\right)
=\frac{v^2(u^2-v^2)}{2u^2(1-v^2)}.
\end{equation*}
Because $E(u,v,A)$ increases with $A$ (it is affine in $A$), the range over which there exists an $A$ satisfying $E(u,v,A)<\delta$ is specified precisely by the inequality
\begin{equation*}
v^2(u^2-v^2)<2\delta u^2(1-v^2).
\end{equation*}
Now, we take stock of everything and conclude our rate bound. By applying the pretty good criterion (\cref{thm:main-result}) and a simple continuity argument, we find that
\begin{equation*}
R_2(\delta) \leq 1+s(u^2)-s\left(A_\delta(u,v)\right)-s(v^2).
\end{equation*}
Taking the infimum over all feasible pairs $(u,v)$ proves the following corollary.

\begin{corollary}[2MQC rate upper bound]
\label{cor:2mqc-rate}
For every $\delta \in (0,\frac{1}{2})$, applying the pretty good criterion (\cref{thm:main-result}) to the 2MQC gives
\begin{equation}
\label{eq:2mqc-rate}
R_2(\delta) \leq R_{\mMQC}(\delta)
\coloneqq
\inf_{\substack{0\leq v \leq u \leq 1, \\
v^2(u^2-v^2)<2\delta u^2(1-v^2)}}
\left[
1+s(u^2)-s\left(A_\delta(u,v)\right)-s(v^2)
\right],
\end{equation}
where we define $A_\delta(u,v) \coloneqq \min\left\{1,\frac{(u^2-v^2)^2+2\delta u^2(1+u)}
{u^2(1-v^2)+v^2(u+v^2)}\right\}$.
\end{corollary}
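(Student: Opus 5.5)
The corollary follows from the derivation in \cref{subsubsec:2mqc-rate}; the proof assembles those pieces with the pretty good criterion and adds a limiting argument. First, verify the reparametrization. With $\theta=\alpha+\eta-2\alpha\eta$, we have $1-2\theta=(1-2\alpha)(1-2\eta)$. Hence $4\theta(1-\theta)=1-(1-A)(1-2\eta)^2$, which gives the identity
\[
u^2-v^2=A(1-2r)^2(1-2\eta)^2 .
\]
From this, check that $(r,\eta,\alpha)\in[0,\tfrac12]^3 \mapsto (u,v,A)$ maps onto the region $0\le v\le u\le 1$, $(u^2-v^2)/(1-v^2)\le A\le 1$. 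The explicit inverse is displayed in the text; one only needs to confirm that each right-hand side lies in $[0,1]$ on that region. Next, substitute into \eqref{eq:2mqc-error-function} and \eqref{eq:2mqc-holevo}. Using $4r(1-r)=1-(1-2r)^2$ and $(1-2\eta)^2=(u^2-v^2)/(A(1-2r)^2)$, we obtain the closed form of $E(u,v,A)$ and $\chi_{\mMQC}=1+s(u^2)-s(A)-s(v^2)$. Note that $h(\alpha)=s(A)$.

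Second, fix $(u,v)$ with $u>0$ and do the one-dimensional optimization in $A$. The numerator of $E$ is affine in $A$ with coefficient $u^2(1-v^2)+v^2(u+v^2)>0$, so $E$ is strictly increasing in $A$. Meanwhile $s$ is increasing on $[0,1]$, so $\chi$ is decreasing in $A$. The lower endpoint gives $E=v^2(u^2-v^2)/(2u^2(1-v^2))$. Therefore the feasibility condition $v^2(u^2-v^2)<2\delta u^2(1-v^2)$ is exactly what is needed for some admissible $A$ to satisfy $E<\delta$. For every admissible $A<A_\delta(u,v)$, with the additional requirement $A<1$ when $A_\delta=1$, we have $E(u,v,A)<\delta$. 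For such $A$, realize the corresponding 2MQC, which is binary-input output-symmetric via the block swap. By \cref{prop:masking-pgm-error} and \cref{prop:masking-holevo} together with the formulas above, its PGM bit error rate is $E<\delta$ and its uniform-prior Holevo information is $1+s(u^2)-s(A)-s(v^2)$. \Cref{thm:main-result} then gives $R_2(\delta)\le 1+s(u^2)-s(A)-s(v^2)$.

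Third, let $A\uparrow A_\delta(u,v)$. This is the main, though mild, obstacle. The strict inequality $p_{\mathrm e}<\delta$ in \cref{thm:main-result} prevents plugging in $A_\delta$ directly. Since $s$ is continuous on $[0,1]$, the bound passes to the limit, giving $R_2(\delta)\le 1+s(u^2)-s(A_\delta(u,v))-s(v^2)$. Some care is needed to make sure the interval $\big((u^2-v^2)/(1-v^2),A_\delta\big)$ is nonempty. This holds because, by strict monotonicity, $A_\delta$ strictly exceeds the lower endpoint exactly under the feasibility condition. The degenerate cases $v=1$, forcing $u=1$, and $u=0$ should be excluded or handled separately via the same continuity argument. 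Finally, take the infimum over all feasible $(u,v)$ to obtain \eqref{eq:2mqc-rate}.
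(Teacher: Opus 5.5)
Your proposal is correct and follows essentially the paper's route: the $(u,v,A)$ reparametrization with its explicit inverse, $E$ increasing and affine in $A$ while $\chi_{\mMQC}$ decreases in $A$, the lower-endpoint computation that yields the feasibility condition, and a continuity limit $A\uparrow A_\delta(u,v)$, which you spell out more carefully than the paper's ``simple continuity argument.'' One small slip is that at $u=1$ (with $v<1$) the lower endpoint $(u^2-v^2)/(1-v^2)$ equals $1=A_\delta(u,v)$, so your open interval is empty; there, however, feasibility gives $E(1,v,1)=v^2/2<\delta$ and the criterion applies directly at $A=1$, while $u=0$ and $v=1$ need no separate treatment because both sides of the feasibility inequality vanish and the strict inequality already excludes them.
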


\section{Comparing the rate bounds}
\label{sec:comparing-rate-bounds}

In \cref{sec:known-rate-bounds}, we recovered all four previously known rate upper bounds---Plotkin, Elias--Bassalygo, the first and second MRRW bounds. In \cref{sec:new-cq-channels}, we produced two new rate upper bounds from the newly introduced cq channels MQC and 2MQC, which are stated in~\cref{cor:mqc-rate,cor:2mqc-rate}. In this section, we analytically and numerically compare the rate bounds of MQC and 2MQC with the first and second MRRW bounds, respectively, and establish~\cref{thm:first-mrrw-improvement,thm:second-mrrw-improvement}. For the reader's convenience, we recap the parameters and formulas of all six channels in~\cref{tab:channel-summary} below.

\begin{table}[!htbp]
\centering
\caption{PGM bit error rates, Holevo information, and optimized rate bounds for the six channel families from \cref{sec:known-rate-bounds,sec:new-cq-channels}. Here, $h$ is the binary entropy function, $J(\delta) \coloneqq (1-\sqrt{1-2\delta})/2$ is the Johnson radius, while $s(D) \coloneqq h((1-\sqrt{1-D})/2)$ is the von Neumann entropy of a qubit with determinant $D/4$. The variables $u,v,A$ are functions of $r,\eta,\alpha$ and are specified at~\cref{subsubsec:mpsc-rate,subsubsec:mqc-rate,subsubsec:2mqc-rate}. Here, the functions $e$ and $E$ are defined at~\cref{eq:mqc-error-function,eq:2mqc-error-function}, respectively.}
\label{tab:channel-summary}
\footnotesize
\renewcommand{\arraystretch}{1.4}
\begin{tabularx}{\textwidth}{@{} > {\centering\arraybackslash}p{0.15\textwidth} > {\centering\arraybackslash}p{0.17\textwidth} > {\centering\arraybackslash}p{0.34\textwidth} > {\centering\arraybackslash}X@{}}
\toprule
\textbf{Channel}
& \textbf{PGM bit error rate}
& \textbf{Uniform-input $\chi$}
& \textbf{Upper bound on $R_2(\delta)$} \\
\midrule
$\BEC(p)$
& $p/2$
& $1-p$
& $1-2\delta$ \\
$\BSC(p)$
& $2p(1-p)$
& $1-h(p)$
& $1-h(J(\delta))$ \\
$\PSC(p)$
& $p$
& $h(\frac{1}{2}-\sqrt{p(1-p)})$
& $\MRRW(\delta)$ \\
$\mPSC(r,\alpha)$
& $p$
& $1+s(u^2)-s(u^2+2pu+2p)$
& $\MRRW^{(2)}(\delta)$ \\
$\MQC(r,\eta)$
& $e(r,\eta)$
& $s\left(u^2\right)-s\left(4\eta(1-\eta)u^2\right)$
& $R_{\MQC}(\delta)$ \\
$\mMQC(r,\eta,\alpha)$
& $E(u,v,A)$
& $1+s(u^2)-s(A)-s(v^2)$
& $R_{\mMQC}(\delta)$ \\
\bottomrule
\end{tabularx}
\end{table}
\FloatBarrier

\subsection{MQC improvement over the first MRRW bound}
\label{subsec:mqc-first-mrrw-improvement}

In this subsection, we prove that slightly deviating away from PSC by considering a small $X$-Pauli noise in the MQC will make the MQC rate bound in~\cref{cor:mqc-rate} strictly smaller than the first MRRW bound. This is stated in~\cref{prop:mqc-strict-improvement}. For ease of exposition, we adopt the shorthand
\begin{equation}
\label{eq:mqc-holevo-function}
F_\delta(u) \coloneqq s(u^2)-s\left(u(u-(1-2\delta))\right),
\end{equation}
for the second objective in \eqref{eq:mixed-rate-bound}, where $u$ ranges over $[1-2\delta,1]$. Note that by~\cref{rem:mqc-endpoints}, we know that
\begin{equation*}
R_{\MQC}(\delta) \leq \min\{\MRRW(\delta),R_{\mathrm{EB}}(\delta)\}.
\end{equation*}
Now, we proceed to show that this is indeed strict for the first MRRW bound.\footnote{One can similarly show that for the Elias--Bassalygo bound, the MQC bound strictly improves upon it, but only when $\delta \gtrsim 0.07441947$.} For the reader's convenience, \cref{fig:mixed-versus-mrrw} provides a visual illustration of the argument used in the proof of the next proposition.

\begin{prop}[Strict improvement over the first MRRW expression]
\label{prop:mqc-strict-improvement}
For every $\delta \in (0,\frac{1}{2})$, we have the strict inequality
\begin{equation*}
R_{\MQC}(\delta) < \MRRW(\delta).
\end{equation*}
\end{prop}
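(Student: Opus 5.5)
The argument is a one-sided perturbation at the PSC endpoint. By \cref{rem:mqc-endpoints}, the choice $r=\delta$ (equivalently $u = 1-2\delta$ in \eqref{eq:mqc-holevo-function}) gives $\eta_\delta(\delta)=0$. It therefore recovers $\PSC(\delta)$, so $F_\delta(1-2\delta) = s((1-2\delta)^2) - s(0) = \MRRW(\delta)$. Since $R_{\MQC}(\delta) = \min_{u \in [1-2\delta,1]} F_\delta(u) \leq F_\delta(u)$ for every admissible $u$, it suffices to show that $F_\delta$ is strictly decreasing just to the right of $u_0 \coloneqq 1-2\delta$. Then some $u$ slightly above $u_0$ has $F_\delta(u) < F_\delta(u_0) = \MRRW(\delta)$.

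Write $u = u_0 + t$ with small $t>0$. The first term $s(u^2)$ is smooth at $u_0$, since $u_0^2 \in (0,1)$ and $s$ is differentiable on $(0,1)$. Hence $s(u^2) = s(u_0^2) + O(t)$. The second term has argument $D(t) \coloneqq u(u-u_0) = (u_0+t)t \sim u_0 t \to 0^+$. The key observation is that $s$ has infinite slope at $0$. Indeed, with $q = (1-\sqrt{1-D})/2 \sim D/4$, we get $s(D) = h(q) \sim q\log(1/q) \sim \tfrac{D}{4}\log(1/D)$. Thus $s(D(t)) \sim \tfrac{u_0 t}{4}\log(1/t)$, which dominates any $O(t)$ term. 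Combining,
\begin{equation*}
F_\delta(u_0+t) = \MRRW(\delta) + O(t) - \Theta\!\left(t\log\tfrac{1}{t}\right) < \MRRW(\delta)
\end{equation*}
for all sufficiently small $t>0$. Here the $\Theta$ constant is positive because $u_0 = 1-2\delta > 0$ for $\delta < \tfrac12$, and the admissible range $[u_0,1]$ is nondegenerate for $\delta>0$.

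The only step needing care is making the asymptotics rigorous. I would first bound the derivative of $u \mapsto s(u^2)$ on $[u_0,1-\epsilon]$ by a constant $K_\delta$. Concretely, $s'(D) = h'(q)\cdot \tfrac{1}{4\sqrt{1-D}}$ is bounded for $D$ in a compact subset of $(0,1)$. Second, I would give an explicit lower bound $s(D) \geq \tfrac{D}{4}\log_2\tfrac{4}{D}$, obtained from $h(q)\geq q\log(1/q)$ and $q \geq D/4$. Choosing $t$ small enough that $\tfrac{u_0}{4}\log(4/(u_0 t)) > K_\delta$ then completes the proof. This mirrors the intuition in \cref{subsec:how-mixing}: the error rate is held at $\delta$, while the Holevo information drops at a logarithmically infinite rate.
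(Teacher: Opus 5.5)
Your proposal is correct and matches the paper's proof: you perturb $u$ to the right of the PSC endpoint $1-2\delta$, bound the change in $s(u^2)$ by $O(t)$, and use that $s\bigl((1-2\delta+t)t\bigr)$ grows like $\frac{1-2\delta}{4}\,t\log\frac{1}{t}$, so the drop dominates. Your explicit lower bound $s(D)\ge \frac{D}{4}\log\frac{4}{D}$, which follows from $q\ge D/4$ and $h$ being increasing on $[0,\frac12]$, is a slightly cleaner way to make the paper's asymptotic expansion rigorous.
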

\begin{proof}
By~\cref{rem:mqc-endpoints} and~\cref{eq:mqc-holevo-function,eq:mixed-rate-bound}, notice that $F_\delta(1-2\delta) = \MRRW(\delta)$. Now, fix $\delta$ and choose a perturbation $\zeta \to 0^+$ of $1-2\delta$. Write the difference
\begin{equation*}
F_\delta(1-2\delta+\zeta)-F_\delta(1-2\delta)
= \left[s((1-2\delta+\zeta)^2)-s((1-2\delta)^2)\right]
-s\left((1-2\delta+\zeta)\zeta\right).
\end{equation*}
Now, for $D \to 0^+$, the definition of $s$ given at~\eqref{eq:qubit-entropy} gives
\begin{equation*}
s(D)=\frac{D}{4}\log{\frac{1}{D}}+O(D).
\end{equation*}
Since $(1-2\delta)^2\in(0,1)$, the mean value theorem thus gives us
\begin{equation*}
s((1-2\delta+\zeta)^2)-s((1-2\delta)^2)=O(\zeta).
\end{equation*}
For the second term, which is where the primary gain in improved rate bounds stems from, its argument tends to zero, so the preceding expansion gives
\begin{align*}
s((1-2\delta+\zeta)\zeta)
&=\frac{(1-2\delta+\zeta)\zeta}{4}
  \log{\frac{1}{(1-2\delta+\zeta)\zeta}}+O(\zeta)\\
&=\frac{1-2\delta}{4}\zeta\log{\frac{1}{\zeta}}+O(\zeta).
\end{align*}
Therefore,
\begin{equation*}
\frac{F_\delta(1-2\delta+\zeta)-F_\delta(1-2\delta)}{\zeta}
=-\frac{1-2\delta}{4}\log{\frac{1}{\zeta}}+O(1),
\end{equation*}
which tends to $-\infty$ as $\zeta \to 0^+$, as $1-2\delta>0$. Hence $F_\delta(1-2\delta+\zeta)< F_\delta(1-2\delta)$ for all sufficiently small $\zeta>0$. Thus, we conclude that
\begin{equation*}
R_{\MQC}(\delta)
\leq F_\delta(1-2\delta+\zeta)
<F_\delta(1-2\delta)
=\MRRW(\delta).\qedhere
\end{equation*}
\end{proof}

Now, we have established everything we need to prove~\cref{thm:first-mrrw-improvement}, which we now show.

\begin{proof}[Proof of \cref{thm:first-mrrw-improvement}]
The MQC rate upper bound in \cref{cor:mqc-rate} gives $R_2(\delta) \leq R_{\MQC}(\delta)$,
while \cref{prop:mqc-strict-improvement} gives $R_{\MQC}(\delta) < \MRRW(\delta)$ for every $\delta \in (0,\frac{1}{2})$. Combining the two inequalities proves the theorem.\qedhere
\end{proof}

\paragraph{Comparing the curves numerically.}
\cref{fig:mixed-versus-mrrw} plots the rate bounds resulting from BEC (\cref{cor:bec-rate}), BSC (\cref{cor:bsc-rate}), PSC (\cref{cor:psc-rate}), mPSC (\cref{cor:mpsc-rate}), and the new rate bound from MQC (\cref{cor:mqc-rate}). We omit the 2MQC rate bound (\cref{cor:2mqc-rate}) from the plot, since the MQC bound alone already outperforms all previously known rate bounds for $\delta > 0.1951$. The inset provides a magnified view of the region around $\delta = \frac{1}{4}$, making the improvement more readily visible.

\begin{figure}[!htbp]
\centering
\begin{tikzpicture}
\begin{axis}[
    name=mainaxis,
    declare function={
        hb(\x)=-(max(\x,0.000000000001)*ln(max(\x,0.000000000001))
        +max(1-\x,0.000000000001)*ln(max(1-\x,0.000000000001)))/ln(2);
    },
    width=0.90\textwidth,
    height=0.42\textwidth,
    xmin=0,
    xmax=0.5,
    ymin=0,
    ymax=1.02,
    xtick={0,0.1,0.2,0.3,0.4,0.5},
    ytick={0,0.2,0.4,0.6,0.8,1},
    xlabel={Relative distance $\delta$},
    ylabel={Rate upper bound (bits)},
    grid=major,
    legend style={
        at={(0.985,0.985)},
        anchor=north east,
        draw=black!35,
        fill=white,
        fill opacity=0.94,
        text opacity=1,
        font=\scriptsize,
        legend columns=1
    },
    legend cell align={left},
    tick align=outside,
    clip=false
]
\addplot[thick,densely dotted,black,domain=0:0.5,samples=201]
    {1-2*x};
\addlegendentry{Plotkin}
\addplot[thick,dashdotted,orange!85!black,domain=0:0.5,samples=201]
    {1-hb((1-sqrt(max(0,1-2*x)))/2)};
\addlegendentry{Elias--Bassalygo}
\addplot[thick,dashed,magenta,domain=0:0.5,samples=201]
    {hb(0.5-sqrt(x*(1-x)))};
\addlegendentry{First MRRW}
\addplot[thick,teal!75!black,dash pattern=on 5pt off 2pt on 1pt off 2pt] coordinates {
    (0.000,1.0000000000)
    (0.005,0.9746882819)
    (0.010,0.9542335551)
    (0.015,0.9355485985)
    (0.020,0.9179845740)
    (0.025,0.9012357922)
    (0.030,0.8851235078)
    (0.035,0.8695304056)
    (0.040,0.8543736319)
    (0.045,0.8395916081)
    (0.050,0.8251368080)
    (0.055,0.8109714699)
    (0.060,0.7970648883)
    (0.065,0.7833916203)
    (0.070,0.7699302507)
    (0.075,0.7566625140)
    (0.080,0.7435726530)
    (0.085,0.7306469400)
    (0.090,0.7178733119)
    (0.095,0.7052410871)
    (0.100,0.6927407431)
    (0.105,0.6803637388)
    (0.110,0.6681023712)
    (0.115,0.6559496586)
    (0.120,0.6438992441)
    (0.125,0.6319453157)
    (0.130,0.6200825391)
    (0.135,0.6083060012)
    (0.140,0.5966111620)
    (0.145,0.5849938134)
    (0.150,0.5734500437)
    (0.155,0.5619762074)
    (0.160,0.5505688982)
    (0.165,0.5392249260)
    (0.170,0.5279412966)
    (0.175,0.5167151937)
    (0.180,0.5055439630)
    (0.185,0.4944250982)
    (0.190,0.4833562286)
    (0.195,0.4723351078)
    (0.200,0.4613596038)
    (0.205,0.4504276898)
    (0.210,0.4395374364)
    (0.215,0.4286870042)
    (0.220,0.4178746369)
    (0.225,0.4070986557)
    (0.230,0.3963574536)
    (0.235,0.3856494904)
    (0.240,0.3749732884)
    (0.245,0.3643274277)
    (0.250,0.3537105431)
    (0.255,0.3431213199)
    (0.260,0.3325584911)
    (0.265,0.3220208344)
    (0.270,0.3115071689)
    (0.275,0.3010234624)
    (0.280,0.2906344902)
    (0.285,0.2803578450)
    (0.290,0.2701957917)
    (0.295,0.2601506728)
    (0.300,0.2502249116)
    (0.305,0.2404210173)
    (0.310,0.2307415890)
    (0.315,0.2211893209)
    (0.320,0.2117670079)
    (0.325,0.2024775511)
    (0.330,0.1933239643)
    (0.335,0.1843093807)
    (0.340,0.1754370607)
    (0.345,0.1667103995)
    (0.350,0.1581329366)
    (0.355,0.1497083650)
    (0.360,0.1414405425)
    (0.365,0.1333335033)
    (0.370,0.1253914711)
    (0.375,0.1176188738)
    (0.380,0.1100203598)
    (0.385,0.1026008163)
    (0.390,0.0953653900)
    (0.395,0.0883195101)
    (0.400,0.0814689150)
    (0.405,0.0748196827)
    (0.410,0.0683782653)
    (0.415,0.0621515301)
    (0.420,0.0561468064)
    (0.425,0.0503719417)
    (0.430,0.0448353682)
    (0.435,0.0395461828)
    (0.440,0.0345142456)
    (0.445,0.0297503003)
    (0.450,0.0252661277)
    (0.455,0.0210747419)
    (0.460,0.0171906483)
    (0.465,0.0136301938)
    (0.470,0.0104120583)
    (0.475,0.0075579751)
    (0.480,0.0050938549)
    (0.485,0.0030516857)
    (0.490,0.0014731664)
    (0.495,0.0004182693)
    (0.500,0.0000000000)
};
\addlegendentry{Second MRRW}
\addplot[very thick,blue] coordinates {
    (0.000,1.0000000000)
    (0.005,0.9747338723)
    (0.010,0.9543925510)
    (0.015,0.9358765649)
    (0.020,0.9185310850)
    (0.025,0.9020465016)
    (0.030,0.8862412140)
    (0.035,0.8709956996)
    (0.040,0.8562253307)
    (0.045,0.8418670634)
    (0.050,0.8278721372)
    (0.055,0.8142017337)
    (0.060,0.8008242322)
    (0.065,0.7877133892)
    (0.070,0.7748470833)
    (0.075,0.7622052066)
    (0.080,0.7496660740)
    (0.085,0.7371581226)
    (0.090,0.7246824453)
    (0.095,0.7122401567)
    (0.100,0.6998323937)
    (0.105,0.6874603162)
    (0.110,0.6751251072)
    (0.115,0.6628279736)
    (0.120,0.6505701470)
    (0.125,0.6383528838)
    (0.130,0.6261774661)
    (0.135,0.6140452023)
    (0.140,0.6019574278)
    (0.145,0.5899155056)
    (0.150,0.5779208269)
    (0.155,0.5659748124)
    (0.160,0.5540789122)
    (0.165,0.5422346075)
    (0.170,0.5304434107)
    (0.175,0.5187068668)
    (0.180,0.5070265542)
    (0.185,0.4954040856)
    (0.190,0.4838411090)
    (0.195,0.4723393088)
    (0.200,0.4609004073)
    (0.205,0.4495261650)
    (0.210,0.4382183828)
    (0.215,0.4269789028)
    (0.220,0.4158096096)
    (0.225,0.4047124322)
    (0.230,0.3936893451)
    (0.235,0.3827423702)
    (0.240,0.3718735784)
    (0.245,0.3610850914)
    (0.250,0.3503790838)
    (0.255,0.3397577850)
    (0.260,0.3292234814)
    (0.265,0.3187785187)
    (0.270,0.3084253046)
    (0.275,0.2981663112)
    (0.280,0.2880040779)
    (0.285,0.2779412144)
    (0.290,0.2679804041)
    (0.295,0.2581244072)
    (0.300,0.2483760649)
    (0.305,0.2387383031)
    (0.310,0.2292141369)
    (0.315,0.2198066753)
    (0.320,0.2105191258)
    (0.325,0.2013548009)
    (0.330,0.1923171229)
    (0.335,0.1834096313)
    (0.340,0.1746359895)
    (0.345,0.1659999925)
    (0.350,0.1575055758)
    (0.355,0.1491568244)
    (0.360,0.1409579831)
    (0.365,0.1329134685)
    (0.370,0.1250278811)
    (0.375,0.1173060196)
    (0.380,0.1097528969)
    (0.385,0.1023737580)
    (0.390,0.0951740996)
    (0.395,0.0881596934)
    (0.400,0.0813366115)
    (0.405,0.0747112569)
    (0.410,0.0682903969)
    (0.415,0.0620812040)
    (0.420,0.0560913018)
    (0.425,0.0503288208)
    (0.430,0.0448024640)
    (0.435,0.0395215870)
    (0.440,0.0344962943)
    (0.445,0.0297375610)
    (0.450,0.0252573841)
    (0.455,0.0210689784)
    (0.460,0.0171870342)
    (0.465,0.0136280665)
    (0.470,0.0104109056)
    (0.475,0.0075574173)
    (0.480,0.0050936257)
    (0.485,0.0030516130)
    (0.490,0.0014731520)
    (0.495,0.0004182684)
    (0.500,0.0000000000)
};
\addlegendentry{MQC (new)}
\coordinate (zoomboxsw) at (axis cs:0.246,0.341);
\coordinate (zoomboxse) at (axis cs:0.254,0.341);
\draw[thin,black!60] (axis cs:0.246,0.341) rectangle (axis cs:0.254,0.365);
\end{axis}
\begin{axis}[
    name=insetaxis,
    at={(mainaxis.south west)},
    anchor=south west,
    xshift=1.30cm,
    yshift=0.70cm,
    declare function={
        hb(\x)=-(max(\x,0.000000000001)*ln(max(\x,0.000000000001))
        +max(1-\x,0.000000000001)*ln(max(1-\x,0.000000000001)))/ln(2);
    },
    width=0.30\textwidth,
    height=0.16\textwidth,
    xmin=0.246,
    xmax=0.254,
    ymin=0.341,
    ymax=0.365,
    xtick={0.248,0.250,0.252},
    xticklabels={$0.248$,$0.250$,$0.252$},
    ytick={0.345,0.350,0.355,0.360},
    yticklabels={$0.345$,$0.350$,$0.355$,$0.360$},
    scaled ticks=false,
    tick label style={font=\tiny},
    title={Near $\delta = \frac{1}{4}$},
    title style={font=\scriptsize,yshift=-1pt},
    grid=major,
    grid style={black!12},
    axis background/.style={fill=white},
    axis line style={black!55},
    tick align=outside,
    clip=true
]
\addplot[thick,dashed,magenta,domain=0.246:0.254,samples=101]
    {hb(0.5-sqrt(x*(1-x)))};
\addplot[thick,teal!75!black,dash pattern=on 5pt off 2pt on 1pt off 2pt] coordinates {
    (0.245,0.3643274277)
    (0.250,0.3537105431)
    (0.255,0.3431213199)
};
\addplot[very thick,blue] coordinates {
    (0.245,0.3610850914)
    (0.250,0.3503790838)
    (0.255,0.3397577850)
};
\addplot[thin,densely dotted,black!70] coordinates {
    (0.250,0.341)
    (0.250,0.365)
};
\end{axis}
\draw[thin,black!45] (zoomboxsw) -- (insetaxis.north east);
\draw[thin,black!45] (zoomboxse) -- (insetaxis.south east);
\end{tikzpicture}
\caption{Five benchmark rate curves, shown for $0 \leq \delta \leq \frac{1}{2}$. The Plotkin~\cite{Plo60}, Elias--Bassalygo~\cite{Eli55,Bas65}, and first MRRW~\cite{MRRW77} curves are closed form, while the second MRRW~\cite{MRRW77} and MQC curves are sampled one-dimensional minima. The two-variable 2MQC infimum from \eqref{eq:2mqc-rate} is not plotted; analytically, it lies strictly below the second MRRW throughout $0 < \delta < \frac{1}{2}$ by \cref{prop:2mqc-strict-mrrw2}. The inset magnifies the marked window around $\delta = \frac{1}{4}$, where the sampled MQC value, approximately $0.3503791$, lies below the sampled second MRRW bound $0.3537105$ and the closed-form first MRRW bound $0.3545789$. The MQC improvement over the first MRRW curve is analytic throughout the open interval (\cref{prop:mqc-strict-improvement}); its relation to the second MRRW curve at this point is numerical. All displayed curves use the common endpoint conventions rate $1$ at $\delta = 0$ and rate $0$ at $\delta = \frac{1}{2}$.}
\label{fig:mixed-versus-mrrw}
\end{figure}

In~\cref{fig:mixed-optimization}, we give numerical substantiation of the proof of~\cref{prop:mqc-strict-improvement} for $\delta = 1/4$. In particular, \cref{fig:mixed-optimization} shows a numerical evaluation of the full function $u \mapsto F_{1/4}(u)$ on its admissible interval. At the point $u = 1-2\delta = 1/2$, the function $F_{1/4}(u)$ equals the first MRRW bound, as predicted by~\cref{rem:mqc-endpoints}. The curve initially falls, reaches the marked minimum, and then rises.

\begin{figure}[!htbp]
\centering
\begin{tikzpicture}
\begin{axis}[
    declare function={
        hb(\x)=-(max(\x,0.000000000001)*ln(max(\x,0.000000000001))
        +max(1-\x,0.000000000001)*ln(max(1-\x,0.000000000001)))/ln(2);
        ss(\y)=hb((1-sqrt(max(0,1-\y)))/2);
    },
    width=0.84\textwidth,
    height=0.34\textwidth,
    xmin=0.5,
    xmax=1,
    ymin=0.348,
    ymax=0.402,
    xtick={0.5,0.6,0.7,0.8,0.9,1},
    ytick={0.35,0.36,0.37,0.38,0.39,0.40},
    xlabel={$u$},
    ylabel={Holevo objective (bits)},
    grid=major,
    legend style={at={(0.98,0.04)},anchor=south east,draw=none,fill=white},
    tick align=outside,
    clip=false
]
\addplot[very thick,blue,domain=0.500001:1,samples=180]
    {ss(x^2)-ss(x*(x-0.5))};
\addlegendentry{$F_{1/4}(u)$}
\addplot[thick,dashed,magenta,domain=0.5:1,samples=2]
    {0.3545789027};
\addlegendentry{$\MRRW(1/4)$}
\addplot[only marks,black,mark=square*,mark size=2.8pt]
    coordinates {(0.5,0.3545789027)};
\addplot[only marks,red,mark=star,mark size=4pt]
    coordinates {(0.5293948668,0.3503790838)};
\draw[thin,black] (axis cs:0.5,0.3545789027)
    -- (axis cs:0.512,0.363)
    -- (axis cs:0.522,0.363);
\node[anchor=west,fill=white,inner sep=1pt]
    at (axis cs:0.522,0.363)
    {\footnotesize PSC endpoint};
\draw[thin,red] (axis cs:0.5293948668,0.3503790838)
    -- (axis cs:0.553,0.3498)
    -- (axis cs:0.592,0.3498);
\node[anchor=south west,text=black,fill=white,inner sep=1pt]
    at (axis cs:0.592,0.3490)
    {\footnotesize $u_\ast \approx 0.5294$};
\end{axis}
\end{tikzpicture}
\caption{A plot of the MQC Holevo information function $F_{1/4}(u) = s(u^2)-s(u(u-\frac{1}{2}))$ from~\cref{eq:mqc-holevo-function} over the range $u \in [\frac{1}{2},1]$, where $s(D) \coloneqq h((1-\sqrt{1-D})/2)$ as in \cref{prop:qubit-entropy}. The black square marks the PSC endpoint when $u = 1-2\delta$ (cf.~\cref{rem:mqc-endpoints}), where the MQC Holevo information equals the first MRRW bound~\cite{MRRW77}, which is approximately $\MRRW(1/4) \approx 0.3545789$ in this case. The red star marks the numerically obtained minimizer $u_\ast \approx 0.5293949$ of $F_{1/4}$, whose value is $F_{1/4}(u_\ast) \approx 0.3503791$.}
\label{fig:mixed-optimization}
\end{figure}

\FloatBarrier

\subsection{Low-rate regime}
\label{subsec:mqc-low-rate}

In this subsection, we examine how rapidly the MQC gain over the first MRRW bound decays in the low-rate regime. Define the gap between the two bounds as 
\begin{equation*}
\Delta(\delta) \coloneqq \MRRW(\delta)-R_{\MQC}(\delta).
\end{equation*}
Near the endpoint, this gap is difficult to resolve in \cref{fig:mixed-versus-mrrw}. We provide instead \cref{fig:mixed-gap}, which displays the gap on a logarithmic scale for greater clarity. The formal expansion below characterizes the rate at which the gap decays and identifies the leading order at which the two bounds differ.

\begin{figure}[!htbp]
\centering
\begin{tikzpicture}
\begin{semilogyaxis}[
    width=0.84\textwidth,
    height=0.22\textwidth,
    xmin=0.2,
    xmax=0.5,
    ymin=1e-12,
    ymax=1e-2,
    xtick={0.2,0.25,0.3,0.35,0.4,0.45,0.5},
    ytick={1e-2,1e-4,1e-6,1e-8,1e-10,1e-12},
    xlabel={Relative distance $\delta$},
    ylabel={Gap $\Delta(\delta)$ (bits)},
    grid=major,
    tick align=outside,
    ticklabel style={font=\footnotesize}
]
\addplot[very thick,blue!65!black] coordinates {
    (0.200,8.095186317740e-3)
    (0.205,7.624514072615e-3)
    (0.210,7.172936037871e-3)
    (0.215,6.740057807526e-3)
    (0.220,6.325482148272e-3)
    (0.225,5.928809144341e-3)
    (0.230,5.549636349189e-3)
    (0.235,5.187558944221e-3)
    (0.240,4.842169904739e-3)
    (0.245,4.513060173322e-3)
    (0.250,4.199818840856e-3)
    (0.255,3.902033335414e-3)
    (0.260,3.619289619236e-3)
    (0.265,3.351172394012e-3)
    (0.270,3.097265314728e-3)
    (0.275,2.857151212299e-3)
    (0.280,2.630412325255e-3)
    (0.285,2.416630540736e-3)
    (0.290,2.215387645050e-3)
    (0.295,2.026265584080e-3)
    (0.300,1.848846733822e-3)
    (0.305,1.682714181324e-3)
    (0.310,1.527452016345e-3)
    (0.315,1.382645634021e-3)
    (0.320,1.247882048853e-3)
    (0.325,1.122750220332e-3)
    (0.330,1.006841390529e-3)
    (0.335,8.997494339711e-4)
    (0.340,8.010712201385e-4)
    (0.345,7.104069889229e-4)
    (0.350,6.273607393813e-4)
    (0.355,5.515406321251e-4)
    (0.360,4.825594056822e-4)
    (0.365,4.200348071628e-4)
    (0.370,3.635900375567e-4)
    (0.375,3.128542119714e-4)
    (0.380,2.674628351094e-4)
    (0.385,2.270582922580e-4)
    (0.390,1.912903560371e-4)
    (0.395,1.598167091094e-4)
    (0.400,1.323034830145e-4)
    (0.405,1.084258132212e-4)
    (0.410,8.786841042240e-5)
    (0.415,7.032614799914e-5)
    (0.420,5.550466546818e-5)
    (0.425,4.312098757688e-5)
    (0.430,3.290415854705e-5)
    (0.435,2.459589073580e-5)
    (0.440,1.795122673276e-5)
    (0.445,1.273921357349e-5)
    (0.450,8.743587349796e-6)
    (0.455,5.763465997809e-6)
    (0.460,3.614047414561e-6)
    (0.465,2.127309269037e-6)
    (0.470,1.152705881783e-6)
    (0.475,5.578562470845e-7)
    (0.480,2.292156093689e-7)
    (0.485,7.272071757313e-8)
    (0.490,1.439573506169e-8)
    (0.495,9.011030164200e-10)
    (0.4975,5.634472639700e-11)
    (0.499,1.442643903400e-12)
};
\node[anchor=south west,fill=white,inner sep=1.5pt] at (rel axis cs:0.03,0.08)
    {\footnotesize $\Delta(\delta) > 0$ for $\delta < \frac{1}{2}$};
\end{semilogyaxis}
\end{tikzpicture}
\caption{Numerically minimized values of $\Delta(\delta) = \MRRW(\delta)-R_{\MQC}(\delta)$ for $0.2 \leq \delta \leq 0.499$, shown on a logarithmic vertical scale. The strict-improvement proposition (\cref{prop:mqc-strict-improvement}) proves $\Delta(\delta) > 0$ for every $\delta \in (0,\frac{1}{2})$. The formal calculation below suggests that $\Delta(\frac{1}{2} - \eps) = \eps^4\log{\mathrm{e}}+O(\eps^6\log(1/\eps))$.}
\label{fig:mixed-gap}
\end{figure}

\paragraph{Formal expansion.} We now heuristically examine the scale visible in \cref{fig:mixed-gap}. Put $\delta = \frac{1}{2}-\eps$ with $\eps \to 0^+$. The identity $\MRRW(\frac{1}{2}-\eps) = s(4\eps^2)$ reduces the MRRW expansion to the following small-$D$ expansion of $s$, where $D \to 0^+$:
\begin{equation*}
s(D) = \frac{D}{4}\left[\log{\frac{4}{D}}+\log{\mathrm{e}}\right]
+\frac{D^2}{16}\left[\log{\frac{4}{D}}-\frac{\log{\mathrm{e}}}{2}\right]
+O\left(D^3\log{\frac{1}{D}}\right).
\end{equation*}
Now, note that the admissible interval for $F_{\frac{1}{2}-\eps}$ starts at $u = 2\eps$. Numerical minimizers suggest a displacement of order $\eps^3$ from this endpoint, so we set $u = 2\eps+8\xi\eps^3$ for a fixed value $\xi > 0$. Substitution into the expansion of $s$ then gives
\begin{equation*}
F_{\frac{1}{2}-\eps}(2\eps+8\xi\eps^3)-s(4\eps^2)
= 4\xi\eps^4\left[\log(4\xi)-\log{\mathrm{e}}\right]
+O\left(\eps^6\log{\frac{1}{\eps}}\right).
\end{equation*}
Now, to minimize the leading $\eps^4$ coefficient, differentiate with respect to $\xi$. Because all logarithms are base $2$, $\frac{d}{d\xi}\log(4\xi) = (\log{\mathrm{e}})/\xi$. The $\log{\mathrm{e}}$ terms cancel, so the derivative of $4\xi[\log(4\xi)-\log{\mathrm{e}}]$ is $4\log(4\xi)$. Thus, the displayed coefficient is minimized at $\xi = 1/4$, corresponding to the candidate value $u = 2\eps+2\eps^3$. Plugging this value for both the first MRRW bound and the MQC rate and expanding gives us
\begin{align*}
\MRRW\left(\frac{1}{2}-\eps\right)
&= \eps^2\left[2\log{\frac{1}{\eps}}+\log{\mathrm{e}}\right]
+\eps^4\left[2\log{\frac{1}{\eps}}-\frac{\log{\mathrm{e}}}{2}\right]
+O\left(\eps^6\log{\frac{1}{\eps}}\right), \\
F_{\frac{1}{2}-\eps}(2\eps+2\eps^3)
&= \eps^2\left[2\log{\frac{1}{\eps}}+\log{\mathrm{e}}\right]
+\eps^4\left[2\log{\frac{1}{\eps}}-\frac{3\log{\mathrm{e}}}{2}\right]
+O\left(\eps^6\log{\frac{1}{\eps}}\right).
\end{align*}
In particular, taking their difference thus gives us
\begin{equation*}
\Delta\left(\frac{1}{2} - \eps\right)
\geq
\MRRW\left(\frac{1}{2}-\eps\right)
-F_{\frac{1}{2}-\eps}(2\eps+2\eps^3)
= \eps^4\log{\mathrm{e}}
+O\left(\eps^6\log{\frac{1}{\eps}}\right).
\end{equation*}
First, observe that the positive coefficient is consistent with the strict-improvement proposition (\cref{prop:mqc-strict-improvement}). Moreover, the two bounds agree at the leading $\eps^2\log(1/\eps)$ scale, the $\eps^2$ term, and in the logarithmic fourth-order term. Their first difference occurs at the term $\eps^4\log{\mathrm{e}}$. We remark that this analysis is primarily based on what numerical computations seem to suggest.

\subsection{Comparing with the second MRRW bound}
\label{subsec:comparing-second-mrrw}

In this subsection, we prove the strict 2MQC improvement over the second MRRW bound (\cref{prop:2mqc-strict-mrrw2}) and complete the proof of \cref{thm:second-mrrw-improvement}. For ease of exposition, we adopt the shorthand
\begin{equation}
\label{eq:second-mrrw-holevo-function}
H_\delta(u) \coloneqq 1+s(u^2)-s(u^2+2\delta u+2\delta),
\end{equation}
for the objective in~\eqref{eq:second-mrrw}, where $u$ ranges over $[0,1-2\delta]$. By~\cref{rem:2mqc-endpoints}, the 2MQC contains the MQC and mPSC as special cases. Thus, we see that
\begin{equation*}
R_{\mMQC}(\delta) \leq \min\left\{R_{\MQC}(\delta),\MRRW^{(2)}(\delta)\right\}.
\end{equation*}
We now show that this inequality for the second MRRW bound is always strict.

\begin{prop}[Strict improvement over the second MRRW bound]
\label{prop:2mqc-strict-mrrw2}
For every $\delta \in (0,\frac{1}{2})$, we have the strict inequality
\begin{equation*}
R_{\mMQC}(\delta) < \MRRW^{(2)}(\delta).
\end{equation*}
\end{prop}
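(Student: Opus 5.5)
The plan mirrors the proof of \cref{prop:mqc-strict-improvement}. We start from a minimizer of the second MRRW objective and perturb it by a small amount of $X$-Pauli noise $\eta$. The perturbation is chosen so the PGM bit error rate stays at $\delta$, while the Holevo information drops by an amount of order $\eta\log(1/\eta)$. The only cost is an $O(\eta)$ readjustment of the other parameters, so for small $\eta$ the gain wins.

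\textbf{Step 1: a good starting point.} Since $H_\delta$ is continuous on the compact interval $[0,1-2\delta]$, the minimum in \eqref{eq:second-mrrw} is attained at some $u_0$. This gives a point $(r_0,\alpha_0)$ with $p_{\mathrm{e}}^{\mPSC}(r_0,\alpha_0)=\delta$ and $\chi_{\mPSC}(r_0,\alpha_0)=\MRRW^{(2)}(\delta)$. In the 2MQC coordinates \eqref{eq:2mqc-entropy-coordinates} this is the point $(u_0,0,A_0)$ with $v=0$ and $E(u_0,0,A_0)=\delta$.

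Before perturbing, we must dispose of degenerate endpoints, where $s$ is non-smooth in other arguments.
\begin{itemize}
\item If $u_0=1-2\delta$, then $A_0=1$ and the channel is $\PSC(\delta)$. Then $\MRRW^{(2)}(\delta)=\MRRW(\delta)$, and \cref{prop:mqc-strict-improvement} together with $R_{\mMQC}\le R_{\MQC}$ already finishes the proof.
\item If $u_0=0$, the channel is $\BSC(J(\delta))$. Here I would check directly that $H_\delta'(0)<0$, or at least that $u=0$ is not a minimizer, using $s'(D)\to\infty$ as $D\to0^+$ applied to the $s(u^2)$ term. If that fails, I would handle this case by a separate perturbation.
\end{itemize}
Hence we may assume $0<u_0<1-2\delta$ and $0<A_0<1$, which is an interior point.

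\textbf{Step 2: the perturbation.} For small $v>0$, keep $u=u_0$ and set $A=\lambda_\delta(u_0,v)$ from \eqref{eq:2mqc-exact-error-coordinates}. This gives $E=\delta$ exactly, and it stays admissible because the feasibility inequalities are strict at $v=0$ and continuous in $v$.

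From \eqref{eq:2mqc-exact-error-coordinates}, $\lambda_\delta(u_0,v)$ is a rational function of $v$ that is smooth near $v=0$. Its expansion is $\lambda_\delta(u_0,v)=A_0+O(v)$; it is in fact $O(v^2)$ after checking that the $v$-linear terms $v^2u$ are already second order. Since $A_0\in(0,1)$, the function $s$ is smooth near $A_0$, so $s(\lambda_\delta(u_0,v))=s(A_0)+O(v^2)$. By \eqref{eq:2mqc-reduced-formulas}, the objective becomes
\begin{equation*}
1+s(u_0^2)-s(A_0)-s(v^2)+O(v^2)=\MRRW^{(2)}(\delta)-s(v^2)+O(v^2).
\end{equation*}
Using $s(D)=\tfrac{D}{4}\log\tfrac1D+O(D)$, we have $s(v^2)=\tfrac{v^2}{2}\log\tfrac1v+O(v^2)$. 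This strictly dominates the $O(v^2)$ error, so for small $v>0$ the value lies strictly below $\MRRW^{(2)}(\delta)$. Since $R_{\mMQC}(\delta)$ is an infimum over feasible pairs, this proves the claim.

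The strict constraint $E<\delta$ in \eqref{eq:2mqc-rate} costs nothing: either use the continuity argument the paper invokes for $A_\delta$, or lower $A$ by an arbitrarily small amount.

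\textbf{Main obstacle.} The delicate step is Step 1: certifying that the optimal $u_0$ is interior, so that $s$ is smooth at $A_0$ and at $u_0^2$. The $u_0=1-2\delta$ case is covered by \cref{prop:mqc-strict-improvement}. The $u_0=0$ case (Elias--Bassalygo) is the one I would verify by the derivative computation above. A fallback if $A_0$ sits at $1$ is that the perturbation argument survives as long as $s(A)$ is Lipschitz near $A_0$. This holds for $A_0<1$; at $A_0=1$ one can use monotonicity, since $\lambda_\delta$ can only decrease $s(A)$ when capped at $1$.
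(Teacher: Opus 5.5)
Your Step 2 is exactly the paper's argument. The paper fixes an interior minimizer $u_*$ of $H_\delta$, holds $u=u_*$, and sets $A=\lambda_\delta(u_*,v)$ so the PGM error stays at $\delta$. Since $\lambda_\delta$ depends on $v$ only through $v^2$, one gets $s(\lambda_\delta(u_*,v))-s(\lambda_0)=O(v^2)$, and the gain $-s(v^2)=-\frac{v^2}{2}\log\frac1v+O(v^2)$ wins. Your handling of $u_0=1-2\delta$ via \cref{prop:mqc-strict-improvement} is likewise the paper's first case.

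The gap is the left endpoint $u_0=0$. You flag it as the main obstacle but leave it unresolved, and the mechanism you propose is wrong. The term $s(u^2)$ does not have a blowing-up derivative at $u=0$. Since $s(D)=\frac{D}{4}\log\frac{1}{D}+O(D)$, one has $\frac{d}{du}s(u^2)=2u\,s'(u^2)=O(u\log\frac1u)\to 0$, so $s(u^2)=O(u^2\log\frac1u)$ has zero one-sided derivative at $0$. Moreover, this term enters $H_\delta$ with a plus sign. If its derivative did blow up, it would make $u=0$ a local \emph{minimizer}, the opposite of what you need.

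The correct one-line argument uses the other term: $H_\delta'(0^+)=-2\delta\,s'(2\delta)<0$. This holds because $s$ is strictly increasing with finite positive derivative on $(0,1)$ and $2\delta\in(0,1)$. The step is genuinely needed: at $u_0=0$ the admissibility constraint $v\le u$ forces $v=0$, so your perturbation is unavailable there.

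With that fix, any $u_0\in(0,1-2\delta)$ automatically gives $A_0=\lambda_\delta(u_0,0)=u_0^2+2\delta u_0+2\delta\in(0,1)$, since this is $<1$ exactly when $u_0<1-2\delta$. Your closing ``fallback if $A_0$ sits at $1$'' is therefore moot and can be dropped. The rest of your argument then coincides with the paper's proof.
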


\begin{proof}
Suppose first that $\MRRW^{(2)}(\delta) = \MRRW(\delta)$. By~\cref{rem:2mqc-endpoints} and~\cref{prop:mqc-strict-improvement}, we have that 
\begin{equation*}
R_{\mMQC}(\delta)
\leq R_{\MQC}(\delta)
< \MRRW(\delta)
= \MRRW^{(2)}(\delta).
\end{equation*}
For the remainder of the proof, assume $\MRRW^{(2)}(\delta) < \MRRW(\delta)$. 
By~\eqref{eq:second-mrrw-holevo-function} and the variational formula of the second MRRW bound~\eqref{eq:second-mrrw}, we have
\begin{equation*}
\MRRW^{(2)}(\delta) = \min_{0 \leq u \leq 1-2\delta}{H_\delta(u)}.
\end{equation*}
The continuous function $H_\delta$ attains its minimum on $[0,1-2\delta]$. The right endpoint is not a minimizer because
\begin{equation*}
H_\delta(1-2\delta) = \MRRW(\delta) > \MRRW^{(2)}(\delta).
\end{equation*}
The left endpoint is not a minimizer either. Indeed, $s(u^2) = O(u^2\log(1/u))$ has vanishing one-sided derivative at $u = 0$, and hence
\begin{equation*}
H_\delta'(0^+) = -2\delta s'(2\delta) < 0.
\end{equation*}
Thus, we can fix a minimizer $u_* \in (0,1-2\delta)$.

Now, observe that the mPSC endpoint is when $v = 0$. Indeed, by~\eqref{eq:2mqc-entropy-coordinates}, we have $v = 0$ if and only if $\eta = 0$, which recovers mPSC by~\cref{rem:2mqc-endpoints}. Now, define the shorthand
\begin{equation}
\label{eq:2mqc-exact-error-objective}
G_\delta(u,v)
\coloneqq
1+s(u^2)-s\left(\lambda_\delta(u,v)\right)-s(v^2),
\end{equation}
where $\lambda_\delta(u,v)$ is as defined in~\cref{eq:2mqc-exact-error-coordinates}. Note that when $v = 0$, we have that $\lambda_\delta(u_*,0) = u_*^2 + 2\delta u_* + 2\delta$. Thus
\begin{equation}
\label{eq:j-to-mrrw}
G_\delta(u_*,0)
= H_\delta(u_*)
= \MRRW^{(2)}(\delta),
\end{equation}
as expected by~\cref{rem:2mqc-endpoints}. 

We will now perturb $v \to 0^+$ slightly away from $0$ on the positive side while keeping the PGM bit error rate fixed. Now, from \cref{eq:2mqc-exact-error-coordinates}, we see that $\lambda_\delta(u_*,v) = \lambda_0+O(v^2)$ for $\lambda_0 \coloneqq \lambda_\delta(u_*,0) \in (0,1)$. Thus, for all sufficiently small $v > 0$, we have that $A_\delta(u_*,v) = \lambda_\delta(u_*,v)$. Consequently, by~\cref{eq:2mqc-rate}, we have 
\begin{equation}
\label{eq:2mqc-to-j}
R_{\mMQC}(\delta) \leq G_\delta(u_*,v)
\end{equation}
for $v \to 0^+$. Now, write the difference
\begin{equation*}
G_\delta(u_*,v)-G_\delta(u_*,0)
= -\left[s\left(\lambda_\delta(u_*,v)\right)-s(\lambda_0)\right]-s(v^2).
\end{equation*}
By the mean value theorem, we have that
\begin{equation*}
s\left(\lambda_\delta(u_*,v)\right)-s(\lambda_0) = O(v^2).
\end{equation*}
For the second term, the expansion used in the proof of~\cref{prop:mqc-strict-improvement} gives
\begin{equation*}
s(v^2) = \frac{v^2}{2}\log{\frac{1}{v}}+O(v^2).
\end{equation*}
Therefore,
\begin{equation*}
\frac{G_\delta(u_*,v)-G_\delta(u_*,0)}{v^2}
= -\frac{1}{2}\log{\frac{1}{v}}+O(1),
\end{equation*}
which tends to $-\infty$ as $v \to 0^+$. Hence 
\begin{equation}
\label{eq:j-strict-improvement}
G_\delta(u_*,v) < G_\delta(u_*,0)
\end{equation}
for all sufficiently small $v > 0$. Combining~\cref{eq:2mqc-to-j,eq:j-strict-improvement,eq:j-to-mrrw}, we therefore conclude that
\begin{equation*}
R_{\mMQC}(\delta)
\leq G_\delta(u_*,v)
< G_\delta(u_*,0)
= \MRRW^{(2)}(\delta).\qedhere
\end{equation*}
\end{proof}

With this proposition at hand, we can now establish~\cref{thm:second-mrrw-improvement}.

\begin{proof}[Proof of \cref{thm:second-mrrw-improvement}]
The theorem follows by combining the 2MQC rate upper bound $R_2(\delta) \leq R_{\mMQC}(\delta)$ (\cref{cor:2mqc-rate}) together with the strict inequality $R_{\mMQC}(\delta) < \MRRW^{(2)}(\delta)$ (\cref{prop:2mqc-strict-mrrw2}).\qedhere
\end{proof}

We conclude this section with numerical estimates of our rate bound improvement obtained from the 2MQC (\cref{cor:2mqc-rate}) relative to the previous state-of-the-art second MRRW bound (\cref{cor:mpsc-rate}). This is presented in \cref{tab:2mqc-mrrw-numerics}. The second MRRW bound column minimizes the one-dimensional objective in \eqref{eq:second-mrrw}. The 2MQC value column records reproducible floating-point estimates of the minimal value of the optimization problem in~\cref{eq:2mqc-rate}. For the first four rows, the 2MQC column evaluates the exact-error objective $G_\delta(u,v)$ from \eqref{eq:2mqc-exact-error-objective} at the displayed interior coordinates, as $\lambda_\delta(u,v) < 1$ in that regime. For the remaining rows, $A_\delta(u,v) = 1$ instead of $\lambda_\delta(u,v)$.

\begin{table}[!htbp]
\centering
\caption{The second MRRW bound compared numerically with floating-point estimates of $R_{\mMQC}(\delta)$, based on estimating the minimal value of the optimization problem in~\cref{eq:2mqc-rate}. The displayed gap in the last column is $\MRRW^{(2)}(\delta)$ minus an estimate of the 2MQC rate bound $R_{\mMQC}(\delta)$. The displayed coordinates $u,v$ are floating-point estimates witnessing the numerical upper bound on $R_{\mMQC}(\delta)$ displayed in the table. Note that all entries are floating-point evaluations, not certified global optima.}
\label{tab:2mqc-mrrw-numerics}
\footnotesize
\renewcommand{\arraystretch}{1.4}
\begin{tabular}{@{}cccccc@{}}
\toprule
$\boldsymbol{\delta}$
& $\boldsymbol{u}$
& $\boldsymbol{v}$
& $\boldsymbol{\MRRW^{(2)}(\delta)}$
& \textbf{2MQC value}
& \textbf{Gap} \\
\midrule
$0.05$ & $0.0544815$ & $0.00135647$ & $0.825136808$ & $0.825136151$ & $0.000000657$ \\
$0.10$ & $0.1193920$ & $0.00618142$ & $0.692740743$ & $0.692727379$ & $0.000013364$ \\
$0.15$ & $0.1997723$ & $0.01681725$ & $0.573450044$ & $0.573356705$ & $0.000093339$ \\
$0.20$ & $0.3160463$ & $0.04188496$ & $0.461359604$ & $0.460883553$ & $0.000476051$ \\
$0.25$ & $0.5293949$ & $0.12474570$ & $0.353710543$ & $0.350379084$ & $0.003331459$ \\
$0.30$ & $0.4151219$ & $0.07923030$ & $0.250224912$ & $0.248376065$ & $0.001848847$ \\
$0.35$ & $0.3064440$ & $0.04443793$ & $0.158132937$ & $0.157505576$ & $0.000627361$ \\
$0.40$ & $0.2019380$ & $0.01978282$ & $0.081468915$ & $0.081336612$ & $0.000132303$ \\
$0.45$ & $0.1002467$ & $0.00497293$ & $0.025266128$ & $0.025257384$ & $0.000008744$ \\
\bottomrule
\end{tabular}
\end{table}
\FloatBarrier

\section{Extending to $q$-ary codes}
\label{sec:qary-codes}

In this section, we sketch how one might extend our framework to $q$-ary codes. Fix an integer $q \geq 2$, write $[q] \coloneqq \{0,\ldots,q-1\}$, and let $A_q(n,d)$ be the largest cardinality of a code in $[q]^n$ with minimum distance at least $d$. We normalize rates in base $q$ and write
\begin{equation*}
R_q(\delta)
\coloneqq
\limsup_{n \to \infty}\frac{1}{n}\log_q{A_q(n,\ceil{\delta n})}.
\end{equation*}
For $\delta \geq 1-\frac{1}{q}$, the $q$-ary Plotkin bound shows that $R_q(\delta) = 0$. Thus, the nontrivial interval in the $q$-ary setting is $\delta \in (0,1-\frac{1}{q})$. Since the pretty good criterion (\cref{thm:main-result}) is stated for binary codes, we first extend its output-symmetry hypothesis to the cyclic $q$-ary alphabet.

\begin{definition}[$q$-ary output-symmetry]
\label{def:qary-output-symmetry}
For a positive integer $d$, a $q$-ary-input cq channel $\Phi_\sigma : [q] \to \Dens(\C^d)$, given by $\Phi_\sigma(x)=\sigma_x$, is \textbf{$q$-ary output-symmetric} if there is a unitary $U$ such that
\begin{equation*}
U\sigma_xU^\dagger=\sigma_{x+1}
\end{equation*}
for every $x \in [q]$, where subscripts and addition are taken modulo $q$. Iterating this identity gives $U^a\sigma_x(U^\dagger)^a=\sigma_{x+a}$ for all $a,x \in [q]$.
\end{definition}

With this terminology in place, the following proposition records the precise $q$-ary extension.

\begin{prop}[$q$-ary pretty good criterion]
\label{prop:qary-main-criterion}
Let $\Phi_\sigma : x \mapsto \sigma_x$ be a $q$-ary output-symmetric cq channel. If the uniform-input PGM symbol error rate $p_{\mathrm{e}}(\{(1/q,\sigma_x) : x \in [q]\}) < \delta$, then every code $\cC \subseteq [q]^n$ with minimum distance at least $\delta n$ satisfies
\begin{equation} \label{eq:qary-main-criterion}
\frac{1}{n}\log_q{\abs{\cC}}
\leq
\frac{\chi(\{(1/q,\sigma_x)\}_{x \in [q]})}{\log q}
+O_{q,\sigma,\delta}\left(\frac{1}{\sqrt{n}}\right).
\end{equation}
Consequently,
\begin{equation*}
R_q(\delta)
\leq
\frac{\chi(\{(1/q,\sigma_x)\}_{x \in [q]})}{\log q}.
\end{equation*}
\end{prop}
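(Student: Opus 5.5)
I will mirror the binary proof of \cref{thm:main-result} step by step, replacing each binary ingredient with its $q$-ary analogue. Fix a code $\cC \subseteq [q]^n$ with $d_{\min}(\cC) \geq \delta n$, let $\bc$ be uniform on $\cC$, and let $\bchat$ be the outcome of the PGM of the ensemble $\{(\abs{\cC}^{-1},\sigma_c) : c \in \cC\}$ applied to $\sigma_{\bc} = \sigma_{\bc_1}\otimes\cdots\otimes\sigma_{\bc_n}$. The first stage is the $q$-ary analogue of \cref{thm:expected-hamming}: $\Eb[d(\bchat,\bc)] \leq n\,p_{\mathrm{e}}(\{(1/q,\sigma_x)\})$.

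For each coordinate $i$, I apply \cref{lem:coarse-graining} to the projection $f_i(c) = c_i$, which is valid for an arbitrary alphabet. This shows that $\Prb[\bchat_i \neq \bc_i]$ equals the PGM error of the induced $q$-ary ensemble $\{(\pi_i(x),\Eb[\sigma_{\bc}\mid \bc_i = x])\}_{x \in [q]}$. Applying the partial trace $\T_i$ together with \cref{lem:pgm-data-processing}, also alphabet-agnostic, bounds this by the PGM error of $\{(\pi_i(x),\sigma_x)\}_{x\in[q]}$, where $\pi_i$ is the (non-uniform) marginal of $\bc_i$.

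The step I expect to be the main obstacle is the $q$-ary version of \cref{lem:uniform-prior-worst}: for every prior $\pi$ on $[q]$, the PGM error of $\{(\pi_x,\sigma_x)\}$ is at most that of the uniform-prior ensemble. The binary proof presumably uses symmetry plus concavity in $\gamma$, so I would first try the same route. Write the success probability as $P_s(\pi) = \sum_x \pi_x^2 \Tr(\sigma_x \bar\sigma_\pi^{-1/2}\sigma_x\bar\sigma_\pi^{-1/2})$ with $\bar\sigma_\pi = \sum_x \pi_x\sigma_x$. By cyclic output-symmetry, $P_s$ is invariant under the cyclic shift $\pi \mapsto \pi(\cdot - 1)$. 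Averaging $\pi$ over its $q$ cyclic shifts gives the uniform prior, so it suffices to show that the PGM error $1-P_s(\pi)$ is concave in $\pi$ on the simplex; equivalently, that $P_s$ is convex.

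For convexity I would use the variational or joint-convexity structure of $\Tr(\rho\,\bar\sigma^{-1/2}\rho\,\bar\sigma^{-1/2})$. One option is Lieb-type joint convexity of $(A,B)\mapsto \Tr(A^\dagger B^{-1/2} A B^{-1/2})$; note that $\pi_x\sigma_x$ and $\bar\sigma_\pi$ are both affine in $\pi$. Another is to reduce along each line segment to the binary case via a coarse-graining argument. If only the binary lemma is available, a fallback is to restrict to segments between $\pi$ and its shifts and use whatever concavity the appendix proof of \cref{lem:uniform-prior-worst} actually establishes.

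Granting this lemma, summing over coordinates gives the expected-distance bound. Exactly as in \cref{cor:block-error}, $d(\bchat,\bc)\geq \delta n$ whenever $\bchat \neq \bc$, so $\Prb[\bchat = \bc] \geq 1 - p_{\mathrm{e}}/\delta =: \eta > 0$.

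The last step is the strong converse. I need a $q$-ary version of \cref{thm:cq-strong-converse}, with $\chi(\{(1/q,\sigma_x)\})$ in place of $\chi(\sigma_0,\sigma_1)$. The results of \cite{Win99a,ON99} hold for arbitrary finite-input cq channels with the Holevo capacity. The $q$-ary analogue of \cref{prop:output-symmetric-capacity} identifies that capacity with the uniform-prior Holevo information, via concavity of $\pi\mapsto\chi$ together with cyclic invariance, averaging over shifts as before.

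Plugging in $\eps \asymp L_\sigma\sqrt{\log(8/\eta)/n}$ exactly as in the proof of \cref{thm:main-result} gives $\frac1n\log\abs{\cC} \leq \chi + O(n^{-1/2})$ in bits. Dividing by $\log q$ yields \eqref{eq:qary-main-criterion}, and letting $n\to\infty$ gives the bound on $R_q(\delta)$.
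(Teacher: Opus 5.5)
Your proposal is correct. Its skeleton matches the paper's: coordinate coarse-graining, then partial trace with \cref{lem:pgm-data-processing}, a $q$-ary uniform-prior bound, block success, and the strong converse. The differences are in the uniform-prior step and the converse.

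\textbf{The uniform-prior step.} You reduce it to two facts: cyclic invariance of $P_s(\pi)=\sum_x\Tr(A_x\bar\sigma_\pi^{-1/2}A_x\bar\sigma_\pi^{-1/2})$, and convexity of $P_s$ on the simplex, which you take from Lieb's joint convexity of $(K,B)\mapsto\Tr(K^\dagger B^{-1/2}KB^{-1/2})$. This works, provided you regularize $B\mapsto B+\kappa I$ and let $\kappa\to0$. The regularization is needed because, for priors with zeros, the averages $\bar\sigma_{\pi(\cdot-a)}$ can have different supports.

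The paper instead builds the flagged ensemble $\widehat\sigma_x=\sum_a\pi_{x-a}\ketbra{a}{a}\otimes\sigma_x$ with uniform prior. It notes that the PGM of this ensemble acts blockwise with success $P_s(\pi)$, and then discards the flag via \cref{lem:pgm-data-processing}.

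The two arguments are the same idea. For any weights $\lambda_a$ and priors $\pi^{(a)}$, consider the flagged ensemble with $\bar\pi_x\widehat\sigma_x=\sum_a\lambda_a\pi^{(a)}_x\ketbra{a}{a}\otimes\sigma_x$. Its PGM success is $\sum_a\lambda_aP_s(\pi^{(a)})$. So PGM data processing under discarding the flag \emph{is} the convexity inequality $P_s(\sum_a\lambda_a\pi^{(a)})\le\sum_a\lambda_aP_s(\pi^{(a)})$. The paper's lift is the special case $\lambda_a=1/q$, $\pi^{(a)}=\pi(\cdot-a)$.

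So within the paper's toolkit, your ``main obstacle'' needs no external operator-convexity theorem. Importing Lieb gives you a standalone analytic statement, but nothing more. Your fallback would not work as stated: the appendix proof of \cref{lem:uniform-prior-worst} establishes no concavity in $\gamma$; it is this same lift.

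\textbf{The converse.} You invoke the general finite-input strong converse together with the $q$-ary capacity identification. The paper instead reruns its own proof of \cref{thm:cq-strong-converse} with $V_c=U^{c_1}\otimes\cdots\otimes U^{c_n}$. Since every $U^a$ fixes the uniform average, that proof yields the uniform-prior $\chi$ directly, with no capacity identification. It also gives the explicit two-term bound into which you plug $\eps$.

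On your route, the cited results do not literally provide that two-term form. The $O(n^{-1/2})$ term must come from Winter's second-order bound. If you use it in its maximal-error form, add a routine expurgation step: average success at least $\eta$ yields a subcode of size at least $\eta\abs{\cC}/2$ with maximal error at most $1-\eta/2$. Ogawa--Nagaoka's exponential bound most directly gives only the asymptotic statement about $R_q(\delta)$.
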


\begin{proof}
Let $U$ be the unitary witnessing output-symmetry. Write
\begin{equation*}
p_{\mathrm{e}} \coloneqq p_{\mathrm{e}}\left(\{(1/q,\sigma_x) : x \in [q]\}\right),
\qquad
\chi_u \coloneqq \chi\left(\{(1/q,\sigma_x) : x \in [q]\}\right).
\end{equation*}
The PGM coarse-graining and data-processing statements, \cref{lem:coarse-graining,lem:pgm-data-processing}, are already formulated for arbitrary finite ensembles, so they apply here without modification. Since the reduction of blockwise to bitwise PGM in~\cref{lem:block-pgm-bitwise} was specialized to the binary case, we will instead directly invoke \cref{lem:coarse-graining} for its $q$-ary analog. The remaining ingredient in the proof of the $q$-ary analog of \cref{thm:expected-hamming} is a $q$-ary analog of \cref{lem:uniform-prior-worst}. We first prove its $q$-ary replacement: for every distribution $\pi$ on $[q]$, the PGM error $p_{\mathrm{e},\pi}$ of $\{(\pi_x,\sigma_x) : x \in [q]\}$ satisfies $p_{\mathrm{e},\pi} \leq p_{\mathrm{e}}$.

The proof of the $q$-ary analog of~\cref{lem:uniform-prior-worst} closely follows the one presented in~\cref{appsubsec:proof-uniform-prior}. Indeed, define
\begin{equation*}
\bar{\sigma}_\pi \coloneqq \sum_{x \in [q]}{\pi_x\sigma_x},
\qquad
M_{\hat{x}}^\pi \coloneqq \pi_{\hat{x}}\bar{\sigma}_\pi^{-1/2}\sigma_{\hat{x}}\bar{\sigma}_\pi^{-1/2},
\end{equation*}
and introduce a classical system with basis $\{\ket{a} : a \in [q]\}$. For each $x \in [q]$, define
\begin{equation*}
\widehat{\sigma}_x
\coloneqq
\sum_{a \in [q]}
\pi_{x-a}\ketbra{a}{a} \otimes \sigma_x.
\end{equation*}
The translated probabilities sum to one, so each $\widehat{\sigma}_x$ is a bona fide density state. Endow the lifted states $\widehat{\sigma}_x$ with the uniform prior. Their average state and PGM operators are, respectively,
\begin{align*}
\overline{\widehat{\sigma}}
&=\frac{1}{q}\sum_{a \in [q]}{\ketbra{a}{a} \otimes U^a\bar{\sigma}_\pi (U^\dagger)^a}, \\
\widehat{M_{\hat{x}}}
&=\sum_{a \in [q]}{\ketbra{a}{a} \otimes U^aM_{\hat{x}-a}^\pi (U^\dagger)^a}.
\end{align*}
Indeed, conditioned on block $a$, the translated symbol $x-a$ has law $\pi$ and $\sigma_x = U^a\sigma_{x-a}(U^\dagger)^a$. The lifted PGM therefore has success probability
\begin{equation*}
\frac{1}{q}\sum_{x \in [q]}{\Tr(\widehat{\sigma}_x\widehat{M_x})}
=\frac{1}{q}\sum_{a,x \in [q]}{\pi_{x-a}\Tr(\sigma_{x-a}M_{x-a}^\pi)} =\sum_{x \in [q]}{\pi_x\Tr(\sigma_xM_x^\pi)}.
\end{equation*}
Thus, the lifted uniform ensemble has PGM error exactly $p_{\mathrm{e},\pi}$.

Tracing out the classical system sends $\widehat{\sigma}_x$ to $\sigma_x$. Moreover, the PGM data-processing lemma (\cref{lem:pgm-data-processing}) says that this trace-out channel cannot decrease PGM error, and hence
\begin{equation}
\label{eq:qary-uniform-prior-worst}
p_{\mathrm{e},\pi} \leq p_{\mathrm{e}}.
\end{equation}
This proves the $q$-ary counterpart of \cref{lem:uniform-prior-worst}.

We now follow the proof of \cref{thm:expected-hamming}. Draw $\bc$ uniformly from $\cC$, apply the block PGM to $\sigma_{\bc} \coloneqq \sigma_{\bc_1} \otimes \cdots \otimes \sigma_{\bc_n}$, and denote its codeword-valued outcome by $\bchat$. For each coordinate $i$, let $\pi_i(x) \coloneqq \Prb[\bc_i = x]$. By \cref{lem:coarse-graining}, which applies as-is with the coordinate map $c \mapsto c_i$, we have
\begin{equation*}
\Prb[\bchat_i \neq \bc_i]
= p_{\mathrm{e}}\left(\{(\pi_i(x),\Eb[\sigma_{\bc} \mid \bc_i = x]) : x \in [q]\}\right).
\end{equation*}
The partial trace channel similarly satisfies $\T_i(\Eb[\sigma_{\bc} \mid \bc_i = x]) = \sigma_x$. Applying \cref{lem:pgm-data-processing} as-is and then the $q$-ary uniform-prior bound \eqref{eq:qary-uniform-prior-worst} gives
\begin{equation*}
\Prb[\bchat_i \neq \bc_i]
\leq p_{\mathrm{e}}\left(\{(\pi_i(x),\sigma_x) : x \in [q]\}\right)
\leq p_{\mathrm{e}}.
\end{equation*}
Summing over the coordinates yields the $q$-ary counterpart of \cref{thm:expected-hamming}:
\begin{equation*}
\Eb[d(\bchat,\bc)] = \sum_{i = 1}^n{\Prb[\bchat_i \neq \bc_i]} \leq np_{\mathrm{e}}.
\end{equation*}
The argument of \cref{cor:block-error} uses only Hamming distance and therefore extends unchanged. Since distinct codewords are separated by at least $\delta n$, it gives
\begin{equation*}
\Prb[\bchat = \bc] \geq 1-\frac{p_{\mathrm{e}}}{\delta} \eqqcolon \eta > 0.
\end{equation*}

It remains to reproduce the final converse step in the proof of \cref{thm:main-result}. Here we use the $q$-ary output-symmetric counterparts of \cref{prop:output-symmetric-capacity,thm:cq-strong-converse}. The same averaging over input shifts and entropy-subadditivity proof as in \cref{prop:output-symmetric-capacity} gives
\begin{equation*}
C(\Phi_\sigma) = \chi_u.
\end{equation*}
Likewise, the proof of \cref{thm:cq-strong-converse} applies using the unitary $U$ from \cref{def:qary-output-symmetry} and setting $V_c \coloneqq U^{c_1} \otimes \cdots \otimes U^{c_n}$. Indeed, $V_c\sigma_0^{\otimes n}V_c^\dagger = \sigma_c$, while every power $U^a$ fixes the uniform average state. The decoder-averaging step and both spectral truncations therefore remain unchanged. Thus, there is a channel-dependent constant $L_\sigma \geq 1$ such that, for every $\eps > 0$,
\begin{equation*}
\Prb[\bchat = \bc]
\leq 2^{-n[n^{-1}\log{\abs{\cC}}-\chi_u-2\eps]}
+4\exp\left(-\eps^2n/L_\sigma^2\right).
\end{equation*}
Exactly as in the proof of \cref{thm:main-result}, choose
\begin{equation*}
\eps \coloneqq L_\sigma\sqrt{\frac{\log(8/\eta)}{n\log{\mathrm{e}}}}.
\end{equation*}
The exponential term is then $\eta/2$, and hence
\begin{equation*}
\eta
\leq \Prb[\bchat = \bc]
\leq 2^{-n[n^{-1}\log{\abs{\cC}}-\chi_u-2\eps]}
+\frac{\eta}{2}.
\end{equation*}
Taking logarithms and rearranging gives
\begin{align*}
\frac{1}{n}\log{\abs{\cC}}
&\leq \chi_u + 2\eps + \frac{1}{n}\log{\frac{2}{\eta}} \\
&= \chi_u + 2L_\sigma\sqrt{\frac{\log(8/\eta)}{n\log{\mathrm{e}}}} + \frac{1}{n}\log{\frac{2}{\eta}} \\
&= \chi_u + O_{q,\sigma,\delta}\left(\frac{1}{\sqrt{n}}\right).
\end{align*}
Dividing by $\log{q}$ proves \eqref{eq:qary-main-criterion}. Maximizing over $\cC$ and taking $n \to \infty$ gives
\begin{equation*}
R_q(\delta) \leq \frac{\chi_u}{\log{q}}.\qedhere
\end{equation*}
\end{proof}

Thus, as in the binary setting, the $q$-ary pretty good criterion reduces to channel selection. Indeed, choose a $q$-ary output-symmetric cq channel, read off its uniform-input PGM symbol error and Holevo information, and substitute them into \cref{prop:qary-main-criterion}. To state the resulting bounds compactly, define
\begin{align*}
h_q(x)
&\coloneqq \frac{h(x)+x\log(q-1)}{\log{q}}, \\
J_q(x)
&\coloneqq \frac{q-1}{q}\left(1-\sqrt{1-\frac{qx}{q-1}}\right), \\
\gamma_q(x)
&\coloneqq \frac{q-1-(q-2)x-2\sqrt{(q-1)x(1-x)}}{q}.
\end{align*}
We now specify only the channels. Their PGM symbol errors, Holevo information, and induced rate bounds are collected in \cref{tab:qary-channel-summary}.

\paragraph{$q$-ary erasure channel ($q$-EC).} For $0 \leq p \leq 1$, let $\ket{?}$ be orthogonal to $\{\ket{x}\}_{x \in [q]}$, and define
\begin{equation*}
\sigma_x^{q\mathrm{EC}}(p)
\coloneqq
(1-p)\ketbra{x}{x}+p\ketbra{?}{?}.
\end{equation*}

\paragraph{$q$-ary symmetric channel ($q$-SC).} For $0 \leq p \leq \frac{q-1}{q}$, define the diagonal output states
\begin{equation*}
\sigma_x^{q\mathrm{SC}}(p)
\coloneqq
(1-p)\ketbra{x}{x}
+\frac{p}{q-1}\sum_{y \in [q] \setminus \{x\}}{\ketbra{y}{y}}.
\end{equation*}

\paragraph{$q$-ary pure-state channel ($q$-PSC).} For $0 \leq p \leq \frac{q-1}{q}$, define
\begin{equation*}
\sigma_x^{q\mathrm{PSC}}(p)
\coloneqq
\ketbra{\psi_x^{(q)}(p)}{\psi_x^{(q)}(p)},
\qquad
\ket{\psi_x^{(q)}(p)}
\coloneqq
\sqrt{1-p}\ket{x}
+\sqrt{\frac{p}{q-1}}\sum_{y \in [q] \setminus \{x\}}{\ket{y}}.
\end{equation*}

\paragraph{$q$-ary mixed channel ($q$-MC).} Following the binary MQC construction in \cref{subsec:how-mixing,subsec:mqc}, we place a $q$-ary symmetric channel before a $q$-PSC. Let $\mathcal{T}^{(q)}_\eta$ send $x$ to itself with probability $1-\eta$ and to each $y \neq x$ with probability $\eta/(q-1)$, and let $\mathcal{W}^{(q)}_r$ be the preparation $y \mapsto \sigma_y^{q\mathrm{PSC}}(r)$. Their composition $\mathcal{W}^{(q)}_r \circ \mathcal{T}^{(q)}_\eta$ is the $q$-MC: it prepares the pure state indexed by the perturbed symbol and discards the classical error.

For $0 \leq a \leq \frac{q-1}{q}$, define $\nu_q(a) \coloneqq 1-qa/(q-1)$ as the centered correlation of the $q$-SC with crossover probability $a$. Also denote its inverse by $r_q(v) \coloneqq (q-1)(1-v)/q$. Fix a desired PGM symbol error $0 \leq p < \frac{q-1}{q}$ and $\nu_q(p) \leq v \leq 1$, and set
\begin{equation*}
t_q(p,v) \coloneqq \sqrt{\frac{\nu_q(p)}{v}},
\qquad
\eta_q(p,v) \coloneqq r_q\left(t_q(p,v)\right).
\end{equation*}
Here $r_q(v)$ is the parameter of the underlying $q$-PSC, while $\eta_q(p,v)$ is the crossover probability of the discarded $q$-SC. Abbreviate
\begin{equation*}
\tau_x(v) \coloneqq \sigma_x^{q\mathrm{PSC}}\left(r_q(v)\right),
\qquad
\bar{\tau}_v \coloneqq \frac{1}{q}\sum_{y \in [q]}{\tau_y(v)}.
\end{equation*}
The composition therefore has output
\begin{align}
\sigma_x^{q\mathrm{MC}}(p,v)
&\coloneqq
\left(\mathcal{W}^{(q)}_{r_q(v)} \circ
\mathcal{T}^{(q)}_{\eta_q(p,v)}\right)(x) \notag \\
&=\left(1-\eta_q(p,v)\right)\tau_x(v)
+\frac{\eta_q(p,v)}{q-1}
\sum_{y \in [q]\setminus\{x\}}{\tau_y(v)} \notag \\
&=t_q(p,v)\tau_x(v)
+\left(1-t_q(p,v)\right)\bar{\tau}_v.
\label{eq:qary-mc}
\end{align}
The last equality uses $\sum_{y \in [q] \setminus \{x\}}{\tau_y(v)} = q\bar{\tau}_v-\tau_x(v)$ and $t_q(p,v) = \nu_q(\eta_q(p,v))$.

For the uniform $q$-PSC ensemble, the PGM is the computational-basis measurement. As in the binary MQC, the PGM reverses the classical pre-processing by applying the same $q$-SC once more. As correlations multiply under composition, the final symbol error $p_{\mathrm{e}}$ satisfies $\nu_q(p_{\mathrm{e}})=t_q(p,v)^2v=\nu_q(p)$ and hence equals $p$. At $v = \nu_q(p)$, the pre-processing is noiseless, and the channel is exactly $q$-PSC$(p)$; at $v = 1$, the pure states are orthogonal and the channel is exactly $q$-SC$(J_q(p))$. For alphabet size $2$, \eqref{eq:qary-mc} becomes
\begin{equation*}
\MQC\left(\frac{1-v}{2},\frac{1-t_2(p,v)}{2}\right),
\end{equation*}
with the first argument the underlying binary PSC parameter and the second the discarded BSC crossover probability.

To state its Holevo information compactly, define
\begin{align*}
g_q(v)
&\coloneqq \gamma_q\left(r_q(v)\right), \\
\ell_q(p,v)
&\coloneqq \frac{\left(1-t_q(p,v)\right)g_q(v)}{q-1}, \\
\lambda_\pm^{(q)}(p,v)
&\coloneqq \frac{1}{2}\left(1-(q-2)\ell_q(p,v)
\pm \left[\left(1-(q-2)\ell_q(p,v)\right)^2\right.\right. \\
&\hspace{5.5em}\left.\left.{}-\frac{4\left(1-t_q(p,v)\right)\left(1+(q-1)t_q(p,v)\right)}{q-1}
g_q(v)\left(1-g_q(v)\right)\right]^{1/2}\right), \\
\Xi_q(p,v)
&\coloneqq h_q\left(g_q(v)\right)
+\frac{1}{\log{q}}\left[\lambda_+^{(q)}(p,v)\log{\lambda_+^{(q)}(p,v)}
+\lambda_-^{(q)}(p,v)\log{\lambda_-^{(q)}(p,v)}\right. \\
&\hspace{8.5em}\left.{}+(q-2)\ell_q(p,v)\log{\ell_q(p,v)}\right].
\end{align*}
Each $q$-MC output has spectrum $\lambda_+^{(q)}(p,v),\lambda_-^{(q)}(p,v)$, and $\ell_q(p,v)$ with multiplicity $q-2$, while the average state has spectrum $1-g_q(v)$ and $g_q(v)/(q-1)$ with multiplicity $q-1$. These spectra give the normalized Holevo information $\Xi_q(p,v)$, and the uniform-input PGM symbol error is exactly $p$.

The unitary that sends $\ket{x}$ to $\ket{x+1}$ (and fixes $\ket{?}$ for the $q$-EC) shows that all four channels are $q$-ary output-symmetric. For the open channel-design problem, write
\begin{equation*}
R_{q,\mathrm{cq}}^\star(\delta)
\coloneqq
\inf_{\substack{\{\sigma_x\}_{x \in [q]}\ \text{$q$-ary output-symmetric}\\
p_{\mathrm{e}}(\{(1/q,\sigma_x)\}_{x \in [q]}) < \delta}}
\frac{\chi(\{(1/q,\sigma_x)\}_{x \in [q]})}{\log{q}}.
\end{equation*}
The final column below follows by choosing a strictly feasible channel in \cref{prop:qary-main-criterion} and then approaching the displayed error threshold by continuity.

\begin{table}[!htbp]
\centering
\caption{$q$-ary channel blueprint. All rates and Holevo quantities are normalized by $\log{q}$. The first four rows give explicit channels used in \cref{prop:qary-main-criterion}; the last row records the open optimization over all $q$-ary output-symmetric cq channels.}
\label{tab:qary-channel-summary}
\footnotesize
\renewcommand{\arraystretch}{1.4}
\begin{tabularx}{\textwidth}{@{} >{\centering\arraybackslash}p{0.17\textwidth} >{\centering\arraybackslash}p{0.20\textwidth} >{\centering\arraybackslash}p{0.25\textwidth} >{\centering\arraybackslash}X@{}}
\toprule
\textbf{Channel}
& \textbf{PGM symbol error}
& \textbf{Uniform-input $\chi/\log{q}$}
& \textbf{Upper bound on $R_q(\delta)$} \\
\midrule
$q$-EC$(p)$
& $\frac{q-1}{q}p$
& $1-p$
& $1-\frac{q}{q-1}\delta$ \\
\addlinespace[0.2em]
$q$-SC$(p)$
& $2p-\frac{q}{q-1}p^2$
& $1-h_q(p)$
& $1-h_q\left(J_q(\delta)\right)$ \\
\addlinespace[0.2em]
$q$-PSC$(p)$
& $p$
& $h_q\left(\gamma_q(p)\right)$
& $h_q\left(\gamma_q(\delta)\right)$ \\
\addlinespace[0.2em]
$q$-MC$(p,v)$
& $p$
& $\Xi_q(p,v)$
& $\displaystyle\min_{\nu_q(\delta) \leq v \leq 1}{\Xi_q(\delta,v)}$ \\
\addlinespace[0.2em]
\shortstack{Open\\output-symmetric\\channel search}
& $p_{\mathrm{e}} < \delta$
& $\chi(\{(1/q,\sigma_x)\}_x)/\log{q}$
& $R_{q,\mathrm{cq}}^\star(\delta)$ \\
\bottomrule
\end{tabularx}
\end{table}
\FloatBarrier

The first three rows recover the $q$-ary Plotkin bound~\cite{Plo60}, the $q$-ary Elias--Bassalygo bound~\cite{Eli55,Bas65}, and the first Delsarte--Levenshtein linear-programming bound~\cite{Del73,Lev95}, respectively. The general $q$-ary linear-programming framework originates with Delsarte~\cite{Del72,Del73}, was developed systematically by Levenshtein~\cite{Lev95} and Delsarte and Levenshtein~\cite{DL98}, and admits more recent explicit dual constructions~\cite{CDA25}. Kaipa's anticode-based hybrid, using the optimal-anticode theorem of Ahlswede and Khachatrian, improves both the $q$-ary Elias and Plotkin bounds~\cite{AK98,Kai18}. At $q = 2$, the $q$-PSC row is the first MRRW bound~\cite{MRRW77}, while the $q$-MC row becomes the binary MQC bound.

The $q$-MC supplies the desired $q$-ary output-symmetric fixed-error path from $q$-SC$(J_q(p))$ to $q$-PSC$(p)$. We do not claim here that it improves the best known $q$-ary bounds when $q > 2$. Its optimized curve should first be compared with Aaltonen's nonbinary linear-programming bound~\cite{Aal90}, the shortening and straight-line bound of Laihonen and Litsyn~\cite{LL98}, and the product-association-scheme bounds of Ben-Haim and Litsyn~\cite{BHL07}. Searching beyond the $q$-MC family is represented by the last row of the table.

\section{PSC rate bounds for LDPC codes}
\label{sec:psc-ldpc}

This section proves \cref{thm:psc-ldpc-improvement}. It is primarily intended as a proof of concept showing how the pretty good criterion (\cref{thm:main-result}) can incorporate structure specific to a code family and thereby improve a specialized rate--distance bound. For simplicity, we specialize throughout to parity checks of weight at most $3$.

We begin with the benchmark bound of Shangguan and Yang~\cite{SY26} in the notation of this paper. Set $r_\delta \coloneqq \frac{1}{2}-\sqrt{\delta(1-\delta)}$. The $w = 3$ bound in their Theorem~1.5, with their base-four entropy expanded into base-two logarithms, is 
\begin{equation}
\label{eq:sy-ldpc-base}
B_3^{\mathrm{SY}}(\delta)
\coloneqq
\begin{cases}
\frac{2}{3}, & r_\delta \geq \frac{1}{4}, \\
-r_\delta\log{r_\delta}
-\frac{1-3r_\delta}{3}\log(1-3r_\delta),
& 0 < r_\delta < \frac{1}{4},
\end{cases}
\end{equation}
extended continuously by $B_3^{\mathrm{SY}}(\frac{1}{2}) \coloneqq 0$. Throughout $\frac{2}{5} \leq \delta < \frac{1}{2}$, this is the $w = 3$ bound obtained in~\cite{SY26}. In proving~\cref{thm:psc-ldpc-improvement}, we will compare the rate bound obtained from the PSC with this benchmark for $\delta \in [\frac{1}{6},\frac{1}{2})$.

Recall the two exact optimality statements for PGM decoding over PSC from~\cref{subsec:main-results}. Namely, for binary linear codes, the PGM achieves optimal error at both the block and bit levels. In this section, we will rely on PGM's bitwise optimality, stated below.

\begin{fact}[{Bitwise PGM optimality for linear codes ---~\cite[Proposition~5.1]{PR25}}]
\label{fact:psc-bitwise-optimal}
Let $\cC \leq \F_2^n$ be linear and $p \in (0,\frac{1}{2})$. For the uniform ensemble $\{(\abs{\cC}^{-1},\sigma_c^{\PSC(p)}) : c \in \cC\}$, let $\bchat$ be the outcome of the PGM. For any coordinate $i \in \{1, \ldots , n\}$, the $i$'th outcome bit $\bchat_i$ is the Helstrom-optimal measurement~\cite{Hel69} for decoding $\bc_i$. In particular, the error of any explicit decoder for $\bc_i$ upper-bounds the bitwise error $p_{\mathrm{e},i}$ of the block PGM.
\end{fact}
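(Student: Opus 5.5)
The plan is to exploit the geometric uniformity of the PSC ensemble for a linear code. We diagonalize everything in a character basis of the abelian group $\cC$, and then check directly that the coarse-grained PGM operator for bit $i$ is the Helstrom projector.

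\emph{Reduction.} By \cref{lem:block-pgm-bitwise}, $\bchat_i$ is the outcome of the PGM of the binary ensemble $\{(\pi_i(b),\omega_b)\}$ with $\omega_b \coloneqq \Eb[\sigma_{\bc} \mid \bc_i = b]$, so it suffices to show that this binary PGM is Helstrom-optimal. If $\bc_i$ is constant on $\cC$, the claim is trivial. Otherwise, $\cC_0 \coloneqq \{c : c_i = 0\}$ is an index-two subgroup, so $\pi_i(0)=\pi_i(1)=\frac12$. Moreover, $X^{c^\ast}$ swaps $\omega_0$ and $\omega_1$ for any $c^\ast \in \cC$ with $c^\ast_i = 1$. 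With equal priors, the Helstrom measurement is the projector onto the positive part of $\omega_0-\omega_1$, completed arbitrarily on the kernel. So the goal is to show that $M_0 = \frac12\bar\omega^{-1/2}\omega_0\bar\omega^{-1/2}$, where $\bar\omega = \bar\sigma$ is the average state, is exactly that projector restricted to $\supp(\bar\sigma)$. Off the support, the states vanish and the choice there is irrelevant.

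\emph{Character decomposition.} The Gram matrix $\braket{\sigma_c|\sigma_{c'}} = g(c\oplus c')$, with $g(x) = (2\sqrt{p(1-p)})^{\wt(x)}$, is a group-circulant matrix on $\cC$. For each character $\chi$ of $\cC$, set $\ket{\chi} \coloneqq \sum_c \chi(c)\ket{\sigma_c}$. A direct computation gives $\braket{\sigma_c|\chi} = \chi(c)\lambda_\chi$ with $\lambda_\chi \coloneqq \sum_x g(x)\chi(x) \ge 0$. From this, the nonzero $\ket{\chi}$ are mutually orthogonal with $\norm{\ket\chi}^2 = |\cC|\lambda_\chi$, and they span $\supp(\bar\sigma)$. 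Further, $\bar\sigma\ket\chi = (\lambda_\chi/|\cC|)\ket\chi$ and
\begin{equation*}
(\omega_0-\omega_1)\ket\chi = \frac{\lambda_\chi}{|\cC|}\ket{\chi\psi_i},
\end{equation*}
where $\psi_i(c) \coloneqq (-1)^{c_i}$ is itself a character of $\cC$. Hence $\supp(\bar\sigma)$ splits into invariant blocks $\mathrm{span}\{\ket\chi,\ket{\chi\psi_i}\}$. In the normalized basis of such a block, $\bar\sigma = \mathrm{diag}(a,b)$ and $\omega_0 - \omega_1 = \sqrt{ab}\,\bigl(\begin{smallmatrix}0&1\\1&0\end{smallmatrix}\bigr)$, where $a = \lambda_\chi/|\cC|$ and $b = \lambda_{\chi\psi_i}/|\cC|$.

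\emph{Block computation.} Since $\omega_0 = \bar\sigma + \frac12(\omega_0-\omega_1)$, we get
\begin{equation*}
M_0 = \tfrac12\bigl(I + \bigl(\begin{smallmatrix}0&1\\1&0\end{smallmatrix}\bigr)\bigr)
\end{equation*}
on each block with $ab>0$. This is precisely the projector onto the positive eigenvector of $\omega_0-\omega_1$ there. On blocks where one of $\lambda_\chi$, $\lambda_{\chi\psi_i}$ vanishes, $\omega_0 - \omega_1 = 0$ and only the kernel convention matters. Summing over blocks shows that $M_0$ is a Helstrom projector, hence bitwise optimal. Any explicit decoder for $\bc_i$, such as measuring qubit $i$ in the computational basis, therefore has error at least $p_{\mathrm{e},i}$.

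The main obstacle I expect is the bookkeeping of phases and normalizations in the block computation, in particular confirming that the off-diagonal entry is exactly $\sqrt{ab}$ with no phase. It should be, since $\chi\psi_i\cdot\psi_i = \chi$ and both $\lambda$'s are real and nonnegative. A second point to handle is the degenerate blocks where $\lambda$ vanishes.
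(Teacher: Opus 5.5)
Your proof is correct, but it takes a different route from the paper. The paper does not prove this fact: it imports it from Piveteau--Renes~\cite[Proposition~5.1]{PR25}, in the context of geometrically uniform state discrimination and their belief-propagation decoders.

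You give a self-contained verification instead. After the coarse-graining reduction of \cref{lem:block-pgm-bitwise}, you block-diagonalize $\bar\sigma$ and $\omega_0-\omega_1$ simultaneously in the character basis of $\cC$, pairing $\chi$ with $\chi\psi_i$. You then check block by block that $M_0$ is the Helstrom projector. This route needs only the group-circulant Gram matrix of the \emph{pure} outputs. That also makes clear why nothing similar holds for mixed outputs such as the BSC, where posterior sampling is not bitwise MAP. It also exhibits the spectrum of $\bar\sigma$ explicitly: by Poisson summation, $\lambda_{\chi_y}/\abs{\cC}=\Prb[\bz\in y+\cC^\perp]$ for $\bz\sim\Ber(\frac12-\sqrt{p(1-p)})^n$. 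This ties your computation directly to \cref{prop:syndrome-duality}. The citation buys brevity and a link to efficient implementations of the optimal measurement, which your purely information-theoretic argument does not address.

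One correction, exactly at the spot you flagged. Since $\omega_b=\frac{2}{\abs{\cC}}\sum_{c:\,c_i=b}\ketbra{\sigma_c}{\sigma_c}$, you get $(\omega_0-\omega_1)\ket{\chi}=\frac{2\lambda_\chi}{\abs{\cC}}\ket{\chi\psi_i}$. So the off-diagonal entry of $\omega_0-\omega_1$ in a normalized block is $2\sqrt{ab}$, not $\sqrt{ab}$.
\begin{itemize}
\item With your stated constant, $M_0$ would come out as $\frac12 I+\frac14\bigl(\begin{smallmatrix}0&1\\1&0\end{smallmatrix}\bigr)$, which is not a projector.
\item With the correct constant, $\omega_0$ restricted to a block is the rank-one matrix $\bigl(\begin{smallmatrix}a&\sqrt{ab}\\ \sqrt{ab}&b\end{smallmatrix}\bigr)$. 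This is consistent with $\omega_0$ having rank $\abs{\cC}/2$, one per block. You then get $M_0=\frac12\bigl(I+\bigl(\begin{smallmatrix}0&1\\1&0\end{smallmatrix}\bigr)\bigr)$, as you claim.
\end{itemize}

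Finally, your degenerate-block case never arises. For $p\in(0,\frac12)$ the $2^n$ vectors $X^x\ket{\Ber(p)^n}$ are linearly independent, so every $\lambda_\chi>0$; equivalently, every coset probability above is positive.
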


The explicit decoder we use will apply the equal-prior Helstrom measurement directly to the PSC outputs indexed by one weight-$3$ parity check containing the desired bit. Define the weight-three error function
\begin{equation}
\label{eq:ldpc-local-error}
e_3^{\mathrm{LDPC}}(p)
\coloneqq
\frac{1+4p(1-p)-(1-2p)\sqrt{1+12p(1-p)}}{4}.
\end{equation}

We next bound this Helstrom error and then invoke the bitwise optimality of the code PGM to bound $p_{\mathrm{e},i}$ from above.

\begin{lemma}[One-check PSC decoder]
\label{lem:psc-local-check}
Let $p \in (0,\frac{1}{2})$. Let $\cC \leq \F_2^n$ have minimum distance greater than one and let $\cC^\perp$ be generated by checks of weight at most $3$. If a uniformly random $\bc \in \cC$ is encoded into the product state $\sigma_{\bc}^{\PSC(p)}$, then for every $i \in \{1, \ldots , n\}$, the bitwise coarse-graining of the block PGM has error
\begin{equation*}
p_{\mathrm{e},i}
\leq e_3^{\mathrm{LDPC}}(p)
< p.
\end{equation*}
\end{lemma}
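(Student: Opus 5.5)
The plan is to bound $p_{\mathrm{e},i}$ by the error of one explicit local measurement, using \cref{fact:psc-bitwise-optimal}. The bitwise coarse-graining of the block PGM is Helstrom-optimal for $\bc_i$. So $p_{\mathrm{e},i}$ is at most the Helstrom error of the reduced binary ensemble obtained by tracing out every qubit outside a small set $S \ni i$. This is because any measurement acting on $S$ alone is also a measurement on the full output. The work then splits into three parts: choosing $S$, computing the local Helstrom error, and comparing it with $p$.

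\textbf{Choosing $S$.} First, some check of $\cC^\perp$ must contain $i$. Otherwise $e_i$ is orthogonal to all of $\cC^\perp$, so $e_i \in \cC$, which contradicts $d_{\min}(\cC) > 1$. Since $\cC^\perp$ is generated by checks of weight at most $3$, we can take a generator containing $i$ with support $S$ of size $w \in \{1,2,3\}$. I then split into cases according to the projection $\cC|_S$.
\begin{itemize}
\item If $\bc_i$ is constant on $\cC$, it is decoded with error $0$.
\item If $w = 2$, or if $\cC|_S$ is a proper subspace of the even-weight code on three coordinates, then some coordinate $j \neq i$ satisfies $\bc_j = \bc_i$. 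Two copies $\sigma_b^{\otimes 2}$ give Helstrom error $\frac{1}{2}\bigl(1-\sqrt{1-f^2}\bigr)$ with $f = 4p(1-p)$.
\item In the generic case $\cC|_S = \{000,011,101,110\}$ with $\cC|_S$ uniform (the marginal of a linear code's uniform distribution onto a subspace projection is uniform). Here $\bc_i$ is uniform and the conditional states are
\begin{equation*}
\rho_b = \sigma_b \otimes \omega_b, \qquad \omega_b \coloneqq \frac{1}{2}\sum_{x \oplus y = b}{\sigma_x \otimes \sigma_y},
\end{equation*}
with $\sigma_x$ the $\PSC(p)$ outputs.
\end{itemize}

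\textbf{Computing the Helstrom error.} The generic-case error is $\frac{1}{2}\bigl(1-\frac{1}{2}\norm{\rho_0-\rho_1}_1\bigr)$, and this trace norm is the step I expect to be the main obstacle. I would work in Pauli coordinates, writing $\sigma_b = \frac{1}{2}(I + xX + (-1)^b zZ)$ with $x = 2\sqrt{p(1-p)}$ and $z = 1-2p$. Expanding gives
\begin{equation*}
\omega_b = \frac{1}{4}\Bigl(I\otimes I + x(X\otimes I + I\otimes X) + x^2\, X\otimes X + (-1)^b z^2\, Z\otimes Z\Bigr).
\end{equation*}
The difference $\rho_0-\rho_1$ is supported on terms with an odd number of $Z$'s, and it commutes with the symmetries $X^{\otimes 3}$ and the swap of qubits $j,k$. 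This should block-diagonalize it into $2\times 2$ blocks whose eigenvalues involve $\sqrt{1+12p(1-p)}$, since $1 + 3x^2 = 1+12p(1-p)$. I would verify the resulting expression against \eqref{eq:ldpc-local-error}. If the direct diagonalization gets messy, my fallback is to compute $\Tr\bigl((\rho_0-\rho_1)^2\bigr)$ and $\Tr\bigl((\rho_0-\rho_1)^4\bigr)$ and recover the spectrum from the symmetry-constrained multiplicities.

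\textbf{Comparing with $p$.} Finally I would check the elementary inequalities. First, $e_3^{\mathrm{LDPC}}(p) < p$ on $(0,\frac{1}{2})$: rearranged, this reads $(1-2p)\sqrt{1+12p(1-p)} > 1-4p+4p^2 = (1-2p)^2$, which holds because $\sqrt{1+12p(1-p)} > 1 > 1-2p$. Second, the two-copy error satisfies $\frac{1}{2}\bigl(1-\sqrt{1-f^2}\bigr) \leq e_3^{\mathrm{LDPC}}(p)$; I expect to verify this by squaring, and a weight-$2$ relation should plausibly beat a weight-$3$ relation. The lemma then holds in every case, uniformly in $i$.
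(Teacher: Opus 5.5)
Your reduction is the same as the paper's. You invoke \cref{fact:psc-bitwise-optimal}, restrict to the support of one check through $i$, and dispose of the degenerate cases: a constant bit, or a weight-two relation with error $\frac12(1-\sqrt{1-D^2})$, where $D=4p(1-p)$. In the generic case you reduce to the equal-prior Helstrom error of $\rho_0=\frac12(\sigma_{000}+\sigma_{011})$ versus $\rho_1=\frac12(\sigma_{101}+\sigma_{110})$, with $\sigma_{abc}$ the three-qubit $\PSC(p)$ outputs. However, the step that carries the content of the lemma, namely that this error equals $e_3^{\mathrm{LDPC}}(p)$, is only promised. The structure you plan to exploit is also not there. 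From your own formulas, $\rho_0-\rho_1=\frac14\bigl[z^2(I+xX)\otimes Z\otimes Z+z\,Z\otimes(I+xX)\otimes(I+xX)\bigr]$, which contains two-$Z$ terms. Moreover, $X^{\otimes 3}$ does not commute with this difference: it maps $\rho_0$ to $\frac12(\sigma_{111}+\sigma_{100})$, which lies outside the even-weight code. The actual symmetries are translations by even-weight words. $I\otimes X\otimes X$ and the $j,k$ swap commute with $\rho_0-\rho_1$, while $X\otimes I\otimes X$ negates it. That negation, together with rank at most $4$, is what your trace-moment fallback would need to pin down the spectrum. The paper avoids all of this with a Gram-matrix trick. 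Let $V$ be the $8\times 4$ matrix of the four pure vectors and $\Lambda=\operatorname{diag}(1,1,-1,-1)$. Then $\rho_0-\rho_1=\frac12V\Lambda V^\dagger$, whose nonzero eigenvalues are those of $\frac12\Lambda V^\dagger V$. Since all pairwise distances are two, $V^\dagger V=(1-D)I_4+D\one\one^\top$. The eigenvalues are $\pm\frac12(1-D)$ and $\pm\frac12\sqrt{(1-D)(1+3D)}$, so $\norm{\rho_0-\rho_1}_1=1-D+\sqrt{(1-D)(1+3D)}$ and the Helstrom error is exactly $e_3^{\mathrm{LDPC}}(p)$.

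Your verification of $e_3^{\mathrm{LDPC}}(p)<p$ has a sign slip. We have $1+4p(1-p)-4p=1-4p^2$, not $(1-2p)^2$. So the needed inequality is $(1-2p)\sqrt{1+12p(1-p)}>(1-2p)(1+2p)$, that is, $\sqrt{1+12p(1-p)}>1+2p$. This squares to $8p(1-2p)>0$, which holds on $(0,\frac12)$. The bound $\sqrt{1+12p(1-p)}>1-2p$ that you prove is strictly weaker and does not imply the claim. Finally, you leave the comparison of the two-copy error with $e_3^{\mathrm{LDPC}}(p)$ as ``plausible.'' That comparison is equivalent to $1-D+\sqrt{(1-D)(1+3D)}\leq 2\sqrt{1-D^2}$. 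Dividing by $\sqrt{1-D}$ gives $\sqrt{1-D}+\sqrt{1+3D}\leq 2\sqrt{1+D}$, which is just concavity of the square root; the paper reaches the same inequality by squaring twice.
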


\begin{proof}
Fix $i \in \{1, \ldots , n\}$. Because $\cC^\perp$ is generated by checks of weight at most $3$, some such generating check contains $i$; otherwise, the weight-one vector supported at $i$ would lie in $\cC$, contrary to the minimum-distance assumption. Choose a minimum-weight $h \in \cC^\perp$ with $h_i = 1$. If $\wt(h) = 1$, the check fixes $\bc_i = 0$, so the bitwise error is zero. Suppose henceforth that $\wt(h) \in \{2,3\}$. The minimality of $h$ implies that the weight-one vector supported at $i$ is not in $\cC^\perp$; hence $c \mapsto c_i$ is a nonzero linear functional on $\cC$, and $\bc_i$ is uniform.

By \cref{fact:psc-bitwise-optimal}, the error of any decoder for $\bc_i$ upper-bounds $p_{\mathrm{e},i}$. Suppose first that $h$ has support $\{i,j,k\}$. Minimality implies that the projection of $\cC$ onto these coordinates is the full even-parity code $\{000,011,101,110\}$. Indeed, any additional linear relation on the projection, or its sum with $h$, would give a check of weight at most two containing $i$. Since $\bc$ is uniform on $\cC$, its projection is uniform as well. Thus, conditioned on $\bc_i = b$, the three outputs have states
\begin{equation*}
\omega_0
\coloneqq
\frac{\sigma_{000}^{\PSC(p)}+\sigma_{011}^{\PSC(p)}}{2},
\qquad
\omega_1
\coloneqq
\frac{\sigma_{101}^{\PSC(p)}+\sigma_{110}^{\PSC(p)}}{2}.
\end{equation*}
We compute the error of their equal-prior Helstrom measurement. Let $D \coloneqq 4p(1-p)$ and let $V$ be the matrix whose columns are the four pure vectors underlying the states above, ordered as $000,011,101,110$. Any two distinct words in this list have Hamming distance two, so
\begin{equation*}
V^\dagger V = (1-D)I_4+D\one\one^\top.
\end{equation*}
Writing $\Lambda \coloneqq \operatorname{diag}(1,1,-1,-1)$ gives $\omega_0-\omega_1 = \frac{1}{2}V\Lambda V^\dagger$. The nonzero eigenvalues of this difference are the eigenvalues of $\frac{1}{2}\Lambda V^\dagger V$; its characteristic polynomial is
\begin{equation*}
\left(\lambda^2-\frac{(1-D)^2}{4}\right)
\left(\lambda^2-\frac{(1-D)(1+3D)}{4}\right).
\end{equation*}
Consequently,
\begin{equation*}
\norm{\omega_0-\omega_1}_1
=1-D+\sqrt{(1-D)(1+3D)},
\end{equation*}
and the Helstrom formula~\cite[Theorem 3.2]{Wri24} gives
\begin{equation*}
\frac{1}{2}-\frac{1}{4}\norm{\omega_0-\omega_1}_1
=e_3^{\mathrm{LDPC}}(p).
\end{equation*}

If $h$ has support $\{i,j\}$, its local projection is $\{00,11\}$ and the corresponding Helstrom measurement distinguishes the two pure states $\sigma_0^{\PSC(p)} \otimes \sigma_0^{\PSC(p)}$ and $\sigma_1^{\PSC(p)} \otimes \sigma_1^{\PSC(p)}$. Its error is
\begin{equation*}
\frac{1}{2}\left(1-\sqrt{1-16p^2(1-p)^2}\right)
\leq e_3^{\mathrm{LDPC}}(p),
\end{equation*}
where the inequality is equivalent, after two squarings, to
\begin{equation*}
2\sqrt{1+D}
\geq
\sqrt{1-D}+\sqrt{1+3D}.
\end{equation*}
The weight-one case has zero error. Therefore, in every case,
\begin{equation*}
p_{\mathrm{e},i}
\leq e_3^{\mathrm{LDPC}}(p).
\end{equation*}
Finally, $e_3^{\mathrm{LDPC}}(p) < p$ is equivalent to
\begin{equation*}
1-2p+\sqrt{1+12p(1-p)} > 2,
\end{equation*}
whose square reduces to $8p(1-2p) > 0$.\qedhere
\end{proof}

\begin{remark}[Beyond one check]
\label{rem:local-decoder-optimality}
The decoder in \cref{lem:psc-local-check} is optimal among measurements restricted to the outputs in the chosen check. A joint measurement on the outputs covered by several checks containing $i$ may nevertheless decrease the bit error further.
\end{remark}

The associated PSC rate curve has a closed form. Indeed, in the coordinate $D = 4p(1-p)$, the function in \eqref{eq:ldpc-local-error} becomes
\begin{equation*}
\frac{1+D-\sqrt{(1-D)(1+3D)}}{4},
\end{equation*}
whose derivative with respect to $D$ is
\begin{equation*}
\frac{1}{4}\left(1-\frac{1-3D}{\sqrt{(1-D)(1+3D)}}\right)>0.
\end{equation*}
Thus, it increases strictly from zero to $\frac{1}{2}$ as $D$ increases from zero to one. Solving $e_3^{\mathrm{LDPC}}(p) = \delta$ gives
\begin{equation*}
4p(1-p) = \delta+\sqrt{\delta(2-3\delta)}.
\end{equation*}
Therefore,
\begin{equation}
\label{eq:psc-ldpc-rate}
U_3^{\PSC}(\delta)
\coloneqq
\inf_{\substack{0 < p < \frac{1}{2}\\ e_3^{\mathrm{LDPC}}(p) < \delta}}
h\left(\frac{1}{2}-\sqrt{p(1-p)}\right)
=
h\left(\frac{1-\sqrt{\delta+\sqrt{\delta(2-3\delta)}}}{2}\right).
\end{equation}
Because $e_3^{\mathrm{LDPC}}(p) < p$, this constraint admits parameters $p$ beyond $\delta$, and the PSC Holevo information there is strictly smaller than its value at $p = \delta$; \cref{thm:psc-ldpc-improvement} records the resulting strict improvement over the first MRRW expression. Before proving the rate bound, we compare this curve analytically with \eqref{eq:sy-ldpc-base}. The estimates below are not fully optimized. Instead, they are chosen for ease of exposition. \cref{fig:psc-ldpc-curves} displays $U_3^{\PSC}$ against both benchmarks.

\begin{lemma}[Shangguan--Yang interval separation]
\label{lem:psc-sy-interval}
The pointwise inequality
\begin{equation*}
U_3^{\PSC}(u)
< B_3^{\mathrm{SY}}(u)
\end{equation*}
holds for $\frac{1}{6} \leq u < \frac{1}{2}$.
\end{lemma}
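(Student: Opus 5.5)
The plan is to reduce both sides to explicit one-variable functions of $u$. On the compact part of the interval I would certify the inequality with a monotone grid argument, and near $u=\frac{1}{2}$, where both sides vanish, I would use elementary entropy bounds.

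\textbf{Reduction.} First I would check that the second branch of \eqref{eq:sy-ldpc-base} is in force on all of $[\frac{1}{6},\frac{1}{2})$.
\begin{itemize}
\item Since $r_u\geq\frac{1}{4}$ iff $u(1-u)\leq\frac{1}{16}$ iff $u\leq\frac{1}{2}-\frac{\sqrt{3}}{4}\approx 0.067$, we have $0<r_u<\frac{1}{4}$ there.
\item Hence $B_3^{\mathrm{SY}}(u)=-r_u\log r_u-\frac{1-3r_u}{3}\log(1-3r_u)$.
\end{itemize}
On the other side, write $D_u\coloneqq u+\sqrt{u(2-3u)}$ and $x_u\coloneqq(1-\sqrt{D_u})/2$, so that $U_3^{\PSC}(u)=h(x_u)$ by \eqref{eq:psc-ldpc-rate}.

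Next I would record monotonicity.
\begin{itemize}
\item $D_u$ is increasing in $u$ on $[\frac{1}{6},\frac{1}{2}]$: its derivative is $1+\frac{1-3u}{\sqrt{u(2-3u)}}$, and at $u=\frac{1}{2}$ this equals $1-\frac{1}{2}\cdot 2=0$, so it is nonnegative on the interval. Hence $x_u$ is decreasing. Since $x_u\leq\frac{1}{2}$, $U_3^{\PSC}$ is decreasing in $u$.
\item $r_u$ is decreasing in $u$. The map $r\mapsto -r\log r-\frac{1-3r}{3}\log(1-3r)$ is increasing on $(0,\frac{1}{4})$: its derivative is $\log\frac{1-3r}{r}>0$ because $r<\frac{1}{4}$. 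Hence $B_3^{\mathrm{SY}}$ is decreasing in $u$.
\end{itemize}
This monotonicity is what makes a finite grid check rigorous. If $\frac{1}{6}=u_0<u_1<\dots<u_N=u^\ast$ and $U_3^{\PSC}(u_k)<B_3^{\mathrm{SY}}(u_{k+1})$ for every $k$, then for $u\in[u_k,u_{k+1}]$ we get
\[
U_3^{\PSC}(u)\leq U_3^{\PSC}(u_k)<B_3^{\mathrm{SY}}(u_{k+1})\leq B_3^{\mathrm{SY}}(u).
\]
At $u=\frac{1}{6}$ the margin is comfortable: $x_{1/6}\approx 0.0918$ and $h(x_{1/6})\approx 0.443$, while $r_{1/6}\approx 0.127$ and $B_3^{\mathrm{SY}}(\frac{1}{6})\approx 0.52$. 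So a moderate grid with rigorous interval evaluation should cover $[\frac{1}{6},u^\ast]$ for some explicit $u^\ast$ close to $\frac{1}{2}$, for instance $u^\ast=0.49$.

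\textbf{Endpoint regime (the main obstacle).} Near $u=\frac{1}{2}$ both sides tend to $0$, so the grid cannot close the interval. Put $u=\frac{1}{2}-\eps$.
\begin{itemize}
\item Expanding gives $r_u=\frac{1}{2}-\sqrt{\frac{1}{4}-\eps^2}=\eps^2+O(\eps^4)$.
\item Also $D_u=1-\eps^2+O(\eps^3)$, hence $x_u=\eps^2/4+O(\eps^3)$.
\end{itemize}
So $x_u$ is asymptotically about a quarter of $r_u$, and this factor is the gap I would exploit. Rather than rely on expansions, I would prove explicit inequalities valid on $(u^\ast,\frac{1}{2})$.
\begin{itemize}
\item A concrete comparison $x_u\leq c\,r_u$ with, say, $c=\frac{1}{2}$, obtained by bounding $1-\sqrt{D_u}$ and $r_u$ with elementary algebraic inequalities.
\item The upper bound $h(x)\leq x\log\frac{\mathrm{e}}{x}$.
\item The lower bound $B_3^{\mathrm{SY}}(u)\geq r_u\log\frac{1}{r_u}$, since the dropped term $-\frac{1-3r}{3}\log(1-3r)$ is nonnegative.
\end{itemize}
Since $x\mapsto x\log\frac{\mathrm{e}}{x}$ is increasing for $x<1$, these give $U_3^{\PSC}(u)\leq c\,r_u\log\frac{\mathrm{e}}{c\,r_u}$. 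It then suffices that
\[
c\log\frac{\mathrm{e}}{c\,r_u}<\log\frac{1}{r_u},\qquad\text{i.e.}\qquad (1-c)\log\frac{1}{r_u}>c\log\frac{\mathrm{e}}{c}.
\]
This holds once $r_u$ is below an explicit threshold. With $c=\frac{1}{2}$ the threshold is $r_u<\frac{1}{2\mathrm{e}}$, about $0.18$. The delicate part is matching $u^\ast$ so that this threshold and the proven ratio bound $x_u\leq c\,r_u$ both hold on $(u^\ast,\frac{1}{2})$. If $c=\frac{1}{2}$ is too crude near $u^\ast$, I would instead push the grid closer to $\frac{1}{2}$ and use the sharper asymptotic ratio $c=\frac{1}{4}+o(1)$ on a shorter tail.
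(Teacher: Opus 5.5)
Your monotone grid argument on a compact range $[\frac16,u^\ast]$ is sound in principle. Both curves are indeed decreasing. For $D_u$, the sign of the derivative on $(\frac13,\frac12]$ follows from $(3u-1)^2\leq u(2-3u)\iff 12u^2-8u+1\leq 0$, which holds exactly on $[\frac16,\frac12]$; its vanishing at $u=\frac12$ alone does not give this. The endpoint step, however, fails, and the statement needs it.

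Your expansion of $D_u$ is wrong. With $u=\frac12-\eps$ one has $u(2-3u)=\frac14+\eps-3\eps^2$, hence $D_u=1-4\eps^2+8\eps^3+O(\eps^4)$ and $x_u=\eps^2-2\eps^3+O(\eps^4)$, while $r_u=\eps^2+O(\eps^4)$. So $x_u/r_u\to 1$, not $\frac14$. Numerically the ratio rises from about $0.72$ at $u=\frac16$ (where $x_u\approx 0.092$ and $r_u\approx 0.127$) to about $0.97$ at $u=0.49$. Thus $x_u\leq\frac12 r_u$ holds at no point of the interval, and the fallback ratio $c=\frac14+o(1)$ does not exist.

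Correcting $c$ does not rescue the chain either. With $c=x_u/r_u=1-2\eps+O(\eps^2)$, your sufficient condition $(1-c)\log\frac{1}{r_u}>c\log\frac{\mathrm{e}}{c}$ reads roughly $4\eps\log\frac{1}{\eps}>\log\mathrm{e}$, which is false for small $\eps$. The cause is the bound $B_3^{\mathrm{SY}}(u)\geq r_u\log\frac{1}{r_u}$. It discards $-\frac{1-3r}{3}\log(1-3r)=r\log\mathrm{e}+O(r^2)$, a term of order $\eps^2$. The true margin $B_3^{\mathrm{SY}}(u)-U_3^{\PSC}(u)$ is only about $(r_u-x_u)\log\frac{1}{r_u}\approx 4\eps^3\log\frac{1}{\eps}$. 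Any tail argument must therefore keep the first-order terms on both sides.

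The missing idea, which is the paper's route, is to compare both curves with $h(r_u)$. First, $h(r)-B_3^{\mathrm{SY}}(u)=\int_0^r\log\frac{1-t}{1-3t}\,dt=O(r^2)$; it is at most $\frac74(\log\mathrm{e})r^2$ once $r<\frac17$. Second, $h(r)-h(x)\geq(r-x)\log\frac{1-r}{r}$. It then suffices to prove $r_u-x_u>r_u^2$, i.e. $x_u<r_u(1-r_u)=\frac{z^2}{4}$ with $z=1-2u$. After two squarings this becomes $z^3\bigl(4-5z-z^2+2z^3-\frac{z^5}{4}\bigr)>0$ on $(0,\frac23]$. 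Since $\log 6>\frac74\log\mathrm{e}$, this closes the inequality uniformly on $[\frac16,\frac12)$, with no grid at all.
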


\begin{proof}
For $u \in [\frac{1}{6},\frac{1}{2})$, set
\begin{equation*}
q
\coloneqq
\frac{1-\sqrt{u+\sqrt{u(2-3u)}}}{2},
\qquad
r
\coloneqq
\frac{1}{2}-\sqrt{u(1-u)}.
\end{equation*}
By \eqref{eq:psc-ldpc-rate}, $U_3^{\PSC}(u) = h(q)$, while $r$ is the parameter $r_\delta$ in \eqref{eq:sy-ldpc-base}. We compare both curves through $h(r)$.

The key estimate is
\begin{equation}
\label{eq:psc-w3-gain}
r-q > r^2.
\end{equation}
To prove it, put $z \coloneqq 1-2u \in (0,\frac{2}{3}]$. Then $r(1-r) = \frac{z^2}{4}$, so \eqref{eq:psc-w3-gain} is equivalent to $q < \frac{z^2}{4}$. After substituting $u = \frac{1-z}{2}$ and squaring twice, this inequality reduces to
\begin{equation*}
1+2z-3z^2
>
\left(1+z-2z^2+\frac{z^4}{2}\right)^2.
\end{equation*}
Both squarings preserve the inequality because all quantities involved are positive. Finally,
\begin{align*}
&1+2z-3z^2
-\left(1+z-2z^2+\frac{z^4}{2}\right)^2\\
&\qquad
=z^3\left(4-5z-z^2+2z^3-\frac{z^5}{4}\right).
\end{align*}
For $0 < z \leq \frac{2}{3}$, the factor in parentheses satisfies
\begin{equation*}
4-5z-z^2+2z^3-\frac{z^5}{4}
\geq
4-\frac{10}{3}-\frac{4}{9}-\frac{8}{243}
= \frac{46}{243} > 0.
\end{equation*}
This proves \eqref{eq:psc-w3-gain}.

The assumption $u \geq \frac{1}{6}$ gives
\begin{equation*}
r
\leq
\frac{3-\sqrt{5}}{6}
< \frac{1}{7}.
\end{equation*}
In particular, $B_3^{\mathrm{SY}}(u)$ lies on the entropy branch of \eqref{eq:sy-ldpc-base}. Since $h'(t) = \log\frac{1-t}{t}$ decreases on $(0,\frac{1}{2})$, \eqref{eq:psc-w3-gain} yields
\begin{equation}
\label{eq:hrq}
h(r)-h(q)
=\int_q^r h'(t)\,dt
\geq(r-q)\log\frac{1-r}{r}
>r^2\log{6}.
\end{equation}
On the other hand,
\begin{equation}
\label{eq:hSY}
h(r)-B_3^{\mathrm{SY}}(u)
=\int_0^r\log\frac{1-t}{1-3t}\,dt
\leq(\log{\mathrm e})\int_0^r\frac{2t}{1-3t}\,dt
<\frac{7}{4}(\log{\mathrm e})\,r^2.
\end{equation}
Here we used $\log(1+x) \leq (\log{\mathrm e})x$ and $1-3t > \frac{4}{7}$ on the integration interval. Since $\log{6} > \frac{7}{4}\log{\mathrm e}$, subtracting \eqref{eq:hSY} from \eqref{eq:hrq} gives $U_3^{\PSC}(u) < B_3^{\mathrm{SY}}(u)$.
\end{proof}

We now proceed to prove~\cref{thm:psc-ldpc-improvement}.

\begin{proof}[Proof of \cref{thm:psc-ldpc-improvement}]
Fix a strictly feasible $p$ in \eqref{eq:psc-ldpc-rate} and a code $\cC$ counted by $A_{2,3}(n,\ceil{\delta n})$, with $n$ sufficiently large that $\ceil{\delta n} > 1$. By \cref{lem:psc-local-check}, every coordinate error of the block PGM satisfies $p_{\mathrm{e},i} \leq e_3^{\mathrm{LDPC}}(p)$, and summing these $n$ coordinate bounds gives
\begin{equation*}
\Eb[d(\bchat,\bc)]
\leq n e_3^{\mathrm{LDPC}}(p).
\end{equation*}
Since $e_3^{\mathrm{LDPC}}(p) < \delta$, the block-success step (\cref{cor:block-error}) and the fixed-channel strong converse (\cref{thm:cq-strong-converse}), combined exactly as in the proof of \cref{thm:main-result}, give
\begin{equation*}
\frac{1}{n}\log{\abs{\cC}}
\leq
h\left(\frac{1}{2}-\sqrt{p(1-p)}\right)
+O\left(\frac{1}{\sqrt{n}}\right),
\end{equation*}
where the implicit constant depends on the fixed $p,\delta$ but not on $n$ or $\cC$.
Maximizing over $\cC$, taking $n \to \infty$, and then taking the infimum over strictly feasible $p$ proves $R_3(\delta) \leq U_3^{\PSC}(\delta)$.

By \cref{lem:psc-local-check}, $e_3^{\mathrm{LDPC}}(\delta) < \delta$. Continuity makes some $p > \delta$ feasible in \eqref{eq:psc-ldpc-rate}. The PSC Holevo information decreases strictly on $(0,\frac{1}{2})$, so
\begin{equation*}
U_3^{\PSC}(\delta)
< h\left(\frac{1}{2}-\sqrt{\delta(1-\delta)}\right)
= \MRRW(\delta).
\end{equation*}
The comparison with $B_3^{\mathrm{SY}}$ on $\frac{1}{6} \leq \delta < \frac{1}{2}$ is contained in \cref{lem:psc-sy-interval}.
\end{proof}

The comparison with the explicit Shangguan--Yang curve therefore holds throughout $\frac{1}{6} \leq \delta < \frac{1}{2}$. Since this curve is their $w = 3$ bound throughout $\frac{2}{5} \leq \delta < \frac{1}{2}$, the theorem yields an improvement on the latter interval.

\begin{figure}[!htbp]
\centering
\begin{tikzpicture}
\begin{axis}[
    name=ldpcaxis,
    declare function={
        hb(\x)=-(max(\x,0.000000000001)*ln(max(\x,0.000000000001))
        +max(1-\x,0.000000000001)*ln(max(1-\x,0.000000000001)))/ln(2);
        rr(\x)=0.5-sqrt(max(0,\x*(1-\x)));
        sy(\x)=(rr(\x)>=0.25)*(2/3)
        +(rr(\x)<0.25)*(-(rr(\x)*ln(max(rr(\x),0.000000000001))
        +((1-3*rr(\x))/3)*ln(max(1-3*rr(\x),0.000000000001)))/ln(2));
        ldpc(\x)=hb((1-sqrt(max(0,\x+sqrt(max(0,\x*(2-3*\x))))))/2);
    },
    width=0.94\textwidth,
    height=0.50\textwidth,
    xmin=0,
    xmax=0.5,
    ymin=0,
    ymax=1.02,
    xtick={0,0.1,0.2,0.3,0.4,0.5},
    ytick={0,0.2,0.4,0.6,0.8,1},
    xlabel={Relative distance $\delta$},
    ylabel={Rate upper bound (bits)},
    grid=major,
    legend style={
        at={(0.015,0.015)},
        anchor=south west,
        draw=black!35,
        fill=white,
        fill opacity=0.94,
        text opacity=1,
        font=\scriptsize,
        legend columns=1
    },
    legend cell align={left},
    tick align=outside,
    clip=false
]
\addplot[thick,dashed,magenta,domain=0:0.5,samples=201]
    {hb(0.5-sqrt(x*(1-x)))};
\addlegendentry{First MRRW}
\addplot[thick,dashdotted,green!45!black,domain=0:0.5,samples=201]
    {sy(x)};
\addlegendentry{Shangguan--Yang $B_3^{\mathrm{SY}}$}
\addplot[very thick,blue,domain=0:0.5,samples=301]
    {ldpc(x)};
\addlegendentry{One-check PSC $U_3^{\PSC}$}
\coordinate (ldpcboxnw) at (axis cs:0.2875,0.280);
\coordinate (ldpcboxne) at (axis cs:0.3125,0.280);
\draw[thin,black!60] (axis cs:0.2875,0.186) rectangle (axis cs:0.3125,0.280);
\end{axis}
\begin{axis}[
    name=ldpcinset,
    at={(ldpcaxis.south west)},
    anchor=south west,
    xshift=8.70cm,
    yshift=3.25cm,
    declare function={
        hb(\x)=-(max(\x,0.000000000001)*ln(max(\x,0.000000000001))
        +max(1-\x,0.000000000001)*ln(max(1-\x,0.000000000001)))/ln(2);
        rr(\x)=0.5-sqrt(max(0,\x*(1-\x)));
        sy(\x)=(rr(\x)>=0.25)*(2/3)
        +(rr(\x)<0.25)*(-(rr(\x)*ln(max(rr(\x),0.000000000001))
        +((1-3*rr(\x))/3)*ln(max(1-3*rr(\x),0.000000000001)))/ln(2));
        ldpc(\x)=hb((1-sqrt(max(0,\x+sqrt(max(0,\x*(2-3*\x))))))/2);
    },
    width=0.36\textwidth,
    height=0.20\textwidth,
    xmin=0.2875,
    xmax=0.3125,
    ymin=0.186,
    ymax=0.280,
    xtick={0.290,0.300,0.310},
    xticklabels={$0.290$,$0.300$,$0.310$},
    ytick={0.200,0.220,0.240,0.260},
    yticklabels={$0.200$,$0.220$,$0.240$,$0.260$},
    scaled ticks=false,
    tick label style={font=\tiny},
    title={Near $\delta = \frac{3}{10}$},
    title style={font=\scriptsize,yshift=-1pt},
    grid=major,
    grid style={black!12},
    axis background/.style={fill=white},
    axis line style={black!55},
    tick align=outside,
    clip=true
]
\addplot[thick,dashed,magenta,domain=0.2875:0.3125,samples=101]
    {hb(0.5-sqrt(x*(1-x)))};
\addplot[thick,dashdotted,green!45!black,domain=0.2875:0.3125,samples=101]
    {sy(x)};
\addplot[very thick,blue,domain=0.2875:0.3125,samples=201]
    {ldpc(x)};
\addplot[thin,densely dotted,black!70] coordinates {
    (0.300,0.186)
    (0.300,0.280)
};
\end{axis}
\draw[thin,black!45] (ldpcboxnw) -- (ldpcinset.south west);
\draw[thin,black!45] (ldpcboxne) -- (ldpcinset.south east);
\end{tikzpicture}
\caption{Rate upper bounds for binary linear codes whose duals are generated by checks of weight at most $3$. The first MRRW curve~\cite{MRRW77} applies to every binary code, whereas $B_3^{\mathrm{SY}}$ and $U_3^{\PSC}$ apply to this sparse-check class. By \cref{thm:psc-ldpc-improvement,lem:psc-sy-interval}, $U_3^{\PSC}$ lies strictly below the first MRRW bound for $0 < \delta < \frac{1}{2}$ and below $B_3^{\mathrm{SY}}$ for $\frac{1}{6} \leq \delta < \frac{1}{2}$; numerically, the latter separation begins near $\delta \approx 0.0705$. At $\delta = \frac{3}{10}$, we have $U_3^{\PSC} \approx 0.2065$, $B_3^{\mathrm{SY}} \approx 0.2476$, and first MRRW $\approx 0.2502$.}
\label{fig:psc-ldpc-curves}
\end{figure}
\FloatBarrier

\bibliographystyle{alpha}
\bibliography{cq_channels_ref}

\appendix
\crefalias{subsection}{appendix}

\section{Proofs for the preliminaries}
\label{app:preliminary-proofs}

In this appendix, for the sake of completeness, we present the non-immediate proofs deferred from \cref{sec:preliminaries}, particularly the PGM error data processing inequality (\cref{lem:pgm-data-processing}) and the cq channel coding strong converse (\cref{thm:cq-strong-converse}), as those are the two crucial inequalities needed in the proof of the pretty good criterion (\cref{thm:main-result}).

To this end, we recall some necessary quantum notions and tools. Building on the notion of a density matrix, a $d \times d$ matrix is said to be a \emph{subnormalized state} if it is a positive semidefinite matrix whose trace is at most $1$. For any finite-dimensional Hilbert space $V$, let $\cM(V)$ denote the vector space of all Hermitian matrices over $V$.

Next, we recall some standard matrix analysis tools needed in the proof of the cq channel coding strong converse in~\cref{appsubsec:proof-fixed-channel-strong-converse}. We denote the trace norm of a $d \times d$ complex-valued matrix $X$ by $\norm{X}_1 \coloneqq \Tr\sqrt{X^\dagger X}$, the Frobenius norm by $\norm{X}_2 \coloneqq \sqrt{\Tr(X^\dagger X)}$, and the spectral norm by $\norm{X}_\infty \coloneqq \sqrt{\lambda_{\max}(X^\dagger X)}$. Throughout, matrix inequalities $A \preceq B$ between Hermitian matrices mean that $B-A$ is positive semidefinite. Note that for diagonal density matrices, i.e., classical probability distributions, the quantity $\frac{1}{2}\norm{\tau-\tau'}_1$ is the total-variation distance between the probability distributions $\tau$ and $\tau'$.

We now recall a standard inequality that is needed for the proof of the cq channel coding strong converse (\cref{appsubsec:proof-fixed-channel-strong-converse}), which is the gentle measurement lemma, stated in~\cite[Lemma~9.4.2 and Exercise~9.4.1]{Wil21}. The gentle measurement lemma controls how much a test that passes with high probability disturbs such a state. Classically, conditioning on an event of probability at least $1-\eps$ moves a distribution by $O(\eps)$ in total variation. In the quantum case, that estimate becomes $O(\sqrt{\eps})$.

\begin{lemma}[Gentle measurement lemma]
\label{lem:gentle-measurement}
If $A$ is a subnormalized $d \times d$ state and $P$ is a projection $d \times d$ matrix, then
\begin{equation*}
\norm{A-PAP}_1
\leq 2\sqrt{\Tr[A(I-P)]}.
\end{equation*}
\end{lemma}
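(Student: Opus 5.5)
The plan is to reduce to the pure-state case via the polar (Winter-style) decomposition and then use a Cauchy--Schwarz estimate. Since both sides are convex-compatible in the right way, we first treat the case where $A$ is a general positive semidefinite matrix with $\Tr A \leq 1$ directly. We write $A = \sqrt{A}\sqrt{A}$ and decompose
\begin{equation*}
A - PAP = (I-P)A + PA(I-P).
\end{equation*}
This identity follows by expanding $A = (P + (I-P))A$ and then $PA = PAP + PA(I-P)$. By the triangle inequality it then suffices to bound each of $\norm{(I-P)A}_1$ and $\norm{PA(I-P)}_1$ by $\sqrt{\Tr[A(I-P)]}$.

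For each term we use the Hölder/Cauchy--Schwarz inequality for Schatten norms, $\norm{XY}_1 \leq \norm{X}_2\norm{Y}_2$, applied to the factorization through $\sqrt{A}$. For the first term,
\begin{equation*}
\norm{(I-P)\sqrt{A}\cdot\sqrt{A}}_1 \leq \norm{(I-P)\sqrt{A}}_2\,\norm{\sqrt{A}}_2 = \sqrt{\Tr[(I-P)A(I-P)]}\sqrt{\Tr A}.
\end{equation*}
By cyclicity and $(I-P)^2 = I-P$, this equals $\sqrt{\Tr[A(I-P)]}\sqrt{\Tr A}$. Similarly, for the second term,
\begin{equation*}
\norm{P\sqrt{A}\cdot\sqrt{A}(I-P)}_1 \leq \norm{P\sqrt{A}}_2\,\norm{\sqrt{A}(I-P)}_2 = \sqrt{\Tr[PA]}\sqrt{\Tr[A(I-P)]}.
\end{equation*}
Since $\Tr A \leq 1$ and $\Tr[PA] \leq \Tr A \leq 1$, summing the two bounds gives $\norm{A-PAP}_1 \leq 2\sqrt{\Tr[A(I-P)]}$.

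The only step needing care is justifying $\norm{XY}_1 \leq \norm{X}_2\norm{Y}_2$. We take it as the standard Hölder inequality for Schatten norms, or derive it from the variational formula $\norm{Z}_1 = \max_{U \text{ unitary}} \abs{\Tr(UZ)}$ together with Cauchy--Schwarz for the Hilbert--Schmidt inner product: $\abs{\Tr(UXY)} \leq \norm{X^\dagger U^\dagger}_2\norm{Y}_2 = \norm{X}_2\norm{Y}_2$. No separate treatment of mixed versus pure $A$ is required, since the argument above never used purity. Subnormalization enters only through the factors $\sqrt{\Tr A}$ and $\sqrt{\Tr PA}$ being at most $1$.
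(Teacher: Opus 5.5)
Your proof is correct and follows the paper's argument essentially line for line: the same decomposition $A-PAP=(I-P)A+PA(I-P)$, the triangle inequality, the Schatten Cauchy--Schwarz bound $\norm{XY}_1\leq\norm{X}_2\norm{Y}_2$ applied through the factorization $A=\sqrt{A}\sqrt{A}$, and finally $\Tr A\leq 1$ and $\Tr(PA)\leq 1$. Your opening remark about reducing to pure states via a polar decomposition is never used in the argument and can simply be deleted.
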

\begin{proof}
Since $I-P$ and $P$ are both projection matrices, by the triangle inequality, the matrix Cauchy--Schwarz inequality~\cite[Corollary IV.2.6]{Bha97}, and the inequalities $\Tr(A) \leq 1$ and $\max\{\Tr(PA),\Tr((I-P)A)\} \leq \Tr(A)$ (as $A$ is a subnormalized state), we find that
\begin{align*}
\norm{A-PAP}_1 &= \norm{(I-P)A+PA(I-P)}_1\\
&\leq \norm{(I-P)A}_1 +\norm{PA(I-P)}_1 \\
&\leq \norm{(I-P)A^{1/2}}_2 \cdot \norm{A^{1/2}}_2 + \norm{PA^{1/2}}_2 \cdot \norm{A^{1/2}(I-P)}_2\\
&= \sqrt{\Tr[A(I-P)]} \cdot \sqrt{\Tr(A)} + \sqrt{\Tr(PA)} \cdot \sqrt{\Tr[A(I-P)]} \\
&\leq \sqrt{\Tr[A(I-P)]} \cdot 1 + 1 \cdot \sqrt{\Tr[A(I-P)]} \\
&= 2\sqrt{\Tr[A(I-P)]}. \qedhere
\end{align*}
\end{proof}

\subsection{Petz recovery and its structural properties}
\label{appsubsec:petz-recovery-properties}

In this subsection, we develop the Petz recovery map~\cite{Pet84,Pet86,Pet88,Ren22}, a generalization of the PGM, and prove its structural properties (\cref{fact:petz-properties}), en route to proving the PGM data-processing lemma (\cref{lem:pgm-data-processing}) in~\cref{appsubsec:proof-pgm-data-processing}. A Petz map reverses an arbitrary quantum channel relative to a chosen reference state. When the channel is a cq preparation channel and the reference is its classical prior, the Petz map reduces to the PGM~\cite{BK02}. This identification explains the connection with the PGM.

As a brief digression, recent work uses Petz maps and their variants to quantify approximate recovery in strengthened data-processing inequalities~\cite{STH16,JRSWW18}, to construct explicit decoders for quantum communication~\cite{BDL16}, to reconstruct holographic entanglement wedges~\cite{CPS20}, and to design quantum algorithms for Petz channels and PGMs~\cite{GLMQW22}.

Returning to the subject matter, we now formally introduce the Petz map.

\begin{definition}[Petz recovery map]
\label{def:petz-recovery}
Let $\Phi : \C^{d_A \times d_A} \to \C^{d_B \times d_B}$ be a quantum channel, and let $\pi \in \Dens(\C^{d_A})$ be a reference state. Let $\nu \coloneqq \Phi(\pi) \in \Dens(\C^{d_B})$. The \textbf{Petz recovery map} of $\Phi$ relative to $\pi$ is the linear map $\Phi_\pi^\sharp : \C^{d_B \times d_B} \to \C^{d_A \times d_A}$ defined as
\begin{equation*}
\Phi_\pi^\sharp(Y)
\coloneqq
\pi^{1/2}\Phi^*\left(\nu^{-1/2}Y\nu^{-1/2}\right)\pi^{1/2},
\end{equation*}
where $\Phi^*$ is determined by the identity $\Tr[X^\dagger\Phi^*(Y)] = \Tr[\Phi(X)^\dagger Y]$, and inverse powers act on the indicated supports. Moreover, for $A_1,A_2 \in \cM(\supp(\pi))$, define the weighted inner product
\begin{equation*}
\inner{A_1}{A_2}_\pi
\coloneqq
\Tr\left(A_1^\dagger\pi^{-1/2}A_2\pi^{-1/2}\right).
\end{equation*}
Since $\inner{A}{A}_\pi = \norm{\pi^{-1/4}A\pi^{-1/4}}_2^2$, this weighted form is therefore positive definite over $\cM(\supp(\pi))$.
\end{definition}

\begin{remark}[Petz in the classical case]
\label{rem:classical-petz-recovery}
Let $W(y \mid x)$ be any classical channel with input and© output alphabets $\cX$ and $\cY$, let $\pi(x)$ be an input distribution, and let $\nu(y) \coloneqq \sum_{x \in \cX}{\pi(x)W(y \mid x)}$ be the distribution of $\pi$ when passed through $W$. For every output $y$ with $\nu(y) > 0$, the Petz recovery map is the Bayes reverse channel
\begin{equation*}
W_\pi^\sharp(x \mid y)
\coloneqq \frac{\pi(x)W(y \mid x)}{\nu(y)}
= \Prb[\bx = x \mid \by = y].
\end{equation*}
Thus, Petz recovery is the quantum form of reversing a channel by Bayes' rule relative to a fixed prior.
\end{remark}

The proof of PGM data processing in~\cref{appsubsec:proof-pgm-data-processing} uses the following three structural properties of this recovery map.

\begin{fact}[Petz adjointness, functoriality, and contraction]
\label{fact:petz-properties}
Following the notation given in~\cref{def:petz-recovery}, when restricted to $\cM(\supp(\nu))$, the Petz recovery map $\Phi_\pi^\sharp$ is a quantum channel with output density matrices residing in $\cM(\supp(\pi))$. Moreover, the following three properties hold:
\begin{enumerate}
\item\label{item:petz-adjointness} \emph{(Weighted adjointness).} For all $A \in \cM(\supp(\pi))$ and $B \in \cM(\supp(\nu))$, we have the identity
\begin{equation*}
\inner{\Phi(A)}{B}_\nu
= \inner{A}{\Phi_\pi^\sharp(B)}_\pi.
\end{equation*}
\item\label{item:petz-functoriality} \emph{(Contravariant functoriality).} For any quantum channel $\Psi : \C^{d_B \times d_B} \to \C^{d_C \times d_C}$, we have the composition identity
\begin{equation*}
(\Psi\circ\Phi)_\pi^\sharp
= \Phi_\pi^\sharp\circ\Psi_\nu^\sharp.
\end{equation*}
\item\label{item:petz-contraction} \emph{(Petz--Gram contraction).} For all $A_1,A_2 \in \cM(\supp(\pi))$, the quantum channel $K \coloneqq \Phi_\pi^\sharp\circ\Phi$ satisfies
\begin{equation*}
\inner{A_1}{K(A_2)}_\pi
= \inner{\Phi(A_1)}{\Phi(A_2)}_\nu.
\end{equation*}
In particular, for any $A \in \C^{d_A \times d_A}$, we have
\begin{equation}
0 \leq \inner{A}{K(A)}_\pi
\leq \inner{A}{A}_\pi.
\end{equation}
\end{enumerate}
\end{fact}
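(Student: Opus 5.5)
The plan is to verify the channel property and the three identities directly from \cref{def:petz-recovery}, using a Kraus representation $\Phi(X)=\sum_k K_kXK_k^\dagger$, so that $\Phi^*(Y)=\sum_k K_k^\dagger YK_k$.

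\emph{Support and channel property.} First I would record a support fact: if $X$ has $\supp(X)\subseteq\supp(\pi)$, then $\Phi(X)$ is supported in $\supp(\nu)$.
\begin{itemize}
\item For positive $X$ this follows from $X\preceq c\,\pi$, which gives $\Phi(X)\preceq c\,\nu$.
\item For Hermitian $X$, split it into positive and negative parts.
\item The same argument shows that each operator $K_k\pi^{1/2}$ has range inside $\supp(\nu)$.
\end{itemize}
Complete positivity of $\Phi_\pi^\sharp$ is then immediate, since it is a composition of congruences and the completely positive map $\Phi^*$. The output lies in $\cM(\supp(\pi))$ because of the outer $\pi^{1/2}$ conjugation. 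For trace preservation on $Y\in\cM(\supp(\nu))$, I would compute
\[
\Tr\Phi_\pi^\sharp(Y)=\Tr\bigl[\Phi(\pi)\,\nu^{-1/2}Y\nu^{-1/2}\bigr]=\Tr(\Pi_\nu Y)=\Tr(Y).
\]

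\emph{Weighted adjointness.} For $A\in\cM(\supp(\pi))$ and $B\in\cM(\supp(\nu))$, I would write
\[
\inner{\Phi(A)}{B}_\nu=\Tr\bigl[\Phi(A)\nu^{-1/2}B\nu^{-1/2}\bigr]=\Tr\bigl[A\,\Phi^*(\nu^{-1/2}B\nu^{-1/2})\bigr].
\]
Then I would insert $\Pi_\pi=\pi^{-1/2}\pi^{1/2}$ on both sides of $\Phi^*(\cdot)$, which is allowed because $A=\Pi_\pi A\Pi_\pi$. This yields $\Tr[A\pi^{-1/2}\Phi_\pi^\sharp(B)\pi^{-1/2}]=\inner{A}{\Phi_\pi^\sharp(B)}_\pi$, using Hermiticity to drop the dagger.

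\emph{Contravariant functoriality.} Set $\omega\coloneqq\Psi(\nu)=(\Psi\circ\Phi)(\pi)$ and expand
\[
\Phi_\pi^\sharp\bigl(\Psi_\nu^\sharp(Z)\bigr)=\pi^{1/2}\Phi^*\bigl(\Pi_\nu\,\Psi^*(\omega^{-1/2}Z\omega^{-1/2})\,\Pi_\nu\bigr)\pi^{1/2}.
\]
The goal is to drop the two projectors $\Pi_\nu$ and obtain $\pi^{1/2}(\Psi\circ\Phi)^*(\omega^{-1/2}Z\omega^{-1/2})\pi^{1/2}$. This is exactly where the support fact about $K_k\pi^{1/2}$ is needed. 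The identity
\[
\pi^{1/2}K_k^\dagger\,\Pi_\nu\,W\,\Pi_\nu\,K_k\pi^{1/2}=\pi^{1/2}K_k^\dagger WK_k\pi^{1/2}
\]
holds for every $W$, so the projectors disappear. I expect this support bookkeeping, and not any real inequality, to be the only delicate point in the first two items.

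\emph{Petz--Gram contraction.} Apply weighted adjointness with $B=\Phi(A_2)$, which is admissible by the support fact. This gives the identity $\inner{A_1}{K(A_2)}_\pi=\inner{\Phi(A_1)}{\Phi(A_2)}_\nu$. Setting $A_1=A_2=A$ gives $\inner{A}{K(A)}_\pi=\inner{\Phi(A)}{\Phi(A)}_\nu\geq 0$ by positive definiteness.

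The upper bound $\inner{\Phi(A)}{\Phi(A)}_\nu\leq\inner{A}{A}_\pi$ is the main obstacle: it is a monotonicity statement for the quadratic form $(Y,\sigma)\mapsto\Tr(Y^\dagger\sigma^{-1/2}Y\sigma^{-1/2})$. I would prove it via Stinespring, writing $\Phi(X)=\Tr_E(VXV^\dagger)$ with an isometry $V$.
\begin{itemize}
\item Conjugation by $V$ preserves the form exactly, since $(V\pi V^\dagger)^{-1/2}=V\pi^{-1/2}V^\dagger$ on the range of $V$.
\item This reduces the claim to monotonicity under the partial trace $\Tr_E$.
\item Write $\Tr_E(\cdot)\otimes I_E/d_E$ as an average of unitary conjugations $(I\otimes U_j)(\cdot)(I\otimes U_j)^\dagger$, using a unitary $1$-design on $E$.
\item Each unitary conjugation preserves the form.
\item Lieb's (Ando's) joint convexity theorem for $(Y,\sigma)\mapsto\Tr(Y^\dagger\sigma^{-1/2}Y\sigma^{-1/2})$ then shows that averaging cannot increase it.
\item Finally, tensoring with $I_E/d_E$ leaves the form unchanged.
\end{itemize}
Cross terms for non-Hermitian $A$ are handled automatically, since joint convexity is stated for arbitrary $Y$.
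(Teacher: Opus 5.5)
Your proof is correct. The channel property, weighted adjointness, and the Petz--Gram identity follow the paper's computations. Functoriality differs only mildly: you cancel the projectors via $\Pi_\nu K_k\pi^{1/2}=K_k\pi^{1/2}$, whereas the paper applies adjointness twice and invokes uniqueness of the $\inner{\cdot}{\cdot}_\pi$-adjoint. Your version has the merit of stating the support fact ($\Phi$ maps operators on $\supp(\pi)$ into $\cM(\supp(\nu))$), which the paper uses tacitly there and again when it substitutes $B=\Phi(A_2)$.

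The genuine divergence is in the bound $\inner{A}{K(A)}_\pi\le\inner{A}{A}_\pi$. The paper stays elementary. It shows $K=\Phi_\pi^\sharp\circ\Phi$ is positive and self-adjoint for $\inner{\cdot}{\cdot}_\pi$. Since $K$ is itself a channel on $\cM(\supp(\pi))$, its unital trace-dual $K^*$ is an $\norm{\cdot}_\infty$-contraction by a Stinespring (Kadison--Schwarz) argument. Hence the spectrum of $K$ lies in $[0,1]$, and an eigen-expansion finishes.

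You instead prove that $(Y,\sigma)\mapsto\Tr(Y^\dagger\sigma^{-1/2}Y\sigma^{-1/2})$ is monotone under every channel, using Stinespring, an environment twirl, and Lieb--Ando joint convexity. This is a quantum $\chi^2$-type data-processing inequality; for positive $Y$ it is the sandwiched R\'enyi-$2$ one. The cost is importing a deep convexity theorem. The gain is that you never need $K$, and you obtain directly the inequality $\inner{\Phi(A)}{\Phi(A)}_\nu\le\inner{A}{A}_\pi$ that the PGM data-processing lemma consumes. This is essentially the alternative route the paper sketches in its R\'enyi-divergence remark. The paper's route, by contrast, is self-contained and keeps the Petz map central.

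One bookkeeping point: joint convexity is a statement for positive definite $\sigma$, while $V\pi V^\dagger$ is singular. You should apply it to $\sigma+\epsilon I$ and let $\epsilon\to0$. This is valid because $VAV^\dagger$ and each twirled copy live on the supports of the corresponding states.

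Consequently, like the paper's, your argument covers exactly those $A$ supported in $\supp(\pi)$. That is all that holds, despite the statement's ``for any $A\in\C^{d_A\times d_A}$''. For instance, take $\Phi=\Tr$, $\pi=\ketbra{0}{0}$, and $A=I_2$. Then $K(A)=2\pi$, so $\inner{A}{K(A)}_\pi=2>1=\inner{A}{A}_\pi$.
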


\begin{proof}[Proof of \cref{fact:petz-properties}]
For every $B \in \cM(\supp(\nu))$, from~\cref{def:density-matrices}, we have that $B\Pi_\nu = B$. Combining this observation with the cyclicity of the trace and the definition of $\Phi^*$ in~\cref{def:petz-recovery}, we therefore find that
\begin{align*}
\Tr(\Phi_\pi^\sharp(B))
&= \Tr\left[\Phi^*\left(\nu^{-1/2}B\nu^{-1/2}\right)\pi\right] \\
&= \Tr\left[\nu^{-1/2}B\nu^{-1/2}\Phi(\pi)\right] \\
&= \Tr\left[B\nu^{-1/2}\Phi(\pi)\nu^{-1/2}\right] \\
&= \Tr(B\Pi_\nu) \\
&= \Tr(B).
\end{align*}
Hence, $\Phi_\pi^\sharp$ is trace-preserving on $\cM(\supp(\nu))$. Since the maps $B \mapsto \nu^{-1/2} B \nu^{-1/2}$, $A \mapsto \pi^{1/2} A \pi^{1/2}$, and $\Phi^*$ are all completely positive,\footnote{To see the complete positivity of $\Phi^*$, adjoint the Kraus representation of $\Phi$.} the closure of completely positive maps under composition therefore shows that $\Phi_\pi^\sharp$ is completely positive. Thus, $\Phi_\pi^\sharp$ is a quantum channel when restricted to $\cM(\supp(\nu))$. The fact that its output states lie in $\cM(\supp(\pi))$ is easily seen from the definition of $\Phi_\pi^\sharp$.

For \cref{fact:petz-properties}\ref{item:petz-adjointness}, by the defining identity for $\Phi^*$ given in~\cref{def:petz-recovery} and the cyclicity property of the trace map, we have that
\begin{align*}
\inner{A}{\Phi_\pi^\sharp(B)}_\pi 
&= \Tr\left[A^\dagger \pi^{-1/2}\pi^{1/2} \Phi^*\left(\nu^{-1/2}B\nu^{-1/2}\right)\pi^{1/2}\pi^{-1/2}\right] \\
&= \Tr\left[\Pi_\pi A^\dagger \Pi_\pi \Phi^*\left(\nu^{-1/2}B\nu^{-1/2}\right)\right] \\
&= \Tr\left[\Phi(A)^\dagger\nu^{-1/2}B\nu^{-1/2}\right] \\
&= \inner{\Phi(A)}{B}_\nu.
\end{align*}

For \cref{fact:petz-properties}\ref{item:petz-functoriality}, define $\mu \coloneqq \Psi(\nu)$. For $C \in \Dens(\supp(\mu))$, applying the preceding adjointness identity twice gives us
\begin{equation*}
\inner{(\Psi\circ\Phi)(A)}{C}_\mu
= \inner{\Phi(A)}{\Psi_\nu^\sharp(C)}_\nu
= \inner{A}{\Phi_\pi^\sharp\Psi_\nu^\sharp(C)}_\pi,
\end{equation*}
which proves the functoriality formula by uniqueness of the adjoint. Note that the adjoint map, in our case, extends from $\cM(\supp(\mu))$ to $\cM(\C^{d_C})$ by zero.

It remains to prove \cref{fact:petz-properties}\ref{item:petz-contraction}. To this end, we first show that $K$ is positive semidefinite with respect to the $\pi$-weighted inner product $\inner{\cdot}{\cdot}_\pi$ from~\cref{def:petz-recovery}, which enables us to invoke the (finite-dimensional) spectral theorem. Indeed, weighted adjointness~\ref{item:petz-adjointness} tells us that
\begin{equation*}
\inner{A_1}{K(A_2)}_\pi
= \inner{\Phi(A_1)}{\Phi(A_2)}_\nu.
\end{equation*}
In particular, $\inner{A}{K(A)}_\pi = \inner{\Phi(A)}{\Phi(A)}_\nu \geq 0$. Moreover, conjugate symmetry gives us
\begin{equation*}
\inner{K(A)}{B}_\pi
= \overline{\inner{B}{K(A)}_\pi}
= \overline{\inner{\Phi(B)}{\Phi(A)}_\nu}
= \inner{\Phi(A)}{\Phi(B)}_\nu
= \inner{A}{K(B)}_\pi.
\end{equation*}
Thus, $K$ is positive semidefinite with respect to the $\pi$-weighted inner product. By the spectral theorem with respect to $\inner{\cdot}{\cdot}_\pi$, we can find a weighted-eigenbasis $\{E_j\}_{j=1}^m$ satisfying $\inner{E_j}{E_j}_\pi = 1$ and $K(E_j) = \lambda_jE_j$. Positivity gives $\lambda_j = \inner{E_j}{K(E_j)}_\pi \geq 0$.

We next show that every eigenvalue of $K$ lies in $[0,1]$. Indeed, as a composition of support-restricted quantum channels, $K$ is a quantum channel from $\cM(\supp(\pi))$ to itself. Let $K^*$ denote its adjoint with respect to the trace inner product on $\cM(\supp(\pi))$, defined by the identity $\Tr[A_1^\dagger K^*(A_2)] = \Tr[K(A_1)^\dagger A_2]$. Now, notice that the trace preservation of $K$ implies that (in fact, equivalent) $K^*$ is unital, i.e., $K^*(\Pi_\pi) = \Pi_\pi$. Indeed, for every $A \in \cM(\supp(\pi))$,
\begin{equation*}
\Tr\left[A^\dagger K^*(\Pi_\pi)\right]
= \Tr\left[K(A)^\dagger \Pi_\pi\right]
= \overline{\Tr(K(A))}
= \overline{\Tr(A)}
= \Tr(A^\dagger \Pi_\pi).
\end{equation*}
The nondegeneracy of the trace inner product over $\cM(\supp(\pi))$ thus implies that $K^*(\Pi_\pi) = \Pi_\pi$. Since $K$ is a quantum channel over $\cM(\supp(\pi))$, by the Stinespring representation~\cite[Chapter 8]{NC10}, there exist a positive integer $k$ and an isometry $V : \supp(\pi) \to \supp(\pi) \otimes \C^k$ such that we can write $K(A) = \Tr_E(V A V^\dagger)$, where $\Tr_E : \cM(\supp(\pi)) \otimes \cM(\C^k) \to \cM(\supp(\pi))$ is the partial trace map, i.e., the map $A \otimes M \mapsto A\Tr(M)$. By uniqueness of adjoints, we can therefore write $K^*(X)=V^\dagger(X \otimes I_k)V$ for all $X \in \cM(\supp(\pi))$. Since $VV^\dagger \preceq \Pi_\pi \otimes I_k$, we have
\begin{align*}
K^*(A)^\dagger K^*(A)
&= V^\dagger(A^\dagger \otimes I_k)VV^\dagger(A \otimes I_k)V \\
&\preceq V^\dagger(A^\dagger A \otimes I_k)V \\
&= K^*(A^\dagger A) \\
&\preceq K^*(\norm{A}_\infty^2\Pi_\pi) \\
&\preceq \norm{A}_\infty^2K^*(\Pi_\pi) \\
&= \norm{A}_\infty^2\Pi_\pi,
\end{align*}
where the last two inequalities use the positivity and unitality of $K^*$.\footnote{This argument closely follows the one presented in~\cite[Section~5.2, Equations~(5.2)--(5.3)]{Wol12}.} Taking spectral norms thus gives us $\norm{K^*(A)}_\infty \leq \norm{A}_\infty$. Hence, if $K^*(A) = \mu A$ with $A \neq 0$, then $|\mu| \leq 1$. Since the eigenvalues of $K$ are simply conjugates of those of $K^*$, the preceding spectral bound then gives us $\lambda_j \leq 1$. Thus, every eigenvalue of $K$ lies in $[0,1]$.

Now, we can conclude our proof. Writing $A = \sum_{j=1}^m{a_jE_j}$, we thus see that
\begin{equation*}
0 \leq \inner{A}{K(A)}_\pi
= \sum_{j = 1}^m{\lambda_j\abs{a_j}^2}
\leq \sum_{j = 1}^m{\abs{a_j}^2}
= \inner{A}{A}_\pi.\qedhere
\end{equation*}
\end{proof}

\subsection{\texorpdfstring{\cref{lem:pgm-data-processing}}{Lemma \getrefnumber{lem:pgm-data-processing}}: PGM data processing}
\label{appsubsec:proof-pgm-data-processing}

In this subsection, using the tools we developed in~\cref{appsubsec:petz-recovery-properties}, we prove \cref{lem:pgm-data-processing}. Throughout this subsection, for an ensemble $\cE = \{(\pi_x,\sigma_x) : x \in \cX\}$ with PGM observables $\{M_x\}_{x \in \cX}$, we adopt the shorthand $p_{\mathrm{s}}(\cE) \coloneqq \sum_{x \in \cX}{\pi_x\Tr(\sigma_xM_x)} = 1-p_{\mathrm{e}}(\cE)$ for its PGM success probability.

\begin{proof}[Proof of \cref{lem:pgm-data-processing}]
Define $A_x \coloneqq \pi_x \sigma_x$. Expanding the PGM success formula using the definition of PGM (\cref{def:pgm}), observe that
\begin{equation*}
p_{\mathrm{s}}(\cE)
= \sum_{x \in \cX}{\pi_x^2\Tr\left(\sigma_x\bar{\sigma}^{-1/2}\sigma_x\bar{\sigma}^{-1/2}\right)}
= \sum_{x \in \cX}{\inner{A_x}{A_x}_{\bar{\sigma}}}.
\end{equation*}
For $x \in \cX$, let $\tau_x \coloneqq \N(\sigma_x)$. By linearity of $\N$, we see that the average post-processed state satisfies $\bar{\tau} \coloneqq \sum_{x \in \cX}{\pi_x\tau_x} = \sum_{x \in \cX}{\pi_x\N(\sigma_x)} = \N(\bar{\sigma})$. Now, for every $x \in \cX$, by applying the weighted adjointness~\ref{item:petz-adjointness} and contraction~\ref{item:petz-contraction} of the Petz map from~\cref{fact:petz-properties}, we see that
\begin{equation*}
\inner{\N(A_x)}{\N(A_x)}_{\bar{\tau}}
= \inner{A_x}{(\N_{\bar{\sigma}}^\sharp \circ \N)(A_x)}_{\bar{\sigma}}
\leq \inner{A_x}{A_x}_{\bar{\sigma}}.
\end{equation*}
Summing this inequality and applying the same PGM expansion to the processed ensemble thus shows that
\begin{equation*}
p_{\mathrm{s}}(\cE)
= \sum_{x \in \cX}{\inner{A_x}{A_x}_{\bar{\sigma}}}
\geq \sum_{x \in \cX}{\inner{\N(A_x)}{\N(A_x)}_{\bar{\tau}}}
= p_{\mathrm{s}}\left(\{(\pi_x,\N(\sigma_x)) : x \in \cX\}\right).
\end{equation*}
Since $p_{\mathrm{e}} = 1-p_{\mathrm{s}}$, it therefore follows that
\begin{equation*}
p_{\mathrm{e}}(\cE) \leq p_{\mathrm{e}}\left(\{(\pi_x,\N(\sigma_x)) : x \in \cX\}\right).\qedhere
\end{equation*}
\end{proof}

\begin{remark}[Alternative proof via R\'enyi divergence]
\label{rem:pgm-data-processing-alt-proof}
In the classical case,~\cref{lem:pgm-data-processing} admits a more natural proof by recasting $p_{\mathrm{s}}(\cE)$ as a conditional collision entropy and invoking Jensen's inequality. Such a proof does extend to the quantum setting via the data processing inequality for the sandwiched order-$2$ quantum R\'enyi divergence~\cite{MLDSFT13,WWY14}, established independently by Frank and Lieb~\cite{FL13} and by Beigi~\cite{Bei13}. For completeness and greater self-containment, we presented this alternative proof via the Petz map. In this formulation, the Petz--Gram inequality in \cref{fact:petz-properties}\ref{item:petz-contraction} serves as the noncommutative analog of the conditional-Jensen step.
\end{remark}

\subsection{\texorpdfstring{\cref{prop:output-symmetric-capacity}}{Proposition \getrefnumber{prop:output-symmetric-capacity}}: Capacity of a binary-input output-symmetric cq channel}
\label{appsubsec:proof-output-symmetric-capacity}

In this subsection, we prove \cref{prop:output-symmetric-capacity}.

\begin{proof}[Proof of \cref{prop:output-symmetric-capacity}]
Fix any positive integer $n$, and let $\pi^{(n)}$ be any probability distribution on $\bits^n$. Write $\chi(\pi^{(n)};\Phi_\sigma^{\otimes n})$ for the Holevo information of the ensemble that assigns probability $\pi^{(n)}(x^n)$ to $\sigma_{x_1} \otimes \cdots \otimes \sigma_{x_n}$. Define the $n$-partite state
\begin{equation*}
\omega_{B^n} \coloneqq \sum_{x^n \in \bits^n}{\pi^{(n)}(x^n)\sigma_{x_1} \otimes \cdots \otimes \sigma_{x_n}},
\end{equation*}
and let $\omega_{B_i} \coloneqq \pi_i(0)\sigma_0+\pi_i(1)\sigma_1$ be its $i$'th subsystem, where $\pi_i$ is the $i$'th-coordinate marginal of $\pi^{(n)}$. By the subadditivity of entropy, we see that
\begin{align}
\chi(\pi^{(n)};\Phi_\sigma^{\otimes n})
&= S(\omega_{B^n})-\sum_{i = 1}^n{\left(\pi_i(0)S(\sigma_0)+\pi_i(1)S(\sigma_1)\right)} \notag \\
&\leq \sum_{i = 1}^n{\left(S(\omega_{B_i})-\pi_i(0)S(\sigma_0)-\pi_i(1)S(\sigma_1)\right)} \notag \\
&= \sum_{i = 1}^n{\chi(\pi_i;\Phi_\sigma)}. \label{eq:prop-6-1}
\end{align}
Now, define $f(t) \coloneqq \chi(\{(t,\sigma_0),(1-t,\sigma_1)\})$ for $t \in [0,1]$. Notice that the concavity of von Neumann entropy implies that $f$ is concave, while output-symmetry of $\Phi_\sigma$ implies that $f(t) = f(1-t)$. Hence
\begin{equation*}
\chi(\{(t,\sigma_0),(1-t,\sigma_1)\}) = f(t) = \frac{1}{2}f(t)+\frac{1}{2}f(1-t) \leq f\left(\frac{1}{2}\right) = \chi(\sigma_0,\sigma_1).
\end{equation*}
Therefore, for every $i \in \{1, \ldots , n\}$, we have that
\begin{equation}
\label{eq:prop-6-2}
\chi(\pi_i;\Phi_\sigma) \leq \chi(\sigma_0,\sigma_1).
\end{equation}
Combining~\eqref{eq:prop-6-1} and~\eqref{eq:prop-6-2} thus gives us $\chi(\pi^{(n)};\Phi_\sigma^{\otimes n}) \leq n\chi(\sigma_0,\sigma_1)$. Since the uniform product input attains equality by additivity of entropy, we thus see that
\begin{equation*}
\frac{1}{n}\max_{\pi^{(n)}}\chi(\pi^{(n)};\Phi_\sigma^{\otimes n})
= \chi(\sigma_0,\sigma_1).
\end{equation*}
As $n \to \infty$, the left-hand side becomes $C(\Phi_\sigma)$ by \cref{def:holevo-capacity}. This completes the proof.\qedhere
\end{proof}

\subsection{\texorpdfstring{\cref{thm:cq-strong-converse}}{Theorem \getrefnumber{thm:cq-strong-converse}}: The cq channel coding strong converse}
\label{appsubsec:proof-fixed-channel-strong-converse}

In this subsection, we prove the cq channel coding strong converse.\footnote{For context, the classical strong converse dates back to Wolfowitz~\cite{Wol57}, with exponential versions following later refinements~\cite{Str64,Ari73}.} To facilitate the exposition, we recall for the reader the proof for the classical BSC case. Fix $p\in(0,\frac{1}{2})$, and suppose that $\{\cD_c : c \in \cC\}$ are the decoding regions of a deterministic decoder for $\BSC(p)$. If $\bc$ is sent and the noise is $\bz \in \bits^n$, then the decoder succeeds exactly when $\bc \oplus \bz \in \cD_c$. Now, define the function
\begin{equation*}
t_n(z)
\coloneqq
\frac{1}{\abs{\cC}}
\abs{\{c\in\cC:c\oplus z\in\cD_c\}}.
\end{equation*}
If $\bz\sim\Ber(p)^n$ is the BSC noise and $\by$ is uniform on $\bits^n$, then
\begin{align*}
\Eb[t_n(\bz)]
&=\Prb[\bchat=\bc], \\
\Eb[t_n(\by)]
&=\frac{1}{\abs{\cC}}.
\end{align*}
The second identity holds because translation preserves the uniform distribution and the decoding regions partition $\bits^n$. Now, define the one-sided typical set
\begin{equation*}
\mathcal{Q}_\eps
\coloneqq
\left\{z\in\bits^n:
-\log{\Prb[\be=z]}\geq n[h(p)-\eps]
\right\},
\end{equation*}
where $\be \sim \Ber(p)^n$. For an independent $\bz \sim \Ber(p)^n$, notice that $\Eb_\bz\left[-\log{\Prb_\be[\be = \bz]}\right] = nh(p)$. Since $-\log{\Prb_\be[\be = \bz]} = \sum_{i=1}^n{-\log{\Prb_{\be_i}[\be_i = \bz_i]}}$, each of which is independent, applying Hoeffding's inequality on $\bz$ therefore gives
\begin{equation*}
\Prb[\bz\notin\mathcal{Q}_\eps]
\leq
\exp\left(
-\frac{2\eps^2n}{[\log((1-p)/p)]^2}
\right).
\end{equation*}
For every $z\in\mathcal{Q}_\eps$, the definition of the set and the identity $\Prb[\by=z]=2^{-n}$ give us the likelihood ratio bound
\begin{equation*}
\Prb[\bz=z]
\leq
2^{-n[h(p)-\eps]}
=
2^{n[1-h(p)+\eps]}\Prb[\by=z].
\end{equation*}
Since $0\leq t_n(z)\leq1$, we can split the success probability according to whether the noise is typical or not and apply this comparison term by term. That is,
\begin{align*}
\Prb[\bchat=\bc]
&=\sum_{z\in\bits^n}{\Prb[\bz=z]t_n(z)} \\
&\leq
\sum_{z\in\mathcal{Q}_\eps}{\Prb[\bz=z]t_n(z)}
+\Prb[\bz\notin\mathcal{Q}_\eps] \\
&\leq
2^{n[1-h(p)+\eps]}
\sum_{z\in\mathcal{Q}_\eps}{\Prb[\by=z]t_n(z)}
+\Prb[\bz\notin\mathcal{Q}_\eps] \\
&\leq
2^{n[1-h(p)+\eps]}\Eb[t_n(\by)]
+\Prb[\bz\notin\mathcal{Q}_\eps] \\
&\leq
2^{-n[R(\cC)-(1-h(p))-\eps]}
+
\exp\left(
-\frac{2\eps^2n}{[\log((1-p)/p)]^2}
\right).
\end{align*}
Thus, whenever $R(\cC)>1-h(p)$ by a fixed amount, choosing $\eps$ smaller than that gap makes both terms exponentially small. This finishes the BSC strong converse proof.

The cq channel proof repeats these three steps. Output symmetry plays a similar role as XOR in the BSC case, producing a matrix $T_n$ that plays the role of $t_n$. Its expectations under $\sigma_0^{\otimes n}$ and $\bar{\sigma}^{\otimes n}$ reproduce the two identities above. Finally, two spectral cutoffs replace the BSC typical set---one cutoff to say that the probability of $\bz$ (resp. $\sigma_0^{\otimes n}$) is not too large, and another to say that the probability of $\by$ (resp. $\bar{\sigma}^{\otimes n}$) is not too small---and provide the required matrix version of the likelihood ratio bound.

\begin{proof}[Proof of \cref{thm:cq-strong-converse}]
Recall that each decoder matrix $D_c$ is designed to recognize its corresponding codeword state $\sigma_c$. Let $U$ be the unitary exchanging $\sigma_0$ and $\sigma_1$, and, for $c\in\cC$, set $V_c\coloneqq U^{c_1}\otimes\cdots\otimes U^{c_n}$. Then we see that $V_c\sigma_0^{\otimes n}V_c^\dagger=\sigma_c$ and $V_c\bar{\sigma}^{\otimes n}V_c^\dagger=\bar{\sigma}^{\otimes n}$. Following the BSC definition of $t_n$, define the matrix
\begin{equation*}
T_n \coloneqq \frac{1}{\abs{\cC}}\sum_{c \in \cC}{V_c^\dagger D_cV_c}.
\end{equation*}
Since every $D_c$ satisfies $0\preceq D_c\preceq I$, their average satisfies $0\preceq T_n\preceq I$. The analogs of the two BSC expectations are thus
\begin{align}
\Tr(\sigma_0^{\otimes n}T_n)
&=\frac{1}{\abs{\cC}}\sum_{c\in\cC}{\Tr(\sigma_0^{\otimes n}V_c^\dagger D_cV_c)}
=\frac{1}{\abs{\cC}}\sum_{c\in\cC}{\Tr(\sigma_cD_c)}
=\Prb[\bchat=\bc], \notag\\
\Tr(\bar{\sigma}^{\otimes n}T_n)
&=\frac{1}{\abs{\cC}}\sum_{c\in\cC}{\Tr(\bar{\sigma}^{\otimes n}V_c^\dagger D_cV_c)}
=\frac{1}{\abs{\cC}}\sum_{c\in\cC}{\Tr(\bar{\sigma}^{\otimes n}D_c)}
=\frac{1}{\abs{\cC}}.
\label{eq:symmetrized-decoder}
\end{align}
For the second identity, we used the invariance of $\bar{\sigma}^{\otimes n}$ under every $V_c$ and the POVM identity $\sum_{c\in\cC}{D_c}=I$. Thus, $T_n$ is exactly the matrix counterpart of $t_n$.

Let $\lambda_{\min}$ be the smallest positive eigenvalue occurring in either of $\sigma_0$ or $\bar{\sigma}$. The channel-dependent constant in the theorem statement is defined to be
\begin{equation*}
L_\sigma \coloneqq \max\{1,\log(1/\lambda_{\min})\}.
\end{equation*}

The first cutoff removes eigenvalues of $\sigma_0^{\otimes n}$ that are atypically large, so as to make the bulk of its spectrum ``spread out.'' To expose the classical random variable underneath this step, choose an eigenbasis of $\sigma_0$ with positive eigenvalues $\lambda_j$. Measuring $\sigma_0$ in this basis returns $j$ with probability $\lambda_j$, so the ``surprisal'' $-\log{\lambda_j}$ lies in $[0,L_\sigma]$ and has mean $S(\sigma_0)$. Let $Q$ be the spectral projector of $\sigma_0^{\otimes n}$ onto the positive eigenvalues $\lambda$ satisfying $-\log{\lambda} \geq n[S(\sigma_0)-\eps]$. Hoeffding's inequality~\cite{Hoe63}, applied to the sum of $n$ independent terms, gives
\begin{equation}
\label{eq:spectral-truncation-q}
\Tr(\sigma_0^{\otimes n}Q) \geq 1-\exp(-2\eps^2n/L_\sigma^2).
\end{equation}
The second cutoff removes eigenspaces on which $\bar{\sigma}^{\otimes n}$ is atypically small. Note that the BSC did not need to perform this, since the distribution corresponding to $\bar{\sigma}^{\otimes n}$ was the uniform distribution over $\bits^n$, which evidently satisfies this property. Now, to see the relevant probability distribution, choose an eigenbasis $\{\ket{j}\}$ of $\bar{\sigma}$, write $\mu_j > 0$ for the eigenvalue associated with $\ket{j}$, and measure $\sigma_0$ in this basis. The outcome $j$ has probability $\bra{j}\sigma_0\ket{j}$, so $-\log{\mu_j}$ lies in $[0,L_\sigma]$ and has mean $-\Tr(\sigma_0\log{\bar{\sigma}}) = S(\bar{\sigma})$. Let $P$ be the spectral projector of $\bar{\sigma}^{\otimes n}$ onto the eigenvalues $\mu$ satisfying $-\log{\mu} \leq n[S(\bar{\sigma})+\eps]$. For $n$ copies, the product-basis measurement gives $n$ independent copies of this random variable, so a second application of Hoeffding's inequality yields
\begin{equation}
\label{eq:spectral-truncation-p}
\Tr(\sigma_0^{\otimes n}P) \geq 1-\exp(-2\eps^2n/L_\sigma^2).
\end{equation}
We now show that imposing both cutoffs changes $\sigma_0^{\otimes n}$ only slightly. Set
\begin{equation*}
\widetilde{\sigma}_n \coloneqq PQ\sigma_0^{\otimes n}QP.
\end{equation*}
Since $Q$ commutes with $\sigma_0^{\otimes n}$, we have $0 \preceq Q\sigma_0^{\otimes n}Q \preceq \sigma_0^{\otimes n}$. From~\eqref{eq:spectral-truncation-q} and~\eqref{eq:spectral-truncation-p}, we see that
\begin{align*}
\norm{\sigma_0^{\otimes n}-Q\sigma_0^{\otimes n}Q}_1
&= \Tr\left(\sigma_0^{\otimes n}(I-Q)\right)
\leq \exp(-2\eps^2n/L_\sigma^2), \\
\Tr\left(Q\sigma_0^{\otimes n}Q(I-P)\right)
&\leq \Tr\left(\sigma_0^{\otimes n}(I-P)\right)
\leq \exp(-2\eps^2n/L_\sigma^2).
\end{align*}
Using these two inequalities, the triangle inequality, and applying the gentle measurement lemma (\cref{lem:gentle-measurement}) to the subnormalized matrix $Q\sigma_0^{\otimes n}Q$, we thus see that
\begin{align}
\norm{\sigma_0^{\otimes n}-\widetilde{\sigma}_n}_1
&\leq \norm{\sigma_0^{\otimes n}-Q\sigma_0^{\otimes n}Q}_1+\norm{Q\sigma_0^{\otimes n}Q-PQ\sigma_0^{\otimes n}QP}_1 \notag \\
&\leq \exp(-2\eps^2n/L_\sigma^2)+2\sqrt{\exp(-2\eps^2n/L_\sigma^2)} \notag \\
&\leq 4\exp\left(-\eps^2n/L_\sigma^2\right).
\label{eq:spectral-truncation-distance}
\end{align}
Next, we establish the matrix analog of the likelihood ratio bound. By the definitions of $Q$ and $P$, notice that $Q\sigma_0^{\otimes n}Q \preceq 2^{-n[S(\sigma_0)-\eps]}Q$ and $P \preceq 2^{n[S(\bar{\sigma})+\eps]}\bar{\sigma}^{\otimes n}$. By combining these two inequalities along with $PQP \preceq P$, we see that
\begin{align}
\widetilde{\sigma}_n 
&= PQ\sigma_0^{\otimes n}QP \notag \\
&\preceq 2^{-n[S(\sigma_0)-\eps]}PQP \notag \\
&\preceq 2^{-n[S(\sigma_0)-\eps]}P \notag \\
&\preceq 2^{n[S(\bar{\sigma})-S(\sigma_0)+2\eps]}\bar{\sigma}^{\otimes n} \notag \\
&= 2^{n[\chi(\sigma_0,\sigma_1)+2\eps]}\bar{\sigma}^{\otimes n}.
\label{eq:spectral-truncation-domination}
\end{align}
This is the quantum replacement for the classical assertion that the likelihood ratio is at most $2^{n(\chi+2\eps)}$ on the retained typical set.

Now, we have all the ingredients to conclude this theorem. Indeed, by the inequality $0\preceq T_n \preceq I$ and \cref{eq:symmetrized-decoder,eq:spectral-truncation-distance,eq:spectral-truncation-domination}, we see that
\begin{align*}
\Prb[\bchat = \bc]
&= \Tr(\sigma_0^{\otimes n}T_n) \\
&= \Tr(\widetilde{\sigma}_nT_n)+\Tr((\sigma_0^{\otimes n}-\widetilde{\sigma}_n)T_n) \\
&\leq 2^{n[\chi(\sigma_0,\sigma_1)+2\eps]}\Tr(\bar{\sigma}^{\otimes n}T_n)
+\norm{\sigma_0^{\otimes n}-\widetilde{\sigma}_n}_1 \\
&= 2^{-n[R(\cC)-\chi(\sigma_0,\sigma_1)-2\eps]}
+4\exp\left(-\eps^2n/L_\sigma^2\right).
\end{align*}
For the final assertion, consider any family of codes for which $R(\cC)-\chi(\sigma_0,\sigma_1)\geq\Delta$ for some fixed $\Delta>0$. Choosing $\eps=\Delta/4$ in the preceding bound gives the exponentially decaying estimate
\begin{equation*}
\Prb[\bchat = \bc]
\leq 2^{-n\Delta/2}+4\exp\left(-\Delta^2n/(16L_\sigma^2)\right).
\qedhere
\end{equation*}
\end{proof}

\subsection{\texorpdfstring{\cref{lem:coarse-graining}}{Lemma \getrefnumber{lem:coarse-graining}}: PGM is closed under coarse-graining}
\label{appsubsec:proof-coarse-graining}

In this subsection, we prove~\cref{lem:coarse-graining}. The proof proceeds by exploiting the linearity of PGMs.

\begin{proof}
First, observe that
\begin{equation*}
\bar{\omega}
\coloneqq \sum_{y \in \cY}{\nu_y \omega_y}
= \sum_{y \in \cY}{\Prb[f(\bx) = y]\Eb[\sigma_{\bx} \mid f(\bx) = y]}
= \Eb[\sigma_{\bx}]
= \bar{\sigma}.
\end{equation*}
Let $\{M_{\hat{x}}\}_{\hat{x} \in \cX}$ and $\{N_{\hat{y}}\}_{\hat{y} \in \cY}$ be the PGMs of $\cE$ and $\cF$, respectively. The PGM formula (\cref{def:pgm}) then gives
\begin{equation*}
N_{\hat{y}} = \nu_{\hat{y}}\bar{\omega}^{-1/2}\omega_{\hat{y}}\bar{\omega}^{-1/2} = \bar{\sigma}^{-1/2}
\left(\sum_{\substack{\hat{x} \in \cX : \\ f(\hat{x}) = \hat{y}}}{\pi_{\hat{x}}\sigma_{\hat{x}}}\right)
\bar{\sigma}^{-1/2} = \sum_{\substack{\hat{x} \in \cX : \\ f(\hat{x}) = \hat{y}}}{M_{\hat{x}}}.
\end{equation*}
Thus, the PGM $\{N_{\hat{y}}\}_{\hat{y} \in \cY}$ is the classical post-processing of $\{M_{\hat{x}}\}_{\hat{x} \in \cX}$ under $f$. In particular, an outcome $\hat{x}$ of the first PGM becomes $f(\hat{x})$ under the second.

For every $y,\hat{y} \in \cY$, the PGM error formula (\cref{def:pgm-error}) gives us
\begin{equation*}
\Prb[f(\bx) = y,\ f(\bxhat) = \hat{y}]
= \sum_{\substack{x \in \cX \\ f(x) = y}}{\sum_{\substack{\hat{x} \in \cX \\ f(\hat{x})
= \hat{y}}}{\pi_x\Tr(\sigma_xM_{\hat{x}})}}
= \sum_{\substack{x \in \cX \\ f(x) = y}}{\pi_x\Tr(\sigma_xN_{\hat{y}})}
= \nu_y\Tr(\omega_yN_{\hat{y}}).
\end{equation*}
By the PGM error formula (\cref{def:pgm-error}) once again and the definition of $\omega_y$, we conclude that
\begin{equation*}
\Prb[f(\bxhat) \neq f(\bx)]
= \sum_{\substack{y,\hat{y} \in \cY : \\ y \neq \hat{y}}}{\Prb[f(\bx) = y,\ f(\bxhat) = \hat{y}]}
= \sum_{\substack{y,\hat{y} \in \cY : \\ y \neq \hat{y}}}{\nu_y\Tr(\omega_yN_{\hat{y}})}
= p_{\mathrm{e}}(\cF).
\qedhere
\end{equation*}
\end{proof}

\subsection{\texorpdfstring{\cref{lem:uniform-prior-worst}}{Lemma \getrefnumber{lem:uniform-prior-worst}}: Worst-case uniform prior}
\label{appsubsec:proof-uniform-prior}

\begin{proof}[Proof of \cref{lem:uniform-prior-worst}]
Consider the binary ensemble that asks to decode a uniformly random bit $\bb$ when given as input $(\sigma_\bb,\bb \oplus \mathbf{t})$ for $\mathbf{t} \sim \Ber(\alpha)$. More formally, consider the binary ensemble $\cE_\alpha' \coloneqq \{(\frac{1}{2}, \widehat{\sigma}_0), (\frac{1}{2}, \widehat{\sigma}_1)\}$ where
\begin{equation*}
\widehat{\sigma}_0
\coloneqq \sigma_0 \otimes (1-\alpha)\ketbra{0}{0}_T + \sigma_0 \otimes \alpha\ketbra{1}{1}_T,
\qquad
\widehat{\sigma}_1
\coloneqq \sigma_1 \otimes \alpha\ketbra{0}{0}_T + \sigma_1 \otimes (1-\alpha)\ketbra{1}{1}_T.
\end{equation*}
Here, $T$ is a two-dimensional classical system. We claim that
\begin{equation*}
p_{\mathrm{e}}(\widehat{\sigma}_0,\widehat{\sigma}_1) = p_{\mathrm{e},\alpha}(\sigma_0,\sigma_1).
\end{equation*}
Indeed, one can check that the PGM of $\cE_\alpha'$ performs the following procedure: it first reads $T$. Then, if $T = 0$, it performs the PGM for the ensemble $\{(1-\alpha,\sigma_0),(\alpha,\sigma_1)\}$. Otherwise, if $T = 1$, it performs the PGM for the ensemble $\{(\alpha,\sigma_0),(1-\alpha,\sigma_1)\}$. Since $p_{\mathrm{e},\alpha}(\sigma_0,\sigma_1) = p_{\mathrm{e},1-\alpha}(\sigma_0,\sigma_1)$ by output-symmetry, we thus see that
\begin{equation*}
p_{\mathrm{e}}(\widehat{\sigma}_0,\widehat{\sigma}_1)
= \frac{1}{2}p_{\mathrm{e},\alpha}(\sigma_0,\sigma_1) + \frac{1}{2}p_{\mathrm{e},\alpha}(\sigma_0,\sigma_1)
= p_{\mathrm{e},\alpha}(\sigma_0,\sigma_1).
\end{equation*}
Now, observe that the quantum channel that traces $T$ out sends $\widehat{\sigma}_b$ to $\sigma_b$ for $b \in \bits$. The PGM data-processing lemma (\cref{lem:pgm-data-processing}) therefore gives
\begin{equation*}
p_{\mathrm{e},\alpha}(\sigma_0,\sigma_1) = p_{\mathrm{e}}(\widehat{\sigma}_0,\widehat{\sigma}_1) \leq p_{\mathrm{e},1/2}(\sigma_0,\sigma_1).\qedhere
\end{equation*}
\end{proof}

\subsection{\texorpdfstring{\cref{prop:binary-pgm-error,prop:binary-pgm-preprocessing}}{Propositions \getrefnumber{prop:binary-pgm-error} and \getrefnumber{prop:binary-pgm-preprocessing}}: Binary PGM formulas}
\label{appsubsec:proof-binary-pgm-formulas}

In this subsection, we prove \cref{prop:binary-pgm-error,prop:binary-pgm-preprocessing}.

\begin{proof}[Proof of \cref{prop:binary-pgm-error}]
By~\cref{def:pgm}, the PGM matrices for $\cE_\alpha$ are $M_0 = (1-\alpha)\bar{\sigma}_\alpha^{-1/2}\sigma_0\bar{\sigma}_\alpha^{-1/2}$ and $M_1 = \alpha\bar{\sigma}_\alpha^{-1/2}\sigma_1\bar{\sigma}_\alpha^{-1/2}$.
From~\cref{def:pgm-error} and the cyclicity of the trace, we thus find that
\begin{align*}
p_{\mathrm{e}}(\cE_\alpha)
&= (1-\alpha)\Tr(\sigma_0M_1)+\alpha\Tr(\sigma_1M_0) \\
&= \alpha(1-\alpha)\Tr\left(\sigma_0\bar{\sigma}_\alpha^{-1/2}\sigma_1\bar{\sigma}_\alpha^{-1/2}\right)
+\alpha(1-\alpha)\Tr\left(\sigma_1\bar{\sigma}_\alpha^{-1/2}\sigma_0\bar{\sigma}_\alpha^{-1/2}\right) \\
&= 2\alpha(1-\alpha)\Tr\left(\sigma_0\bar{\sigma}_\alpha^{-1/2}\sigma_1\bar{\sigma}_\alpha^{-1/2}\right).
\end{align*}
This proves \eqref{eq:binary-pgm-error}.

As for \eqref{eq:non-uniform-pure-pgm-error}, observe that the average $\bar{\sigma}_\alpha$ admits the Gram factorization $\bar{\sigma}_\alpha = VV^\dagger$, where $V$ is the matrix consisting of the two columns $\sqrt{1-\alpha}\ket{\psi_0}$ and $\sqrt{\alpha}\ket{\psi_1}$. Now, define the matrix
\begin{equation*}
G
\coloneqq
V^\dagger V = 
\begin{pmatrix}
1-\alpha & \sqrt{\alpha(1-\alpha)}\braket{\psi_0|\psi_1} \\
\sqrt{\alpha(1-\alpha)}\braket{\psi_1|\psi_0} & \alpha
\end{pmatrix}.
\end{equation*}
Observe that $\sqrt{G} = V^\dagger(VV^\dagger)^{-1/2}V = V^\dagger \bar{\sigma}_\alpha^{-1/2} V$. Combining this with the cyclicity of the trace and \eqref{eq:binary-pgm-error}, notice that
\begin{align}
p_{\mathrm{e}}(\cE_\alpha) &= 2\alpha(1-\alpha)\Tr\left(\sigma_0\bar{\sigma}_\alpha^{-1/2}\sigma_1\bar{\sigma}_\alpha^{-1/2}\right) \notag \\
&= 2\alpha(1-\alpha)\Tr\left(\ketbra{\psi_0}{\psi_0}\bar{\sigma}_\alpha^{-1/2}\ketbra{\psi_1}{\psi_1}\bar{\sigma}_\alpha^{-1/2}\right) \notag \\
&= 2\left(\sqrt{\alpha(1-\alpha)}\bra{\psi_0}\bar{\sigma}_\alpha^{-1/2}\ket{\psi_1}\right)^2 \notag \\
&= 2\abs{\sqrt{G}_{01}}^2. \label{eq:prop-12-proof-1}
\end{align}
Now, since $G$ is a $2 \times 2$ matrix, the span of $I$ and $G$ includes all the projections onto the eigenspaces of $G$. Thus, we can express $\sqrt{G}$ as
\begin{equation*}
\sqrt{G} = aG + bI
\end{equation*}
for some $a,b \in \mathbb{R}$. If $\lambda_1,\lambda_2 \in \mathbb{R}$ denote the eigenvalues of $G$, then this leads to a simple linear system $a\lambda_i + b = \sqrt{\lambda_i}$ for $i = 1,2$. Solving for just $a$, we get
\begin{equation*}
a = \frac{\sqrt{\lambda_1} - \sqrt{\lambda_2}}{\lambda_1 - \lambda_2} = \frac{1}{\sqrt{\lambda_1} + \sqrt{\lambda_2}} = \frac{1}{\sqrt{\lambda_1 + \lambda_2 + 2\sqrt{\lambda_1\lambda_2}}} = \frac{1}{\sqrt{\Tr(G) + 2\sqrt{\det(G)}}}.
\end{equation*}
From the definition of $G$, we see that $\Tr(G) = 1$ and $\det(G) = \alpha(1-\alpha)(1-f)$. Thus
\begin{equation}
(\sqrt{G})_{01}
= a \cdot (G)_{01} + b \cdot 0
= \frac{1}{\sqrt{\Tr(G) + 2\sqrt{\det(G)}}} \cdot \sqrt{\alpha(1-\alpha)}\braket{\psi_0|\psi_1}
= \frac{\sqrt{\alpha(1-\alpha)}\braket{\psi_0|\psi_1}}
{\sqrt{1+2\sqrt{\alpha(1-\alpha)(1-f)}}}. \label{eq:prop-12-proof-2}
\end{equation}
Therefore, combining~\eqref{eq:prop-12-proof-1} and~\eqref{eq:prop-12-proof-2}, we conclude that
\begin{equation*}
p_{\mathrm{e}}(\cE_\alpha)
= \frac{2\alpha(1-\alpha)f}{1+2\sqrt{\alpha(1-\alpha)(1-f)}}.
\qedhere
\end{equation*}
\end{proof}

\begin{proof}[Proof of \cref{prop:binary-pgm-preprocessing}]
Write $\bar{\sigma}_\theta \coloneqq (1-\theta)\sigma_0 + \theta\sigma_1$. Let $N_y \coloneqq \Prb[\by = y]\bar{\sigma}_\theta^{-1/2}\sigma_y\bar{\sigma}_\theta^{-1/2}$ be the PGM observables for the ensemble indexed by $\by$, and let $\{M_0,M_1\}$ be the PGM observables for the composed ensemble indexed by $\bb$. Since PGM is the Petz recovery map for cq channels (cf.~\cref{appsubsec:petz-recovery-properties}), by Petz functoriality (\cref{fact:petz-properties}\ref{item:petz-functoriality}), it follows that $\{M_0,M_1\}$ is a classical post-processing of $\{N_0,N_1\}$. More concretely,
\begin{align*}
M_b
&= \Prb[\bb = b]\bar{\sigma}_\theta^{-1/2}\tau_b\bar{\sigma}_\theta^{-1/2} \\
&= \sum_{y \in \bits}{\Prb[\bb = b,\by = y]\bar{\sigma}_\theta^{-1/2}\sigma_y\bar{\sigma}_\theta^{-1/2}} \\
&= \sum_{y \in \bits}{\Prb[\bb = b \mid \by = y]N_y}.
\end{align*}
Thus, the collective PGM first produces $\byhat$ using $\{N_0,N_1\}$ and then draws $\bbhat$ from the posterior law of $\bb$ given $\by = \byhat$, as asserted.

Now, the joint law of the intermediate input and PGM output is symmetric. Indeed, cyclicity of trace gives, for $y,\hat{y} \in \bits$,
\begin{align*}
\Prb[\by = y,\byhat = \hat{y}]
= \Prb[\by = y]\Prb[\by = \hat{y}]\Tr\left(\sigma_y\bar{\sigma}_\theta^{-1/2}\sigma_{\hat{y}}\bar{\sigma}_\theta^{-1/2}\right)
=\Prb[\by = \hat{y},\byhat = y].
\end{align*}
Put $\eps \coloneqq p_{\mathrm{e},\theta}(\sigma_0,\sigma_1)$. The two marginals of $(\by,\byhat)$ are therefore both $\Ber(\theta)$, and the two off-diagonal probabilities are each $\frac{\eps}{2}$. Thus, its $2 \times 2$ joint-distribution matrix, with rows indexed by $\by$ and columns by $\byhat$, is
\begin{equation*}
J_{\by,\byhat}
= \begin{pmatrix}
1-\theta-\frac{\eps}{2} & \frac{\eps}{2} \\
\frac{\eps}{2} & \theta-\frac{\eps}{2}
\end{pmatrix}.
\end{equation*}
Let $R$ be the $2 \times 2$ probability matrix of the posterior sampler of $\bb$ given $\by$, i.e., $R_{b,y} \coloneqq \Prb[\bb = b \mid \by = y]$. By Bayes' rule, we have that
\begin{equation*}
R
= \begin{pmatrix}
1-\frac{\alpha\eta}{1-\theta} & 1-\frac{\alpha(1-\eta)}{\theta} \\
\frac{\alpha\eta}{1-\theta} & \frac{\alpha(1-\eta)}{\theta}
\end{pmatrix},
\qquad
\det(R) = \frac{\alpha(1-\alpha)(1-2\eta)}{\theta(1-\theta)}.
\end{equation*}
Conditioned on $(\by,\byhat)$, the original bit $\bb$ and the fresh posterior draw $\bbhat$ are independent, so their joint-distribution matrix is
\begin{equation*}
J_{\bb,\bbhat} = R J_{\by,\byhat}R^{\mathsf T}.
\end{equation*}
For any pair of binary random variables $(\mathbf{x},\mathbf{z})$, covariance is precisely the determinant of their $2 \times 2$ joint-distribution matrix. Indeed, if $J_{\mathbf{x},\mathbf{z}} = (p_{xz})_{x,z \in \bits}$, then
\begin{align*}
\operatorname{Cov}(\mathbf{x},\mathbf{z})
&= p_{11} - (p_{11} + p_{10})(p_{11} + p_{01}) \\
&= p_{11} - p_{11}^2 - p_{11}p_{01} - p_{11}p_{10} - p_{10}p_{01} \\
&= p_{11}p_{00}-p_{10}p_{01} \\
&= \det(J_{\mathbf{x},\mathbf{z}}).
\end{align*}
Consequently,
\begin{equation*}
\operatorname{Cov}(\bb,\bbhat) = \det(R)^2\det(J_{\by,\byhat}) = \left[\frac{\alpha(1-\alpha)(1-2\eta)}{\theta(1-\theta)}\right]^2
\left[\theta(1-\theta)-\frac{\eps}{2}\right].
\end{equation*}
Since $\byhat$ has the same marginal as $\by$, posterior resampling gives $\bbhat \sim \Ber(\alpha)$. Hence
\begin{align*}
p_{\mathrm{e},\alpha}(\tau_0,\tau_1)
&= \Eb\left[(\bb-\bbhat)^2\right] \\
&= 2\alpha(1-\alpha)-2\operatorname{Cov}(\bb,\bbhat) \\
&= 2\alpha(1-\alpha) - \left[
\frac{\alpha(1-\alpha)(1-2\eta)}{\theta(1-\theta)}
\right]^2
\left[2\theta(1-\theta)-p_{\mathrm{e},\theta}(\sigma_0,\sigma_1)\right],
\end{align*}
which is \eqref{eq:pgm-preprocessing-error}. Setting $\alpha = \theta = \frac{1}{2}$ in \eqref{eq:pgm-preprocessing-error} gives
\begin{equation*}
p_{\mathrm{e}}(\tau_0,\tau_1)
= \frac{1}{2}\left[1-(1-2\eta)^2(1-2p_{\mathrm{e}}(\sigma_0,\sigma_1))\right].
\qedhere
\end{equation*}
\end{proof}

\subsection{\texorpdfstring{\cref{prop:qubit-entropy}}{Proposition \getrefnumber{prop:qubit-entropy}}: Qubit entropy in Pauli coordinates}
\label{appsubsec:proof-qubit-entropy}

In this subsection, we prove \cref{prop:qubit-entropy}.

\begin{proof}[Proof of \cref{prop:qubit-entropy}]
Since $X^2 = Z^2 = I$ and $XZ = -ZX$, we see that
\begin{equation*}
(xX+zZ)^2 = (x^2+z^2)I.
\end{equation*}
Since the Hermitian matrix $xX+zZ$ is traceless, we therefore see that the eigenvalues of $xX+zZ$ are $\{-\sqrt{x^2+z^2},+\sqrt{x^2+z^2}\}$. Therefore, the two eigenvalues of $\frac{1}{2}(I+xX+zZ)$ are $\frac{1}{2}(1-\sqrt{x^2+z^2})$ and $\frac{1}{2}(1+\sqrt{x^2+z^2})$. Substituting them into the definition of von Neumann entropy thus gives
\begin{equation*}
S\left(\frac{1}{2}(I+xX+zZ)\right)
= h\left(\frac{1-\sqrt{x^2+z^2}}{2}\right)
= s(1-x^2-z^2).
\qedhere
\end{equation*}
\end{proof}

\subsection{\texorpdfstring{\cref{prop:masking-pgm-error,prop:masking-holevo}}{Propositions \getrefnumber{prop:masking-pgm-error} and \getrefnumber{prop:masking-holevo}}: Masking}
\label{appsubsec:proof-masking}

In this subsection, we prove \cref{prop:masking-pgm-error,prop:masking-holevo}.

\begin{proof}[Proof of \cref{prop:masking-pgm-error}]
First, observe that the average state is $\bar{\Sigma} \coloneqq \frac{1}{2}(\Sigma_0+\Sigma_1) = \frac{1}{2}\bar{\tau}_\alpha \oplus \frac{1}{2}\bar{\tau}_\alpha$. Let
\begin{equation*}
N_0 \coloneqq (1-\alpha)\bar{\tau}_\alpha^{-1/2}\tau_0\bar{\tau}_\alpha^{-1/2},
\qquad
N_1 \coloneqq \alpha\bar{\tau}_\alpha^{-1/2}\tau_1\bar{\tau}_\alpha^{-1/2},
\end{equation*}
be the PGM observables of the non-uniform binary ensemble $\{(1-\alpha,\tau_0),(\alpha,\tau_1)\}$. Since $\bar{\Sigma}^{-1/2} = \sqrt{2}(\bar{\tau}_\alpha^{-1/2} \oplus \bar{\tau}_\alpha^{-1/2})$ on its support, the PGM definition gives
\begin{equation*}
M_0 = N_0 \oplus N_1,
\qquad
M_1 = N_1 \oplus N_0.
\end{equation*}
Thus, reading the block first leaves the non-uniform PGM in either block, with the outcome labels reversed in the second. Its error is
\begin{align*}
p_{\mathrm{e}}(\Sigma_0,\Sigma_1)
&= \frac{1}{2}\Tr(\Sigma_0M_1) + \frac{1}{2}\Tr(\Sigma_1M_0) \\
&= \frac{1-\alpha}{2}\Tr(\tau_0N_1) + \frac{\alpha}{2}\Tr(\tau_1N_0) + \frac{\alpha}{2}\Tr(\tau_1N_0) + \frac{1-\alpha}{2}\Tr(\tau_0N_1) \\
&= (1-\alpha)\Tr(\tau_0N_1)+\alpha\Tr(\tau_1N_0) \\
&= \alpha(1-\alpha)\Tr(\tau_0 \bar{\tau}_\alpha^{-1/2} \tau_1 \bar{\tau}_\alpha^{-1/2})
+\alpha(1-\alpha)\Tr(\tau_1 \bar{\tau}_\alpha^{-1/2}\tau_0 \bar{\tau}_\alpha^{-1/2}) \\
&= 2\alpha(1-\alpha)\Tr(\tau_0 \bar{\tau}_\alpha^{-1/2}\tau_1 \bar{\tau}_\alpha^{-1/2}),
\end{align*}
where cyclicity of the trace gives the last equality. This is also the PGM error of $\{(1-\alpha,\tau_0),(\alpha,\tau_1)\}$, proving \eqref{eq:masking-pgm-error}.
\end{proof}

\begin{proof}[Proof of \cref{prop:masking-holevo}]
If the eigenvalues of $\tau_0$ and $\tau_1$ are $\{\lambda_i\}_i$ and $\{\mu_j\}_j$, respectively, then the nonzero eigenvalues of $\Sigma_0$ are $\{(1-\alpha)\lambda_i\}_i$ and $\{\alpha\mu_j\}_j$. Therefore
\begin{equation*}
S(\Sigma_0) = S(\Sigma_1)
= h(\alpha)+(1-\alpha)S(\tau_0)+\alpha S(\tau_1).
\end{equation*}
Likewise, the two copies of $\bar{\tau}_\alpha$ in $\bar{\Sigma} = \frac{1}{2}\bar{\tau}_\alpha \oplus \frac{1}{2}\bar{\tau}_\alpha$ give $S(\bar{\Sigma}) = 1+S(\bar{\tau}_\alpha)$. Substitution in the Holevo identity proves \eqref{eq:masking-holevo}.
\end{proof}

\end{document}